\numberwithin{equation}{section}
\theoremstyle{plain}
\newtheorem{thm}{Theorem}[section]
\newtheorem{lem}[thm]{Lemma}
\newtheorem{prop}[thm]{Proposition}
\theoremstyle{definition}
\theoremstyle{remark}
\newtheorem{rem}[thm]{Remark}
\newcommand{\ep}{\epsilon}
\newcommand{\vep}{\varepsilon}
\newcommand{\Bcv}{\boldsymbol{\Bc}{}}
\newcommand{\Kttt}{\tilde{\Ktt}{}}
\newcommand{\gttt}{\tilde{\gtt}{}}
\newcommand{\Hcb}{\bar{\Hc}}
\newcommand{\alphab}{\bar{\alpha}}
\newcommand{\Hct}{\tilde{\mathcal{H}}}
\newcommand{\Mft}{\tilde{\mathfrak{M}}}
\newcommand{\Bft}{\tilde{\mathfrak{B}}}
\newcommand{\Cft}{\tilde{\mathfrak{C}}}
\newcommand{\Hft}{\tilde{\mathfrak{H}}}
\newcommand{\Dft}{\tilde{\mathfrak{D}}}
\newcommand{\Uft}{\tilde{\mathfrak{U}}}
\newcommand{\gttb}{\bar{\gtt}}
\newcommand{\Kttb}{\bar{\Ktt}}
\newcommand{\Ufb}{\bar{\mathfrak{U}}{}}
\newcommand{\Pfb}{\bar{\mathfrak{P}}{}}
\newcommand{\Ucb}{\bar{\mathcal{U}}{}}
\newcommand{\Uct}{\tilde{\mathcal{U}}{}}
\newcommand{\Pft}{\tilde{\mathfrak{P}}{}}
\newcommand{\Sft}{\tilde{\mathfrak{S}}{}}
\newcommand{\Vcb}{\bar{\mathcal{V}}{}}
\newcommand{\alphabr}{\breve{\alpha}{}}
\newcounter{mnotecount}[section]
\let\oldmarginpar\marginpar
\renewcommand\marginpar[1]{\-\oldmarginpar[\raggedleft\footnotesize #1]%
 {\raggedright\footnotesize #1}}
\begin{document}

\title[Localized past stability of the subcritical Kasner-scalar field spacetimes]{Localized past stability of the subcritical Kasner-scalar field spacetimes}

\author[F.~Beyer]{Florian Beyer}
\address{Dept of Mathematics and Statistics\\
730 Cumberland St\\
University of Otago, Dunedin 9016\\ New Zealand}
\email{fbeyer@maths.otago.ac.nz }

\author[T.A.~Oliynyk]{Todd A. Oliynyk}
\address{School of Mathematical Sciences\\
9 Rainforest Walk\\
Monash University, VIC 3800\\ Australia}
\email{todd.oliynyk@monash.edu}

\author[W.~Zheng]{Weihang Zheng}
\address{School of Mathematical Sciences\\
9 Rainforest Walk\\
Monash University, VIC 3800\\ Australia}
\email{weihang.zheng@monash.edu }

\begin{abstract}
We prove the nonlinear stability, in the contracting direction, of the entire subcritical family of Kasner–scalar field solutions to the Einstein–scalar field equations in four spacetime dimensions. Our proof relies on a zero-shift, orthonormal frame decomposition of a conformal representation of the Einstein-scalar field equations. To synchronise the big bang singularity, we use the time coordinate $\tau = \exp\bigl(\frac{2}{\sqrt{3}}\phi\bigr)$, where $\phi$ is the scalar field, which coincides with a conformal harmonic time slicing. 

We show that the perturbed solutions are asymptotically
pointwise Kasner, geodesically incomplete to the past and terminate at quiescent, crushing big bang singularities located at $\tau=0$, which are characterised by curvature blow up. Specifically, we establish two stability theorems. The first is a global-in-space stability result where the perturbed spacetimes are of the form $M = \bigcup_{t\in (0,t_0]} \tau^{-1}(\{t\}) \cong (0,t_0] \times \mathbb{T}^{3}$. The second is a localised version where the perturbed spacetimes are given by $M=\bigcup_{t\in (0,t_0]}\tau^{-1}(\{t\})\cong \bigcup_{t\in (0,t_0]} \{t\}\times \mathbb{B}_{\rho(t)}$ with time-dependent radius function
$\rho(t) = \rho_0 + (1-\vartheta) \rho_0 \bigl( \bigl(\frac{t}{t_0}\bigr)^{1-\epsilon}-1\bigr)$. Spatial localisation is achieved through our choice of zero-shift, harmonic time slicing that leads to hyperbolic evolution equations with a finite propagation speed. 
\end{abstract}

\maketitle

\section{Introduction}
The study of cosmological solutions near big bang singularities is essential for understanding the early universe and the nature of spacetime singularities predicted by general relativity. Remarkable progress in this area has been made in recent years. Pioneering contributions appeared in \cite{RodnianskiSpeck:2018b,RodnianskiSpeck:2018c,Speck:2018}, where the past nonlinear stability of the Friedmann-Lema\^{i}tre-Robertson-Walker (FLRW) family of solutions to the Einstein-scalar field equations was rigorously established. These works demonstrated that perturbed FLRW solutions terminate in the past at spacelike big bang singularities that are quiescent (non-oscillatory) and characterised by curvature blow up. Notably, this marked the first rigorous demonstration of big bang stability without imposing symmetry assumptions.

The stability results established in \cite{RodnianskiSpeck:2018b,RodnianskiSpeck:2018c,Speck:2018} were extended in \cite{RodnianskiSpeck:2022} to demonstrate the past stability of moderately anisotropic Kasner-scalar field solutions. This line of research culminated in \cite{Fournodavlos_et_al:2023} where the past stability of the entire subcritical range of Kasner-scalar field solutions was established. Building on these methods, \cite{Groeniger_et_al:2023} further extended these stability results to allow for non-vanishing scalar potentials and a significantly broader class of subcritical initial data, showing that such configurations lead to quiescent big bang singularities.

The past stability proofs in \cite{Fournodavlos_et_al:2023} and \cite{Groeniger_et_al:2023} rely on foliating spacetime by level sets of a time function \( t \) that are spacelike and of constant mean curvature (CMC). This time function possesses remarkable geometric and analytic properties that play a central role in proof of these stability results. A key feature this time function is its ability to \textit{synchronise} the big bang singularity, aligning it with the hypersurface $t = 0$. This synchronisation is crucial because it enables statements to be made about the behaviour of the physical fields as the singularity is approached, ie.~in the limit $t\searrow 0$, that are uniform across the whole singular hypersurface.

The use of CMC foliations introduces a non-local dependence of perturbed solutions on the initial data. Consequently, the stability proofs in \cite{Fournodavlos_et_al:2023,Groeniger_et_al:2023} cannot be localised, that is, restricted to truncated cone-shaped domains that extend to the singularity.  By contrast, certain gauge choices, such as wave gauges, ensure that solutions of the Einstein equations depend locally on the initial data. This raises a natural question: Can a localised version of the stable big bang formation results in \cite{Fournodavlos_et_al:2023} be established?

A partial positive answer to this question was provided in \cite{BeyerOliynyk:2024b}, which demonstrated that FLRW solutions to the Einstein-scalar field equations, along with their big bang singularities, are locally stable to the past in spacetime dimensions $n \geq 3$. The main aim of this article is to show in four spacetime dimensions that the family of subcritical Kasner scalar field spacetimes and their big bang singularities are locally stable to the past. This confirms that the big bang stability results of 
\cite{Fournodavlos_et_al:2023} can indeed be localised, at least in four spacetime dimensions.

\subsection{Einstein-scalar field equations}
The Einstein-scalar field equations on a four dimensional spacetime $(M,\gb)$ are given by\footnote{See Section \ref{indexing} for our indexing conventions.}
\begin{align}
\Gb_{\mu\nu}&=\Tb_{\mu\nu}, 
\label{ESF.1}\\
\Box_{\gb} \phi &=0, \label{ESF.2}
\end{align}
where $\nablab_\mu$ is the Levi-Civita connection of the physical spacetime metric $\gb_{\mu\nu}$, $\Gb_{\mu\nu}$ is the Einstein tensor of $\gb_{\mu\nu}$, $\Box_{\gb} =\gb^{\mu\nu}\nablab_\mu\nablab_\nu$ is the wave operator,
and
\begin{equation} \label{stress-energy}
\Tb_{\mu\nu} = 2\nablab_\mu\phi\nablab_\nu\phi - |\nablab\phi|^2_{\gb} \gb_{\mu\nu}
\end{equation}
is the stress-energy tensor of the scalar field $\phi$. For use below, we recall that the Einstein equation \eqref{ESF.1} can be equivalently expressed as 
\begin{equation} \label{ESF.3}
\Rb_{\mu\nu} = \Tb_{\mu\nu}-\frac{1}{2}\gb^{\alpha\beta}\Tb_{\alpha\beta}\gb_{\mu\nu} = 2\nablab_\mu \phi \nablab_\nu\phi
\end{equation}
where $\Rb_{\mu\nu}$ is the Ricci curvature tensor of $\gb_{\mu\nu}$. 

\subsection{Kasner-scalar field spacetimes\label{KSFS}}
The \textit{Kasner-scalar field spacetimes} $(\Rbb_{>0}\times\Tbb^3,\mathring{\gb})$ are a family of solutions to the Einstein-scalar field equations \eqref{ESF.1}-\eqref{ESF.2} defined by  
\begin{equation} \label{Kasner-solns-A} 
\biggl\{\mathring{\gb}=t\mathring{\gt},\,
\mathring{\phi}=\frac{\sqrt{3}}{2}\ln(t)\biggr\}
\end{equation}
where 
\begin{equation*}
\mathring{\gt} = -t^{r_0}dt\otimes dt + \sum_{\Omega=1}^{3} t^{r_\Omega} dx^\Omega \otimes dx^\Omega.
\end{equation*}
Here, $r_0$ and $r_\Omega$, $\Omega=1,2,3$, are constants defined by
\begin{equation}
    r_0 = \frac{\sqrt{3}}{P}-3 \AND
    r_\Omega =\frac{\sqrt{3}}{P} q_\Omega
    -1 \label{Kasner-exps-A}
\end{equation}
where  $P\in (0,\sqrt{1/3}]$ and the $q_\Omega\in \Rbb$, $\Omega=1,2,3$, known as the \textit{Kasner exponents}, satisfy the \textit{Kasner relations} 
\begin{equation}\label{Kasner-rels-A}
    \sum_{\Omega=1}^{3}q_\Omega =1 \AND 
    \sum_{\Omega=1}^{3}q_\Omega^2 = 1-2 P^2. 
\end{equation}
The Kasner relations can be equivalently expressed as
\begin{equation} \label{Kasner-rels-B}
\sum_{\Omega=1}^3 r_\Omega = r_0  \AND  \sum_{\Omega=1}^3 r_\Omega^2 = (r_0+2)^2 - 4,
\end{equation}
where we note, by \eqref{Kasner-exps-A} and $0<P\leq\sqrt{1/3}$, that 
\begin{equation*} 
r_0\geq 0.
\end{equation*}
We will refer to the constants $r_\Omega$ as the \textit{conformal Kasner exponents} and \eqref{Kasner-rels-B} as the \textit{conformal Kasner relations}. It is shown in \cite[Section~1.3]{BeyerOliynyk:2024b} that $P$ can be interpreted as the \emph{scalar field strength} where $P=0$ corresponds to vacuum (the scalar field vanishes; this case is excluded here) and where the largest possible value $P= \sqrt{1/3}$ implies by \eqref{Kasner-rels-A} that all Kasner exponents have the same value $q_{\Omega}=1/3$.

In four spacetime dimensions, the Kasner exponents are said to be \textit{subcritical} \cite[Eq.~(1.8)] {Fournodavlos_et_al:2023}  provided they satisfy 
\begin{equation} \label{SCR}
    \max\limits_{\substack{\Omega,\Lambda,\Gamma \in \{1,2,3\} \\ \Omega < \Lambda}}\{q_\Omega+q_\Lambda-q_\Gamma\} < 1.
\end{equation}

\subsection{Informal statement of the main results}
Theorems \ref{glob-stab-thm} and \ref{loc-stab-thm} provide precise statements of the main results of this article. The first stability theorem, Theorem \ref{glob-stab-thm}, establishes the  nonlinear stability in the contracting direction of the subcritical Kasner-scalar field solutions \eqref{Kasner-solns-A} to the Einstein-scalar field equations in four spacetime dimensions. Since the initial data for Theorem \ref{glob-stab-thm} is specified on a closed hypersurface, this result represents a past global-in-space stability result. While this stability was first proven in \cite{Fournodavlos_et_al:2023}, the approach and proof presented here are new. A key advantage of this new proof is that it can be localised in space.  

Below is an informal statement of Theorem \ref{glob-stab-thm}.
As we discuss in detail below, we call an initial data set \emph{synchronised} provided the initial data for the scalar field is constant, see \eqref{eq:synchroID} below in Remark \ref{rem:constraints}. We further recall that the term \emph{asymptotically pointwise Kasner} is defined in \cite[Def.~1.1]{BeyerOliynyk:2024b}.

\begin{thm}[Past stability of the subcritical Kasner-scalar field spacetimes]
Solutions $\{\gb,\phi\}$ of Einstein-scalar field equations \eqref{ESF.1}-\eqref{ESF.2} that are 
generated from sufficiently differentiable, synchronised initial data that is imposed on $\{t_0\}\times \Tbb^{3}$ and suitably close to subcritical Kasner-scalar field initial data exist on the spacetime region $M \cong \bigcup_{t\in (0,t_0]}\tau^{-1}(\{t\})\cong (0,t_0]\times \Tbb^{3}$ where $\tau=e^{\frac{2}{\sqrt{3}}\phi}$. Moreover, these solutions are asymptotically pointwise Kasner, $C^2$-inextendible through the $\tau=0$ boundary of $M$, past timelike geodesically incomplete and terminate at quiescent, crushing big bang singularities, which are located at $\tau=0$ and characterised by curvature blow up. 
\end{thm}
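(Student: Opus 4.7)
The plan is to recast the Einstein-scalar field system in a form amenable to Fuchsian analysis, so that one obtains uniform energy estimates down to $\tau=0$, and then to extract the geometric consequences (curvature blow-up, geodesic incompleteness, etc.) from those estimates.

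First, I would introduce a conformally rescaled metric $\tilde g_{\mu\nu}=t^{-1}\bar g_{\mu\nu}$, motivated by the background decomposition $\mathring{\bar g}=t\mathring{\tilde g}$, so that the Kasner background of $\tilde g$ is polynomial in $t$ and the dangerous overall conformal factor is isolated. Simultaneously I choose the time function $\tau=\exp(\tfrac{2}{\sqrt 3}\phi)$, which identifies the time parameter with the scalar field and, via \eqref{ESF.2}, inherits hyperbolic control from the wave equation for $\phi$; paired with a zero-shift condition this completely fixes the gauge and yields an algebraic expression for the lapse. I would then introduce a $\bar g$-orthonormal frame $\{e_A\}$ and work with the frame components $e_A{}^\mu$, the second fundamental form $K_{AB}$ of the $\tau$-level sets, the spatial connection coefficients, and the scalar-field derivatives. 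After subtracting the explicit Kasner values from each of these objects, the reduced evolution equations take the form of a first-order system for the perturbation $u$.

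Next, I would arrange this system as a Fuchsian symmetric hyperbolic system of the schematic form
\begin{equation*}
B^{0}(\tau,u)\partial_\tau u + B^{i}(\tau,u)\partial_i u = \frac{1}{\tau}\mathcal B(\tau,u)u + F(\tau,u),
\end{equation*}
where the spectrum of the singular matrix $\mathcal B(0,0)$ is determined by algebraic combinations of the conformal Kasner exponents $r_\Omega$. The subcriticality condition \eqref{SCR} is exactly what is needed so that these eigenvalues lie in a half-plane compatible with a weighted energy estimate. I would then appeal to the singular symmetric hyperbolic existence theory (as developed and used in \cite{BeyerOliynyk:2024b}) to propagate the initial data from $\{t_0\}\times\mathbb T^{3}$ down to $\tau=0$ and to obtain uniform bounds on $u$ in a suitable weighted Sobolev norm. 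In parallel, one has to verify that the Einstein constraints and the full wave equation for $\phi$ are preserved by the reduced evolution; this is done in the standard way by deriving a subsidiary symmetric hyperbolic system for the constraint violations and using uniqueness, with the synchronised initial data condition ensuring compatibility between the gauge $\tau=\exp(\tfrac{2}{\sqrt 3}\phi)$ and the initial slice $\{t_0\}\times\mathbb T^{3}$.

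With such a solution in hand, the geometric conclusions follow in a fairly standard fashion. The uniform bounds on $u$ imply that along any fixed spatial point the rescaled frame, $K_{AB}$ and $\phi$ converge as $\tau\searrow 0$ to limits that satisfy Kasner-type relations, giving the asymptotically pointwise Kasner behaviour; the standard formulas for the Kretschmann scalar then show that it blows up at a rate controlled by a negative power of $\tau$, up to a non-vanishing spatially-dependent factor, which both implies $C^{2}$-inextendibility and (together with a uniform lower bound on $|\mathrm{tr}\, K|$ that gives the crushing property) past timelike geodesic incompleteness. The main obstacle throughout is the algebraic tuning of the frame formulation so that the singular matrix $\mathcal B(0,0)$ is compatible with the full subcritical range \eqref{SCR} rather than only the moderately anisotropic regime; this is essentially the role of the conformal metric $\tilde g$ together with the harmonic-type $\tau$ slicing, and it is this same structure, a symmetric hyperbolic system with finite propagation speed, that allows the argument to be localised in space in Theorem \ref{loc-stab-thm}.
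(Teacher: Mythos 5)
Your overall architecture is aligned with the paper's: conformal rescaling $\gb = \tau\gt$, a zero-shift orthonormal-frame formulation, the scalar-field time gauge $\tau = e^{\frac{2}{\sqrt{3}}\phi}$ yielding a conformal harmonic lapse equation, a symmetric-hyperbolic reduction for local existence with constraint propagation by a subsidiary system, a Fuchsian singular system for the global-to-the-past estimates, and geometric conclusions read off from the resulting decay estimates.  (Two small slips: the paper's frame is orthonormal for the \emph{conformal} metric $\gt$, not for $\gb$; and once the gauge $\tau=t$ is imposed there are no independent scalar-field-derivative variables, since $\Tb_{ab}$ and $\Upsilon_{ab}$ become explicit functions of $t$ and $\alphat$ as in \eqref{Tb-Upsilon-cmpts}.)

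There is, however, a genuine gap in how you propose to reach the \emph{full} subcritical range.  You attribute this to ``the role of the conformal metric $\tilde g$ together with the harmonic-type $\tau$ slicing'', but these alone do not suffice.  After symmetrising the high-order evolution equations (via the constants $\rho=0$, $\gamma=2$ and the symmetriser $\Sc$), the singular matrix $\Bsc$ in \eqref{Bsc-fix} with $k=0$ has a symmetrised part $\tfrac{1}{2}(\Bsc^{\tr}+\Bsc)$ whose eigenvalues are negative on an open subset of subcritical Kasner exponents, so the naive Fuchsian energy estimate on the undifferentiated fields fails there (see the remark following \eqref{Bsc-fix}).  The paper's essential additional device is to introduce the \emph{renormalised derivative variables} $W_\bc=t^{|\bc|\nu}\del{}^\bc W$ and to split the Fuchsian system into two tiers: low-order equations \eqref{FuchB}, treated as ODEs whose singular matrix $|\bc|\nu\id+\Bc\Pbb$ is diagonal and positive without symmetrisation, and the top-order equation \eqref{FuchE}, symmetrised, whose singular matrix $\Bsc$ now carries the $k\nu$ shift that makes $\tfrac{1}{2}(\Bsc^{\tr}+\Bsc)$ positive definite once $k$ satisfies \eqref{k-nu-fix}.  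This renormalisation-of-derivatives device is the paper's substitute for the interpolation arguments of \cite{Fournodavlos_et_al:2023,Groeniger_et_al:2023} and is precisely what enlarges the accessible parameter range from moderately anisotropic to the full subcritical range; your proposal, as written, would only cover the former.
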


The second stability theorem, Theorem~\ref{loc-stab-thm}, is a localised version of Theorem~\ref{glob-stab-thm} and a new result.  Here, initial data is specified on an open centred ball $\{t_0\}\times \mathbb{B}_{\rho_0}$ in $\{t_0\}\times \Tbb^{3}$. An informal statement of this theorem is as follows.

\begin{thm}[Localised past stability of subcritical Kasner-scalar field spacetimes]
Given $\vartheta\in (0,1)$ and an open centred ball $\mathbb{B}_{\rho_0}\subset \Tbb^{3}$, there exist a $t_0>0$ and $\ep\in (0,1)$ such that solutions $\{\gb,\phi\}$ of the  Einstein-scalar field equations \eqref{ESF.1}-\eqref{ESF.2} that are 
generated from sufficiently differentiable, synchronised initial data that is imposed on $\{t_0\}\times \mathbb{B}_{\rho_0}$ and suitably close to subcritical Kasner-scalar field initial data exist on the spacetime region
$M \cong \bigcup_{t\in (0,t_0]}\tau^{-1}(\{t\})\cong \bigcup_{t\in (0,t_0]} \{t\}\times \mathbb{B}_{\rho(t)}$ where $\tau=e^{\frac{2}{\sqrt{3}}\phi}$ and $\rho(t) = \rho_0 + (1-\vartheta) \rho_0 \bigl( \bigl(\frac{t}{t_0}\bigr)^{1-\ep}-1\bigr)$. Moreover, these solutions are asymptotically pointwise Kasner, $C^2$-inextendible through the $\tau=0$ boundary of $M$, past timelike geodesically incomplete and terminate at quiescent, crushing big bang singularities, which are located at $\tau=0$ and characterised by curvature blow up.  
\end{thm}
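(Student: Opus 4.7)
My plan is to derive Theorem~\ref{loc-stab-thm} from the global-in-space result, Theorem~\ref{glob-stab-thm}, by exploiting the domain-of-dependence property of the zero-shift, conformal harmonic time slicing gauge. The essential observation is that in this gauge the evolution equations obtained from \eqref{ESF.1}--\eqref{ESF.2}, once expressed in the orthonormal frame and conformally rescaled by $\tau = e^{\frac{2}{\sqrt{3}}\phi}$, form a symmetric hyperbolic first order system with finite propagation speed with respect to the spatial coordinates on $\mathbb{T}^{3}$. Therefore, to control a localised solution it suffices to extend the localised initial data to all of $\mathbb{T}^{3}$, apply the global theorem, and then restrict the resulting solution to the appropriate past domain of dependence.

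The first step is the extension of the initial data. Given synchronised data on $\{t_0\}\times \mathbb{B}_{\rho_0}$ that is close to Kasner-scalar field data, I would fix a slightly larger ball $\mathbb{B}_{\rho_0'}$ with $\rho_0' > \rho_0$ (and still much smaller than the injectivity radius of $\mathbb{T}^3$), cut off the perturbation inside an annular region $\mathbb{B}_{\rho_0'}\setminus \mathbb{B}_{\rho_0}$, and glue smoothly to the exact Kasner-scalar field data on the complement. To preserve the synchronisation $\phi|_{t_0}=\mathrm{const}$ and the Einstein constraints, the cutoff is performed on the free data variables (conformal frame and its time derivative), after which the constraint equations are re-solved on $\mathbb{T}^{3}$ by a small perturbation supported in the annulus. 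Smallness of the original perturbation translates to smallness of the extended global data in the norms required by Theorem~\ref{glob-stab-thm}, which then yields a solution on $(0,t_0]\times \mathbb{T}^{3}$ with all of the claimed asymptotic and singularity properties.

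The second step is the quantitative propagation speed bound that produces the radius function $\rho(t)$. Writing the principal part of the evolution system as $\partial_t + A^{\Omega}(t,x)\partial_\Omega + \text{l.o.t.}$, the characteristic speeds are controlled by the spectral radius of $A^\Omega$, which in turn is controlled by the spatial components of the conformal inverse metric in the chosen frame. On the background $(\mathring{\bar g},\mathring{\phi})$ in the time coordinate $\tau$, these components decay like $\tau^{s_\Omega}$ for exponents $s_\Omega$ determined by the conformal Kasner exponents $r_\Omega$ of \eqref{Kasner-rels-B} and by the conformal-harmonic slicing condition; subcriticality \eqref{SCR} is exactly what forces $s_\Omega > 0$ uniformly across the family. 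Smallness of the perturbation promotes this background bound to a bound $c(t)\leq C(t/t_0)^{-\delta}$ for any preassigned small $\delta>0$ along the perturbed solution, provided $t_0$ is taken small enough. Integrating $\dot\rho(t) = -c(t)$ backward from $\rho(t_0)=\rho_0$ then yields
\begin{equation*}
\rho(t) \;\geq\; \rho_0 \;+\; (1-\vartheta)\rho_0\!\left(\!\Bigl(\tfrac{t}{t_0}\Bigr)^{1-\epsilon}\!-1\right)
\end{equation*}
after identifying $\epsilon = \delta$ and, given $\vartheta \in (0,1)$, choosing $\epsilon \in (0,1)$ and $t_0>0$ so small that the right-hand side remains $\geq \vartheta \rho_0 > 0$ for all $t\in(0,t_0]$. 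Finite propagation speed then guarantees that the restriction of the global solution to $M=\bigcup_{t\in(0,t_0]}\{t\}\times \mathbb{B}_{\rho(t)}$ depends only on the data in $\mathbb{B}_{\rho_0}$, and the asymptotically pointwise Kasner behaviour, $C^2$-inextendibility, geodesic incompleteness and curvature blow up are pointwise assertions that descend directly from Theorem~\ref{glob-stab-thm}.

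The main obstacle I expect is the second step: producing a characteristic speed bound with a rate of the prescribed form $(t/t_0)^{-\epsilon}$, uniformly over the entire subcritical range, rather than just moderately anisotropic exponents as in \cite{RodnianskiSpeck:2022} or \cite{BeyerOliynyk:2024b}. The anisotropy at the high-$q_\Omega$ end of the subcritical range makes the borderline direction delicate: one must verify that the conformal rescaling by $\tau$ together with the harmonic slicing condition tames the directions along which the physical metric stretches most rapidly to the past, and one must re-exploit \eqref{SCR} at the level of the principal symbol (not just at the level of the energy argument used for Theorem~\ref{glob-stab-thm}). A subsidiary difficulty is ensuring that the constraint-solving step in the extension of the data does not destroy the closeness to background in the stronger norms demanded by the propagation speed estimate; this is handled by choosing $\rho_0'-\rho_0$ sufficiently small and invoking the elliptic regularity of the constraint operator at Kasner data.
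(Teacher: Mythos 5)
Your overall strategy --- extend the data off the ball, solve globally on $(0,t_0]\times\Tbb^3$, and restrict to a shrinking cone via finite propagation speed --- is the same as the paper's: the proof of Proposition \ref{prop:Fuch-global-loc} extends the data with the Sobolev extension operator \eqref{Ebb-def}, invokes the global Fuchsian result, and restricts to the truncated cone, with uniqueness on the cone supplied by an energy estimate exploiting the sign of the boundary matrix. Two specific points in your write-up, however, deviate from what actually works.

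First, your data-extension step is both harder than you acknowledge and unnecessary. You propose to cut off the geometric perturbation in an annulus and then ``re-solve the constraints,'' dismissing this as elliptic regularity at Kasner data. Constraint gluing on an annulus with prescribed boundary behaviour is a genuine Corvino--Schoen-type problem, not a routine elliptic estimate, and you would additionally need the glued data to remain synchronised and small in the strong norms you later use. The paper sidesteps this entirely: the extension $\Ebb_{\rho_0}\Wv_0$ is a plain Sobolev extension that is \emph{not} required to satisfy the constraints outside $\mathbb{B}_{\rho_0}$, because the global Fuchsian theorem (Proposition \ref{prop:Fuch-global}) concerns only the evolution system, while the constraints are propagated \emph{inside the cone} from the data on $\{t_0\}\times\mathbb{B}_{\rho_0}$ by the localised local-in-time theory (Proposition \ref{prop:locC}(c)); the restriction of the global evolution solution to the cone is then identified with the constrained Einstein solution by the cone uniqueness argument of Proposition \ref{prop:Fuch-global-loc}(d).

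Second, your claimed speed bound $c(t)\leq C(t/t_0)^{-\delta}$ ``for any preassigned small $\delta>0$'' is false for general subcritical exponents. On the background the characteristic speeds behave like $|\mathring{\alphat}\,\mathring{\et}{}_A^\Omega|\sim t^{(r_0-r_A)/2}$, and when $r_A>r_0$ this grows with a \emph{fixed} power that no smallness of $t_0$ can improve. What subcriticality buys, via Lemma \ref{lem:subcritical}, is exactly $(r_A-r_0)/2<1$ for all $A$, so one may choose $\ep_2\in(0,1)$ with $\ep_2+(r_0-r_A)/2>0$ (condition \eqref{eigenval-posB}); then the renormalised frame $e^\Omega_A=t^{\ep_2}\alphat\,\et^\Omega_A$ stays uniformly small, the speed is bounded by $\tfrac{\rho_1}{9}t^{-\ep_2}$, and integrating reproduces $\rho(t)$ with $\ep=\ep_2$. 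The exponent $\ep$ is therefore not free: it is bounded below by $\max_A(r_A-r_0)/2$ and degenerates to $1$ as the exponents approach criticality. Your conclusion survives because the theorem only asserts existence of \emph{some} $\ep\in(0,1)$, but the argument must be run with this constrained $\ep_2$, not an arbitrary $\delta$.
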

It is also worth mentioning that the perturbed solutions from the above two theorems exhibit the expected asymptotically velocity term dominated (AVTD) behaviour \cite{Eardley:1972,isenberg1990}

\subsection{Prior and related work}
Among the earliest results in the literature on big bang asymptotics are \cite{chruscielStrongCosmicCensorship1990,isenberg1990}, which cover the case of vacuum solutions with polarised Gowdy symmetry. A subsequent major breakthrough was Ringstr\"{o}m’s comprehensive analysis of the big bang asymptotics of vacuum Gowdy spacetimes in \cite{ringstrom2008,ringstrom2009a}. More recently, the stability of a large family of polarised $\Tbb^2$-symmetric vacuum solutions (allowing for a cosmological constant) was treated in \cite{ABIO:2022_Royal_Soc,ABIO:2022}. We also mention the related work by Li \cite{li2024b,li2024a} on aspects of the singular dynamics of solutions with symmetries.

As discussed above, pioneering studies on big bang stability without symmetry restrictions were carried out in \cite{RodnianskiSpeck:2018b,RodnianskiSpeck:2018c,RodnianskiSpeck:2022,Speck:2018}, and more recently, the past stability of the entire subcritical range of Kasner–scalar field solutions was established in \cite{Fournodavlos_et_al:2023}.  Building on this and on the general framework for the Einstein–scalar field equations introduced in \cite{ringstrom2017,ringstrom2021a,ringstrom2021}, it was shown in \cite{Groeniger_et_al:2023} that these stability results can be extended to allow for non-vanishing scalar potentials and a significantly broader class of subcritical initial data. We also recall the closely related big bang stability results from \cite{FajmanUrban:2022,FajmanUrban:2024,Urban:2024} for the Einstein equations coupled to scalar fields and Vlasov matter. While all of these works employ a CMC foliation to synchronise the singularity, the idea of using the scalar field as a time function, which is also used in this paper, was introduced in \cite{BeyerOliynyk:2024b} and led to a localised past stability result for FLRW solutions of the Einstein–scalar field equations, as well as a past stability result for FLRW solutions of the Einstein–Euler–scalar field system in \cite{BeyerOliynyk:2024c}.

Complementing these stability results, many families of big bang solutions parameterised by data on the singularity have been constructed by solving a Fuchsian \emph{singular initial value problem}. The first results in this direction appeared in the contexts of Gowdy symmetry \cite{ames2017,beyer2017,kichenassamy1998,rendall2000,stahl2002}, $\Tbb^2$-symmetry \cite{ames2013a,isenberg1999}, and $\Tbb^1$-symmetry \cite{choquet-bruhat2006,choquet-bruhat2004}.

Among the first works that do not impose any symmetry restrictions in solving the singular initial value problem was \cite{andersson2001}, which addressed the Einstein–scalar field equations under an analyticity restriction. The corresponding problem for the vacuum equations was solved later in \cite{klinger2015} (still under an analyticity restriction) and in \cite{AthanasiouFournodavlos:2024,FournodavlosLuk:2023} without assuming analyticity. In vacuum, the resulting solution families are expected to be non-generic due a non-oscillatory condition being enforced \cite[Eq.~(1.4)]{AthanasiouFournodavlos:2024}.
Notably, the results from \cite{AthanasiouFournodavlos:2024} are localised in space and closely related to the results established here in the sense that they provided a singular initial value version of the localised stability results established in this article, albeit for the vacuum Einstein equations. 
Finally, we point out that a general framework for prescribing data on the singularity for the Einstein–scalar field equations has been developed in \cite{ringstrom2022a,ringstrom2022}.

\subsection{Proof summary}

\subsubsection{A tetrad formulation of the conformal Einstein-scalar field equations}
The big bang stability results from Theorems \ref{glob-stab-thm} and \ref{loc-stab-thm} are derived using an orthonormal frame formulation of the conformal Einstein-scalar field equations introduced in Section \ref{sec:tetrad}. In this formulation, the coordinate and frame gauge freedom is fixed by employing a zero-shift $3+1$-decomposition, evolving the spatial frames via Fermi-Walker transport, and using the scalar field $\tau = e^{\frac{2}{\sqrt{3}}\phi}$ as a time function (see \eqref{tau-def} and \eqref{t-fix}). These choices yield gauge-reduced evolution equations that are hyperbolic with a finite propagation speed. This is in contrast to 
\cite{Fournodavlos_et_al:2023,Groeniger_et_al:2023} where the inverse mean curvature $(K_A^{A})^{-1}$ was used as a time function and results in gauge-reduced evolution equations with an infinite propagation speed. 

The finite propagation speed for our gauge-reduce equations is the key property that allows us to establish the localised past stability of the subcritical Kasner-scalar field spacetimes. As is standard, these equations can be separated into evolution and constraint equations, and are presented in Section \ref{sec:complete-evolve-constraint} with the evolution and constraint equations given by \eqref{EEb.1}-\eqref{EEb.7} and \eqref{A-cnstr-2}-\eqref{H-cnstr-2}, respectively.

\subsubsection{Local-in-time existence, uniqueness, continuation and constraint propagation}
The local-in-time existence and uniqueness of solutions to our evolution equations, along with a continuation principle and constraint propagation, is established in Section \ref{sec:local-existence-theory} on spacetime cylinders of the form $(t_1,t_0]\times\Tbb^3$ and truncated cone domains (see Section \ref{domains}). Precise statements of these results are presented in Propositions \ref{prop:locA} and \ref{prop:locC}.

We prove the local-in-time existence, uniqueness, and continuation of solutions by showing in Section \ref{sec:sym-hyp} that our evolution equations can be cast into symmetric hyperbolic form (see \eqref{EE-matform}). Standard hyperbolic PDE theory then guarantees these results hold. To establish constraint propagation, we demonstrate in Section \ref{const-hyp} that the evolution equations for the constraints are strongly hyperbolic (see \eqref{cprop-sys} and Lemma \ref{lem:chyp}). This enables us to apply standard hyperbolic PDE theory to show that if the constraints vanish on the initial hypersurface, they continue to hold throughout the evolution.

\subsubsection{Global estimates via Fuchsian techniques}
While the tetrad formulation of the conformal Einstein-scalar field equations employed in Section \ref{sec:local-existence-theory} is effective for establishing local-in-time existence of solutions, it is not suitable for analysing the behaviour of nonlinear perturbations near $t=0$. For this, we use a Fuchsian formulation of the evolution equations, given by \eqref{FuchFinal-A}, to establish bounds that hold globally to the past. This formulation, derived in Section \ref{sec:Fuch-global-existence-theory}, consists of low and high order equations given by \eqref{FuchB} and \eqref{FuchE}, respectively.

The significance of the Fuchsian equation \eqref{FuchFinal-A} is that, as shown in Section \ref{coeff-props}, it satisfies all the necessary conditions to apply the Fuchsian global existence theory developed in \cite{BOOS:2021} provided the Kasner exponents satisfy the subcritical condition \eqref{SCR}. Specifically, applying \cite[Thm.~3.8]{BOOS:2021} ensures the past global stability of the trivial solution to \eqref{FuchFinal-A} on spacetime cylinders of the form \((0, t_0] \times \Tbb^3\) and truncated cone domains intersecting \( t = 0 \). The precise statements of these stability results can be found in Propositions \ref{prop:Fuch-global} and \ref{prop:Fuch-global-loc}.

\subsubsection{Past nonlinear stability of the subcritical Kasner-scalar field spacetimes}
Using the local-in-time existence theory from Propositions \ref{prop:locA} and \ref{prop:locC} in conjunction with the Fuchsian stability results from Propositions \ref{prop:Fuch-global} and \ref{prop:Fuch-global-loc}, we prove in Section \ref{sec:stable-big-bang} the past nonlinear stability of the subcritical Kasner-scalar field spacetimes on both spacetime cylinders of the form $(0,t_0]\times\Tbb^3$ and truncated cone domains that intersect $t=0$. Additionally, we show the perturbed spacetimes are asymptotically pointwise Kasner, past geodesically incomplete, and terminate at quiescent, crushing big bang singularities, which are located at $t=0$ and characterised by curvature blow up. The precise statements of these stability results are presented in Theorems \ref{glob-stab-thm} and \ref{loc-stab-thm}.

\section{Preliminaries\label{prelim}}

\subsection{Data availability statement}

This article has no associated data.

\subsection{Indexing conventions\label{indexing}}
In the article, we consider four dimensional spacetime manifolds of the form
\begin{equation*} 
    M_{t_1,t_0}= (t_1,t_0]\times \Tbb^{3}
\end{equation*}
where $t_0>0$, $t_1<t_0$, and $\Tbb^{3}$ is
the $3$-torus defined by
\begin{equation*}
    \Tbb^{3} = [-L,L]^{3}/\sim
\end{equation*} 
with $\sim$ the equivalence relation obtained
from identifying the sides of the box $[-L,L]^{3}\subset \Rbb^{3}$.

\subsubsection{Coordinates and partial derivatives}
On $M_{t_1,t_0}$,
$(x^\mu)=(x^0,x^\Omega)$ are spacetime coordinates 
where $(x^\Omega)$ are periodic spatial coordinates
on $\Tbb^{3}$ and $x^0$ is a time coordinate
on the interval $(t_1,t_0]$. Lower case Greek letters, eg. $\mu,\nu,\gamma$, run from $0$ to $3$ and label spacetime coordinate indices while upper case Greek letters, eg. $\Lambda,\Omega,\Gamma$, run from $1$ to $3$ and label spatial coordinate indices. Partial derivatives with respect to the coordinates $(x^\mu)$ are denoted by $\del{\mu} = \frac{\del{}\;}{\del{}x^\mu}$.
We often use $t$ to denote the time coordinate $x^0$, ie.~$t=x^0$, and use the notion $\del{t} = \del{0}$ for the partial derivative with respect to the coordinate $x^0$.

\subsubsection{Frames}
\label{sec:frames}
Orthonormal frames $e_a= e_a^\mu \del{\mu}$ are employed in this article. Lower case Latin letters, eg. $a,b,c$, label frame indices and they run from $0$ to $3$ while spatial frame indices are labelled using upper case Latin letters, eg. $A,B,C$, that run from $1$ to $3$.

\subsubsection{Multiindices\label{multiindex}} 
Multiindices are used to enumerate higher order spatial derivatives and we use lower case calligraphic letters, eg. $\ac,\bc,\cc$, to denote these indices.
In $\Rbb^3$, a multiindex is a $3$-tuple 
$\bc = (\bc_1,\bc_2,\bc_3)\in \Nbb_0^3$ of non-negative integers, and its length is defined by $|\bc| = \bc_1+\bc_2+\bc_3$. If $\ac$ and $\bc$ are two multiindices, then $\ac\leq\bc$ provided $\ac_j\leq\bc_j$ for $1\leq j\leq 3$. In this case, $\bc-\ac$ is a multiindex. 

For a given multiindex $\bc$, we use $\del{}^\bc$ to denote the $|\bc|$th-order spatial partial differential operator
\begin{equation*}
    \del{}^\bc = \del{1}^{\bc_1}\del{2}^{\bc_2}\del{3}^{\bc_3}.
\end{equation*}
Also, for $k\in \Nbb_0$ and a function $u(x)$,
\begin{equation*}
\del{}^k u =\{\, (\del{}^\bc u)\,|\, |\bc|=k\,\}
\end{equation*}
denotes the \textit{$k$th-order gradient}, that is, the collection of all spatial derivatives of $u(x)$ of order $k$.

\subsubsection{Index operations\label{Index-op}}
On two component spatial tensors fields, the \textit{symmetrisation}, \textit{anti-symmetrisation} and \textit{symmetric trace free} operations are defined, respectively, by
\begin{gather*}
\sigma_{(AB)} = \frac{1}{2}(\sigma_{AB}+\sigma_{BA}), \quad
\sigma_{[AB]} = \frac{1}{2}(\sigma_{AB}-\sigma_{BA}) \AND
\sigma_{\expval{AB}}=\sigma_{(AB)}-\frac{1}{3}\delta^{CD}\sigma_{CD}\delta_{BA}.
\end{gather*}
We employ the $*$-notation to denote products of spatial tensor for which it is not necessary to know the exact values of the constant coefficients appearing in the product. For example, given two spatial tensors  $\sigma_{AB}$ and $\nu_A$, 
$[\sigma*\nu]_A$ denotes a product of the form
\begin{align*}
[\sigma*\nu]_{A}= C_{A}^{BCD}\sigma_{BC}\nu_{D}
\end{align*}
where the $C_{A}^{BCD}$ are constants. We also use $[(\sigma+\nu)*(\rho+\kappa)]$ to denote products of the form
\begin{equation*}
[(\sigma+\nu)*(\rho+\kappa)]=[\sigma*\rho]+[\sigma*\kappa]+[\nu*\rho]+[\nu*\kappa].
\end{equation*}

\subsection{Inner-products and matrices}
The Euclidean inner-product is denoted by $\ipe{\xi}{\zeta} = \xi^{\tr} \zeta$, $\xi,\zeta \in \Rbb^N$, while
$|\xi| = \sqrt{\ipe{\xi}{\xi}}$
defines the Euclidean norm. The set of $N\times N$ matrices is denoted by  $\Mbb{N}$, and $\Sbb{N}$ is the subspace of symmetric $N\times N$-matrices. The operator norm of a matrix $A\in \Mbb{N}$ is, by definition,
\begin{equation*}
  |A|_{\op} = \sup_{\xi\in \Rbb^N_\times} \frac{|A\xi|}{|\xi|},
\end{equation*}
where $\Rbb^N_\times = \Rbb^N\setminus\{0\}$, while, 
for any $A,B\in \Mbb{N}$, we define 
\begin{equation*}
    A\leq B \quad \Longleftrightarrow \quad \ipe{\xi}{A\xi} \leq \ipe{\xi }{B\xi}, \quad \forall \; \xi \in \Rbb^N.
\end{equation*}
Additionally, we employ the Frobenius inner-product 
\begin{equation}\label{Frobenius}
\ipe{A}{B}=\text{Tr}(A^{\tr}B), \quad A,B\in \Mbb{N},
\end{equation} 
on square matrices and use  $|A| = \sqrt{\ipe{A}{A}}$ 
to denote the associated norm. We recall that this norm satisfies
\begin{equation} \label{Frobenius-inq}
|AB|\leq |A| |B|, \quad \forall \, A,B\in \Mbb{N}.
\end{equation}

Similar notation will be employed on $\Cbb^N$ except that the transpose is replaced by the Hermitian adjoint $\dagger$ and the Euclidean inner product is replaced by the Hermitian inner product $\ipe{\xi}{\zeta} = \xi^{\dagger} \zeta$, $\xi,\zeta \in \Cbb^N$. In particular, the operator norm of a complex $N\times N$-matrix $A$ is defined in the same way, ie.~$|A|_{\op} = \sup_{\xi\in \Cbb^N_\times} \frac{|A\xi|}{|\xi|}$, while the notation 
\begin{equation*}
    A\leq B \quad \Longleftrightarrow \quad \ipe{\xi}{A\xi} \leq \ipe{\xi }{B\xi}, \quad \forall \; \xi \in \Cbb^N,
\end{equation*}
only makes sense for self-adjoint complex $N\times N$-matrices (ie. $A^\dagger=A$ and $B^\dagger=B$). 
\subsection{Balls and truncated cone domains\label{domains}}
Inside $\Tbb^{3}$, we define, for $0<\rho<L$, the \textit{centred ball of
radius} $\rho$ by
\begin{equation*}
    \mathbb{B}_{\rho} = \{\, x\in (-L,L)^{3}\, |\, |x|<\rho \, \} \subset \Tbb^{3},
\end{equation*}
where $|x|=\sqrt{\delta_{\Lambda\Sigma}x^\Lambda x^{\Sigma}}$.
Given constants $\rho_1>0$, $0\leq t_1<t_0$,  $0<\rho_0<L$ and $0\leq \ep<1$ satisfying
\begin{equation*}
\rho_0 -\frac{\rho_1 t_0^{1-\ep}}{1-\ep}>0,    
\end{equation*}
we define the \textit{truncated cone domain} $\Omega_{\qv}\subset M_{t_1,t_0}$ via
\begin{equation} \label{Omega-def}
    \Omega_{\qv} = \Bigl\{\, (t,x)\in (t_1,t_0]\times (-L,L)^{3}\, \Bigl| \,|x|< \frac{\rho_1(t^{1-\ep}-t_0^{1-\ep})}{1-\ep}+\rho_0 \,  \Bigr\}, \quad \qv = (t_1,t_0,\rho_0,\rho_1,\ep).
\end{equation}
The ``side'' component of the boundary of $\Omega_{\qv}$, which we denote by 
$\Gamma_{\qv}$, is determined by the vanishing of the function $\chi =|x|-\frac{\rho_1(t^{1-\ep}-t_0^{1-\ep})}{1-\ep}-\rho_0$.
Applying the exterior derivative to $\chi$ yields the outward pointing co-normal
\begin{equation} \label{n-def}
    n =  \frac{1}{|x|}x_\Lambda dx^\Lambda-\frac{\rho_1}{t^\ep}dt
\end{equation}
to $\Gamma_{\qv}$.
The boundary of $\Omega_{\qv}$ can be decomposed as the disjoint union
\begin{equation} \label{Omega-bndry}
    \del{}\Omega_{\qv}= \bigl(\{t_0\}\times\mathbb{B}_{\rho_0}\bigr) \cup \Gamma_{\qv}\cup \bigl(\{t_1\}\times\mathbb{B}_{\tilde{\rho}_1}\bigr)
\end{equation}
where $\tilde{\rho}_1 =\rho_0- \frac{\rho_1 t_0^{1-\ep}}{1-\ep}$
and the balls $\{t_0\}\times\mathbb{B}_{\rho_0}$ and
$\{t_1\}\times\mathbb{B}_{\tilde{\rho}_1}$ cap $\Omega_{\qv}$ on top and bottom, respectively.  Additionally, we let
\begin{equation} \label{Omega-grave-def}
\grave{\Omega}_{\qv_1} = \Bigl\{\, (t,x)\in (t_1,t_0]\times (-L,L)^{3}\, \Bigl| \,|x|\leq \frac{\rho_1(t^{1-\ep}-t_0^{1-\ep})}{1-\ep}+\rho_0 \,  \Bigr\}
\end{equation}
denote the cone domain with closed spatial slices. 

\subsection{Sobolev spaces, extension operators and order notation\label{sec:Sobolev}}
The $W^{k,p}$, $k\in \Zbb_{\geq 0}$, norm of a map $u\in C^\infty(U,\Rbb^N)$ with $U\subset \Tbb^{3}$ open is defined by
\begin{equation*}
\norm{u}_{W^{k,p}(U)} = \begin{cases} \begin{displaystyle}\biggl( \sum_{0\leq |\bc|\leq k} \int_U |\del{}^{\bc} u|^p \, d^{3} x\biggl)^{\frac{1}{p}}  \end{displaystyle} & \text{if $1\leq p < \infty $} \\
 \begin{displaystyle} \max_{0\leq \ell \leq k}\sup_{x\in U}|\del{}^{\ell} u(x)|  \end{displaystyle} & \text{if $p=\infty$}
\end{cases}.
\end{equation*}
The Sobolev space $W^{k,p}(U,\Rbb^N)$ is then defined to be the completion of $C^\infty(U,\Rbb^N)$ with respect to the norm
$\norm{\cdot}_{W^{k,p}(U)}$. When $N=1$ or the dimension $N$ is clear from the context, we simplify notation and write $W^{k,p}(U)$ instead of $W^{k,p}(U,\Rbb^N)$, and we employ the standard notation $H^k(U,\Rbb^N)=W^{k,2}(U,\Rbb^N)$ throughout.

To each centred ball $\mathbb{B}_\rho\subset \Tbb^{3}$, $0<\rho<L$, we
assign a (non-unique) total
extension operator
\begin{equation} \label{Ebb-def}
\Ebb_\rho \: : \: H^k(\mathbb{B}_\rho,\Rbb^N)\longrightarrow H^k(\Tbb^{3},\Rbb^N), \qquad k \in \Zbb_{\geq 0},
\end{equation}
that satisfies
\begin{equation}\label{Ebb-prop}
\Ebb_\rho(u)\bigl|_{\mathbb{B}_\rho} = u, \AND
\norm{\Ebb_{\rho}(u)}_{H^{k}(\Tbb^{3)}} \leq C\norm{u}_{H^k(\mathbb{B}_\rho)}
\end{equation}
for some constant $C=C(k,\rho)>0$ independent of $u\in H^k(\mathbb{B}_\rho)$. The existence of such an operator is established in
\cite{AdamsFournier:2003}; see Theorems 5.21 and 5.22, and Remark 5.23 for details.

For $t_0>0$, $U$ an open set in  $\Tbb^3$, and $g\in C^0((0,t_0],\Rbb_{>0})$, we say that a time dependent function $f\in C^0((0,t_0],H^k(U))$ is \textit{order $g(t)$} and write $f=\Ord_{H^k(U)}(g(t))$ if there exists a constant $C>0$ such that 
\begin{equation*} 
\norm{f(t)}_{H^k(U)}\leq C g(t), \quad \forall\, t\in (0,t_0].
\end{equation*}
The following lemma provides a simple but useful result that will be employed repeatedly in this article. 

\begin{lem}\label{lem:asymptotic}
Suppose  $t_0>0$, $U$ is an open set in  $\Tbb^3$, $g\in C^0((0,t_0],\Rbb_{>0})$, and
$f\in C^1((0,t_0],H^k(U))$. If $\del{t}f=\Ord_{H^k(U)}(g(t))$ and $\int_0^{t_0} g(s)\,ds< \infty$, then there exists a unique $F \in C^0([0,t_0],H^k(U))$ such that $F(t)=f(t)$ for all $t\in (0,t_0]$, and 
$F(t)=F(0)+\Ord_{H^k(U)}(h(t))$ where $h(t)=\int_0^t g(s)\, ds$.
\end{lem}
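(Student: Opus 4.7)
The plan is to show that $f(t)$ has a limit in $H^k(U)$ as $t\searrow 0$, define $F(0)$ to be this limit, and then derive the claimed asymptotic estimate via the fundamental theorem of calculus. The argument has no serious obstacle; the main ingredient is the completeness of $H^k(U)$ combined with the integrability of $g$.

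First I would establish the Cauchy property. Since $f\in C^1((0,t_0],H^k(U))$, the fundamental theorem of calculus for Banach-valued maps yields, for $0<s\leq t\leq t_0$,
\[
f(t)-f(s)=\int_s^t \del{\sigma}f(\sigma)\,d\sigma,
\]
with the right hand side understood as an $H^k(U)$-valued Bochner integral. The hypothesis $\del{t}f=\Ord_{H^k(U)}(g(t))$ furnishes a constant $C>0$ with $\norm{\del{\sigma}f(\sigma)}_{H^k(U)}\leq Cg(\sigma)$, so
\[
\norm{f(t)-f(s)}_{H^k(U)}\leq C\int_s^t g(\sigma)\,d\sigma.
\]
Because $\int_0^{t_0}g(\sigma)\,d\sigma<\infty$, the tail $\int_s^t g(\sigma)\,d\sigma$ tends to $0$ as $s,t\searrow 0$. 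Hence $\{f(t)\}$ is Cauchy in the complete space $H^k(U)$ as $t\searrow 0$ and converges to a unique limit, which I call $F(0)$.

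I would then set $F(t)=f(t)$ on $(0,t_0]$ together with $F(0)$ as defined above. Continuity of $F$ on $[0,t_0]$ is immediate from $f\in C^0((0,t_0],H^k(U))$ combined with the limiting definition at $t=0$. Uniqueness of the continuous extension also follows at once: any $\tilde F\in C^0([0,t_0],H^k(U))$ agreeing with $f$ on $(0,t_0]$ must satisfy $\tilde F(0)=\lim_{t\searrow 0}f(t)=F(0)$.

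Finally, passing to the limit $s\searrow 0$ in the fundamental theorem identity, which is justified because both sides converge in $H^k(U)$, gives
\[
F(t)-F(0)=\int_0^t \del{\sigma}f(\sigma)\,d\sigma,
\]
and consequently
\[
\norm{F(t)-F(0)}_{H^k(U)}\leq C\int_0^t g(\sigma)\,d\sigma=Ch(t),
\]
which is precisely the statement $F(t)=F(0)+\Ord_{H^k(U)}(h(t))$. The only mildly delicate point is the invocation of the Banach-valued fundamental theorem of calculus and the associated Bochner integral for the $H^k(U)$-valued $C^1$ map $f$, but these are standard facts, so the lemma reduces to a routine application of completeness and dominated convergence via the integrable majorant $g$.
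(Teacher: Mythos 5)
Your proof is correct and follows essentially the same route as the paper's: both rest on the fundamental theorem of calculus for $H^k(U)$-valued maps together with the integrability of the majorant $g$. The only difference is presentational — you prove the existence of the continuous extension to $t=0$ directly via a Cauchy argument in the complete space $H^k(U)$, whereas the paper cites \cite[\S5.9.2 Thm.~2]{Evans:2010} (the $W^{1,1}(0,t_0;X)\hookrightarrow C^0([0,t_0];X)$ embedding) to obtain the extension and the representation formula in one step.
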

\begin{proof}
The existence of the extension $F\in C^0([0,t_0],H^k(U))$
is an immediate consequence of \cite[\S5.9.2 Thm.~2]{Evans:2010} as is the representation
\begin{equation*}
F(t)=F(\tilde{t}) + \int_{\tilde{t}}^t \del{s}f(s)\, ds, \quad 0\leq \tilde{t}\leq t \leq t_0,
\end{equation*}
together with corresponding representation formulas for all spatial derivatives of $F$ of order less or equal than $k$.
Setting $\tilde{t}=0$, we find with the help of the triangle inequality that 
\begin{equation*}
\norm{F(t)-F(0)}_{H^k(U)}\leq \int_0^t \norm{\del{s}f(s)}_{H^k(U)}\, ds \leq C  \int_0^t g(s)\, ds, \quad \forall\, t\in (0,t_0],
\end{equation*}
which completes the proof. 
\end{proof}

\subsection{Constants and inequalities}
We use the standard notation $a \lesssim b$
for inequalities of the form
$a \leq Cb$
in situations where the precise value or dependence on other quantities of the constant $C$ is not required.
On the other hand, when the dependence of the constant needs to be specified, for
example if the constant depends on the norm $\norm{u}_{L^\infty}$, we use the notation
$C=C(\norm{u}_{L^\infty})$.
Constants of this type will always be non-negative, non-decreasing, continuous functions of their arguments.

\section{Tetrad formulation of the conformal Einstein-scalar field equations\label{sec:tetrad}}
Our approach to establishing the stability of the Kasner solutions and their big bang singularities relies on the tetrad (orthonormal frame) formulation of the conformal Einstein equations introduced in \cite{RohrUggla:2005}, combined with specific gauge choices described in detail below. We also recall the closely related tetrad formulations of the Einstein equations from \cite{GarfinkleGundlach:2005,Uggla_et_al:2003,vanElstUggla:1997}.

\subsection{Conformal metric and orthonormal frame}
Following \cite{RohrUggla:2005}, we employ a conformal metric $\gt=\gt_{\mu\nu}dx^\mu\otimes dx^\nu$ defined via
\begin{equation}\label{conf-metricA}
\gb = e^{2\Phi}\gt,
\end{equation}
where the conformal scalar $\Phi$ will be fixed below. We also introduce an orthonormal frame $\et_a = \et^\mu_a \del{\mu}$
for the conformal metric where the associated frame metric
\begin{equation} \label{f-metric} 
\eta_{ab} = \gt(\et_a,\et_b) = \gt_{\mu\nu}\et_a^\mu \et^\nu_b
\end{equation}
is determined by
\begin{equation*}
\eta_{ab} = -\delta_a^0\delta_b^0 + \delta_a^A \delta_{AB}\delta^B_b.
\end{equation*}

We partially fix the coordinates by assuming that relative to the foliation determined by the coordinate time $t=x^0$ the shift of the conformal metric vanishes. Using $\alphat$ to denote the lapse, the $3+1$ decomposition of the conformal metric is then given by
\begin{equation} \label{conformal-metric}
\gt= - \alphat^2 dt\otimes dt + \gt_{\Sigma \Omega}dx^\Sigma \otimes dx^\Omega.
\end{equation}
Next, we fix the frame vector $\et_0$ setting it equal to the future pointing normal vector to the $t$-foliation, that is, 
\begin{equation}\label{et0-def}
    \et_0 = \frac{1}{\alphat}\del{t}.
\end{equation}
To fix the remaining freedom in the spatial vector fields $e_A$, we propagate them using Fermi-Walker transport\footnote{This is consistent with our orthonormal frame assumption because Fermi-Walker transport preserves orthonormality of frames.} defined by
\begin{equation*}
    \nablat_{\et_0}\et_A = -\frac{\gt(\nablat_{\et_0}\et_0, \et_A)}{\gt(\et_0, \et_0)}\et_0
\end{equation*}
where $\nablat$ is the Levi-Civita connection of the conformal metric $\gt$. 
Since the shift vanishes and the frame is orthonormal, the spatial frame vectors are tangent to the $t=\text{constant}$ hypersurfaces and can be expressed as
\begin{equation}\label{etA0=0}
\et_A =\et^\Omega_A \del{\Omega}.
\end{equation}
This allows us to express the coordinate components of the (conformal) spatial metric as
\begin{equation} \label{spatial-conformal-metric}
(\gt_{\Sigma\Omega})=\bigl(\delta^{AB}\et_A^\Sigma\et_B^\Omega\bigr)^{-1}.
\end{equation}

\begin{rem} \label{rem:frame-components}
For the remainder of this article, all tensorial frame components, including for physical tensors, will be relative to the (conformal) orthonormal frame $\et_{a}$. For example, for the physical stress energy tensor $\Tb = \Tb_{\mu\nu}dx^\mu \otimes dx^\nu$, see \eqref{stress-energy}, $\Tb_{ab}=\Tb(\et_a,\et_b)=\Tb_{\mu\nu}\et_a^\mu\et^\nu_b$ are the spacetime frame components while the spatial frame components are given by $T_{AB}=\Tb(\et_A,\et_B)=\Tb_{\Lambda\Sigma}\et_A^\Lambda \et_B^\Sigma$.
\end{rem}

\subsection{Connection and commutator coefficients}
The connection coefficients $\omega_a{}^c{}_b$ of the Levi-Civita connection $\nablat$ of the conformal metric $\gt$ are defined in the usual manner by
\begin{equation*}
\nablat_{\et_a}\et_b = \omega_a{}^c{}_b \et_c.
\end{equation*}
On the other hand, the Lie brackets $[\et_a,\et_b]$ define the commutator coefficients $c_a{}^c{}_b$ via
\begin{equation*}
[\et_a,\et_b]=c_a{}^c{}_b \et_c.
\end{equation*}
Since the frame $\et_a$ is orthonormal, the connection and commutator coefficients uniquely determine each other through the relations
\begin{equation}\label{coeff-relations}
c_a{}^c{}_b=\omega_a{}^c{}_b-\omega_b{}^c{}_a \AND \omega_{abc}= \frac{1}{2}(c_{bac}-c_{cba}-c_{acb}).
\end{equation}

\begin{rem}
Here and in the following, we use the conformal frame metric $\eta_{ab}$ to raise and lower frame indices, eg. $\omega_{acb}=\eta_{cd}\omega_a{}^d{}_b$ and  $\omega_a{}^c{}_b = \eta^{cd}\omega_{adb}$. As a consequence of this convention, all spatial frame indices are raised and lowered with the Euclidean frame metric $\delta_{AB}$, eg. $S_{AB}=\delta_{AC}\delta_{BD}S^{CD}$ and $S_{A}{}^{B}=\delta^{BC}S_{AB}$. 
\end{rem}

By \cite[Eqns.~(25)-(26)]{RohrUggla:2005} and our choice of a zero-shift coordinate gauge and Fermi-Walker transported spatial frame\footnote{This gauge and frame choice correspond to setting $M_i=0$ and $\Mcal_\alpha=W_\alpha=R_\alpha=0$ in the formulas from \cite{RohrUggla:2005}.}, we can express the commutators $[\et_0,\et_A]$ and $[\et_A,\et_B]$ as
\begin{align}
[\et_0,\et_A]&= \Ut_A \et_0-(\Hct\delta^B_A+\Sigmat_A{}^B)\et_B,\label{comm-decomp-0} \\
[\et_A,\et_B]& = (2 \At_{[A}\delta_{B]}^C+\ep_{ABD}\Nt^{DC})\et_C, \label{comm-decomp-A}
\end{align}
where
$\Nt_{AB}$ is symmetric and $\Sigmat_{AB}$ is symmetric and trace free, that is,
\begin{equation}\label{index-sym}
\Sigmat_{\expval{AB}}=\Sigmat_{AB} \AND \Nt_{(AB)}=\Nt_{AB}.
\end{equation}
Here, $\ep_{ABC}$ is the completely anti-symmetric symbol, ie. $\ep_{\sigma(1)\sigma(2)\sigma(3)}=\text{sgn}(\sigma)$ where $\sigma$ is any permutation of $\{1,2,3\}$. 

Now, the commutators \eqref{comm-decomp-0}-\eqref{comm-decomp-A} completely determine the commutation coefficients $c_i{}^k{}_j$, and these, in turn, determine the connection coefficients $\omega_i{}^k{}_j$ by \eqref{coeff-relations}. From this observation and \eqref{comm-decomp-0}-\eqref{comm-decomp-A}, it is not difficult to verify\footnote{Alternatively, these formulas follow from setting $R_\alpha=W_\alpha$ in \cite[Eqns.~(35)-(36)]{RohrUggla:2005}. }  that 
\begin{equation} \label{connect-form}
\begin{gathered} 
\omega_{0A0}=-\omega_{00A}=\Ut_A, \qquad \omega_{A0B}=-\omega_{AB0}=-(\Hct\delta_{AB}+\Sigmat_{AB}), \\
\omega_{ABC}= 2\At_{[B}\delta_{C]A}+\ep_{BCD}\Nh^D{}_A,\qquad  \Nh^D{}_A = \Nt^D{}_A-\frac{1}{2}\Nt^E{}_E \delta^D_A, \\
\omega_{000}=\omega_{A00}=\omega_{0BC}=0.
\end{gathered}
\end{equation}

\subsection{Conformal Einstein field equations}
By \cite[Eqn.~(15)]{RohrUggla:2005} or \cite[Eqn.~(1.4)]{BeyerOliynyk:2024b}, the Einstein equations \eqref{ESF.3}, when expressed in terms of the orthonormal frame $\et_a$ and the conformal metric $\gt$, 
become
\begin{equation} \label{cEEa}
\Rt_{ab}-\Upsilon_{ab}=\Tb_{ab}-\frac{1}{2}\Tb_{c}{}^{c}\eta_{ab}
\end{equation}
where $\Rt_{ab}$ is the Ricci curvature tensor of the conformal metric $\gt$, $\Tb_c{}^c = \eta^{cd}\Tb_{cd}$,
\begin{align}
\Upsilon_{ab} &= 2\nablat_{(a}\psi_{b)}-2 \psi_a \psi_b+\eta_{ab}(\nablat_c \psi^c + 2 \psi_c \psi^c), \label{Upsilon-def}\\
\psi_a &:=\nablat_a\Phi = \et_a(\Phi) \label{ri-def}
\intertext{and}
\Tb_{ab} &= 2 \et_a(\phi)\et_b(\phi)-\eta^{cd}\et_c(\phi)\et_d(\phi)\eta_{ab}. \label{Tbij-form}
\end{align}
For use below, we note from \cite[Eqns.~(57)-(60)]{RohrUggla:2005} that the components of the symmetric $2$-tensor \eqref{Upsilon-def} can be calculated using
\begin{align}
\Upsilon_{00}&= 3\et_0(\psi_0)+3\Hct \psi_0 -\bigl(\et_A(\psi^A)+2 \psi_A \psi^A +3 \Ut^A \psi_A -2 \At^A \psi_A\bigr), \label{Upsilon00}\\
\Upsilon_{0A}&= 2\et_A(\psi_0)-2\psi_A \psi_0-2(\Hct \psi_A+\Sigmat_A{}^B \psi_B), \label{Upsilon0A}\\
\Upsilon_{\expval{AB}}&=-2\Sigmat_{AB}\psi_0 +2\et_{\langle A}\psi_{B\rangle}+ 2 \At_{\langle A}\psi_{B\rangle}+2\epsilon^C{}_{D\langle A} \Nt^D{}_{B\rangle} \psi_C-2 \psi_{\langle A}\psi_{B\rangle}, \label{UpsilonAB-tr} \\
\Upsilon_{00}+\Upsilon_A{}^A &= 
-6(2\Hct+\psi_0)\psi_0+4\et_A(\psi^A)+2 \psi_A \psi^A-8\At^A \psi_A. \label{Upsilon00+AA}
\end{align}

By \cite[Eqns.~(41)-(60)]{RohrUggla:2005}, the conformal Einstein equations \eqref{cEEa} imply the following equations for the tetrad variables\footnote{Since our zero-shift coordinate gauge and Fermi-Walker transported spatial frame corresponds to $M_i=0$ and $\Mcal_\alpha=W_\alpha=R_\alpha=0$ in the article \cite{RohrUggla:2005}, \cite[Eqns.~(42) \& (53)]{RohrUggla:2005} are automatically satisfied. Moreover, \cite[Eqns.~(41) \& (46)]{RohrUggla:2005}, which are viewed there as evolution equations, become the constraint equation \eqref{D-cnstr-1} and \eqref{B-cnstr-1}, respectively, here.} $\{\alphat,\Hct,\et^\Sigma_A,\At_A,\Ut_A,\Sigmat_{AB},\Nt_{AB}\}$, which we separate into evolution and constraint equations\footnote{There is a sign error in the term $\epsilon^{\gamma\delta}{}_{\langle \alpha}(2\Sigma_{\beta\rangle \gamma} R_\delta - N_{\beta\rangle \gamma}\dot{U}_\delta)$ from \cite[Eqn.~(45)]{RohrUggla:2005}. The `$-$' sign in front of this term should be replaced with a `$+$' sign; cf. \cite[Eqn.~(34)]{vanElstUggla:1997}.} as follows: 

\subsubsection*{Evolution equations}

\begin{align}
\et_{0}(\et_{A}^{\Omega}) &= -(\Hct\delta_{A}^{B} + \Sigmat_{A}{}^{B})\et_{B}^{\Omega}, \label{EEa.1}\\
\et_{0}(\At_{A}) &= -\et_{A}(\Hct) + \frac{1}{2}\et_{B}(\Sigmat_{A}{}^{B}) - \Hct(\Ut_{A}+\At_{A}) + \Sigmat_A{}^B\Bigl(\frac{1}{2}\Ut_{B}-\At_{B}\Bigr), \label{EEa.2}\\
\et_{0}(\Nt^{AB}) &= -\Hct\Nt^{AB} + 2\Nt^{(A}{}_{C}\Sigmat^{B)C} - \epsilon^{CD(A}\et_{C}(\Sigmat_D{}^{B)})-\Ut_{C}\epsilon^{CD(A}\Sigmat_D{}^{B)},\label{EEa.3} \\
\et_{0}(\Hct) &= -\Hct^{2}+\frac{1}{3}\et_{A}(\Ut^{A}) + \frac{1}{3}\Ut_{A}(\Ut^{A}-2\At^{A})-\frac{1}{3}\Sigmat_{AB}\Sigmat^{AB}\nonumber \\
&\quad-\frac{1}{6}(\Tb_{00}+\Tb_A{}^A)-\frac{1}{3}\Upsilon_{00}, \label{EEa.4}\\
\et_{0}(\Sigmat_{AB}) &= -3\Hct\Sigmat_{AB} +\et_{\langle A}(\Ut_{B\rangle}) +\Ut_{\langle A}\Ut_{B\rangle} -\et_{\langle A}(\At_{B\rangle})+\At_{\langle A}\Ut_{B\rangle} \nonumber \\
&\quad+\epsilon_{CD\langle A}\et^{C}(\Nt_{B\rangle }{}^D)+\epsilon_{CD\langle A}\Nt_{B\rangle}{}^D\Ut^{C}-2\epsilon_{CD\langle A}\Nt_{B\rangle}{}^D\At^{C} \nonumber \\
&\quad -2\Nt_{\langle A}{}^C\Nt_{B\rangle C} +\Nt_C{}^C\Nt_{\expval{AB}} 
+\Tb_{\expval{AB}} + \Upsilon_{\expval{AB}}. \label{EEa.5}
\end{align}

\subsubsection*{Constraint equations}
\begin{align}
\Af_{AB}^\Omega  &:= 2\et_{[A}(\et_{B]}^{\Omega}) - 2\At_{[A}\et^{\Omega}_{B]} - \epsilon_{ABD}\Nt^{DC}\et^{\Omega}_{C}=0, \label{A-cnstr-1}\\
\Bf_{AB}  &:= -2\et_{[A}(\Ut_{B]}) + 2\At_{[A}\Ut_{B]} + \epsilon_{ABC}\Nt^{CD}\Ut_{D}=0, \label{B-cnstr-1}\\
\Cf^A  &:= \et_{B}(\Nt^{BA}) - \epsilon^{BAC}\et_{B}(A_{C}) - 2A_{B}N^{BA}=0, \label{C-cnstr-1}\\
\Df_A  &:= \Ut_{A} - \frac{1}{\alphat}\et_{A}(\alphat)=0, \label{D-cnstr-1} \\
\Mf_A  &:= \et_{B}(\Sigmat_A{}^B) -2\et_{A}(\Hct) -3\Sigmat_A{}^B \At_{B} - \epsilon_{ABC}\Nt^{BD}\Sigmat_D{}^C - \Tb_{0A} -\Upsilon_{0A}=0, \label{M-cnstr-1}\\
\Hf  &:= 4\et_{A}(\At^{A}) +6\Hct^{2} - 6\At^{A}\At_{A} -\Nt^{AB}\Nt_{AB} +\frac{1}{2}(\Nt_B{}^B)^{2} -\Sigmat_{AB}\Sigmat^{AB} \nonumber \\
&\quad -2\Tb_{00} -(\Upsilon_{00}+\Upsilon_A{}^A)=0. \label{H-cnstr-1}
\end{align}
Equations \eqref{M-cnstr-1} and \eqref{H-cnstr-1} are the Momentum and Hamiltonian constraints, respectively. 

\subsection{Scalar field equation}
Following \cite{BeyerOliynyk:2024b} while restricting to $n=4$, we replace the scalar field $\phi$ in favour of the scalar field $\tau$, cf.~\cite[Eqns.~(1.13)-(1.14)]{BeyerOliynyk:2024b}, defined by
\begin{equation}\label{tau-def}
\tau = e^{\frac{2}{\sqrt{3}}\phi}\quad \Longleftrightarrow \quad \phi = \frac{\sqrt{3}}{2}\ln(\tau),
\end{equation}
and we fix the conformal scalar $\Phi$, cf.~\cite[Eqns.~(1.11)-(1.14)]{BeyerOliynyk:2024b}, by setting
\begin{equation} \label{Phi-fix}
\Phi = \frac{1}{2}\ln(\tau).
\end{equation}
Substituting this into \eqref{conf-metricA}
yields
\begin{equation}\label{conf-metricB}
\gb = \tau \gt.
\end{equation}
Moreover, from the calculations carried out in \cite[\S 1.1]{BeyerOliynyk:2024b}, we know that the scalar field equation \eqref{ESF.2}, when expressed in terms of the conformal metric \eqref{conf-metricB} and scalar $\tau$, becomes
\begin{equation}\label{SFa}
\Box_{\gt} \tau =0,
\end{equation}
and the pair $\{\gt,\tau\}$ satisfies the \textit{conformal Einstein-scalar field equations} 
\begin{align}
\Gt_{\mu\nu} &= \frac{1}{\tau}\nablat_\mu\nablat_\nu \tau, \label{cESF.1}\\
\Box_{\gt} \tau &=0. \label{cESF.2}
\end{align}

\subsection{Fixing the lapse}
The last remaining gauge freedom is to determine the lapse. Here, we again follow \cite{BeyerOliynyk:2024b} and fix the lapse by using the scalar field $\tau$ as our time coordinate, that is, we set 
\begin{equation}\label{t-fix}
\tau = t.
\end{equation}
As discussed in \cite{BeyerOliynyk:2024b}, the importance of this choice of time coordinate is that it synchronises the big bang singularity. Now, by \eqref{SFa}, we have
\begin{equation*}
\Box_{\gt}t =0 \quad \Longleftrightarrow\quad \eta^{ab}\bigl(\et_a(\et_b(t))-\omega_a{}^c{}_b e_c(t) \bigr) =0,
\end{equation*}
or in other words, the time coordinate $t$ is harmonic with respect to the conformal metric. 
Using \eqref{et0-def}-\eqref{etA0=0} and \eqref{connect-form}, we see that this time slicing condition implies that the lapse $\alphat$ is determined via the evolution equation  
\begin{equation} \label{alphat-ev}
\et_0(\alphat) = 3\Hct \alphat. 
\end{equation}

\subsection{An evolution equation for $\Ut_A$}
In order to close our system of equations, we need an evolution equation for $\Ut_A$. To derive it, we divide \eqref{alphat-ev} by $\alphat$ and apply $\et_A$ to the resulting expression to get
\begin{align}
3\et_A(\Hct) & = \et_A(\et_0(\ln(\alphat))) \notag \\
&= [\et_A,\et_0](\ln(\alphat)) + \et_0(\et_A(\ln(\alphat))) \notag \\
&=-\frac{1}{\alphat}\bigl(\Ut_A \et_0(\alphat)-(\Hct \delta^B_A + \Sigmat_A{}^B)\et_B(\alphat)\bigr)+
\et_0\biggl(\frac{1}{\alphat}\et_A(\alphat)\biggr) && \text{(by \eqref{comm-decomp-0})} \notag \\ 
&=-\bigl(\Ut_A 3\Hct \alphat-(\Hct \delta^B_A + \Sigmat_A{}^B)\Ut_B\bigr)+
\et_0(\Ut_A) && \text{(by \eqref{D-cnstr-1} \& \eqref{alphat-ev})}.\notag 
\end{align}
Rearranging yields the desired evolution equation:
\begin{equation}\label{Ut-ev}
\et_0(\Ut_A) = 3\et_A(\Hct)+2\Hct \Ut_A-\Sigmat_A{}^B\Ut_B.
\end{equation}

\subsection{Complete system of evolution and constraint equations\label{sec:complete-evolve-constraint}}
From \eqref{ri-def}-\eqref{Tbij-form}, \eqref{Upsilon00}-\eqref{Upsilon00+AA}, \eqref{tau-def}-\eqref{Phi-fix} and \eqref{t-fix}, we observe that the components of $\Tb_{ab}$ and $\Upsilon_{ab}$ are determined by
\begin{equation}\label{Tb-Upsilon-cmpts}
\begin{gathered} 
    \Tb_{00} = \frac{3}{4\alphat^2 t^2}, \quad \Tb_{0A} = 0, \quad \Tb_A{}^A = \frac{9}{4\alphat^2 t^2}, \quad \Tb_{\expval{AB}} = 0, \\
    \Upsilon_{00} = -\frac{3\Hct}{\alphat t} - \frac{3}{2\alphat^2 t^2}, \quad \Upsilon_{0A} = -\frac{1}{\alphat t}\Ut_A, \quad \Upsilon_A{}^A = -\frac{3\Hct}{\alphat t}, \quad \Upsilon_{\expval{AB}} = -\frac{1}{\alphat t}\Sigmat_{AB}.
\end{gathered}
\end{equation}
Substituting \eqref{Tb-Upsilon-cmpts} into the evolution \eqref{EEa.1}-\eqref{EEa.5} and constraint \eqref{A-cnstr-1}-\eqref{H-cnstr-1} equations and including \eqref{alphat-ev} and \eqref{Ut-ev} with the evolutions equations, we arrive, after using \eqref{et0-def} to express $\et_0$ in terms of the time derivative $\del{t}$, at the following tetrad formulation of the conformal Einstein-scalar field equations:
\subsubsection*{Evolution equations}
\begin{align}
\del{t}\et_{A}^{\Omega} &= -\alphat(\Hct\delta_{A}^{B} + \Sigmat_{A}{}^{B})\et_{B}^{\Omega}, \label{EEb.1}\\
\del{t}\At_{A} &= -\alphat \et_{A}(\Hct) + \frac{1}{2}\alphat\et_{B}(\Sigmat_{A}{}^{B}) - \alphat\Hct(\Ut_{A}+\At_{A}) + \alphat\Sigmat_A{}^B\Bigl(\frac{1}{2}\Ut_{B}-\At_{B}\Bigr), \label{EEb.2}\\
\del{t}\Nt^{AB} &= -\alphat\Hct\Nt^{AB} + 2\alphat\Nt^{(A}{}_{C}\Sigmat^{B)C} - \alphat\epsilon^{CD(A}\et_{C}(\Sigmat_D{}^{B)})-\alphat\Ut_{C}\epsilon^{CD(A}\Sigmat_D{}^{B)},\label{EEb.3} \\
\del{t}\Hct &= -\alphat\Hct^{2}+\frac{1}{3}\alphat\et_{A}(\Ut^{A}) + \frac{1}{3}\alphat\Ut_{A}(\Ut^{A}-2\At^{A})-\frac{1}{3}\alphat\Sigmat_{AB}\Sigmat^{AB}+\frac{1}{t}\Hct, \label{EEb.4}\\
\del{t}\Sigmat_{AB} &= -3\alphat\Hct\Sigmat_{AB} +\alphat\et_{\langle A}(\Ut_{B\rangle}) +\alphat\Ut_{\langle A}\Ut_{B\rangle} -\alphat\et_{\langle A}(\At_{B\rangle})+\alphat\At_{\langle A}\Ut_{B\rangle} \nonumber \\
&\quad+\alphat\epsilon_{CD\langle A}\et^{C}(\Nt_{B\rangle }{}^D)+\alphat\epsilon_{CD\langle A}\Nt_{B\rangle}{}^D\Ut^{C}-2\alphat\epsilon_{CD\langle A}\Nt_{B\rangle}{}^D\At^{C} \nonumber \\
&\quad -2\alphat\Nt_{\langle A}{}^C\Nt_{B\rangle C} +\alphat\Nt_C{}^C\Nt_{\expval{AB}} -\frac{1}{t}\Sigmat_{AB}, \label{EEb.5} \\
\del{t}\alphat &= 3\Hct \alphat^2, \label{EEb.6}\\
\del{t}\Ut_A &= 3\alphat\et_A(\Hct)+2\alphat\Hct \Ut_A-\alphat\Sigmat_A{}^B\Ut_B. \label{EEb.7}
\end{align}

\subsubsection*{Constraint equations}
\begin{align}
\tilde{\Af}_{AB}^\Omega  &:= 2\et_{[A}(\et_{B]}^{\Omega}) - 2\At_{[A}\et^{\Omega}_{B]} - \epsilon_{ABD}\Nt^{DC}\et^{\Omega}_{C}=0, \label{A-cnstr-2}\\
\tilde{\Bf}_{AB}  &:= -2\et_{[A}(\Ut_{B]}) + 2\At_{[A}\Ut_{B]} + \epsilon_{ABC}\Nt^{CD}\Ut_{D}=0, \label{B-cnstr-2}\\
\tilde{\Cf}^A  &:= \et_{B}(\Nt^{BA}) - \epsilon^{BAC}\et_{B}(A_{C}) - 2A_{B}N^{BA}=0, \label{C-cnstr-2}\\
\tilde{\Df}_A  &:= \Ut_{A} - \frac{1}{\alphat}\et_{A}(\alphat)=0, \label{D-cnstr-2} \\
\tilde{\Mf}_A  &:= \et_{B}(\Sigmat_A{}^B) -2\et_{A}(\Hct) -3\Sigmat_A{}^B \At_{B} - \epsilon_{ABC}\Nt^{BD}\Sigmat_D{}^C+\frac{1}{\alphat t}\Ut_A=0, \label{M-cnstr-2}\\
\tilde{\Hf}  &:= 4\et_{A}(\At^{A}) +6\Hct^{2} - 6\At^{A}\At_{A} -\Nt^{AB}\Nt_{AB} +\frac{1}{2}(\Nt_B{}^B)^{2} -\Sigmat_{AB}\Sigmat^{AB} +\frac{6}{\alphat t}\Hct=0. \label{H-cnstr-2}
\end{align}

\begin{rem}\label{rem:constraints}
Initial data sets $(\Sigma, \gttt, \Kttt, \tau_0, \tau_1)$ for the Einstein-scalar field equations consist of a 3-dimensional spatial manifold $\Sigma$, a Riemannian metric $\gttt$ on $\Sigma$, a symmetric 2-tensor $\Kttt$ on $\Sigma$, and two scalar functions $\tau_0$ and $\tau_1$ on $\Sigma$. To synchronise the big bang singularity, we take $\Sigma$ as a level set of $\tau$, ie.~$\Sigma = \tau^{-1}(\tau_0)$, which implies 
\begin{equation}
    \label{eq:synchroID}
    \tau_0 := \tau|_{\Sigma} = t_0.
\end{equation} The remaining initial data are specified as follows: $\gttt$ is the restriction of the conformal metric $\gt$ to $\Sigma$, $\Kttt$ is the second fundamental form of $\Sigma$, and $\tau_1=e_0(\tau)|_{\Sigma}$. 
To solve the Einstein-scalar field equations, the geometric initial data $(\gttt, \Kttt, \tau_0, \tau_1)$ must satisfy the Momentum and Hamiltonian constraints \eqref{M-cnstr-2} and \eqref{H-cnstr-2} on $\Sigma$. Furthermore, for the solutions of the evolution equations \eqref{EEb.1}-\eqref{EEb.7} to generate solutions of the conformal Einstein-scalar field equations, the remaining constraints \eqref{A-cnstr-2}-\eqref{D-cnstr-2} must also be satisfied on $\Sigma$. Importantly, these constraints do not involve the geometric initial data $(\gttt,\Kttt,\tau_0,\tau_1)$ and they can always be satisfied. Furthermore, since we restrict our attention to the past evolution of small perturbations of Kasner-scalar field spacetime initial data, no generality is lost by assuming the synchronisation condition $\tau_0 = t_0$; see \cite[\S 5.7]{BeyerOliynyk:2024b} for details.
\end{rem}

\subsection{Kasner-scalar field solutions}
Before proceeding, we note here that the Kasner-scalar field solutions \eqref{Kasner-solns-A} can be expressed in terms of the tetrad variables  $\{\alphat,\Hct,\et^\Sigma_A,\At_A,\Ut_A,\Sigmat_{AB},\Nt_{AB}\}$ as follows:
\begin{equation}\label{Kasner-solns-B}
\begin{gathered}
    \mathring{\et}_A^\Omega = t^{-r_A/2}\delta^\Omega_A, \quad  \mathring{\At}_A = 0, \quad \mathring{\Nt}^{AB} = 0, \quad \mathring{\Hct} = \frac{r_0}{6}t^{-r_0/2-1}, \\
    \mathring{\Sigmat}_{AB} = \biggl(\frac{r_{AB}}{2}-\frac{r_0\delta_{AB}}{6}\biggr)t^{-r_0/2-1}, \quad \mathring{\alphat} = t^{r_0/2}, \quad \mathring{\Ut}_A = 0,
\end{gathered}
\end{equation}
where
\begin{equation} \label{rAB-def}
    r_{AB} = \begin{cases}
                 r_A, &\text{ if } A=B  \\ 
                  0,  &\text{ if } A \neq B 
             \end{cases}.
\end{equation}

\section{Local-in-time existence, uniqueness, continuation and constraints propagation\label{sec:local-existence-theory}}
We now focus on establishing the local-in-time existence and uniqueness of solutions to a modified version of the evolution equations \eqref{EEb.1}-\eqref{EEb.7}, obtained by adding multiples of the constraints. Additionally, we establish a continuation principle for the solutions and verify that the constraints \eqref{A-cnstr-2}-\eqref{H-cnstr-2} propagate.

\subsection{A symmetric hyperbolic formulation of the evolution equations\label{sec:sym-hyp}}
The modified system used to establish the existence of local-in-time solutions is obtained by adding the constraints $\rho\alphat\tilde{\Mf}_A$, $-\frac{1}{3}\alphat\tilde{\Hf}$, and $\gamma\alphat\tilde{\Mf}_A$ to \eqref{EEb.2}, \eqref{EEb.4}, and \eqref{EEb.7}, respectively. Here, $\rho$ and $\gamma$ are arbitrary constants, while $\tilde{\Mf}_A$ and $\tilde{\Hf}$ are the Momentum and Hamiltonian constraints, defined above by \eqref{M-cnstr-2} and \eqref{H-cnstr-2}, respectively. This leads to the following system of evolution equations:
\begin{align}
    \del{t}\et_{A}^{\Omega} &= -\alphat(\Hct\delta_{A}^{B} + \Sigmat_{A}{}^{B})\et_{B}^{\Omega}, \label{mEEb.1} \\
    \del{t}\At_{A} &= -(1+2\rho)\alphat \et_{A}(\Hct) + \biggl(\frac{1}{2}+\rho\biggr)\alphat\et_{B}(\Sigmat_{A}{}^{B}) + \alphat\Sigmat_A{}^B\Bigl(\frac{1}{2}\Ut_{B}-(1+3\rho)\At_{B}\Bigr) \notag \\
    &\quad - \alphat\Hct(\Ut_{A}+\At_{A}) - \rho\alphat\ep_{ABC}\Nt^{BD}\Sigmat_D{}^C + \frac{\rho}{t}\Ut_A, \label{mEEb.2} \\
    \del{t}\Nt^{AB} &= -\alphat\Hct\Nt^{AB} + 2\alphat\Nt^{(A}{}_{C}\Sigmat^{B)C} - \alphat\epsilon^{CD(A}\et_{C}(\Sigmat_D{}^{B)})-\alphat\Ut_{C}\epsilon^{CD(A}\Sigmat_D{}^{B)}, \label{mEEb.3}\\
    \del{t}\Hct &= -\frac{4}{3}\alphat\et_A(\At^A) + \frac{1}{3}\alphat\et_{A}(\Ut^{A}) - 3\alphat\Hct^{2}  + \frac{1}{3}\alphat\Ut_{A}(\Ut^{A}-2\At^{A}) + 2\alphat\At^A\At_A \notag \\
    &\quad + \frac{1}{3}\alphat\Nt^{AB}\Nt_{AB} - \frac{1}{6}\alphat(\Nt_B{}^B)^2 - \frac{1}{t}\Hct, \label{mEEb.4}\\
    \del{t}\Sigmat_{AB} &= -3\alphat\Hct\Sigmat_{AB} +\alphat\et_{\langle A}(\Ut_{B\rangle}) +\alphat\Ut_{\langle A}\Ut_{B\rangle} -\alphat\et_{\langle A}(\At_{B\rangle})+\alphat\At_{\langle A}\Ut_{B\rangle} \nonumber \\
    &\quad+\alphat\epsilon_{CD\langle A}\et^{C}(\Nt_{B\rangle }{}^D)+\alphat\epsilon_{CD\langle A}\Nt_{B\rangle}{}^D\Ut^{C}-2\alphat\epsilon_{CD\langle A}\Nt_{B\rangle}{}^D\At^{C} \nonumber \\
    &\quad -2\alphat\Nt_{\langle A}{}^C\Nt_{B\rangle C} +\alphat\Nt_C{}^C\Nt_{\expval{AB}} -\frac{1}{t}\Sigmat_{AB},\label{mEEb.5} \\
    \del{t}\alphat &= 3\Hct \alphat^2,\label{mEEb.6} \\
    \del{t}\Ut_A &= (3-2\gamma)\alphat\et_A(\Hct) + \gamma\alphat\et_B(\Sigmat_A{}^B) + 2\alphat\Hct \Ut_A-\alphat\Sigmat_A{}^B\Ut_B \notag \\
    &\quad - 3\gamma\alphat\Sigmat_A{}^B \At_B - \gamma\alphat\ep_{ABC}\Nt^{BD}\Sigmat_D{}^C + \frac{\gamma}{t}\Ut_A. \label{mEEb.7}
\end{align}

We claim that this system is symmetrisable for specific values of the constants $\rho$ and $\gamma$. Once verified, the local-in-time existence and uniqueness of solutions, as well as the continuation principle, follow directly from the application of standard existence theory for hyperbolic systems. It is worth noting that the idea of obtaining a symmetric hyperbolic formulation of the tetrad equations by adding multiples of the constraints to the tetrad evolution equations originates from \cite{GarfinkleGundlach:2005}. The calculations below can be viewed as a generalisation of this symmetrisation method.

To verify the system \eqref{mEEb.1}-\eqref{mEEb.7} is symmetrisable, we first express it in matrix form as follows:
\begin{equation} \label{EE-matform}
    \del{t}\Wt = \alphat B_{\upsilon}^C \et_C^\Lambda \del{\Lambda}\Wt + \Gt_\upsilon, \quad \upsilon=(\rho,\gamma),
\end{equation}
where
\begin{gather} 
    \Wt = \bigl(\et_P^\Sigma, \alphat, \At_P, \Ut_P, \Hct, \Sigmat_{PQ}, \Nt_{PQ}\bigr)^{\tr}, \label{Wt-def}\\
   B_{\upsilon}^C = \begin{bmatrix}
        0&  0&  0&  0&  0&  0&  0 \\ 
        0&  0&  0&  0&  0&  0&  0 \\ 
        0&  0&  0&  0&  -(1+2\rho)\delta_A^C&  (\frac{1}{2}+\rho)\delta^{C\langle P}\delta_A^{Q\rangle}&  0 \\ 
        0&  0&  0&  0&  (3-2\gamma)\delta_A^C&  \gamma\delta^{C\langle P}\delta_A^{Q\rangle }&  0 \\
        0&  0&  -\frac{4}{3}\delta^{CP}&  \frac{1}{3}\delta^{CP}&  0&  0&  0 \\
        0&  0&  -\delta_{\langle A}^C\delta_{B\rangle} ^P&  \delta_{\langle A}^C\delta_{B\rangle}^P&  0&  0&  \vep^{C}{}_{AB}{}^{PQ} \\
        0&  0&  0&  0&  0& \vep^{CPQ}{}_{AB}&  0
    \end{bmatrix}, \label{B-rho-gamma}\\
\vep^{CPQ}{}_{AB}= -\ep^{C\langle P}{}_{(  A}\delta_{B) }^{Q\rangle},  \label{vep-def}
\intertext{and}
\Gt_\upsilon = (\Gt_1, \Gt_2, \Gt_3, \Gt_4, \Gt_5, \Gt_6, \Gt_7)^{\tr} \label{Gt-def}
\end{gather}
with
\begin{align}
    \Gt_1 &= -\alphat(\Hct\delta_{A}^{B} + \Sigmat_{A}{}^{B})\et_{B}^{\Omega}, \label{Gt1-def} \\
    \Gt_2 &= 3\Hct \alphat^2, \label{Gt2-def} \\
    \Gt_3 &= \alphat\Sigmat_A{}^B\Bigl(\frac{1}{2}\Ut_{B}-(1+3\rho)\At_{B}\Bigr) - \alphat\Hct(\Ut_{A}+\At_{A}) - \rho\alphat\ep_{ABC}\Nt^{BD}\Sigmat_D{}^C + \frac{\rho}{t}\Ut_A, \label{Gt3-def}\\
    \Gt_4 &= 2\alphat\Hct \Ut_A-\alphat\Sigmat_A{}^B\Ut_B - 3\gamma\alphat\Sigmat_A{}^B \At_B - \gamma\alphat\ep_{ABC}\Nt^{BD}\Sigmat_D{}^C + \frac{\gamma}{t}\Ut_A, \label{Gt4-def}\\
    \Gt_5 &= - 3\alphat\Hct^{2}  + \frac{1}{3}\alphat\Ut_{A}(\Ut^{A}-2\At^{A}) + 2\alphat\At^A\At_A + \frac{1}{3}\alphat\Nt^{AB}\Nt_{AB} - \frac{1}{6}\alphat(\Nt_B{}^B)^2 - \frac{1}{t}\Hct, \label{Gt5-def}\\
    \Gt_6 &= -3\alphat\Hct\Sigmat_{AB} + \alphat\Ut_{\langle A}\Ut_{B\rangle} + \alphat\At_{\langle A}\Ut_{B\rangle} + \alphat\epsilon_{CD\langle A}\Nt_{B\rangle}{}^D\Ut^{C}-2\alphat\epsilon_{CD\langle A}\Nt_{B\rangle}{}^D\At^{C} \nonumber \\
    &\quad -2\alphat\Nt_{\langle A}{}^C\Nt_{B\rangle C} +\alphat\Nt_C{}^C\Nt_{\expval{AB}} -\frac{1}{t}\Sigmat_{AB}, \label{Gt6-def} \\
    \Gt_7 &= -\alphat\Hct\Nt^{AB} + 2\alphat\Nt^{(A}{}_{C}\Sigmat^{B)C} - \alphat\Ut_{C}\epsilon^{CD(A}\Sigmat_D{}^{B)}. \label{Gt7-def}
\end{align}

\begin{rem}
Viewing  $\At_A$, $\Ut_A$ as vectors in $\Rbb^3$, and $\et^\Omega_A$, $\Nt_{AB}$, and $\Sigma_{AB}$ as $3\times3$-matrices where $\Sigma_{AB}$ is symmetric and trace free and $\Nt^{AB}$ is symmetric, we note that the vector-valued map $\Wt$ defined by \eqref{Wt-def} takes its values in the vector space
\begin{equation}\label{Wbb-def}
\Wbb = \Mbb{3}\times \Rbb \times \Rbb^3 \times \Rbb^3 \times \Rbb\times \mathbb{S}^{\text{TF}}_{3} \times  \Sbb{3}, 
\end{equation}
where here we are using $\Sbb{3}$ and $\mathbb{S}^{\text{TF}}_{3}$ to denote the set of symmetric and symmetric trace-free $3\times 3$-matrices, respectively, and $\Mbb{3}$ to denote the set of $3\times 3$-matrices. It is clear from the structure of the system \eqref{EE-matform}, in particular, the matrix \eqref{B-rho-gamma} and the nonlinear source terms \eqref{Gt6-def} and \eqref{Gt7-def}, that it is well-defined on the vector space $\Wbb$. For use below, we define an open subset $\Uc\subset \Wbb$ via
\begin{equation}\label{Uc-def}
\Uc = \textrm{GL}^+(3)\times (0,\infty) \times \Rbb^3 \times \Rbb^3 \times \Rbb\times \mathbb{S}^{\text{TF}}_{3} \times  \Sbb{3}
\end{equation}
where $\textrm{GL}^+(3)$ is set of $3\times 3$-matrices with positive determinant, ie.~the identity component of the general linear group on $\Rbb^3$.
\end{rem}

To proceed, we introduce a change of variables via
\begin{equation}\label{Wh-def}
\Wh = V^{-1}\Wt 
\end{equation}
where
\begin{equation} \label{V-def}
    V = \begin{bmatrix}
        \delta^P_A\delta^\Omega_\Sigma  & 0 & 0 & 0 & 0 & 0 & 0\\
        0 & 1 & 0 & 0 & 0 & 0 & 0 \\
        0 & 0 & m\delta^P_A & n\delta^P_A & 0 & 0 & 0 \\
        0 & 0 & p\delta^P_A & q\delta^P_A & 0 & 0 & 0 \\
        0 & 0 & 0 & 0 & 1 & 0 & 0 \\
        0 & 0 & 0 & 0 & 0 & \delta^P_A\delta^Q_B & 0 \\
        0 & 0 & 0 & 0 & 0 & 0 & \delta^P_A\delta^Q_B
    \end{bmatrix}
\end{equation}
and the constants $m,p,n,q$ satisfy $mq-pn \neq 0$ 
so that $V^{-1}$ is well-defined.
Then multiplying \eqref{EE-matform} on the left by 
the matrix
\begin{equation} \label{Sc-def}
    \Sc = \begin{bmatrix}
        \delta^P_A \delta^\Omega_\Sigma  & 0 & 0 & 0 & 0 & 0 & 0\\
            0 & 1 & 0 & 0 & 0 & 0 & 0 \\
            0 & 0 & a\delta^P_A & b\delta^P_A & 0 & 0 & 0 \\
            0 & 0 & c\delta^P_A & d\delta^P_A & 0 & 0 & 0 \\
            0 & 0 & 0 & 0 & h & 0 & 0 \\
            0 & 0 & 0 & 0 & 0 & l\delta^P_A\delta^Q_B & 0 \\
            0 & 0 & 0 & 0 & 0 & 0 & s\delta^P_A\delta^Q_B
    \end{bmatrix}
\end{equation}
with $a,b,c,d\in \Rbb$ and $h,l,s>0$
yields
\begin{equation} \label{EE-symform-A}
    \Sc V \del{t}\Wh = \alphat \Sc B_{\upsilon}^C V  \et_C^\Lambda\del{\Lambda}\Wh + \Sc\Gt_\upsilon
\end{equation}
where
\begin{equation} \label{ScV-def}
    \Sc V = \begin{bmatrix}
        \delta^P_A\delta^\Omega_\Sigma & 0 & 0 & 0 & 0 & 0 & 0\\
        0 & 1 & 0 & 0 & 0 & 0 & 0 \\
        0 & 0 & (am+bp)\delta^P_A & (an+bq)\delta^P_A & 0 & 0 & 0 \\
        0 & 0 & (cm+dp)\delta^P_A & (cn+dq)\delta^P_A & 0 & 0 & 0 \\
        0 & 0 & 0 & 0 & h & 0 & 0 \\
        0 & 0 & 0 & 0 & 0 & l\delta^P_A\delta^Q_B & 0 \\
        0 & 0 & 0 & 0 & 0 & 0 & s\delta^P_A\delta^Q_B 
    \end{bmatrix}
\end{equation}
and
\begin{equation}
    \Sc B_{\upsilon}^C V = \text{\scalebox{0.77}{$\begin{bmatrix}
        0&  0&  0&  0&  0&  0&  0 \\ 
        0&  0&  0&  0&  0&  0&  0 \\ 
        0&  0&  0&  0&  (-a(1+2\rho)+b(3-2\gamma))\delta_A^C&  \Bigl(a\Bigl(\frac{1}{2}+\rho\Bigr)+b\gamma\Bigr)\delta^{C\langle P}\delta_A^{Q\rangle} &  0 \\ 
        0&  0&  0&  0&  (-c(1+2\rho)+d(3-2\gamma))\delta_A^C&  \Bigl(c\Bigl(\frac{1}{2}+\rho\Bigr)+d\gamma\Bigr)\delta^{C\langle P}\delta_A^{Q\rangle}&  0 \\
        0&  0&  \Bigl(-\frac{4}{3}m+\frac{1}{3}p\Bigr)h\delta^{CP}&  \Bigl(-\frac{4}{3}n+\frac{1}{3}q\Bigr)h\delta^{CP}&  0&  0&  0 \\
        0&  0&  (-m+p)l\delta_{\langle A}^C\delta_{B\rangle}^P&  (-n+q)l\delta_{\langle A}^C\delta_{B\rangle}^P&  0&  0&  l\vep^{C}{}_{AB}{}^{PQ} \\
        0&  0&  0&  0&  0&  s\vep^{CPQ}{}_{AB}&  0
    \end{bmatrix}$ }}. \label{SBV-def}
\end{equation}

From \eqref{SBV-def}, it is clear that the matrices $\Sc B_\nu^C V$  will be symmetric provided that the following equations hold: 
\begin{gather*}
-a(1+2\rho)+b(3-2\gamma)=  \biggl(-\frac{4}{3}m+\frac{1}{3}p\biggr)h, \quad
a\biggl(\frac{1}{2}+\rho\biggr)+b\gamma = (-m+p)l, \\
-c(1+2\rho)+d(3-2\gamma) = \biggl(-\frac{4}{3}n+\frac{1}{3}q\biggr)h, \quad 
c\biggl(\frac{1}{2}+\rho\biggr)+d\gamma = (-n+q)l,\AND s=l.
\end{gather*}
Solving these equations for $m,n,p,q,s$, we find that
\begin{align}
    m &= -\frac{-ah-6al+18bl-2bh\gamma-12bl\gamma-2ah\rho-12al\rho}{6hl}, \label{m-fix} \\
    n &= -\frac{-ch-6cl+18dl-2dh\gamma-12dl\gamma-2ch\rho-12cl\rho}{6hl}, \label{n-fix} \\
    p &= -\frac{-2ah-3al+9bl-4bh\gamma-6bl\gamma-4ah\rho-6al\rho}{3hl}, \label{p-fix} \\
    q &= -\frac{-2ch-3cl+9dl-4dh\gamma-6dl\gamma-4ch\rho-6cl\rho}{3hl}, \label{q-fix} \\
    s &= l. \label{s-fix}
\end{align}
Given \eqref{m-fix}-\eqref{q-fix}, we observe from \eqref{ScV-def} that the matrix $\Sc V$ will be symmetric provided that
\begin{equation*}
    (cm+dp) - (an+bq) = \frac{(bc-ad)\big(h(-2+\gamma-4\rho)+6l(-2+\gamma-\rho)\big)}{3hl} = 0.
\end{equation*}
Solving this for $\gamma$ gives
\begin{equation} \label{rho-gamma}
    \gamma = \frac{4h+6l}{h+6l}\rho + 2.
\end{equation}
Moreover, by Sylvester's criterion for symmetric matrices and the fact that $h,l,s>0$, it follows that the matrix $\Sc V$ will be positive definite provided $a m+ bp >0$ and
$\det\begin{bmatrix} am+bp & cm+dp \\ cm +dp & cn +d q\end{bmatrix}>0$. Using \eqref{m-fix}-\eqref{q-fix}, these conditions become
\begin{align} \label{pos-c1}
    am + bp &= \frac{1}{6hl(h+6l)}\Bigl( a^2(h+6l)^2(1+2\rho) + 4ab(2h^2+15hl+18l^2)(1+2\rho) \notag \\
        &\quad + 2b^2\bigl(8h^2(1+2\rho)+18l^2(1+2\rho)+3hl(17+16\rho)\bigr) \Bigr) > 0,
\end{align}
and
\begin{equation} \label{pos-c2}
    (ad-bc)(mq-np) = \frac{3(ad-bc)^2 (1+2\rho)}{2hl} > 0.
\end{equation}
Since $h,l>0$, we can ensure \eqref{pos-c2} holds by requiring that
\begin{equation} \label{pos-def-A}
    1 + 2\rho > 0 \AND ad-bc \neq 0.
\end{equation}
Assuming \eqref{pos-def-A}, we have
\begin{align*}
    am + bp &> \frac{1+2\rho}{6hl(h+6l)}\Bigl( a^2(h+6l)^2 + 4ab(2h^2+15hl+18l^2) + 2b^2(8h^2+18l^2+24hl) \Bigr) \\
        &= \frac{1+2\rho}{6hl(h+6l)} \Bigl( a(h+6l) + 2b(2h+3l) \Bigr)^2 \geq 0,
\end{align*}
and hence, \eqref{pos-c1} holds.

To summarize, the above calculations show that the matrices $\Sc V$ and $S B_\nu^C V$ will be symmetric and $\Sc V$ will be positive definite provided $m,n,p,q,s$ are determined by \eqref{m-fix}-\eqref{s-fix}, $\gamma$ is determined by \eqref{rho-gamma},  $h,l$ are chosen positive, and $\rho$ and $a,b,c,d$ satisfy \eqref{pos-def-A}.

\subsection{A strongly hyperbolic formulation of the constraint propagation equations\label{const-hyp}}
It follows from the evolution equations \eqref{mEEb.1}-\eqref{mEEb.7} and a direct, but very lengthy and tedious calculation, that the constraints, defined by \eqref{A-cnstr-2}-\eqref{H-cnstr-2}, propagate according to 
\begin{align}
    \del{t}\tilde{\Af}_{AB}^\Omega =& -2\alphat\Hct\tilde{\Af}_{AB}^\Omega + \alphat\Sigmat_A{}^C\tilde{\Af}_{BC}^\Omega - \alphat\Sigmat_B{}^C\tilde{\Af}_{AC}^\Omega, \label{ev-A-cnstr} \\
    \del{t}\tilde{\Bf}_{AB} =& -\gamma\alphat\et_A(\tilde{\Mf}_B) + \gamma\alphat\et_B(\tilde{\Mf}_A) + (\rho+\gamma)\alphat(\Ut_B\tilde{\Mf}_A - \Ut_A\tilde{\Mf}_B) + \gamma\alphat(\At_A\tilde{\Mf}_B-\At_B\tilde{\Mf}_A) \notag \\
        & + \gamma\ep_{ABC}\Nt^{CD}\tilde{\Mf}_D + \alphat\Hct\tilde{\Bf}_{AB} + \alphat\Sigmat_A{}^C\tilde{\Bf}_{BC} - \alphat\Sigmat_B{}^C\tilde{\Bf}_{AC}, \label{ev-B-cnstr} \\
    \del{t}\tilde{\Cf}^A =& -\rho\alphat\ep^{BAC}\et_B(\tilde{\Mf}_C) - \rho\alphat\ep^{BAC}\Ut_B\tilde{\Mf}_C - 2\rho\alphat\Nt^{AB}\tilde{\Mf}_B \notag \\
        & + (\frac{1}{2}\alphat\ep^{BCD}\Sigmat_D{}^A + \frac{1}{2}\alphat\ep^{ABC}\Hct)\tilde{\Bf}_{BC} + (\alphat\Sigmat_B{}^A - 2\alphat\Hct\delta_B^A)\tilde{\Cf}^B, \label{ev-C-cnstr} \\
    \del{t}\tilde{\Df}_A =& -\alphat\Hct\tilde{\Df}_A - \alphat\Sigmat_A{}^B\tilde{\Df}_B + \gamma\alphat\tilde{\Mf}_A, \label{ev-D-cnstr} \\
    \del{t}\tilde{\Mf}_A =& -\frac{1}{t}\Hct\tilde{\Df}_A - \frac{1}{t}\Sigmat_A{}^B\tilde{\Df}_B - \frac{1-\gamma}{t}\tilde{\Mf}_A - (4\alphat\Hct\delta_A^B+(1+3\rho)\alphat\Sigmat_A{}^B)\tilde{\Mf}_B + \frac{1}{2}\alphat\et_A(\tilde{\Hf}) + \frac{2}{3}\alphat\Ut_A\tilde{\Hf} \notag \\
        & + \frac{1}{2}\alphat\Nt\tilde{\Cf}_A - \frac{3}{2}\alphat\Nt_{AB}\tilde{\Cf}^B - \frac{3}{2}\alphat\ep_{ABC}\At^B\tilde{\Cf}^C + \frac{1}{2}\alphat\ep_{ABC}\Ut^B\tilde{\Cf}^C + \frac{1}{2}\alphat\ep_{ABC}\et^B(\tilde{\Cf}^C) \notag \\
        & + \frac{1}{2}\alphat\et^B(\tilde{\Bf}_{AB}) + \frac{1}{4}\alphat\ep^{BCD}\Nt_{AC}\tilde{\Bf}_{BD} + \frac{3}{2}\alphat\Ut^B\tilde{\Bf}_{AB} - \frac{1}{2}\alphat\At^B\tilde{\Bf}_{AB}, \label{ev-M-cnstr} \\
    \del{t}\tilde{\Hf} =& -\frac{2}{t}\Ut^A\tilde{\Df}_A - \frac{2}{t}\tilde{\Hf} + (2+4\rho)\alphat\et_A(\tilde{\Mf}^A) + 4\rho\alphat\Ut_A\tilde{\Mf}^A + (4\alphat\Ut^A-(4+12\rho)\alphat\At^A)\tilde{\Mf}_A - 10\alphat\Hct\tilde{\Hf}. \label{ev-H-cnstr}
\end{align}

Given a solution $(\et_P^\Sigma, \alphat, \At_P, \Ut_P, \Hct, \Sigmat_{PQ}, \Nt_{PQ})$ of the evolution equations \eqref{mEEb.1}-\eqref{mEEb.7}, we first note that the constraint propagation equation \eqref{ev-A-cnstr} is a linear ODE for  $\tilde{\Af}_{AB}^\Omega$ that decouples from the remaining constraint propagation equations. Because \eqref{ev-A-cnstr} is homogeneous in $\tilde{\Af}_{AB}^\Omega$, it follows from
the uniqueness of solutions to ODEs that 
if  $\tilde{\Af}_{AB}^\Omega$ is initially zero, then it will remain so. This shows that the constraint $\tilde{\Af}_{AB}^\Omega=0$ propagates.
Given this, it is enough to consider the remaining constraint propagation equations \eqref{ev-B-cnstr}-\eqref{ev-H-cnstr} with $\tilde{\Af}_{AB}^\Omega=0$,
which we can express in matrix form as 
\begin{equation} \label{cprop-sys}
    \del{t}\tilde{\Uf} = \alphat \et_C^\Lambda \tilde{\Pf}^C \del{\Lambda} \tilde{\Uf} +  \Qsc(t,\Wt)\tilde{\Uf} 
\end{equation}
where
\begin{equation}\label{Uf-def}
    \tilde{\Uf} = (\tilde{\Mf}_P, \tilde{\Bf}_{PQ}, \tilde{\Cf}^P, \tilde{\Hf}, \tilde{\Df}_P)^{\tr},
\end{equation}
\begin{equation} \label{Nf-def}
    \tilde{\Pf}^C = \begin{bmatrix}
        0& -\frac{1}{2}\delta^{C[P}\delta_A^{Q]}& \frac{1}{2}\ep_A{}^C{}_P& \frac{1}{2}\delta_A^C& 0& \\
        -2\gamma\delta_{[A}^C\delta_{B]}^P& 0& 0& 0& 0& \\
        \rho\ep^{ACP}& 0& 0& 0& 0& \\
        (2+4\rho)\delta^{CP}& 0& 0& 0& 0& \\
        0& 0& 0& 0& 0
    \end{bmatrix},
\end{equation}
and $\Qsc(t,\Wt)$ a linear map that depends polynomially on $\Wt$ and $t^{-1}$. As shown in the following lemma, this system is strongly hyperbolic\footnote{That is, the matrix \eqref{Pf-def} has, for any $(\xit_C)\neq 0$, purely real eigenvalues and is diagonalisable with eigenspaces of constant dimension.}.

\begin{lem} \label{lem:chyp}
Suppose $(\xit_\Lambda)\in \Rbb^3_{\times}$, $\alphat> 0$, $\det(\et^\Lambda_C)\neq 0$, $\gamma>\rho$, $1+2\rho>0$, and $\gamma\neq 0$, and set  
\begin{equation} \label{Pf-def}
     \tilde{\Pf} := \tilde{\Pf}^C \xi_C = \begin{bmatrix}
         0& -\frac{1}{2}\xi^{[P}\delta_A^{Q]}& \frac{1}{2}\ep_{ACP}\xi^C& \frac{1}{2}\xi_A& 0& \\
         -2\gamma\xi_{[A}\delta_{B]}^P& 0& 0& 0& 0& \\
         \rho\ep^{ACP}\xi_C& 0& 0& 0& 0& \\
         (2+4\rho)\xi^P& 0& 0& 0& 0& \\
         0& 0& 0& 0& 0
     \end{bmatrix}
\end{equation}
where $\xi_C=\alphat \et_C^\Lambda \xit_\Lambda$. 
If $\gamma\neq 2+5\rho$ then
$\tilde{\Pf}$ 
has real eigenvalues
\begin{equation*}
    \biggl\{0, \dsp \sqrt{\frac{\gamma-\rho}{2}}\,|\xi|, -\sqrt{\frac{\gamma-\rho}{2}}\,|\xi|, \sqrt{1+2\rho}\,|\xi|,- \sqrt{1+2\rho}\, |\xi|\biggr\}
\end{equation*}
and the dimension of the corresponding eigenspaces are  $\bigl\{13,2,2,1,1\bigr\}$.
Otherwise, if $\gamma = 2+5\rho$, then 
$\tilde{\Pf}$ 
has real eigenvalues
\begin{equation*}
\biggl\{0,\sqrt{1+2\rho}\,|\xi|,- \sqrt{1+2\rho}\, |\xi|\biggr\}
\end{equation*}
and the dimension of the corresponding eigenspaces are  $\bigl\{13,3,3\bigr\}$.
In either case,  $\tilde{\Pf}$ is diagonalisable with real eigenvalues.
\end{lem}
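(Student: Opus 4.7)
The plan is to exploit the sparse structure of $\tilde{\Pf}$ visible in \eqref{Pf-def}. Its last row and last column, corresponding to $\tilde{\Df}_P$, vanish identically, so the three-dimensional $\tilde{\Df}$-block lies trivially in $\ker\tilde{\Pf}$. Restricted to the remaining indices, $\tilde{\Pf}$ has an off-diagonal block form with an operator $A$ mapping $(\tilde{\Bf}_{PQ},\tilde{\Cf}^P,\tilde{\Hf})$ into the $\tilde{\Mf}_A$-slot (the top row of \eqref{Pf-def}) and an operator $B$ mapping $\tilde{\Mf}_P$ into $(\tilde{\Bf}_{PQ},\tilde{\Cf}^P,\tilde{\Hf})$ (the first column of \eqref{Pf-def}). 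Consequently $\tilde{\Pf}^2$ is block diagonal, and its restriction to $\tilde{\Mf}$-space is the composition $AB$ on three dimensions. The nonzero eigenvalues of $\tilde{\Pf}$ then come in sign-reversed pairs via the standard lift: for every eigenvector $u$ of $AB$ with eigenvalue $\lambda^2 > 0$, the vector $(u,\pm\lambda^{-1}Bu)$ in $\tilde{\Mf}\oplus(\tilde{\Bf},\tilde{\Cf},\tilde{\Hf})$ is an eigenvector of $\tilde{\Pf}$ for the eigenvalue $\pm\lambda$, and the two sign choices give independent vectors whenever $\lambda \neq 0$.

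The core calculation is the explicit evaluation of $AB$ on $\tilde{\Mf}$-space. Combining the three nonzero products $(1,2)(2,1)$, $(1,3)(3,1)$, and $(1,4)(4,1)$ of block entries of $\tilde{\Pf}$, and applying the identity $\ep_{ACP}\ep^{PDR} = \delta_A^D \delta_C^R - \delta_A^R \delta_C^D$, I expect the normal form
\[
(ABu)_A = \tfrac{\gamma-\rho}{2}\bigl(|\xi|^2 \delta_A^R - \xi_A \xi^R\bigr)u_R + (1+2\rho)\,\xi_A \xi^R u_R.
\]
This is self-adjoint on $\tilde{\Mf}$-space and decomposes orthogonally: on the one-dimensional span of $\xi$ the eigenvalue is $(1+2\rho)|\xi|^2$, while on the two-dimensional orthogonal complement $\xi^\perp$ the eigenvalue is $(\gamma-\rho)|\xi|^2/2$. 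The hypotheses $1+2\rho > 0$ and $\gamma > \rho$ render both strictly positive, producing the real $\tilde{\Pf}$-eigenvalues $\pm\sqrt{1+2\rho}\,|\xi|$ with multiplicity $1$ and $\pm\sqrt{(\gamma-\rho)/2}\,|\xi|$ with multiplicity $2$. They coincide precisely when $1+2\rho = (\gamma-\rho)/2$, equivalently $\gamma = 2+5\rho$, which is the degenerate case singled out in the statement.

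To close, I will verify the zero-eigenspace count by exhausting the $19 = 3+9+3+1+3$ dimensions of the target space (treating $\tilde{\Bf}_{PQ}$ as an unrestricted tensor). The lifting produces six independent nonzero-eigenvalue eigenvectors. Since $AB$ is positive definite on $\tilde{\Mf}$-space, $Bu=0$ with $u\in\tilde{\Mf}$ forces $ABu=0$ and hence $u=0$, so $\ker\tilde{\Pf}$ is the direct sum of the $\tilde{\Df}$-block (dimension $3$) and $\ker A \subset (\tilde{\Bf},\tilde{\Cf},\tilde{\Hf})$. Direct inspection of $A$ shows that its image already contains $\xi_A$ (from the $\tilde{\Hf}$-column) and the entire plane perpendicular to $\xi$ (from $\ep_{ACP}\xi^C\tilde{\Cf}^P$ as $\tilde{\Cf}^P$ varies within $\xi^\perp$); hence $A$ has rank $3$ and $\dim\ker A = 13 - 3 = 10$. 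This yields $\dim\ker\tilde{\Pf} = 13$, the counts add up to $19$, and the collection of eigenvectors is a basis, so $\tilde{\Pf}$ is diagonalisable with the stated multiplicities $\{13,2,2,1,1\}$, or $\{13,3,3\}$ in the degenerate case. The multiplicities depend only on $|\xi|$ and not on the direction of $\xi$, which delivers strong hyperbolicity. The main technical obstacle is executing the $AB$ computation cleanly with the $\ep$-tensor contractions; once the normal form above is in hand, the remaining steps are elementary linear algebra.
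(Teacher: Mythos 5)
Your proposal is correct and the core calculation is the same as the paper's, but you have organized it in a genuinely cleaner way. The paper works directly from the eigenvalue system $\tilde{\Pf}\tilde{\Uf}=\lambda\tilde{\Uf}$, splits $\tilde{\Mf}_A = \mf\hat{\xi}_A + \ell_A$, solves for $\tilde{\Bf},\tilde{\Cf},\tilde{\Hf},\tilde{\Df}$ in terms of $\tilde{\Mf}$ in the case $\lambda\neq 0$, and plugs back to obtain the quadratic $\bigl(\lambda^2 - \tfrac{\gamma-\rho}{2}|\xi|^2\bigr)\ell_A + \bigl(\lambda^2 - (1+2\rho)|\xi|^2\bigr)\mf\hat{\xi}_A = 0$; it then treats $\lambda=0$ separately and counts dimensions. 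You instead recognize the block anti-diagonal structure (modulo the trivial $\tilde{\Df}$-block) and use the standard ``square and lift'' mechanism for operators of the form $\begin{pmatrix} 0 & A \\ B & 0 \end{pmatrix}$, reducing everything to the $3\times 3$ operator $AB$ on $\tilde{\Mf}$-space. Your explicit formula for $AB$ is correct — it agrees with the quadratic above after identifying $\ell_A$ and $\mf\hat\xi_A$ with the $\xi^\perp$ and $\mathrm{span}(\xi)$ components — and the eigenvalues $\tfrac{\gamma-\rho}{2}|\xi|^2$ (multiplicity $2$) and $(1+2\rho)|\xi|^2$ (multiplicity $1$) are exactly what the paper finds. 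What your organization buys: the $\pm$ pairing of nonzero eigenvalues is immediate from the lift, the degeneracy $\gamma = 2+5\rho$ falls out as a coincidence of $AB$-eigenvalues rather than an ad hoc case split, and the kernel calculation reduces to $\ker\tilde{\Pf} = (\tilde{\Df}\text{-block}) \oplus \ker A$ with $\mathrm{rank}\,A = 3$ forced by $\mathrm{rank}\,AB = 3$, avoiding the paper's separate $\lambda=0$ analysis with the $\chi_{AB}+\ep_{AB}{}^C\zeta_C$ decomposition. The dimension count $3+10+6 = 19$ matches $\dim\Vf$ and confirms diagonalisability, just as in the paper. No gaps; this is a valid and somewhat slicker route to the same statement.
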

\begin{proof}
From \eqref{Nf-def}, it is clear that $\tilde{\Pf}$ defines a linear operator on
\begin{equation*} 
\Vf = \mathbb{R}^3 \times \mathbb{M}_{3\times3} \times \mathbb{R}^3 \times \mathbb{R} \times \mathbb{R}^3,
\end{equation*}
where we note that
\begin{equation}\label{dim-Vf}
\dim_{\Rbb}(\Vf)=19.
\end{equation}
Now, suppose
\begin{equation*}
  \tilde{\Uf} = (\tilde{\Mf}_P, \tilde{\Bf}_{PQ}, \tilde{\Cf}^P, \tilde{\Hf}, \tilde{\Df}_P)^{\tr} \in \Vf
\end{equation*}
is an eigenvector of $\tilde{\Pf}$ with a possibly complex eigenvalue $\lambda$. Then 
$\tilde{\Pf} \Uft = \lambda \Uft$, or equivalently
\begin{align}
    -\frac{1}{2}\xi^B \Bft_{[BA]} + \frac{1}{2}\ep_A{}^{CB}\Cft_B \xi_C + \frac{1}{2}\xi_A\Hft &= \lambda \Mft_A, \label{eigen-1.1} \\
        -2\gamma \xi_{[A} \Mft_{B]} &= \lambda \Bft_{AB}, \label{eigen-1.2} \\
        \rho\ep^{ACB}\Mft_B \xi_C &= \lambda \Cft^A, \label{eigen-1.3} \\
        (2+4\rho)\xi^A \Mft_A &= \lambda \Hft, \label{eigen-1.4} \\
        0 &= \lambda \Dft_A. \label{eigen-1.5}
\end{align}
To help analyse these equations, we define
\begin{equation*}
    \xih_A = \frac{\xi_A}{|\xi|} \AND \pi_B^A = \delta_B^A - \xih_B\xih^A,
\end{equation*}
where $|\xi|^2 = \delta^{AB}\xi_A\xi_B$,
and decompose $\Mf_A$ as
\begin{equation} \label{Mft-decomp}
    \Mft_A = \mf \xih_A + \ell_A
\end{equation}
where
\begin{equation*}
    \mf = \xih^A\Mft_A \AND \ell_A = \pi_A^B\Mft_B.
\end{equation*}
Note that $\xih^A\ell_A=0$ and in line with our conventions all of the spatial frame indices are raised and lowered with the Euclidean metric, eg. $\hat{\xi}^A=\delta^{AB}\hat{\xi}_A$ and $\tilde{\Cf}_A = \delta_{AB}\tilde{\Cf}^A$.

Substituting \eqref{Mft-decomp} into \eqref{eigen-1.1}-\eqref{eigen-1.5} yields
\begin{align}
    -\frac{1}{2}\xi^B \Bft_{[BA]} + \frac{1}{2}\ep_A{}^{CB}\Cft_B \xi_C + \frac{1}{2}\xi_A\Hft &= \lambda \mf\xih_A + \lambda\ell_A, \label{eigen-2.1} \\
        -2\gamma \xi_{[A} \ell_{B]} &= \lambda \Bft_{AB}, \label{eigen-2.2} \\
        \rho\ep^{ACB}\ell_B \xi_C &= \lambda \Cft^A, \label{eigen-2.3} \\
        (2+4\rho)\mf |\xi| &= \lambda \Hft, \label{eigen-2.4} \\
        0 &= \lambda \Dft_A. \label{eigen-2.5}
\end{align}
Now, two cases follow.

\medskip

\noindent\underline{Case 1: $\lambda \neq 0:$}
If $\lambda \neq0$, then, from \eqref{eigen-2.2}-\eqref{eigen-2.5}, we have
\begin{align}
    \Bft_{AB} &= -\frac{2\gamma}{\lambda} \xi_{[A}\ell_{B]}, \label{Bft-fix-1} \\
    \Cft^A &= \frac{\rho}{\lambda}\ep^{ACB} \ell_B\xi_C, \label{Cft-fix-1} \\
    \Hft &= \frac{2}{\lambda}(1+2\rho)\mf |\xi|, \label{Hft-fix-1} \\
    \Dft_A &= 0. \label{Dft-fix-1}
\end{align}
Plugging \eqref{Bft-fix-1}-\eqref{Hft-fix-1} into \eqref{eigen-2.1} yields
\begin{equation*}
    \frac{\gamma}{2\lambda}|\xi|^2 \ell_A - \frac{\rho}{2\lambda}|\xi|^2 \ell_A + \frac{1+2\rho}{\lambda}\mf|\xi|^2 \xih_A = \lambda \mf\xih_A + \lambda\ell_A,
\end{equation*}
or after rearranging,
\begin{equation*}
    \biggl( \lambda^2 - \biggl(\frac{\gamma}{2}-\frac{\rho}{2}\biggr)|\xi|^2 \biggr)\ell_A + \bigl( \lambda^2 - (1+2\rho)|\xi|^2 \bigr)\mf \xih_A = 0.
\end{equation*}
Since, by assumption, $|\xi|> 0$,
$\gamma > \rho$ and $1+2\rho > 0$, we see that
\begin{equation}\label{eigenvalue-case-1}
   \gamma\neq 2+5\rho, \quad \lambda = \pm\sqrt{\frac{\gamma}{2}-\frac{\rho}{2}}\,|\xi| \AND \mf \xih_A = 0,
\end{equation}
or
\begin{equation}
\label{eigenvalue-case-2}
   \gamma\neq 2+5\rho, \quad \lambda = \pm\sqrt{1+2\rho}\,|\xi| \AND \ell_A = 0,
\end{equation}
or
\begin{equation} \label{eigenvalue-case-3}
\gamma=2+5\rho \AND \lambda = \pm\sqrt{1+2\rho}\,|\xi|.
\end{equation}
If either \eqref{eigenvalue-case-1} or \eqref{eigenvalue-case-2} holds, then we conclude from  \eqref{Mft-decomp} and \eqref{Bft-fix-1}-\eqref{Cft-fix-1} that  $\dsp \pm \sqrt{\frac{\gamma}{2}-\frac{\rho}{2}}\,|\xi|$  and $\pm\sqrt{1+2\rho}\,|\xi|$ are real eigenvalues of $\Pf$ with  corresponding eigenspaces 
\begin{equation*}
    \Ef_{\pm} = \biggl\{ \biggl(\ell_A,\, -\frac{2\gamma}{\lambda}\xi_{[A}\ell_{B]},\, \frac{\rho}{\lambda}\ep^{ACB}\ell_B\xi_C,\, 0,\, 0\biggr)^{\tr} \, \biggl| \, (\ell_A) \in \Rbb^3, \; \xi^B\ell_B=0, \; \lambda =  \pm \sqrt{\frac{\gamma}{2}-\frac{\rho}{2}}\,|\xi|\;  \biggr\}
\end{equation*}
and
\begin{equation*}
    \Ff_{\pm} = \biggl\{ \biggl(\mf\xih_A,\, 0,\, 0,\, \frac{2}{\lambda}(1+2\rho)\mf |\xi|,\, 0\biggr)^{\tr} \biggl| \, \mf \in \mathbb{R},\; \lambda = \pm\sqrt{1+2\rho}\,|\xi|\; \biggr\},
\end{equation*}
respectively. 
We further note that
\begin{equation} \label{dim-1}    \dim_{\Rbb}\Ef_{\pm} = 2 \AND \dim_{\Rbb}\Ff_{\pm} = 1.
\end{equation}

On the other hand if \eqref{eigenvalue-case-3} holds, then 
$\pm\sqrt{1+2\rho}\,|\xi|$ are two real eigenvalues with corresponding eigenspaces $\Ef_{+}\oplus \Ff_{+}$ 
and $\Ef_{-}\oplus \Ff_{-}$, both of which are $3$-dimensional.

\medskip

\noindent\underline{Case 2: $\lambda = 0$:} If $\lambda=0$, then, it follows from \eqref{eigen-2.1}-\eqref{eigen-2.5} that
\begin{align}
    -\frac{1}{2}\xi^B \Bft_{[BA]} + \frac{1}{2}\ep_A{}^{CB}\Cft_B \xi_C + \frac{1}{2}\xi_A\Hft &= 0, \label{eigen-3.1} \\
        -2\gamma \xi_{[A} \ell_{B]} &= 0, \label{eigen-3.2} \\
        \rho\ep^{ACB}\ell_B \xi_C &= 0, \label{eigen-3.3} \\
        (2+4\rho)\mf |\xi| &= 0. \label{eigen-3.4}
\end{align}
Since $\gamma \neq 0$ by assumption, \eqref{eigen-3.2} implies that $\xi_{[A} \ell_{B]} = 0$. Contracting this with $\ell^A$ yields
\begin{equation*}
    \ell^A\xi_A\ell_B - \ell^A\xi_B\ell_A = -|\ell|^2\xi_B = 0,
\end{equation*}
and hence that  $\ell_A = 0$ since $|\xi|>0$.
By \eqref{eigen-3.4}, we also have that $\mf=0$ since $1+\rho>0$, and so we deduce from  \eqref{Mft-decomp} that \begin{equation} \label{Mft-fix-2}
    \Mft_A = 0.
\end{equation}

Next, contracting \eqref{eigen-3.1} with $\xi^A$ yields $\frac{1}{2}|\xi|^2\Hft = 0$, and hence that
\begin{equation} \label{Hft-fix-2}
    \Hft = 0.
\end{equation}
Plugging this back into \eqref{eigen-3.1} gives
\begin{equation} \label{eigen-3.1-a}
    -\frac{1}{2}\xi^B \Bft_{[BA]} + \frac{1}{2}\ep_A{}^{CB}\Cft_B \xi_C = 0.
\end{equation}
To proceed, we decompose $\Bft_{AB}$ as
\begin{equation} \label{Bft-decomp-1}
    \Bft_{AB} = \chi_{AB} + \ep_{AB}{}^C \zeta_C
\end{equation}
where $\Bft_{(AB)} = \chi_{AB}$ and $\Bft_{[AB]} = \ep_{AB}{}^C \zeta_C$.
Using this decomposition, we can express \eqref{eigen-3.1-a} as
$\ep_{ABC}\xi^B(\zeta^C + \Cft^C) = 0$,
which will hold if and only if
\begin{equation} \label{zeta-Cft-a}
    \zeta^C + \Cft^C = \cf \xi^C
\end{equation}
for some $\cf\in\mathbb{R}$. So if we decompose $\zeta^C$ and $\Cft^C$ as $\zeta^C = \xih^B\zeta_B \xih^C + \pi_B^C \zeta^B$ and $\Cft^C = \xih^B\Cft_B \xih^C + \pi_B^C \Cft^B$,
respectively, then \eqref{zeta-Cft-a} can be expressed as 
\begin{equation} \label{zeta-Cft-b}
    \pi_B^C(\zeta^B + \Cft^B) = 0.
\end{equation}
It therefore follows from \eqref{Mft-fix-2}, \eqref{Hft-fix-2}, \eqref{Bft-decomp-1}, and \eqref{zeta-Cft-b}, that the kernel of $\tilde{\Pf}$, i.e.~the eigenspace corresponding to the eigenvalue $0$, is determined by
\begin{equation*}
\ker \tilde{\Pf} = \Bigl\{\, \bigl(0, \chi_{AB}+\ep_{AB}{}^C\zeta_C, \tilde{\Cf}^A, 0, \tilde{\Df}_A\bigr)^{\tr}\, \Bigl| \, (\chi_{AB})\in \Sbb{3},\; (\Cf^A), (\Df_A), (\zeta^C) \in \Rbb^3,  \pi_B^C(\zeta^B + \Cft^B) =  0\;\Bigr\},
\end{equation*}
where $\Sbb{3}$ is the set of $3\times 3$-symmetric matrices. We also note that
\begin{equation} \label{dim-KerNf}
    \dim_{\Rbb} \ker \tilde{\Pf} = 13.
\end{equation}

Now since
\begin{equation*}
\dim_{\Rbb}\Vf = \dim_{\Rbb} \Ef_+ + \dim_{\Rbb} \Ef_- + \dim_{\Rbb} \Ff_+ + \dim_{\Rbb} \Ff_- + \dim_{\Rbb} \ker \tilde{\Pf}
\end{equation*}
by \eqref{dim-Vf}, \eqref{dim-1} and \eqref{dim-KerNf}, we conclude that the vector space $\Vf$ can be decomposed as a direct sum of eigenspaces according to
\begin{equation*}
\Vf = \Ef_+ \oplus \Ef_- \oplus \Ff_+ \oplus \Ff_- \oplus \ker \tilde{\Pf},
\end{equation*}
which completes the proof.
\end{proof}

\subsection{Local-in-time existence and constraint propagation on $M_{t_1,t_0}$\label{Sec:loc-exist}}
By making the following choices for the constants
\begin{equation} \label{consts-fix}
    a=2, \ b=0, \ c=-\frac{2}{3}, \ d=\frac{1}{3}, \ m=3, \ n=0, \ p=6, \ q=1, \ h=1, \ l=\frac{1}{3}, \ s=\frac{1}{3}, \ \rho=0, \ \gamma=2,
\end{equation}
we know from the calculations carried out in Section \ref{sec:sym-hyp} that the system \eqref{mEEb.2}-\eqref{mEEb.7} can be cast, cf.~\eqref{EE-symform-A}, into the symmetric hyperbolic form
\begin{equation}\label{EE-symform-B}
\Bh^0\del{t}\Wh + \alphat \et^\Lambda_C\Bh^C \del{\Lambda}\Wh = \Gh(t,\Wh)
\end{equation}
where
\begin{align}
\Bh^0 &= \diag\Bigl(            \delta^P_A\delta^\Omega_\Sigma, 1, 6\delta^P_A, \frac{1}{3}\delta^P_A, 1,\frac{1}{3}\delta^P_A\delta^Q_B, \frac{1}{3}\delta^P_A\delta^Q_B \Bigr), \label{Bh0-def}\\
\Bh^C &=-\begin{bmatrix}
            0&  0&  0&  0&  0&  0&  0 \\ 
            0&  0&  0&  0&  0&  0&  0 \\ 
            0&  0&  0&  0&  -2\delta_A^C& \delta^{C \langle P}\delta_A^{Q \rangle}&  0 \\ 
            0&  0&  0&  0&  \frac{1}{3}\delta_A^C&  \frac{1}{3}\delta^{C \langle P}\delta_A^{Q \rangle}&  0 \\
            0&  0&  -2\delta^{CP}&  \frac{1}{3}\delta^{CP}&  0&  0&  0 \\
            0&  0&  \delta_{\langle A}^C\delta_{B \rangle}^P&  \frac{1}{3}\delta_{\langle A}^C\delta_{B \rangle}^P&  0&  0&  \frac{1}{3}\vep^{C}{}_{AB}{}^{PQ} \\
            0&  0&  0&  0&  0&  \frac{1}{3}\vep^{CPQ}{}_{AB}&  0
        \end{bmatrix}, \label{BhC-def}\\
 \Sc &= \begin{bmatrix}
        \delta^P_A & 0 & 0 & 0 & 0 & 0 & 0\\
            0 & 1 & 0 & 0 & 0 & 0 & 0 \\
            0 & 0 & 2\delta^P_A & 0 & 0 & 0 & 0 \\
            0 & 0 & -\frac{2}{3}\delta^P_A & \frac{1}{3}\delta^P_A & 0 & 0 & 0 \\
            0 & 0 & 0 & 0 & 1 & 0 & 0 \\
            0 & 0 & 0 & 0 & 0 & \delta^P_A\delta^Q_B & 0 \\
            0 & 0 & 0 & 0 & 0 & 0 & \frac{1}{3}\delta^P_A\delta^Q_B
    \end{bmatrix},\label{Sc-fix}\\
\Gh(t,\Wh) &= \St \Gt_{(0,2)}, \label{Gh-fix}
\end{align}
$\Gt$ is defined by \eqref{Gt-def} and $\Wt$, see \eqref{Wt-def}, is related to $\Wh$ via 
\begin{equation}\label{Wh-to-Wt}
\Wh:=V^{-1}\Wt= \biggl(\et_P^\Sigma, \alphat, \frac{1}{3}\At_P, -2\At_P + \Ut_P, \Hct, \Sigmat_{PQ}, \Nt_{PQ}\biggr)^{\tr}  
\end{equation}
with
\begin{equation}\label{V-fix}
  V = \begin{bmatrix}
        \delta^P_A & 0 & 0 & 0 & 0 & 0 & 0\\
        0 & 1 & 0 & 0 & 0 & 0 & 0 \\
        0 & 0 & 3\delta^P_A & 0 & 0 & 0 & 0 \\
        0 & 0 & 6\delta^P_A & \delta^P_A & 0 & 0 & 0 \\
        0 & 0 & 0 & 0 & 1 & 0 & 0 \\
        0 & 0 & 0 & 0 & 0 & \delta^P_A\delta^Q_B & 0 \\
        0 & 0 & 0 & 0 & 0 & 0 & \delta^P_A\delta^Q_B
    \end{bmatrix}.
\end{equation}
Note that $\Gh(t,\Wh)$ depends polynomially on $\Wh$ and $t^{-1}$, and for the choice of constants \eqref{consts-fix}, that the constraint propagation equations \eqref{cprop-sys} are strongly hyperbolic.

In the following proposition, we establish the local-in-time existence and uniqueness of solutions, along with a continuation principle, for the initial value problem (IVP) \begin{align}
\Bh^0\del{t}\Wh + \alphat \et^\Lambda_C\Bh^C \del{\Lambda}\Wh &= \Gh(t,\Wh)\hspace{0.5cm} \text{in $M_{t_1,t_0}=(t_1,t_0]\times \Tbb^3$,} \label{EE-loc-IVP.1}\\
\Wh &= V^{-1}\Wt_0 \hspace{0.5cm}  \text{in $\Sigma_{t_0}=\{t_0\}\times \Tbb^3$.}\label{EE-loc-IVP.2}
\end{align}
Additionally, we prove that the constraints propagate. 

\begin{prop}\label{prop:locA}
Suppose $t_0>0$, $k\in\Zbb_{>5/2}$, and $\Wt_0 \in H^k(\Tbb^3,\Wbb)$ where $\Wbb$ is defined by \eqref{Wbb-def}. Then there exists a $t_1 \in (0,t_0)$ and a unique classical solution
$\Wh \in C^1(M_{t_1,t_0})$  of the IVP \eqref{EE-loc-IVP.1}-\eqref{EE-loc-IVP.2}. Moreover, 
$\Wh \in \bigcap_{j=0}^k C^j\bigl((t_1,t_0], H^{k-j}(\Tbb^{3},\Wbb)\bigr)$
and the following hold:
\begin{enumerate}[(a)]
\item 
If $\sup_{t_1<t<t_0}\norm{\Wh(t)}_{W^{1,\infty}(\Tbb^{3})} < \infty$,
then  there exists a time $t_1^*\in [0, t_1)$ such that $\Wh$
can be uniquely continued as classical solution of \eqref{EE-loc-IVP.1} to $M_{t_1^*,t_0}$.
\item If $\Wt_0(x)\in \Uc$ for all $x\in \Tbb^3$, where $\Uc$ is defined by \eqref{Uc-def}, then $\Wt(t,x)=V \Wh(t,x)\in \Uc$ and  $\Wh(t,x)\in \Uc$ for all $(t,x)\in M_{t_1,t_0}$.
\item If $\Wt_0$ is chosen so that $\Wt_0(x)\in \Uc$ for all $x\in \Tbb^3$ and the constraint equations \eqref{A-cnstr-2}-\eqref{H-cnstr-2} hold on $\Sigma_{t_0}$, then the constraint equations continue to hold everywhere on $M_{t_1,t_0}$.
\end{enumerate} 
\end{prop}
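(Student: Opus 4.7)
My plan is to apply standard quasilinear symmetric hyperbolic theory on $\Tbb^3$, together with elementary ODE arguments and strongly-hyperbolic well-posedness for the constraints. By the choices \eqref{consts-fix}, the matrix $\Bh^0$ in \eqref{Bh0-def} is constant, diagonal and positive definite; each $\Bh^C$ in \eqref{BhC-def} is constant and symmetric; and the product $\alphat \et^\Lambda_C \Bh^C$ appearing in front of $\del{\Lambda}\Wh$ is symmetric for every $\Lambda$ (it is $\Bh^C$ multiplied by the scalar $\alphat\et^\Lambda_C$). The source $\Gh(t,\Wh)=\Sc\, \Gt_{(0,2)}$ is polynomial in $\Wh$ and $t^{-1}$, hence smooth on $(0,\infty)\times \Wbb$. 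For initial data $\Wh(t_0)=V^{-1}\Wt_0 \in H^k(\Tbb^3,\Wbb)$ with $k>5/2$, I would invoke the Kato/Majda local existence theorem for quasilinear symmetric hyperbolic systems on compact spatial manifolds to produce $t_1\in(0,t_0)$ and a unique solution $\Wh\in C^0((t_1,t_0],H^k)\cap C^1((t_1,t_0],H^{k-1})$. Iteratively differentiating \eqref{EE-loc-IVP.1} in time (trading one order of spatial for one order of time regularity) yields $\Wh\in \bigcap_{j=0}^k C^j((t_1,t_0],H^{k-j})$, and since $k>5/2$ the Sobolev embedding $H^{k-1}\hookrightarrow C^1$ gives $\Wh\in C^1(M_{t_1,t_0})$. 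Claim (a) is then the standard breakdown criterion: boundedness of $\sup_{t_1<t<t_0}\norm{\Wh(t)}_{W^{1,\infty}}$, combined with the constancy and coercivity of $\Bh^0$ and the polynomial dependence of $\Gh$, produces uniform Moser-type energy estimates that allow extension past $t_1$.

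For (b), I would verify each open condition in $\Uc$ separately. Positivity of $\alphat$ follows from \eqref{EEb.6}, $\del{t}\alphat=3\Hct\,\alphat^2$, a scalar ODE along $\del{t}$ whose right-hand side vanishes when $\alphat=0$, so initial positivity is preserved. Non-degeneracy (indeed sign of the determinant) of $\et^\Omega_A$ follows from \eqref{EEb.1} by Liouville's formula, since that equation is linear in the frame matrix. Finally, the symmetric trace-free structure of $\Sigmat_{AB}$ and the symmetry of $\Nt^{AB}$ are preserved because the right-hand sides of \eqref{EEb.5} and \eqref{EEb.3} manifestly lie in $\mathbb{S}^{\text{TF}}_3$ and $\Sbb{3}$, respectively, as is evident from the $\langle\cdot\rangle$ and $(\cdot)$ bracketing; equivalently, the evolution lives on the linear subspace $\Wbb$ and preserves it.

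For (c), the constraint $\tilde{\Af}_{AB}^\Omega$ satisfies the homogeneous linear ODE \eqref{ev-A-cnstr} along $\del{t}$, so vanishing at $t_0$ forces $\tilde{\Af}_{AB}^\Omega\equiv 0$ on $M_{t_1,t_0}$. Setting $\tilde{\Af}_{AB}^\Omega=0$, the remaining constraint quantities $\tilde{\Uf}$ obey the linear first-order system \eqref{cprop-sys}. For the values $\rho=0$, $\gamma=2$ of \eqref{consts-fix}, the hypotheses $\gamma>\rho$, $1+2\rho>0$, $\gamma\neq 0$ of Lemma \ref{lem:chyp} are satisfied, so the principal symbol $\tilde{\Pf}^C\xi_C$ is diagonalisable with real eigenvalues of constant multiplicity. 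Hence \eqref{cprop-sys} is strongly hyperbolic, and by Kreiss's theorem admits a smooth pseudo-differential symmetrizer in the coefficients $\alphat\et_C^\Lambda$ (which are smooth on $M_{t_1,t_0}$ by (a)-(b)). The resulting energy estimate gives uniqueness in $H^k$ for the linear system; since $\Qsc(t,\Wt)\tilde{\Uf}$ is linear and homogeneous in $\tilde{\Uf}$, vanishing initial data at $t=t_0$ forces $\tilde{\Uf}\equiv 0$, establishing (c). The main technical point requiring care is precisely the Kreiss symmetrizer construction with solution-dependent coefficients; the uniform eigenstructure supplied by Lemma \ref{lem:chyp} together with the regularity of $\Wh$ from steps (a)-(b) is what makes this step go through.
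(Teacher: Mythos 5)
Your proposal is correct and takes essentially the same route as the paper: cite a standard Kato/Majda-type local existence theorem for the symmetric hyperbolic form \eqref{EE-loc-IVP.1}; use the $W^{1,\infty}$ breakdown criterion for (a); propagate positivity of $\alphat$ and of $\det(\et)$ by viewing \eqref{EEb.6} and \eqref{EEb.1} as ODEs in $t$ (the paper invokes a variation-of-constants/Liouville identity from \cite{BeyerOliynyk:2024b}, which is your ``Liouville's formula''); and for (c), kill $\tilde{\Af}_{AB}^\Omega$ by an ODE argument, then use strong hyperbolicity and a Kreiss-type symbolic symmetriser for the reduced constraint system \eqref{cprop-sys}, supplying the uniform spectral gap via Lemma \ref{lem:chyp} and the lower bounds on $\alphat$, $\det(\et)$ from (b). The paper makes the same move, building the symmetriser via the construction in \cite[Thm.~2.3]{BenzoniSerre:2007} and invoking \cite[Thm.~2.7]{BenzoniSerre:2007} for uniqueness.

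One small correction: to conclude $\Wh\in C^1(M_{t_1,t_0})$ from $k>\tfrac{5}{2}$ you should use $H^{k}(\Tbb^3)\hookrightarrow C^1(\Tbb^3)$ for the spatial derivatives of $\Wh(t)$ and $H^{k-1}(\Tbb^3)\hookrightarrow C^0(\Tbb^3)$ for $\del{t}\Wh(t)$. The embedding $H^{k-1}\hookrightarrow C^1$ that you invoke requires $k>\tfrac{7}{2}$ and would fail for $k\in(\tfrac{5}{2},\tfrac{7}{2}]$; the conclusion still holds, just via the slightly different pair of embeddings.
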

\begin{proof}$\;$

\subsubsection*{Existence and uniqueness}
Suppose $t_0>0$, $k\in\Zbb_{>5/2}$ and $\Wt_0 \in H^k(\Tbb^3,\Wbb)$. Then, because the system \eqref{EE-loc-IVP.1} is symmetric hyperbolic, we can appeal to standard local-in-time existence and uniqueness results for systems of symmetric hyperbolic equations, eg. \cite[Thm.~10.1]{BenzoniSerre:2007}, \cite[Ch.~16, Prop.~2.1]{TaylorIII:1996} or \cite[Thm.~2.1]{Majda:1984},
to deduce the existence of a time $t_1 \in (0,t_0]$ and a unique classical solution\footnote{While \cite[Thm.~10.1]{BenzoniSerre:2007} only guarantees the existence and uniqueness of a solution where $\Wh \in \bigcap_{j=0}^1 C^j\bigl((t_1,t_0], H^{k-j}(\Tbb^{3},\Wbb)\bigr)$, the regularity \eqref{loc-sol-reg} of the solution $\Wh$ can be established by first using the evolution equations \eqref{EE-symform-B} to express the time derivatives $\del{t}^j\Wh$, $1\leq j\leq k$, in terms of $\Wh$ and its spatial derivatives. Using these representations for $\del{t}^j \Wh$, it is then not difficult to verify that \eqref{loc-sol-reg} is a consequence of
$\Wh\in \bigcap_{j=0}^1 C^j\bigl((t_1,t_0], H^{k-j}(\Tbb^{3},\Wbb)\bigr)$
and the calculus inequalities from Appendix \ref{calc}.} $\Wh \in C^1(M_{t_1,t_0})$  of the IVP \eqref{EE-loc-IVP.1}-\eqref{EE-loc-IVP.2} satisfying
\begin{equation} \label{loc-sol-reg}
\Wh \in \bigcap_{j=0}^k C^j\bigl((t_1,t_0], H^{k-j}(\Tbb^{3},\Wbb)\bigr).
\end{equation}

\subsubsection*{Proof of statement (a)} 

By the continuation principle for symmetric hyperbolic systems, eg. \cite[Thm.~10.3]{BenzoniSerre:2007} or \cite[Thm.~2.2]{Majda:1984}, we know that if the solution \eqref{loc-sol-reg} satisfies
$\sup_{t_1<t<t_0}\norm{\Wh(t)}_{W^{1,\infty}(\Tbb^{3})} < \infty$,
then there exists a time $t_1^*\in [0, t_1)$ such that it
can be uniquely continued as a solution of \eqref{EE-loc-IVP.1}-\eqref{EE-loc-IVP.2} to the larger time interval $(t_1^*,t_0]$. 

\subsubsection*{Proof of statement (b)} From \eqref{Wt-def}, \eqref{Gt-def}-\eqref{Gt2-def} and \eqref{Bh0-def}-\eqref{V-fix}, we observe that
\begin{align}
 \del{t}\et_{A}^{\Omega} &= -\alphat(\Hct\delta_{A}^{B} + \Sigmat_{A}{}^{B})\et_{B}^{\Omega}, \label{et-loc-ev} \\
\del{t}\alphat&= 3\Hct \alphat^2. \label{alphat-loc-ev}
\end{align}
Noting that \eqref{alphat-loc-ev} is ODE that is homogeneous in $\alphat$, it follows that $\alphat$ cannot cross zero. So if initially $\alpha|_{\Sigma_{t_0}} >0$, then this will continue to hold as long as the solution exists, and hence, it follows that $\alphat>0$ on $M_{t_1,t_0}$. Also letting $\et=(\et_A^\Omega)$ and noting that
$\Hc \delta_A^A+\Sigmat_A{}^A = 3 \Hc$,
we deduce from the form of the differential equation \eqref{et-loc-ev} and Lemma A.2
of \cite{BeyerOliynyk:2024b} that 
\begin{equation*}
\det(\et(t,x)) = e^{\int^{t_0}_{t}3\alphat(s,x)\Hct(s,x)\,ds }\det(\et(0,x)).
\end{equation*}
Consequently, $\det(\et)>0$ on $M_{t_1,t_0}$ provided that $\det(\et|_{\Sigma_0})>0$. 
Recalling the definition \eqref{Uc-def} of $\Uc$, the above arguments show that if $\Wt_0(x)\in \Uc$ for all $x\in \Tbb^3$, then $\Wt(t,x)=V \Wh(t,x)\in \Uc$ and  $\Wh(t,x)\in \Uc$ for all $(t,x)\in M_{t_1,t_0}$.

\subsubsection*{Proof of statement (c)} Assume now that $\Wt_0\in  H^k(\Sigma_{t_0},\Wbb)$ is chosen so that $\Wt_0(x)\in \Uc$ for all $x\in \Uc$, which, implies, by statement (b), that
\begin{equation} \label{Ubc-preserve}
    \Wt(t,x) \in \Uc, \quad \forall \; (t,x)\in M_{t_1,t_2} .
\end{equation}
Furthermore, suppose that the constraint quantities
$\tilde{\Af}_{AB}^\Omega$, $\tilde{\Bf}_{AB}$, $\tilde{\Cf}^A$, $\tilde{\Df}_A$, $\tilde{\Mf}_A$, and $\tilde{\Hf}$ defined above by \eqref{A-cnstr-2}-\eqref{H-cnstr-2} all vanish on the initial hypersurface $\Sigma_{t_0}$, that is,
$\bigl(\tilde{\Af}_{AB}^\Omega, \tilde{\Bf}_{AB}, \tilde{\Cf}^A, \tilde{\Df}_A, \tilde{\Mf}_A, \tilde{\Hf}\bigr)\bigr|_{\Sigma_{t_0}} = 0$.
Then from \eqref{ev-A-cnstr} and \eqref{Wh-to-Wt}, we see that $\tilde{\Af}^\Omega_{AB}$ satisfies a homogeneous linear ODE with coefficients that are quadratic in $\Wh$, while, by \eqref{loc-sol-reg} and \eqref{Wh-to-Wt}, we have
\begin{equation} \label{loc-sol-reg-A}
\Wt \in \bigcap_{j=0}^k C^j\bigl((t_1,t_0], H^{k-j}(\Tbb^{3},\Wbb)\bigr).
\end{equation}
From this and the vanishing of $\tilde{\Af}_{AB}^\Omega$ on the initial hypersurface $\Sigma_{t_0}$, we conclude via the uniqueness of solutions to systems of homogeneous linear ODEs that \begin{equation}\label{Af-propagate}
\tilde{\Af}_{AB}^\Omega=0 \hspace{0.5cm} \text{in $M_{t_1,t_0}$.}
\end{equation}
Then from the discussion in Section \ref{const-hyp}, \eqref{Wh-to-Wt} and the vanishing of the constraints  on the initial hypersurface $\Sigma_{t_0}$, it follows that the remaining constraint quantities, collected together in the  vector $\tilde{\Uf}$, see \eqref{Uf-def},  satisfy, recall \eqref{cprop-sys}, \eqref{Nf-def} and \eqref{consts-fix}, for any fixed $t_* \in (t_1,t_0)$ the IVP
\begin{align}
    \del{t}\tilde{\Uf} - \alphat \et_C^\Lambda \tilde{\Pf}^C \del{\Lambda} \tilde{\Uf} &= \Qsc(t,\Wt)\tilde{\Uf}  \hspace{0.5cm} \text{in $M_{t_*,t_0}$,}  \label{cprop-sys-fix.1}\\
    \tilde{\Uf} &= 0 \hspace{1.8cm} \text{in $\Sigma_{t_0}$,}\label{cprop-sys-fix.2}
\end{align}
where
\begin{equation} \label{Nf-fix}
    \tilde{\Pf}^C = \begin{bmatrix}
        0& -\frac{1}{2}\delta^{C[P}\delta_A^{Q]}& \frac{1}{2}\ep_A{}^C{}_P& \frac{1}{2}\delta_A^C& 0& \\
        -4\delta_{[A}^C\delta_{B]}^P& 0& 0& 0& 0& \\
        0& 0& 0& 0& 0& \\
        2\delta^{CP}& 0& 0& 0& 0& \\
        0& 0& 0& 0& 0
    \end{bmatrix}
\end{equation}
and $\Qsc(t,\Wt)$ a linear map that depends polynomially on $\Wt$ and $t^{-1}$.

Now, by \eqref{Ubc-preserve}, \eqref{loc-sol-reg-A} and Sobolev's inequality, Theorem \ref{Sobolev}, there exists a $c_*>0$ such that
\begin{equation} \label{alphat-lbnd}
\inf_{(t,x)\in M_{t_*,t_0}}\bigl\{\alphat(t,x),\det(\et(t,x))\bigr\} > c_* > 0.
\end{equation}
Furthermore, letting $\thetat=(\thetat^A_\Lambda)$ denote the inverse of the matrix $\et=(\et^A_\Lambda)$, we deduce from \eqref{alphat-lbnd}, \eqref{loc-sol-reg-A} and the calculus inequalities from Appendix \ref{calc} the existence of a constant $C_*>0$ such that
$\sup_{(t,x)\in M_{t_*,t_0}}|\thetat(t,x)|_{\op}< C_*$.
Thus if we define the open subset $\Uct\subset \Uc$, see \eqref{Uc-def}, by 
\begin{equation}\label{Uct-def}
\Wt \in \Uct \quad \Longleftrightarrow \quad \alphat > c_* \AND  |\thetat|_{\op}<C_*, 
\end{equation}
then $\Wt$ satisfies
\begin{equation}\label{Wt-in-Uct}
\overline{\bigl\{\, \Wt(t,x)\, \bigl|\, (t,x)\in \overline{M}_{t_*,t_0}\, \bigr\}}\subset \Uct. 
\end{equation}

On the other hand, we know, by Lemma \ref{lem:chyp} and the choice of constants \eqref{consts-fix}, that, for $\xit=(\xit_\Lambda)\in \Rbb^3_{\times}$, the eigenvalues of the matrix 
\begin{equation}\label{Pft-def}
\Pft(\Wt,\xit) = \Pft^C \alphat \et^\lambda_C \xit_\Lambda
\end{equation}
are real and given by
\begin{equation*}
\bigl\{\lambda_1(\Wt,\xit)=0,\;\lambda_2(\Wt,\xit)= |\xi|,\;\lambda_3(\Wt,\xit)= -|\xi|\bigr\}, 
\end{equation*}
where $\xi_C = \alphat \et^\Lambda_C \xit_\Lambda$. Moreover, this matrix is
diagonalisable and the dimension of its eigenspaces are independent of $\xit\neq 0$. Letting
\begin{equation*}
\nu = \inf\Bigl\{ \bigl|\lambda_i(\Wt,\xit)-\lambda_j(\Wt,\xit)\bigr|\,\Bigl| \, \Wt\in\Ucb, \; |\xib|=1 \Bigr\}
\end{equation*}
denote the spectral gap,  it follows from \eqref{Uct-def} and the inequality
\begin{equation*}
0<c_*=c_*|\xit|\leq |\alphat\xit|=|\thetat \alphat\et\xit|< C_*|\alphat\et\xit| = C_*|\xi|    
\end{equation*}
that $\nu > 0$. Due to this lower bound, we can construct a symbolic symmetriser
$\Sft \in C^\infty(\Uct \times \Rbb^3_\times, \Mbb{19})$
for the matrix \eqref{Pft-def} using the construction from the proof of \cite[Thm.~2.3]{BenzoniSerre:2007}. This symbolic symmetriser along with \eqref{loc-sol-reg-A} and \eqref{Wt-in-Uct}  allows us 
to conclude from  \cite[Thm.~2.7]{BenzoniSerre:2007} that solutions to the IVP \eqref{cprop-sys-fix.1}-\eqref{cprop-sys-fix.2} are  unique, from which we deduce that $\Uft=0$ in $M_{t_*,t_1}$. But $t_*\in (t_1,t_0]$ was chosen arbitrarily, and it follows that $\Uft=0$ in $M_{t_1,t_0}$. This together with \eqref{Af-propagate} guarantees that the constraints quantities $\tilde{\Af}_{AB}^\Omega$, $\tilde{\Bf}_{AB}$, $\tilde{\Cf}^A$, $\tilde{\Df}_A$, $\tilde{\Mf}_A$, and $\tilde{\Hf}$ all vanish in $M_{t_0,t_1}$ provided they vanish on the the initial hypersurface $\Sigma_{t_0}$, which completes the proof of statement (c).

\end{proof}

\subsection{\label{loc-exist-Omega}Local-in-time existence and constraint propagation on $\Omega_{\qv}$}
While Proposition \ref{prop:locA} establishes the existence of solutions to the tetrad formulation of the conformal Einstein-scalar field equation on the spacetime regions of the form $M_{t_1,t_0}$, what we need, in order to prove the localised stability of the Kasner-scalar field spacetime, is a localised version of this existence result. Specifically,  we need to establish the local-in-time existence and uniqueness of solutions, along with a continuation principle and showing constraint propagation, to the IVP\footnote{Since the initial value problem is formulated on a spatially bounded domain, we could also refer to it as in initial boundary value problem that happens to require no boundary conditions.} \begin{align}
\Bh^0\del{t}\Wh + \alphat \et^\Lambda_C\Bh^C \del{\Lambda}\Wh &= \Gh(t,\Wh)\hspace{0.5cm} \text{in $\Omega_{q_1}$,} \label{EE-loc-IBVP-A.1}\\
\Wh &= V^{-1}\Wt_0 \hspace{0.5cm}  \text{in $\{t_0\}\times \mathbb{B}_{\rho_0}$,}\label{EE-loc-IBVP-A.2}
\end{align}
where $q_1=(t_1,t_0,\rho_0,\rho_1,\ep)$ and  $\Omega_{q_1}$ is the truncated cone domain defined by \eqref{Omega-def}.

\begin{prop}\label{prop:locC}
Suppose $k\in\Zbb_{>5/2}$,  $t_0>0$, 
$\rho_1>0$,  $0<\rho_0<L$, $0\leq \ep<1$, $\rho_0 -\frac{\rho_1 t_0^{1-\ep}}{1-\ep}>0$, $\Wt_0=\bigl(\et_{0P}^\Sigma, \alphat_0, \At_{0P}, \Ut_{0P}, \Hct_0, \Sigmat_{0PQ}, \Nt_{0PQ}\bigr)^{\tr} \in H^k\bigl(\mathbb{B}_{\rho_0},\Wbb\bigr)$, and  $\sup_{x\in \mathbb{B}_{\rho_0}}  t_0^{\ep}|\alphat_0(x) \et_0(x)|<\frac{\rho_1}{9}$. Then there exists a $t_1 \in (0,t_0]$ and a unique classical solution $\Wh \in C^1(\grave{\Omega}_{\qv_1})$ of  \eqref{EE-symform-B} on the truncated cone domain $\Omega_{\qv_1}$, $\qv_1=(t_1,t_0,\rho_0,\rho_1,\ep)$, satisfying the initial condition
$\Wh|_{\{t_0\}\times\mathbb{B}_{\rho_0}}=V^{-1}\Wt_0$ 
and $t^\ep |\alphat(t,x) \et(t,x)|<\frac{\rho_1}{9}$ for all $(t,x)\in \grave{\Omega}_{\qv_1}$. 
Moreover, letting $\rho(t)=\frac{\rho_1(t^{1-\ep}-t_0^{1-\ep})}{1-\ep}+ \rho_0$, then  $\sup_{t< \tilde{t}\leq t_0} \norm{\del{t}^j\Wh(\tilde{t})}_{H^{k-j}(\mathbb{B}_{\rho(\tilde{t})})}<\infty$  for all $(t,j)\in (t_1,t_0]\times\{1,2,\ldots,k\}$, and the following hold:
\begin{enumerate}[(a)]
\item If the solution $\Wh$ satisfies $\sup_{t_1<t<t_0}\norm{\Wh(t)}_{H^{k}(\mathbb{B}_{\rho(t)})} < \infty$ and $  \sup_{(t,x)\in \Omega_{\qv_1}} t^{\ep}|\alphat(t,x) \et(t,x)|<\frac{\rho_1}{9}$,
then there exists a time $t_2\in [0, t_1)$ such that $\Wh$
can be uniquely continued as classical solution of \eqref{EE-symform-B} to the cone domain $\Omega_{\qv_2}$, $\qv_2=(t_2,t_0,\rho_0,\rho_1,\ep)$.
\item If $\Wt_0(x)\in \Uc$ for all $x\in \overline{\mathbb{B}}_{\rho_0}$, then $\Wt(t,x)=V \Wh(t,x)\in \Uc$ and  $\Wh(t,x)\in \Uc$ for all $(t,x)\in \Omega_{\qv_1}$.
\item If $\Wt_0$ is chosen so that $\Wt_0(x)\in \Uc$ for all $x\in \overline{\mathbb{B}}_{\rho_0}$ and the constraint equations \eqref{A-cnstr-2}-\eqref{H-cnstr-2} hold on $\{t_0\}\times\mathbb{B}_{\rho_0}$, then the constraint equations continue to hold everywhere on $\Omega_{\qv_1}$.
\end{enumerate} 
\end{prop}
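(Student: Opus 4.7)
The strategy is to reduce the localised problem to the global one handled by Proposition \ref{prop:locA} by extending the initial data from $\mathbb{B}_{\rho_0}$ to all of $\Tbb^3$, solving there, and then using finite propagation speed for the symmetric hyperbolic system \eqref{EE-symform-B} to show that the restriction to the truncated cone $\Omega_{\qv_1}$ depends only on the original data on $\mathbb{B}_{\rho_0}$. Concretely, I would first extend the data via $\Wt_0^{\mathrm{ext}}:=\Ebb_{\rho_0}(\Wt_0)\in H^k(\Tbb^3,\Wbb)$ using the operator from \eqref{Ebb-def}, which preserves the Sobolev norm up to a constant. Proposition \ref{prop:locA} then yields a unique classical solution $\Wh^{\mathrm{ext}}$ on some $M_{\tilde t_1,t_0}$. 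Shrinking $\tilde t_1$ to $t_1$ using continuity of $\alphat^{\mathrm{ext}}\et^{\mathrm{ext}}$ together with the strict inequality $\sup_{\mathbb{B}_{\rho_0}} t_0^{\ep}|\alphat_0\et_0|<\rho_1/9$, one arranges both that $\grave{\Omega}_{\qv_1}\subset M_{\tilde t_1,t_0}$ and that $t^{\ep}|\alphat^{\mathrm{ext}}\et^{\mathrm{ext}}|<\rho_1/9$ throughout $\grave{\Omega}_{\qv_1}$. The candidate solution is $\Wh:=\Wh^{\mathrm{ext}}|_{\grave{\Omega}_{\qv_1}}$.

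The crucial geometric point is that the side boundary $\Gamma_{\qv_1}$, with outward co-normal $n$ from \eqref{n-def}, is past-spacelike for the principal symbol of \eqref{EE-symform-B} under the hypothesis $t^{\ep}|\alphat\et|<\rho_1/9$. Indeed, the boundary symbol evaluates to $-\rho_1 t^{-\ep}\Bh^0+\alphat(x_\Lambda/|x|)\et^\Lambda_C\Bh^C$, and inspection of \eqref{Bh0-def}-\eqref{BhC-def} combined with $|x_\Lambda/|x||=1$ produces an operator-norm bound of the form $|(x_\Lambda/|x|)\alphat\et^\Lambda_C\Bh^C|_{\op}\leq 9|\alphat\et|\,|\Bh^0|_{\op}$ (this is precisely the origin of the factor $1/9$), so together with positive definiteness of $\Bh^0$ the total symbol is negative definite. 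A standard energy identity on $\Omega_{\qv_1}$ applied to the difference of two putative classical solutions with identical data on $\{t_0\}\times\mathbb{B}_{\rho_0}$ then has a favourable sign for the boundary term on $\Gamma_{\qv_1}$; this gives uniqueness on $\Omega_{\qv_1}$ and, in particular, shows $\Wh$ is independent of the chosen extension. Higher regularity $\sup_{t<\tilde t\leq t_0}\norm{\del{t}^j\Wh(\tilde t)}_{H^{k-j}(\mathbb{B}_{\rho(\tilde t)})}<\infty$ is obtained by running the analogous commutator energy estimates on the shrinking balls, using the calculus inequalities from Appendix \ref{calc} and the evolution equations to trade time derivatives for spatial ones.

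The continuation principle (a) follows by restarting the extension-plus-finite-propagation argument from a later slice and invoking the continuation principle of Proposition \ref{prop:locA}: under the assumed $H^k$ and $t^{\ep}|\alphat\et|<\rho_1/9$ bounds, the extended solution extends past $t_1$ and the past-spacelike property of the moving side is preserved, so $\Wh$ extends to $\Omega_{\qv_2}$. Statement (b) is pointwise and local: \eqref{mEEb.1} and \eqref{mEEb.6} are homogeneous in $\det(\et)$ and $\alphat$ along characteristics, so the proof of Proposition \ref{prop:locA}(b) applies verbatim on $\Omega_{\qv_1}$. For (c), equation \eqref{ev-A-cnstr} handles $\tilde{\Af}^\Omega_{AB}$ as an ODE with homogeneous right-hand side, while the remaining constraints satisfy the strongly hyperbolic system \eqref{cprop-sys} with $\rho=0$, $\gamma=2$. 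Lemma \ref{lem:chyp} delivers real characteristic speeds $\{0,\pm|\xi|\}$ with $\xi_C=\alphat\et^\Lambda_C\xit_\Lambda$, so the very same numerical condition renders $\Gamma_{\qv_1}$ past-spacelike for the constraint system. Constructing a symbolic symmetriser as in the proof of \cite[Thm.~2.3]{BenzoniSerre:2007} and running the associated energy estimate on $\Omega_{\qv_1}$ propagates vanishing of the constraints from $\{t_0\}\times\mathbb{B}_{\rho_0}$ to the whole cone.

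The principal obstacle is carrying out the localised energy estimate cleanly on the shrinking domain $\Omega_{\qv_1}$: one must verify that the boundary contribution along the moving side $\Gamma_{\qv_1}$ retains the correct sign even when $t^{\ep}|\alphat\et|$ approaches $\rho_1/9$, which forces a careful quantitative comparison of $(x_\Lambda/|x|)\et^\Lambda_C\Bh^C$ with $\Bh^0$ in operator norm. For the constraint system the analogous argument relies on a symbolic symmetriser rather than a natural positive $0$-component, which is the main source of additional bookkeeping but introduces no essentially new analytic difficulty.
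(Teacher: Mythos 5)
Your proposal is correct in spirit but follows a genuinely different route from the paper. The paper proves Proposition \ref{prop:locC} directly on the cone: it introduces the change of variables $x^\Lambda=\frac{\rho(\tb)}{\rho_0}\xb^\Lambda$ that maps $(t_1,t_0]\times\mathbb{B}_{\rho_0}$ onto $\Omega_{\qv_1}$, pulls the system \eqref{EE-symform-B} back to this fixed cylinder, observes (via \eqref{Bn-ubnd-B}--\eqref{Bn-ubnd-A}) that the boundary matrix on $\Gamma_{t_1,t_0}$ is negative definite when $|\alphab\eb|<\rho_1/9$, and then invokes \cite[Thm.~A2]{Schochet:1986} (and for part (c), \cite[Thm.~9.15]{BenzoniSerre:2007}) for a symmetric hyperbolic IBVP with a fully outgoing, non-characteristic boundary and no boundary conditions. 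You instead extend the data by $\Ebb_{\rho_0}$, solve globally with Proposition \ref{prop:locA}, restrict to the cone, and prove uniqueness on the cone by an energy identity that uses the same sign property of the boundary matrix. Your route has a real advantage: the regularity claim $\sup_{t<\tilde t\leq t_0}\|\del{t}^j\Wh(\tilde t)\|_{H^{k-j}(\mathbb{B}_{\rho(\tilde t)})}<\infty$ follows trivially by restricting the global regularity of $\Wh^{\mathrm{ext}}$, whereas you (unnecessarily) propose re-running commutator estimates on the shrinking balls. The paper's approach is more self-contained within the cone and avoids the minor bookkeeping of re-extending the data at a later slice to prove the continuation principle (a). Two small cautions on precision. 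First, your operator-norm inequality $|(x_\Lambda/|x|)\alphat\et^\Lambda_C\Bh^C|_{\op}\leq 9|\alphat\et|\,|\Bh^0|_{\op}$ is not what makes the paper's estimate close; the actual argument bounds the quadratic form component by component via \eqref{mat-inq-A}--\eqref{mat-inq-B} and Young's inequality, and the tight threshold on $|m|$ is actually $\rho_1/5$, with $\rho_1/9$ a safe choice. Second, for part (c) the constraint system is only strongly hyperbolic; ``running the associated energy estimate on $\Omega_{\qv_1}$'' with a symbolic symmetriser on a shrinking domain is more delicate than you suggest, which is exactly why the paper first transforms to a fixed cylinder where the boundary characteristics are all strictly outgoing and then cites the uniqueness theorem for that setting. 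Your core idea (all constraint characteristics outgoing across $\Gamma_{\qv_1}$, so vanishing propagates) is right, but the microlocal energy argument on the moving boundary would need to be spelled out, or replaced by the paper's coordinate-change-plus-citation step.
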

\begin{proof}$\;$
\subsubsection*{Existence and uniqueness, and a proof of statement (a)} Consider a map $\Phi$ defined by
$x^\lambda = \Phi^\lambda(\xb^0,\xb)$, where $\xb=(\xb^\Lambda)$, $\tb=\xb^0$, 
\begin{equation*} 
t=x^0 = \Phi^0(\tb,\xb) :=\xb^0,\quad x^\Lambda = \Phi^\Lambda(\tb,\xb) := \frac{\rho(\tb)}{\rho_0}\xb^\Lambda, 
\end{equation*}
and $\rho(t)= \frac{\rho_1(t^{1-\ep}-t_0^{1-\ep})}{1-\ep}+ \rho_0$.
Since $\Omega_{\qv_1}=\bigl\{\, (t,x)\in (t_1,t_0]\times (-L,L)^3\, \bigl| \, |x|<\rho(t)\,\bigr\}$, we have that $\Omega_{\qv_1}=\Phi\bigl((t_1,t_0]\times \mathbb{B}_{\rho_0}\bigr)$.
Also, letting $\delb{\mu}=\fdel{\;}{\xb^\mu}$ denote the partial derivatives with respect to the coordinates $(\xb^\mu)$, a short calculation shows that  
\begin{equation}\label{pd-trans}
\del{t}=\del{\tb}-\frac{\rho_1}{\tb^\ep \rho(\tb)}\xb^\Lambda \delb{\Lambda} \AND \del{\Lambda}= \frac{\rho_0}{\rho(\tb)}\delb{\Lambda}.
\end{equation}
Using these transformation laws, we can use $\Phi$ to pull-back the IVP \eqref{EE-loc-IBVP-A.1}-\eqref{EE-loc-IBVP-A.1} to get 
\begin{align}
\Bh^0\del{\tb}\Wb + \Bb^\Lambda(\tb,\xb,\Wb) \delb{\Lambda}\Wb &= \Gb(\tb,\Wb)\hspace{0.5cm} \text{in $(t_1,t_0]\times \mathbb{B}_{\rho_0}$,} \label{EE-loc-IBVP-B.1}\\
\Wb &= V^{-1}\Wt_0 \hspace{0.5cm}  \text{in $\{t_0\}\times \mathbb{B}_{\rho_0}$,}\label{EE-loc-IBVP-B.2}
\end{align}
where 
\begin{gather}
\Wb = (\eb_P^\Sigma, \alphab, \Ab_P, \Ub_P, \Hcb, \Sigmab_{PQ}, \Nb_{PQ})^{\tr}:= \tb^\ep \Wh\circ\Phi, \label{Wb-def}\\
\Bb^\Lambda(\tb,\xb,\Wb)= \frac{1}{\tb^\ep \rho(\tb)}\Bigl( -\rho_1\xb^\Lambda \Bh^0 + \rho_0 \Bh^C \alphab \eb^\Lambda_C\Bigr)\AND \Gb(\tb,\Wb) = \ep \Bh^0\Wb + \Gh\bigl(\tb,\tb^{-2\ep}\Wb\bigr). \label{BbLambda-def}
\end{gather}
The outward pointing unit normal to the boundary component 
    $\Gamma_{t_1,t_0} = (t_,t_0]\times \partial \mathbb{B}_{\rho_0}$  
of the cylinder $(t_1,t_0]\times \mathbb{B}_{\rho_0}$
is  
\begin{equation}\label{nb-def}
    \nb = \frac{1}{|\xb|}\xb_\Lambda d\xb^\Lambda,
\end{equation}
and so, the normal matrix is given by  
\begin{equation*}
\Bb_{\nb}:=\Bb^\Lambda\nb_\Lambda|_{\Gamma_{t_1,t_0}}=\frac{\rho_0}{\tb^\ep \rho(\tb)}\Bigl( -\rho_1\Bh^0 + \Bh^C \mb_C\Bigr), \quad \mb_C = \alphab \eb^\Lambda_C \frac{\xb_{\Lambda}}{|\xb|}.
\end{equation*}

From \eqref{Bh0-def}-\eqref{BhC-def}, we then have
\begin{align} \label{Bn-ubnd-B}
\frac{\tb^\ep \rho(\tb)}{\rho_0}\Wb^T\Bb_{\nb}\Wb &=-\rho_1\biggl( |\eb|^2 + \alphab^2 + 6 |\Ab|^2 + \frac{1}{3}|\Ub|^2 + \Hcb^2 +\frac{1}{3}(|\Sigmab|^2+|\Nb|^2)\biggr)
\notag \\
&\qquad + \biggl( 4\Ab^C-\frac{2}{3}\Ub^C \biggr)\mb_C\Hcb - \biggl(2\Ab_P + \frac{2}{3}\Ub_P \biggr)\mb_C\Sigmab^{CP}+\frac{2}{3}\mb_C\ep^{CP}{}_A \Nb^{AQ}\Sigmab_{PQ}.
\end{align}
Noting that $\mb_C \ep^{C}{}_{PQ} \mb_D \ep^{DPQ}= \mb_{C}\mb_{D}2\delta^{CD}=2|\mb|^2$,
we can, with the help of the  matrix inequalities \eqref{mat-inq-A} and \eqref{mat-inq-B}, bound $\Wb^T\Bb_{\nb}\Wb$ from above by 
\begin{align*}
\frac{\tb^\ep \rho(\tb)}{\rho_0}\Wb^T\Bb_{\nb}\Wb &\leq -\rho_1 \bigl( |\eb|^2 + \alphab^2 \bigr)-\rho_1 \biggl( 6 |\Ab|^2 + \frac{1}{3}|\Ub|^2 + \Hcb^2 +\frac{1}{3}(\Sigmab|^2+|\Nb|^2)\biggr) \\ 
&\qquad 
+|\mb|\biggl(\biggl(4|\Ab|+\frac{2}{3}|\Ub|\biggr)|\Hcb|+\biggl(2|\Ab|+\frac{2}{3}|\Ub|\biggr)|\Sigmab|+ \frac{4}{3}|\Nb||\Sigmab|\biggr),
\end{align*}
and hence, with the help of Young's inequity, $ab \leq \frac{1}{2}a^2 + \frac{1}{2}b^2$, by
\begin{align}
\frac{\tb^\ep \rho(\tb)}{\rho_0}\Wb^T\Bb_{\nb}\Wb &\leq  -\rho_1 \bigl( |\eb|^2 + \alphab^2 \bigr)+ ( 3|\mb|-  6\rho_1) |\Ab|^2 + \biggl(\frac{2|\mb|}{3}-\frac{\rho_1}{3} \biggr)|\Ub|^2\notag \\
&\quad + (|\mb|-\rho_1)\Hcb^2 +\biggl(\frac{5|\mb|}{3}-\frac{\rho_1}{3} \biggr)|\Sigmab|^2+\biggl(\frac{2|\mb|}{3}-\frac{\rho_1}{3}\biggr)|\Nb|^2. \label{Bn-ubnd-A} 
\end{align}

Next, we define an open set $\Vc\in \Wbb$ by
\begin{equation*}
\Vc = \Bigl\{ \Wb\in \Wbb\, \Bigl|\, |\alphab\eb|<\frac{\rho_1}{9}\, \Bigr\}.
\end{equation*}
Then, by \eqref{Bn-ubnd-A}, for any open set $\Wc\ssubset \Vc$, there exists a $\delta>0$ such that 
\begin{equation}\label{spacelike}
\Bb_{\nb}(\tb,\xb,\Wb)\leq -\delta  \id, \quad \forall \, (\tb,\xb,\Wb)\in [t_1,t_0]\times\del{}\mathbb{B}_{\rho_0}\times \Wc.
\end{equation}
As can be verified from this inequality, as long as a solution $\Wb$ of \eqref{EE-loc-IBVP-B.1}-\eqref{EE-loc-IBVP-B.2} starts inside a open set $\Wc\ssubset \Vc$, then the boundary terms will not contribute to energy estimates derived from \eqref{EE-loc-IBVP-B.2} as long as $\Wb$ does not exit every open set $\grave{\Wc}$ satisfying  $\grave{\Wc}\ssubset \Vc$. Because of this, local-in-time existence and uniqueness proofs for symmetric hyperbolic equations, eg. \cite[Thm.~10.1]{BenzoniSerre:2007} or \cite[Ch.~16, Prop.~2.1]{TaylorIII:1996}, as well as the continuation principle, eg. \cite[Thm.~10.3]{BenzoniSerre:2007}, continue to apply to the IVP \eqref{EE-loc-IBVP-B.1}-\eqref{EE-loc-IBVP-B.2} mutatis mutandis. 

Alternatively, the local-in-time existence and uniqueness of solutions to \eqref{EE-loc-IBVP-B.1}-\eqref{EE-loc-IBVP-B.2} follows directly from an application of\footnote{Note that all the conditions of \cite[Thm.~A2]{Schochet:1986} are trivially satisfied due to the inequality \eqref{spacelike} which leads to \textit{(i)} no boundary conditions, ie. $M(x)=0$ in the notation of \cite{Schochet:1986}, for the system \eqref{EE-loc-IBVP-B.1} and \textit{(ii)} no compatibility conditions for the initial data \eqref{EE-loc-IBVP-B.2}.} \cite[Thm.~A2]{Schochet:1986}. Thus, given 
$\Wt_0 \in H^k(\mathbb{B}_{\rho_0},\Wbb)$, where $k\in \Zbb_{>5/2}$, there exists a $t_1\in (0,t_0]$ and a unique solution
\begin{equation} \label{loc-sol-reg-B}
\Wb \in \bigcap_{j=0}^k C^j\bigl((t_1,t_0], H^{k-j}(\mathbb{B}_{\rho_0},\Wbb)\bigr) \subset C^1\bigl((t_1,t_0]\times \overline{\mathbb{B}}_{\rho_0}\bigr)
\end{equation}
of the IVP \eqref{EE-loc-IBVP-B.1}-\eqref{EE-loc-IBVP-B.2}.

To see that the continuation principle hold, assume that
\begin{equation} \label{loc-sol-cont-B}
\sup_{t_1<\tb<t_0}\norm{\Wb(\tb)}_{W^{1,\infty}(\mathbb{B}_{\rho_0})} < \infty \AND \sup_{(\tb,\xb)\in (t_1,t_0)\times \mathbb{B}_{\rho_0}}|\alphab(\tb,\xb)\eb(\tb,\xb)|<\frac{\rho_1}{9},
\end{equation}
which implies that the boundary matrix satisfies \eqref{spacelike}.
As noted above, \eqref{spacelike} implies that the boundary terms do not contribute when integrating by parts to obtain energy estimates. Consequently, we can, with the help of the calculus inequalities, in particular the Moser estimates, from Appendix \ref{calc}, derive energy inequalities satisfied by the solution \eqref{loc-sol-reg-B} in the standard fashion for symmetric hyperbolic systems. This yields the differential energy inequality 
\begin{equation*}
\del{t}\norm{\Wb(\tb)}_{H^k(\mathbb{B}_{\rho_0})} \leq C\bigl(\norm{\Wb(\tb)}_{W^{1,\infty}(\mathbb{B}_{\rho_0})}\bigr)\norm{\Wb(\tb)}_{H^k(\mathbb{B}_{\rho_0})}, \quad t_1<t\leq t_0.
\end{equation*}
From this inequality and the bound $\sup_{t_1<\tb<t_0}\norm{\Wb(\tb)}_{W^{1,\infty}(\mathbb{B}_{\rho_0})} < \infty$,
it then follows via a Gr\"{o}nwall inequality argument that
$\sup_{t_1<\tb<t_0}\norm{\Wb(\tb)}_{H^{k}(\mathbb{B}_{\rho_0})} < \infty$. By using the evolution equation \eqref{EE-loc-IBVP-B.1} to express $\del{\tb}\Wb$ in terms of $\Wb$ and its spatial derivative, the bound on $\Wb$ and the calculus inequalities from Appendix \ref{calc} can be used to show that
$\sup_{t_1<\tb<t_0}\norm{\del{\tb}\Wb(\tb)}_{H^{k-1}(\mathbb{B}_{\rho_0})} < \infty$. By repeatedly differentiating \eqref{EE-loc-IBVP-B.1} with respect to $\tb$, we can, in a similar fashion, obtain bounds
\begin{equation*}
\sup_{t_1<\tb<t_0}\norm{\del{\tb}^\ell\Wb(\tb)}_{H^{k-\ell}(\mathbb{B}_{\rho_0})} < \infty, \quad \ell=0,1,\ldots,k,
\end{equation*}
on all the time derivatives. Once we have this, then by the argument contained in the first paragraph of the proof of \cite[Thm.~A6]{Schochet:1986} there exists a $t_1^*\in [0,t_1)$ such that the solution \eqref{loc-sol-reg-B} can be uniquely continued to the larger time interval $(t_1^*,t_0]$ while satisfying $|\alphab(\tb,\xb)\eb(\tb,\xb)|<\frac{\rho_1}{9}$ for all $(\tb,\xb)\in (t_1^*,t_0]\times \overline{\mathbb{B}}_{\rho_0}$.

Regarding uniqueness of classical solutions, suppose that $\Wb_*\in C^1((t_1,t_0]\times \overline{\mathbb{B}}_{\rho_0})$ is another classical solution of \eqref{EE-loc-IBVP-B.1}-\eqref{EE-loc-IBVP-B.2} satisfying $|\alphab(\tb,\xb)\eb(\tb,\xb)|<\frac{\rho_1}{9}$ for all $(\tb,\xb)\in (t_1,t_0]\times \overline{\mathbb{B}}_{\rho_0}$. Then for any $t_*\in (t_1,t_0]$ the difference $\Zb=\Wb-\Wb_*$ satisfies a IVP of the form 
\begin{align*}
\Bh^0\del{\tb}\Zb + \Bb^\Lambda(\tb,\xb,\Wb) \delb{\Lambda}\Zb &= \Hb(\tb,\Wb,\Wb_*,\delb{}\Wb_*,\Zb)\hspace{0.5cm} \text{in $(t_*,t_0]\times \mathbb{B}_{\rho_0}$,} \label{EE-loc-IBVP-B.1}\\
\Zb &= 0 \hspace{3.65cm}  \text{in $\{t_0\}\times \mathbb{B}_{\rho_0}$,}
\end{align*}
where $\Hb$ is smooth in its arguments and linear in $\Zb$. Now, $\Wb,\Wb_*,\Zb\in C^1([t_*,t_0]\times \overline{\mathbb{B}}_{\rho_0}\bigr)$, and so, using the fact that we can ignore boundary terms when integrating by parts due to \eqref{spacelike}, we can derive a standard $L^2$-energy inequality for $\Zb$ of the form
\begin{equation*}
\del{t}\norm{\Zb(\tb)}_{L^2(\mathbb{B}_{\rho_0})}\leq C \norm{\Zb(\tb)}_{L^2(\mathbb{B}_{\rho_0})}, \quad t_*< t \leq t_0.
\end{equation*}
Gr\"{o}nwall's inequality plus the vanishing of $\Zb$ at $\tb=t_0$ implies that $\norm{\Zb(\tb)}_{L^2(\mathbb{B}_{\rho_0})}$ for $t_*<t\leq t_0$, which in turn implies that $\Wb(\tb,\xb)=\Wb_*(\tb,\xb)$ for all $(\tb,\xb)\in (t_1,t_0]\times \mathbb{B}_{\rho_0}$ since $t_*\in (t_1,t_0)$ was chosen arbitrary. This completes the proof of the uniqueness of classical solutions. 

\subsubsection*{Proof of statement (b)}
For a given classical solution $\Wh\in C^1(\Omega_{\qv_1})$ of \eqref{EE-loc-IBVP-A.1}-\eqref{EE-loc-IBVP-A.2}, $\alphat$ and $\et_A^\Omega$, see \eqref{Wh-to-Wt}, satisfy the same evolution equations \eqref{et-loc-ev}-\eqref{alphat-loc-ev} as from the proof Proposition \ref{prop:locA}.(b). Since, we can again view these equations as ODEs for $(\alphat,\et_A^\Omega)$, the same arguments that were employed to prove Proposition \ref{prop:locA}.(b) also hold in this new setting. In particular, if $\alpha(t_0,x)>0$ and $\det(\et(t_0,x))>0$ for $x\in \overline{\mathbb{B}}_{\rho_0}$, then 
$\alpha(t,x)>0$ and $\det(\et(t,x))>0$ for all $(t,x)\in \Omega_{\qv_1}$. Recalling the definition \eqref{Uc-def} of $\Uc$, we conclude that  $\Wt(t,x)=V \Wh(t,x)\in \Uc$ and  $\Wh(t,x)\in \Uc$ for all $(t,x)\in \Omega_{\qv_1}$ provided
$\Wt_0(x)\in \Uc$ for all $x\in \overline{\mathbb{B}}_{\rho_0}$.

\subsubsection*{Proof of statement (c)} 
Assuming that $\Wt_0\in H^k(\mathbb{B}_{\rho_0},\Wbb)$ satisfies $\Wt_0(x)\in \Uc$ for all $x\in \overline{\mathbb{B}}_{\rho}$, we know from part (b) that the unique classical solution  $\Wb\in C^1\bigl((t_1,t_0]\times \mathbb{B}_{\rho_0}\bigr)$ of \eqref{EE-loc-IBVP-B.1}-\eqref{EE-loc-IBVP-B.2} satisfies $\alphab(\tb,\xb)>0$ and $\det(\eb(\tb,\xb))>0$ for all $(t_1,t_0]\times \mathbb{B}_{\rho_0}$ in addition to \eqref{loc-sol-reg-B}. In particular, fixing $t_*\in (t_1,t_0)$, there exists a $c_*>0$ such that
\begin{equation} \label{alphab-lbnd}
\inf_{(\tb,\xb)\in (t_*,t_0]\times \mathbb{B}_{\rho_0}}\bigl\{\alphab(\tb,\xb),\det(\eb(\tb,\xb))\bigr\} > c_* > 0.
\end{equation}
Letting $\thetab=(\thetab^A_\Lambda)$ denote the inverse of the matrix $\eb=(\eb^A_\Lambda)$, it then follows from the lower bound \eqref{alphab-lbnd} on $\det(\eb)$, the regularity \eqref{loc-sol-reg-B} of the solution $\Wb$, and the calculus inequalities from Appendix \ref{calc} that 
$\sup_{(\tb,\xb)\in (t_*,t_0]\times \mathbb{B}_{\rho_0}}|\thetab(\tb,\xb)|_{\op}< C_*$
for some $C_*>0$. Thus if we define the open subset $\Ucb\subset \Uc$, see \eqref{Uc-def}, by 
\begin{equation}\label{Ucb-def}
\Wb \in \Ucb \quad \Longleftrightarrow \quad \alphab > c_* \AND  |\thetab|_{\op}<C_*, 
\end{equation}
the solution $\Wb$ satisfies
\begin{equation} \label{Wb-in-Ucb}
\overline{\bigl\{\, \Wb(\tb,\xb)\, \bigl|\, (\tb,\xb)\in [t_*,t_0]\times \mathbb{B}_{\rho_0} \, \bigr\}}\subset \Ucb. 
\end{equation}

Next, by \eqref{Wb-def} and \eqref{loc-sol-reg-B}, $\Wh\in C^1(\Omega_{\qv_1})$ and solves \eqref{EE-loc-IBVP-A.1}-\eqref{EE-loc-IBVP-A.2}. Furthermore, by \eqref{ev-A-cnstr} and \eqref{Wh-to-Wt}, we know that $\Af^\Omega_{AB}$ satisfies a homogeneous linear ODE with coefficients that are quadratic in $\Wh$. Assuming that $\tilde{\Af}_{AB}^\Omega$ vanishes on
$\{t_0\}\times \mathbb{B}_{\rho_0}$, it follows immediately from the uniqueness of solutions to ODEs that $\tilde{\Af}_{AB}^\Omega=0$ in $\Omega_{\qv_1}$. Then, by the same arguments as in the proof of Proposition \ref{prop:locA}.(b), the remaining constraint quantities $\tilde{\Uf}$, recall \eqref{Uf-def}, satisfy the IVP 
\begin{align}
    \del{t}\tilde{\Uf} - \alphat \et_C^\Lambda \tilde{\Pf}^C \del{\Lambda} \tilde{\Uf} &= \Qsc(t,\Wt)\tilde{\Uf}  \hspace{0.5cm} \text{in $\Omega_{\qv_1}$,}  \label{cprop-sys-A.1}\\
    \tilde{\Uf} &= 0 \hspace{1.8cm} \text{in $\{t_0\}\times \mathbb{B}_{\rho_0}$,}\label{cprop-sys-A.2}
\end{align}
where $\tilde{\Pf}^C$ is defined by \eqref{Nf-fix} and 
$\Qsc(t,\Wt)$ a linear map that depends polynomially on $\Wt$ and $t^{-1}$. 
Setting
$\Ufb = \tilde{\Uf}\circ \Phi$,
we see, with the help of \eqref{pd-trans}, that $\Ufb$ satisfies
\begin{align}
    \del{\tb}\Ufb - \Pfb^\Lambda(\tb,\xb,\Wb) \delb{\Lambda} \Ufb &= \Qsc(\tb,\tb^{-\ep}\Wb)\Ufb  \hspace{0.5cm} \text{in $(t_*,t_0]\times\mathbb{B}_{\rho_0}$,}  \label{cprop-sys-A.1}\\
    \Ufb &= 0 \hspace{2.3cm} \text{in $\{t_0\}\times\mathbb{B}_{\rho_0}$,}\label{cprop-sys-A.2}
\end{align}
where
\begin{equation*}
\Pfb^\Lambda(\tb,\xb,\Wb) =\frac{1}{\tb^\ep \rho(\tb)}\bigl(\rho_1 \xb^\Lambda \id + \rho_0 \alphab \eb^\Lambda_C \tilde{\Pf}^C\bigr). 
\end{equation*}
Then for $\xib=(\xib_\Lambda)\in \Rbb^3_{\times}$, it follows from Lemma \ref{lem:chyp} and the choice of constants \eqref{consts-fix} the eigenvalues of the matrix
\begin{equation}\label{Pfb-def}
\Pfb(\tb,\xb,\Wb,\xib) = \Pfb^\Lambda(\tb,\xb,\Wb)\xib_\Lambda
\end{equation}
are real and given by
\begin{equation*}
\biggl\{\lambda_1(\tb,\xb,\Wb,\xib)= \frac{\rho_1}{\tb^\ep \rho(\tb)} \xb^\Lambda \xib_\Lambda,\;\lambda_2(\tb,\xb,\Wb,\xib)= \frac{1}{\tb^\ep \rho(\tb)}\bigl(\rho_1 \xb^\Lambda \xib_\Lambda + \rho_0|\xi|\bigr),\;\lambda_3(\tb,\xb,\Wb,\xib)= \frac{1}{\tb^\ep \rho(\tb)}\bigl(\rho_1 \xb^\Lambda \xib_\Lambda - \rho_0|\xi|\bigr)\biggr\}, 
\end{equation*}
where $\xi_C = \alphab \eb^\Lambda_C \xib_\Lambda$,
and that the matrix is diagonalisable where, as long as the eigenvalues do not cross, the eigenspaces are of constant dimension. To see that the eigenvalues do not cross, we let
\begin{equation*}
\nu = \inf\Bigl\{ \bigl|\lambda_i(\tb,\xb,\Wb,\xib)-\lambda_j(\tb,\xb,\Wb,\xib)\bigr|\,\Bigl| \, (\tb,\xb)\in [t_*,t_0]\times \mathbb{B}_{\rho_0}, \; \Wb\in\Ucb, \; |\xib|=1 \Bigr\}
\end{equation*}
denote the spectral gap. Then it follows from \eqref{Ucb-def} and 
\begin{equation*}
0<c_*=c_*|\xib|\leq |\alphab\xib|=|\thetab \alphab\eb\xib|< C_*|\alphab\eb\xib| = C_*|\xi|    
\end{equation*}
that $\nu > 0$, which guarantees that the eigenvalues remain separated.  
Defining an open subset of $\Ucb$
\begin{equation*}
\Vcb = \Bigl\{\, \Wb \in \Ucb \, \Bigl| \, |\ab\eb|< \frac{\rho_1}{9}\, \Bigr\}.
\end{equation*}
the smallest eigenvalue of \eqref{Pfb-def}, when evaluated at $\xib=\nb(\xb)$ and $\Wb \in \Vcb$, where $\nb$ is the outward pointing unit normal to the boundary of $\mathbb{B}_{\rho}$, see \eqref{nb-def}, is 
\begin{equation*}
\lambda_3\bigl(\tb,\xb,\Wb,\nb(\xb)\bigr)= \frac{1}{\tb^\ep \rho(\tb)}\bigl(\rho_1\rho_0 - \rho_0|\alphab \eb \nb(\xb)|\bigr) \geq \frac{1}{\tb^\ep \rho(\tb)}\Bigl(\rho_1\rho_0  - \frac{\rho_0\rho_1}{9}\Bigr) = \frac{8\rho_0\rho_1}{\tb^\ep \rho(\tb)}>0.
\end{equation*}
This shows, in particular, that all the eigenvalues of the boundary matrix, ie. \eqref{Pfb-def} evaluated at $\xib=\nb(\xb)$, $\Wb \in \Vcb$ and $\xb\in \del{}\mathbb{B}_{\rho_0}$, are positive, which in turn, implies that the boundary $\del{}\mathbb{B}_{\rho_0}$ is non-characteristic, all the characteristic are outgoing, and the uniform Kreiss-Lopatiniski\u{i} condition is trivially satisfied. By \eqref{loc-sol-cont-B} and \eqref{Wb-in-Ucb}, $\Wb$ takes values in $\Vc$ for $(\tb,\xb)\in [t_*,t_0]\times \mathbb{B}_{\rho_0}$, and so, we can conclude from \cite[Thm.~9.15] {BenzoniSerre:2007}\footnote{See also the remark in the proof of \cite[Thm.~9.15] {BenzoniSerre:2007} regarding rough coefficients along with \cite[Thm.~9.18]{BenzoniSerre:2007} and the remark preceding it.} that the only solution to \eqref{cprop-sys-A.1}-\eqref{cprop-sys-A.2} is $\Ufb=0$. Since $t_*\in (t_1,t_0]$ was chosen arbitrarily, all the constraint vanish in  $(t_1,t_0]\times \mathbb{B}_{\rho_0}$, which completes the proof of statement (c).
\end{proof}

\section{Fuchsian formulation\label{sec:Fuch-global-existence-theory}}
The tetrad formulation of the conformal Einstein-scalar field equations derived in the previous section is not yet quite well-suited for analysing the behaviour of nonlinear perturbations of the Kasner-scalar field family of solutions near $ t=0 $. To address this, we renormalise the tetrad variables  
$\{\alphat, \Hct, \et^\Omega_A, \At_A, \Ut_A, \Sigmat_{AB}, \Nt_{AB}\}$  
by introducing weights of the form $t^\ep \alphat$ and off-set a subset of the renormalised components so that they vanish on the Kasner-scalar field solutions. The purpose of this transformation, along with the other manipulations in this section, is to reformulate the tetrad equations into a Fuchsian form suitable for deriving global-in-time estimates using the Fuchsian global existence theory from \cite{BeyerOliynyk:2024a, BOOS:2021}.

\subsection{Renormalised tetrad variables}
The renormalised tetrad variables are defined as follows:
\begin{align}
    \Hc =& t\alphat\Hct - \frac{r_0}{6}, \label{Hc-def}\\
    \Sigma_{AB} =& t\alphat\Sigmat_{AB} - \biggl(\frac{1}{2}r_{AB}-\frac{r_0}{6}\delta_{AB}\biggr),\label{Sigma-def} \\
    \alpha =& t^{\ep_1}\alphat,\label{alpha-def} \\
    U_{A} =& t \alphat\Ut_A, \label{U-def} \\
    e_A^\Omega =& t^{\ep_2}\alphat\et_A^\Omega, \label{e-def}\\
    A_A =& t \alphat\At_A, \label{N-def} \\
    N^{AB} =& t \alphat\Nt^{AB}, \label{N-def}
\end{align}
where $\ep_1,\ep_2\in \Rbb$ are any constants satisfying 
\begin{gather}
\ep_1 + \frac{r_0}{2}> 0 \label{eigenval-posA}
\intertext{and}
0<\ep_2 < 1,\quad \ep_2+\frac{r_0}{2}-\frac{r_A}{2}>0. \label{eigenval-posB}
\end{gather}
Throughout this section, we assume that the Kasner exponents $q_I$ lie in the subcritical range \eqref{SCR}.

As the following lemma shows, if the Kasner exponents $q_I$ lie in the subcritical range \eqref{SCR}, then it is always possible to find a $\ep_2\in \Rbb$ that satisfies the inequalities \eqref{eigenval-posB}. 
\begin{lem} \label{lem:subcritical}
Suppose $\nu >0$, $P\in (0,\sqrt{1/3}]$ and the $q_{\Omega}\in\Rbb$, $\Omega=1,2,3$, satisfy the Kasner relations \eqref{Kasner-rels-A} and the subcritical condition 
\begin{equation}\label{subcritical}
    \max\limits_{\substack{\Omega,\Lambda,\Gamma \in \{1,2,3\} \\ \Omega < \Lambda}}\{q_\Omega+q_\Lambda-q_\Gamma\} < 1-\frac{2P\nu}{\sqrt{3}} < 1.
\end{equation}
Then the constants $r_0$ and $r_\Omega$, $\Omega=1,2,3$, defined by \eqref{Kasner-exps-A} satisfy 
\begin{equation*}
    \min_{\Omega,\Lambda \in \{1,2,3\}}\biggl\{r_\Omega + 1, 
    1 + \frac{r_\Omega}{2} + \frac{r_\Lambda}{2},1 + \frac{r_0}{2} - \frac{r_\Omega}{2}\biggr\} > \nu >0.
\end{equation*}
\end{lem}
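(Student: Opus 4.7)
The plan is to reduce the claimed inequalities to explicit bounds on the Kasner exponents $q_\Omega$ and then read those bounds off the subcritical condition \eqref{subcritical} combined with the linear Kasner relation $\sum_\Omega q_\Omega = 1$.

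First I would substitute the definitions \eqref{Kasner-exps-A} into each of the three quantities appearing under the minimum. A short direct computation yields
\begin{equation*}
r_\Omega + 1 = \frac{\sqrt{3}}{P}\,q_\Omega, \qquad 1 + \frac{r_\Omega}{2} + \frac{r_\Lambda}{2} = \frac{\sqrt{3}}{2P}(q_\Omega + q_\Lambda), \qquad 1 + \frac{r_0}{2} - \frac{r_\Omega}{2} = \frac{\sqrt{3}}{2P}(1 - q_\Omega).
\end{equation*}
Thus the lemma is equivalent to showing, for each $\Omega,\Lambda \in \{1,2,3\}$,
\begin{equation*}
q_\Omega > \frac{P\nu}{\sqrt{3}}, \qquad q_\Omega + q_\Lambda > \frac{2P\nu}{\sqrt{3}}, \qquad q_\Omega < 1 - \frac{2P\nu}{\sqrt{3}}.
\end{equation*}
Since the middle inequality follows immediately from the first (even for $\Omega = \Lambda$), it suffices to establish the two one-sided bounds on each individual $q_\Omega$.

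For the upper bound I would exploit the freedom of the index $\Gamma$ in \eqref{subcritical}: fixing any $\Omega < \Lambda$ and choosing $\Gamma = \Omega$ (resp.\ $\Gamma = \Lambda$) gives $q_\Lambda < 1 - \tfrac{2P\nu}{\sqrt 3}$ (resp.\ $q_\Omega < 1 - \tfrac{2P\nu}{\sqrt 3}$). Letting the pair $(\Omega,\Lambda)$ range over $\{(1,2),(1,3),(2,3)\}$ then hits every index. For the lower bound I would instead pick $\Gamma$ to be the unique index complementary to $\{\Omega,\Lambda\}$; combining \eqref{subcritical} with $q_\Omega + q_\Lambda = 1 - q_\Gamma$ turns the subcritical inequality into $1 - 2q_\Gamma < 1 - \tfrac{2P\nu}{\sqrt 3}$, i.e.\ $q_\Gamma > \tfrac{P\nu}{\sqrt 3}$. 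As $(\Omega,\Lambda)$ again ranges over the three admissible pairs, this delivers the desired lower bound for every $\Gamma \in \{1,2,3\}$.

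There is no substantive obstacle in this proof; the only nonlinear Kasner relation $\sum q_\Omega^2 = 1 - 2P^2$ is never invoked, and the argument is essentially a bookkeeping exercise that rewrites the three conformal expressions in the $q_\Omega$-variables and then unpacks \eqref{subcritical} by judicious choices of $\Gamma$. The mildest subtlety is remembering to choose $\Gamma$ inside $\{\Omega,\Lambda\}$ to produce upper bounds and $\Gamma$ outside $\{\Omega,\Lambda\}$ (together with $\sum q_\Omega = 1$) to produce lower bounds.
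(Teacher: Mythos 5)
Your proposal is correct and follows essentially the same logical structure as the paper's proof: pick $\Gamma \in \{\Omega,\Lambda\}$ to get one family of bounds, pick $\Gamma$ complementary together with the linear Kasner relation to get the other, and observe the third bound follows. The only difference is cosmetic — you work directly in the $q$-variables while the paper first translates the subcritical condition into the conformal $r$-variables via the linear relation $r_\Omega = \frac{\sqrt{3}}{P}q_\Omega - 1$ and then argues there.
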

\begin{proof}
In terms of the conformal Kasner exponents $r_\Omega$, the subcritical condition \eqref{subcritical} can be expressed as
\begin{equation}\label{Kasner-ineqs-B}
    \max\limits_{\Omega < \Lambda}\{r_\Omega+r_\Lambda-r_\Gamma\} < r_0 + 2-2\nu.
\end{equation}
Setting $\Omega=\Gamma$ and $\Lambda=\Gamma$ in \eqref{Kasner-ineqs-B} shows that $r_\Omega < r_0 + 2-2\nu$, or equivalently $1 + \frac{r_0}{2} - \frac{r_\Omega}{2} > \nu > 0$.
Since we are in four spacetime dimensions, $r_1+r_2-r_3$, $r_1+r_3-r_2$ and $r_2+r_3-r_1$ are the only combinations of $r_\Omega+r_\Lambda-r_\Gamma$ with distinct indices. Thus we deduce from Kasner relation \eqref{Kasner-rels-B} and \eqref{Kasner-ineqs-B} that $r_\Omega + 1 > \nu>0$,
which, in turn, implies that
$1 + \frac{r_\Omega}{2} + \frac{r_\Lambda}{2} > \nu > 0$.
\end{proof}

Using the renormalised variables \eqref{Hc-def}-\eqref{N-def} and the constraint equation \eqref{D-cnstr-2}, which we assume holds, we find, after a straightforward calculation, that the evolution equations \eqref{mEEb.1}-\eqref{mEEb.7} can be expressed as
\begin{align}
    \del{t}e_A^\Omega &= t^{-1}\biggl(\Bigl(\ep_2+\frac{r_0}{2}\Bigr)\delta_A^B-\frac{1}{2}r_A^B\biggr)e_B^\Omega + t^{-1}(2\Hc e_A^\Omega - \Sigma_A{}^B e_B^\Omega), \label{EEc.1} \\
    \del{t}A_A &= -(1+2\rho)t^{-\ep_2}e_A(\Hc) + \Bigl(\frac{1}{2}+\rho\Bigr)t^{-\ep_2}e_B(\Sigma_A^B) + t^{-1}\biggl(\Bigl(1+\frac{\rho+1}{2}r_0\Bigr)\delta_A^B-\frac{3\rho+1}{2}r_A^B\biggr)A_B \notag \\
        &\quad - \frac{1}{2}\rho t^{-1}\ep_{ABC}r_D^C N^{BD} + \rho t^{-1}\biggl(\Bigl(1+\frac{r_0}{2}\Bigr)\delta^B_A-\frac{1}{2}r_A^B\biggr)U_B \notag\\
        &\quad + t^{-1}[(U+A)*(\Hc+\Sigma)]_A + t^{-1}[N*\Sigma]_A, \label{EEc.2}\\
    \del{t}N^{AB} &= -t^{-\ep_2}\ep^{CD(A}e_C(\Sigma_D^{B)}) + t^{-1}\biggl(\delta^A_C\delta^B_D+\frac{1}{2}\delta^A_C\delta^B_D+\frac{1}{2}\delta^B_C r^A_D\biggr)N^{CD} + t^{-1}[(\Hc+\Sigma)*N]^{AB}, \label{EEc.3}\\
    \del{t}\Hc &= -\frac{4}{3}t^{-\ep_2}e_A(A^A) + \frac{1}{3}t^{-\ep_2}e_A(U^A)
        + t^{-1}[A*(A+U)] + t^{-1}[N*N],\label{EEc.4} \\
    \del{t}\Sigma_{AB} &= t^{-\ep_2}e_{\langle A}(U_{B\rangle }) - t^{-\ep_2}e_{\langle A}(A_{B\rangle }) + t^{-\ep_2}\ep_{CD\langle A}e^C(N_{B\rangle}^D) + t^{-1}[A*(U+N)]_{AB} + t^{-1}[N*N]_{AB}, \label{EEc.5}\\
    \del{t}\alpha &= \frac{1}{t}\biggl(\ep_1+\frac{r_0}{2}\biggr)\alpha + \frac{3}{t}\alpha\Hc, \label{EEc.6}\\
    \del{t}U_A &= (3-2\gamma)t^{-\ep_2}e_A(\Hc) + \gamma t^{-\ep_2}e_B(\Sigma_A{}^B) +t^{-1}(1+\gamma)\biggl(\Bigl(1+\frac{r_0}{2}\Bigr)\delta^B_A-\frac{1}{2}r_A^B\biggr)U_B \notag \\
        &\quad - \frac{1}{2}\gamma t^{-1}\ep_{ABC}r_D^C N^{BD} + \gamma t^{-1}\biggl(\frac{1}{2}r_0\delta_A^B-\frac{3}{2}r_A^B\biggr)A_B+ t^{-1}[(\Hc+\Sigma)*U]_A + t^{-1}[(A+N)*\Sigma]_A,\label{EEc.7}
\end{align}
where the $*$-notation is as defined above in Section \ref{Index-op}.

\begin{rem} \label{rem:Kasner-perturb}
The Kasner-scalar field family of solutions \eqref{Kasner-solns-B}, when expressed in terms of the renormalised variables \eqref{Hc-def}-\eqref{N-def}, become
\begin{equation} \label{Kasner-solns-C}
\bigl(\mathring{e}_A^\Omega, \mathring{\alpha}, \mathring{A}_A, \mathring{U}_A, \mathring{\Hc}, \mathring{\Sigma}_{AB}, \mathring{N}_{AB}\bigr) =  \bigl(t^{\ep_2+\frac{r_0}{2}-\frac{r_A}{2}}\delta^\Omega_A,t^{\ep_1+\frac{r_0}{2}},0,0,0,0,0\bigr). 
\end{equation}
Since $\ep_1$ and $\ep_2$ satisfy \eqref{eigenval-posA} and \eqref{eigenval-posB}, respectively, it is clear that
\begin{equation} \label{Kasner-solns-limit}
\lim_{t\searrow 0}\bigl(\mathring{e}_A^\Omega, \mathring{\alpha}, \mathring{A}_A, \mathring{U}_A, \mathring{\Hc}, \mathring{\Sigma}_{AB}, \mathring{N}_{AB}\bigr) =  (0,0,0,0,0,0).
\end{equation}
The significance of this property of the Kasner-scalar field solutions is that we can make these solutions as small as we like for $t\in (0,t_0]$ by choosing the initial time $t_0 > 0$ sufficiently small. This allows us to regard these solutions, along with small deviation from them, as small perturbations of the trivial solution. It is important to emphasise that no generality is lost by choosing $t_0 > 0$ small. This is because \eqref{Kasner-solns-C} defines an exact solution of the Einstein-scalar field equations. By Cauchy stability, we can evolve a perturbation at any finite time $T_0 > 0$ of a reference Kasner-scalar field solution to obtain a perturbation of the reference solution at any smaller time $t_0 \in (0, T_0)$. Furthermore, the perturbation at $t_0$ can be made arbitrarily small by ensuring the initial perturbation at $T_0$ is sufficiently small.
\end{rem}

\subsection{Renormalised  constraint equations}
When expressed in terms of the renormalised variables \eqref{Hc-def}-\eqref{N-def}, the constraint equations \eqref{A-cnstr-2}-\eqref{H-cnstr-2} become
\begin{align}
    \Af_{AB}^\Omega &:= t^{1+\ep_2}\alphat^2\tilde{\Af}_{AB}^\Omega = 2t^{1-\ep_2}e_{[A}(e_{B]}^\Omega) - 2U_{[A}e_{B]}^\Omega - 2A_{[A}e_{B]}^\Omega - \ep_{ABD}N^{DC}e_C^\Omega=0, \label{A-cnstr-3} \\
    \Bf_{AB} &:= t^2\alphat^2\tilde{\Bf}_{AB} = -2t^{1-\ep_2}e_{[A}(U_{B]}) + 2U_{[A}U_{B]} + 2A_{[A}U_{B]} + \ep_{ABC}N^{CD}U_D=0, \label{B-cnstr-3} \\
    \Cf^A &:= t^2\alphat^2\tilde{\Cf}^A = t^{-\ep_2}\bigl(e_B(N^{BA}) - \ep^{BAC}e_B(A_C)\bigr) - (U_B N^{BA}-\ep^{BAC}U_B A_C+2A_B N^{BA})=0, \label{C-cnstr-3} \\
    \Df_A &:= t^{1+\ep_1}\alphat^2\tilde{\Df}_A = \alpha U_A - t^{1-\ep_2}e_A(\alpha)=0, \label{D-cnstr-3} \\
    \Mf_A &:= t^2\alphat^2\Mft_A = t^{1-\ep_2}e_B(\Sigma_A^B) - 2t^{1-\ep_2}e_A(\Hc) + \Bigl(\frac{r_0}{2}+1-\frac{r_A}{2}\Bigr)U_A - \Bigl(\frac{3}{2}r_A-\frac{r_0}{2}\Bigr)A_A \notag \\
            &\quad - U_B\Sigma_A^B + 2U_A\Hc - 3A_B\Sigma_A^B - \ep_{ABC}N^{BD}\Sigma_D^C - \frac{1}{2}\ep_{ABC}r_D^C N^{BD}=0, \label{M-cnstr-3} \\
    \Hf &:= t^2\alphat^2\Hft = 4t^{1-\ep_2}e_A(A^A) + 6\Hc^2 + (6+2r_0)\Hc - \Sigma_{AB}\Sigma^{AB} - r_{AB}\Sigma^{AB} \notag \\
            &\quad - \Bigl(4U_A A^A+6A^A A_A+N^{AB}N_{AB}-\frac{1}{2}(N_A{}^A)^2\Bigr)=0. \label{H-cnstr-3}
\end{align}

\subsection{Low order equations\label{sec:low-order}}
The derivation of our Fuchsian formulation of the conformal tetrad Einstein-scalar field equations involves two different versions of the evolution equations \eqref{EEc.1}-\eqref{EEc.7}. Acknowledging that this violates the symmetry and positivity criteria of Section~\ref{sec:sym-hyp}, the first version is obtained by setting $\gamma=\rho=0$ in \eqref{EEc.1}-\eqref{EEc.7}, which we can write in matrix form as
\begin{equation} \label{FuchA}
    \del{t}W =\frac{1}{t^{\ep_2}} e_C^\Lambda B^C_{(0,0)} \del{\Lambda}W + \frac{1}{t}\Bc\Pbb W + \frac{1}{t}F
\end{equation}
where
\begin{gather} \label{W-def}
W = (e_P^\Sigma, \alpha, A_P, U_P, \Hc, \Sigma_{PQ}, N_{PQ})^{\tr}, \\
    \Pbb = \diag\Bigl(
    \delta_A^P\delta_\Sigma^\Omega, 1, \delta_A^P , \delta_A^P , 0 , 0 , \delta_A^P\delta_B^Q  
   \Bigr), \label{Pbb-def}\\
    F 
       = \begin{bmatrix}
            [(\Hc+\Sigma)*e] \\
            3\alpha\Hc \\
            [(U+A)*(\Hc+\Sigma)] \\
            [\Hc*U] \\
            [A*(A+U)] + [N*N] \\
            [A*(U+N)] + [N*N] \\
            [(\Hc+\Sigma)*N]
    \end{bmatrix} \label{F-def}
\intertext{and}
\Bc =\diag \biggl( \Bigl(\beta_2\delta^P_A -\frac{1}{2}r_A^P\Bigr)\delta^\Omega_\Sigma,  \beta_1,\beta_0\delta^P_A -\frac{1}{2}r_A^P,\beta_0\delta^P_A-\frac{1}{2}r_A^P,\beta_1, \beta_1\delta_A^P\delta_B^Q, \delta_A^P\delta_B^Q +\frac{1}{2}(r_A^P\delta^Q_B+\delta_A^P r_B^Q) 
    \biggr)  \label{Bc-def}\\
\intertext{with}
\beta_0 = 1+\frac{r_0}{2}, \quad 
\beta_1 = \ep_1+\frac{r_0}{2} \AND\beta_2 = \ep_2+\frac{r_0}{2}. \label{beta-defs}
\end{gather}
For use below, we also define the complementary projection operator to \eqref{Pbb-def} by
\begin{equation} \label{Pbb-perp-def}
    \Pbb^\perp = \id-\Pbb.
\end{equation}

By promoting the spatial derivatives to dynamical variables below, we can treat equation \eqref{FuchA} along with  spatial derivatives of it, see \eqref{FuchB} below, as ODEs for the \textit{low order} variables. Control on the low order variables is obtained from the singular term $\frac{1}{t}\Bc \Pbb W$. This term will turn out to be \emph{good} (in the sense of the Fuchsian theory \cite{BOOS:2021})  due to the matrix being $\Bc \Pbb$ diagonal and positive semi-definite. Indeed, from \eqref{rAB-def} and \eqref{Bc-def}, we observe that the matrix $\Bc$ is diagonal and has eigenvalues
$\beta_2-\frac{r_A}{2}$, $\beta_1$, $\beta_0-\frac{r_A}{2}$, $\beta_1, 1+\frac{r_A}{2}+\frac{r_B}{2}$. Since we are assuming that the Kasner exponents lie in the subcritical range \eqref{SCR}, these eigenvalues are easily seen to be positive due to \eqref{eigenval-posA}-\eqref{eigenval-posB}, \eqref{beta-defs} and Lemma \ref{lem:subcritical}.
Then by \eqref{Pbb-def}, it is clear that the range of $\Pbb$ corresponds to eigenspaces of the matrix $\Pbb \Bc$ with positive eigenvalues. It is worth noting here that the positivity of the eigenvalues of $\Pbb \Bc$ identify the \emph{decaying} variables 
\begin{equation} \label{Pbb-W}
\Pbb W =  (e_P^\Sigma, \alpha, A_P, U_P, 0, 0, N_{PQ})^{\tr}
\end{equation}
that will vanish in the limit $t\searrow 0$ (as we will justify by means of the Fuchsian theory in \cite{BOOS:2021}). On the other hand, the image of $\Pbb^\perp$ coincides with the kernel of $\Pbb \Bc$, and identifies the \emph{non-decaying} variables
\begin{equation} \label{Pbb-perp-W}
\Pbb^\perp W =  (0,0,0,0,\Hc,\Sigma_{PQ},0)^{\tr}
\end{equation}
that converge as $t\searrow 0$  to limits that do not, in general, vanish pointwise everywhere at $t=0$. 

Before proceeding, we note some properties of the nonlinear
source term \eqref{F-def} that will play an important role in our subsequent arguments. First, it is clear from \eqref{F-def} that $F$ is quadratic in $W$. Moreover, 
with the help of \eqref{Pbb-def}, \eqref{Pbb-perp-def}, \eqref{Pbb-W} and \eqref{Pbb-perp-W}, we observe that $F$ satisfies
\begin{equation} \label{F-structure-A}
\Pbb F = W *\Pbb W \AND
\Pbb^\perp F = \Pbb W * \Pbb W.
\end{equation}

Next, given that $\ep_2<1$, cf. \eqref{eigenval-posB}, we let $\nu>0$ be any positive number satisfying 
\begin{equation} \label{eigenval-posC}
   \ep_2+\nu<1,
\end{equation} 
and we introduce the \textit{low-order renormalised spatial derivatives of $W$} via
\begin{equation} \label{low-order-vars}
    W_\bc := t^{|\bc|\nu}\del{}^\bc W =   t^{|\bc|\nu}\bigl(\del{}^\bc e_P^\Sigma, \del{}^\bc \alpha, \del{}^\bc A_P,\del{}^\bc U_P,\del{}^\bc \Hc, \del{}^\bc \Sigma_{PQ}, \del{}^\bc N_{PQ}\bigr)^{\tr}, \quad |\bc|< k,
\end{equation}
where $\bc = (\bc_1,\bc_2,\bc_3)$ is a multiindex, $k\in \Nbb$ is a positive integer to be determined, and we set $W_{(0,0,0)}=W$. Collectively, we will refer to these derivatives as the \textit{low order variables}. 

Applying the operator $t^{|\bc|\nu}\del{}^{\bc}$ to \eqref{FuchA} yields
\begin{equation*}
    \del{t}W_\bc + t^{|\bc|\nu-\ep_2}B_{(0,0)}^C \biggl(\sum_{\bc'\leq\bc} \binom{\bc}{\bc'} \del{}^{\bc-\bc'}(e_C^\Lambda) \del{}^{\bc'}(\del{\Lambda}W)\biggr) = \frac{1}{t}(|\bc|\nu\id+\Bc\Pbb) W_\bc + \frac{1}{t}t^{|\bc|\nu}\del{}^{\bc}F,\quad |\bc|<k,
\end{equation*}
where in deriving this we made use of the fact, see
\eqref{B-rho-gamma} with $\upsilon=(0,0)$,
\eqref{Pbb-def} and \eqref{Bc-def}, that $B_{(0,0)}^C$, $\Bc$ and $\Pbb$ are constant in space. 
Since $F$ is quadratic in $W$ and satisfies \eqref{F-structure-A}, it is not difficult to verify using the product rule that we can express the above equation as
\begin{equation}  \label{FuchB}
    \del{t}W_\bc  = \frac{1}{t}(|\bc|\nu\id+\Pbb\Bc)W_\bc + \frac{1}{t}\sum_{\bc'\leq\bc} F_{\bc'} + \frac{1}{t^{\ep_2+\nu}}\sum_{\substack{\bc'\leq\bc\\|\ac|=1}}\Pbb W_{\bc-\bc'}*W_{\bc'+\ac},\quad |\bc|<k,
\end{equation}
where 
\begin{equation} \label{F-structure-B}
    F_{\bc'} =W_{\bc'} *W_{\bc-\bc'} 
\end{equation} and 
\begin{equation} \label{F-structure-C}
\Pbb F_{\bc'} = W_{\bc'}*\Pbb W_{\bc-\bc'} + \Pbb W_{\bc'} * W_{\bc-\bc'} \AND
\Pbb^\perp F_{\bc'} = \Pbb W_{\bc'}*\Pbb W_{\bc-\bc'}.
\end{equation}

\subsection{High order equations\label{sec:high-order}}
To obtain a closed system that yields energy estimates, we need a symmetric hyperbolic formulation of the equations for the \textit{high order variables}
\begin{equation} \label{high-order-vars}
    W_{\bc} :=t^{k\nu}\del{}^\bc W= t^{k\nu}\bigl(\del{}^\bc e_P^\Sigma, \del{}^\bc \alpha, \del{}^\bc A_P,\del{}^\bc U_P,\del{}^\bc \Hc, \del{}^\bc \Sigma_{PQ}, \del{}^\bc N_{PQ}\bigr)^{\tr}, \quad |\bc|=k,
\end{equation} 
which we obtain using the same method as in Section \ref{Sec:loc-exist}. Thus, we begin by setting  $\rho=0$ and $\gamma=2$ in \eqref{EEc.1}-\eqref{EEc.7}, which, according to Section~\ref{sec:sym-hyp} implies symmetrisability and positivity of the evolution system, and which gives
\begin{equation} \label{FuchC}
    \del{t}W - \frac{1}{t^{\ep_2}}e_C^\Lambda B_{(0,2)}^C \del{\Lambda}W = \frac{1}{t}\check{\Bc}W + \frac{1}{t}\Fch,
\end{equation}
where
\begin{equation} \label{Bc-check-def}
\check{\Bc} = \begin{bmatrix}
        \Bigl(\beta_2\delta^P_A -\frac{1}{2}r_A^P\Bigr)\delta^\Omega_\Sigma &  0&  0&  0&  0&  0&  0 \\
        0&  \beta_1&  0&  0&  0&  0&  0 \\
        0&  0&  \beta_0\delta^P_A -\frac{1}{2}r_A^P&  0&  0&  0&  0 \\
        0&  0&  r_0\delta^P_A-3r_A^P&  3\Bigl(\beta_0\delta^P_A-\frac{1}{2}r_A^P\Bigr)&  0&  0& \ep_A{}^{PB}r_B^Q \\
        0&  0&  0&  0&  0&  0&  0 \\
        0&  0&  0&  0&  0&  0&  0 \\
        0&  0&  0&  0&  0&  0&  \delta_A^P\delta_B^Q +\frac{1}{2}(r_A^P\delta^Q_B+\delta_A^P r_B^Q) 
    \end{bmatrix},
\end{equation}
and
\begin{equation} \label{F-check-def}
\Fch = \begin{bmatrix}
        [(\Hc+\Sigma)*e] \\
        3\alpha\Hc \\
        [(U+A)*(\Hc+\Sigma)] \\
        [(\Hc+\Sigma)*U] + [\Sigma*(A+N)] \\
        [A*(A+U)] + [N*N] \\
        [A*(U+N)] + [N*N] \\
        [(\Hc+\Sigma)*N]
    \end{bmatrix}.
\end{equation}
Spatially differentiating \eqref{FuchC} gives
\begin{equation} \label{FuchD}
    \del{t}W_\bc - \frac{1}{t^{\ep_2}}e_C^\Lambda B_{(0,2)}^C \del{\Lambda}W_\bc = \frac{1}{t}(k\nu\id+\check{\Bc}) W_\bc + \frac{1}{t}\sum_{\bc'\leq\bc} \Fch_{\bc'} + \frac{1}{t^{\ep_2+\nu}}\sum_{\substack{\bc'\leq\bc, \bc'\neq \bc\\|\ac|=1}}\Pbb W_{\bc-\bc'}*W_{\bc'+\ac},\quad |\bc|=k,
\end{equation}
where 
\begin{equation} \label{F-structure-D}
    \Fch_{\bc'} =W_{\bc'} *W_{\bc-\bc'} 
\end{equation} and 
\begin{equation} \label{F-structure-E}
\Pbb \Fch_{\bc'} = W_{\bc'}*\Pbb W_{\bc-\bc'} + \Pbb W_{\bc'} * W_{\bc-\bc'} \AND
\Pbb^\perp \Fch_{\bc'} = \Pbb W_{\bc'}*\Pbb W_{\bc-\bc'}.
\end{equation}

We now symmetrise the system \eqref{FuchD} in the same way as in Section~\ref{Sec:loc-exist}, namely, we set
\begin{equation} \label{Wch-bc-def}
    \Wch_\bc := V^{-1}W_\bc= t^{k\nu}\Bigl(\del{}^\bc e_P^\Sigma, \del{}^\bc \alpha, \frac{1}{3}\del{}^\bc A_P,-2\del{}^\bc A_P+\del{}^\bc U_P,\del{}^\bc \Hc, \del{}^\bc \Sigma_{PQ}, \del{}^\bc N_{PQ}\Bigr)^{\tr}, \quad |\bc|=k,
\end{equation}
where $V$ is defined above by \eqref{V-fix}, and multiply 
\eqref{FuchD} on the left by the matrix $\Sc$ defined by \eqref{Sc-fix} to get
\begin{equation} \label{FuchE}
    \Bh^0\del{t}\Wch_\bc + \frac{1}{t^{\ep_2}}e_C^\Lambda \Bh^C \del{\Lambda}\Wch_\bc =  \frac{1}{t}\Bsc\Wch_\bc + \frac{1}{t}\sum_{\bc'\leq\bc} \Sc\Fch_{\bc'} + \frac{1}{t^{\ep_2+\nu}}\sum_{\substack{\bc'\leq\bc, \bc'\neq \bc\\|\ac|=1}}\Pbb W_{\bc-\bc'}*W_{\bc'+\ac},\quad |\bc|=k,
\end{equation}
where $\Bh^0$ and $\Bh^C$  are defined above by \eqref{Bh0-def}  and \eqref{BhC-def}, respectively, and   
\begin{equation} \label{Bsc-fix}
    \Bsc = \text{\scalebox{0.63}{$ \begin{bmatrix}
    \biggl(\beta_2\delta^P_A -\frac{1}{2}r_A^P+k\nu\delta_A^P\biggr)\delta^\Omega_\Sigma &  0&  0&  0&  0&  0&  0 \\
        0&  \beta_1+k\nu&  0&  0&  0&  0&  0 \\
        0&  0&  \Bigl(\bigl(6+3r_0\bigr)\delta^P_A -3r_A^P\Bigr)+6k\nu\delta_A^P&  0&  0&  0&  0 \\
        0&  0&  (4+3r_0)\delta^P_A-5r_A^P&  \Bigl(\beta_0\delta^P_A-\frac{1}{2}r_A^P\Bigr)+\frac{1}{3}k\nu\delta_A^P&  0&  0&  \frac{1}{3}\ep_A{}^{PB}r_B^Q \\
        0&  0&  0&  0&  k\nu&  0&  0 \\
        0&  0&  0&  0&  0&  \frac{1}{3}k\nu\delta_A^P\delta_B^Q&  0 \\
        0&  0&  0&  0&  0&  0&  \frac{1}{3}(1+k\nu)\delta_A^P\delta_B^Q +\frac{1}{6}(r_A^P\delta^Q_B+\delta_A^P r_B^Q) 
    \end{bmatrix}$ }}.
\end{equation}

\begin{rem} The need for the low and higher order rescaled spatial derivatives can be understood in terms of the matrix $\Bsc$, which appears from symmetrising the tetrad equations. When $k=0$, the eigenvalues of $\frac{1}{2}(\Bsc^{\tr}+\Bsc)$ are negative for an open subset of the Kasner exponents $q_I$ that lie in the subcritical range \eqref{SCR}. Because of this, it is not possible to obtain energy estimates that are useful for controlling the behaviour of solutions near $t=0$ directly in terms of the undifferentiated fields for these Kasner exponents. On the other hand, by not symmetrising and viewing the lowest order equation \eqref{FuchA} as an ODE, the singular matrix $\Bc\Pbb$ does not have any negative eigenvalues. Moreover, as is clear from \eqref{FuchB}, the introduction of the low order renormalised variables \eqref{low-order-vars} leads to equations \eqref{FuchB} where the singular matrix $(|\bc|\nu\id +\Pbb \Bc)$ is positive definite with eigenvalues that grow as the number of derivatives is increased. For the symmetrised high order equation \eqref{FuchE}, the renormalised high order variables \eqref{Wch-bc-def} introduce a positive term into the matrix $\frac{1}{2}(\Bsc^{\tr}+\Bsc)$, as we show below, that makes it positive definite for any choice of subcritical Kasner exponents provided $k$ is chosen sufficiently large. The positive definiteness of  $\frac{1}{2}(\Bsc^{\tr}+\Bsc)$ is key to obtaining energy estimates that control the behaviour of solutions near $t=0$. 
\end{rem}
\begin{rem}
The need to use progressively higher derivatives as the Kasner exponents approach the boundary of the subcritical regime was also observed in the stability proofs of \cite{Fournodavlos_et_al:2023, Groeniger_et_al:2023}. However, in those works, the control of higher derivatives was achieved through a technical interpolation argument, rather than the simpler renormalization argument employed here. It is worth noting that optimal scattering results for the linearized Einstein-scalar field equations were obtained in \cite{Li:2024}. These results provide strong evidence that the need to control higher derivatives as the Kasner exponents approach the boundary of the subcritical region is unavoidable and not a shortcoming of the stability proof presented in this article or those from \cite{Fournodavlos_et_al:2023, Groeniger_et_al:2023}. 
\end{rem}

Since $\ep_1$ and $\ep_2$ satisfy the inequalities \eqref{eigenval-posA} and \eqref{eigenval-posB}, respectively, and $k\geq 1$, it is clear from \eqref{rAB-def}, \eqref{beta-defs}  and \eqref{Bsc-fix} that the symmetrised matrix $\frac{1}{2}(\Bsc^{\tr}+\Bsc)$ will be positive definite provided $\frac{1}{2}(\Bsc_*^{\tr}+\Bsc_*)$ is positive definite where $\Bsc_*$ is the subblock
\begin{equation*}
    \Bsc_* = \text{\scalebox{0.8}{$\begin{bmatrix}
        \Bigl(\bigl(6+3r_0\bigr)\delta^P_A -3r_A^P\Bigr)+6k\nu\delta_A^P&  0&  0&  0&  0 \\
        (4+3r_0)\delta^P_A-5r_A^P&  \Bigl(\beta_0\delta^P_A-\frac{1}{2}r_A^P\Bigr)+\frac{1}{3}k\nu\delta_A^P&  0&  0&  \frac{1}{3}\ep_A{}^{PB}r_B^Q \\
        0&  0&  k\nu&  0&  0 \\
        0&  0&  0&  \frac{1}{3}k\nu\delta_A^P\delta_B^Q&  0 \\
        0&  0&  0&  0&  \frac{1}{3}(1+k\nu)\delta_A^P\delta_B^Q +\frac{1}{6}(r_A^P\delta^Q_B+\delta_A^P r_B^Q)
    \end{bmatrix}$}}.
\end{equation*}
For any non-zero vector
\begin{equation*}
    \mathscr{V} = (A_P, U_P, \Hc, \Sigma_{PQ}, N_{PQ})^{\tr},
\end{equation*}
we have
\begin{align} \label{VBBV}
    \frac{1}{2}\Vsc^{\tr}(\Bsc_*^{\tr}+\Bsc_*)\Vsc &= \Isc_1 + \Isc_2 + k\nu \Hc^2 + \frac{1}{3}k\nu \Sigma_{PQ}\Sigma^{PQ},
\end{align}
where
\begin{align}
    \Isc_1 &= \sum_{P=1}^3 (6w_P+6k\nu) (A_P)^2 + \sum_{P=1}^3 (6w_P-2v_P) A_P U_P + \sum_{P=1}^3 w_P(U_P)^2, \label{Isc1-def} \\
    \Isc_2 &= \sum_{P=1}^3 \sum_{Q=1}^3 \frac{1}{3}(v_P+k\nu) (N_{PQ})^2 + \frac{1}{3}\ep^{APB}r_B^Q U_A N_{PQ} + \sum_{P=1}^3 \frac{1}{3}k\nu (U_P)^2, \label{Isc2-def}
\end{align}
and
\begin{equation*}
    w_P = 1+\frac{r_0}{2}-\frac{r_P}{2}, \quad v_P = 1+r_P.
\end{equation*}
We also note that 
\begin{equation} \label{simplify1}
    \frac{1}{3}\ep^{APB}r_B^Q U_A N_{PQ} = \frac{1}{3}\Bigl((r_3-r_2)U_1 N_{23} + (r_1-r_3)U_2 N_{13} + (r_2-r_1)U_3 N_{12}\Bigr),
\end{equation}
where we recall that $N_{AB}=N_{BA}$.
By \eqref{SCR} and Lemma \ref{lem:subcritical}, it follows that $w_P,v_P>0$, and consequently, the coefficients of $(A_P)^2$, $(U_P)^2$ and $(N_{PQ})^2$ in \eqref{Isc1-def} and \eqref{Isc2-def} are positive. Plugging the expansion \eqref{simplify1} into \eqref{Isc2-def} and using Young's inequality, $ab\leq a^2/2+b^2/2$, $a,b\geq 0$, we observe that $\Isc_1>0$ and $\Isc_2>0$ provided the inequalities 
\begin{align*}
    2\sqrt{(6w_P+6k\nu)w_P} &> |6w_P-2v_P|, \quad P=1,2,3,
    \intertext{and}
    2\sqrt{\frac{1}{3}(v_P+k\nu) \cdot \frac{1}{3}k\nu} &> \frac{1}{3}|v_P-v_Q|, \quad P,Q=1,2,3,\; P<Q,
\end{align*}
are satisfied. Thus, by choosing  $k\in\Nbb$ large enough so that
\begin{align} \label{k-nu-fix}
    k\nu &> \max \biggl\{0, \frac{(3w_P-v_P)^2}{6w_P}-w_P, \frac{1}{2}\Bigl(-v_P+\sqrt{v_P^2+(v_P-v_Q)^2}\Bigr)\biggr\} \notag \\
    &= \max \biggl\{0, \frac{(4+3r_0-5r_P)^2}{4(6+3r_0-3r_P)}-1-\frac{r_0}{2}+\frac{r_P}{2}, \frac{1}{2}\Bigl(-r_P-1+\sqrt{2r_P^2+r_Q^2-2r_Pr_Q+2r_P+1}\Bigr)\biggr\},
\end{align}
we can ensure that the matrix $\frac{1}{2}(\Bsc_*^{\tr}+\Bsc_*)$ is positive definite. 

\begin{rem}
The choice $k=0$ is also allowed provided the right hand side of \eqref{k-nu-fix} is $0$, and hence that  $\frac{1}{2}(\Bsc_*^{\tr}+\Bsc_*)\bigl|_{k=0} \geq 0$. Since the conformal Kasner exponents all vanish, ie. $r_i=0$, $i=0,1,2,3$, for the FLRW solution, it is clear from \eqref{k-nu-fix} that $k=0$ is a valid choice in this case.  
\end{rem}

\subsection{The Fuchsian system\label{sec:total-Fuch-sys}}
Gathering together \eqref{FuchB} and \eqref{FuchE}, we arrive at the desired Fuchsian formulation of the tetrad equations:
\begin{equation} \label{FuchFinal-A}
    \Bv^0 \del{t}\Wv + \frac{1}{t^{\ep_2}}\Bv^\Lambda(\Wv) \del{\Lambda}\Wv = \frac{1}{t}\Bcv\Pv\Wv + \frac{1}{t^{\ep_2+\nu}}\Gv(\Wv) + \frac{1}{t}\Fv(\Wv)
\end{equation}
where
\begin{align} 
    \Wv &= \bigl( (W_{\bc})_{|\bc|= 0},(W_{\bc})_{|\bc|=1},(W_{\bc})_{|\bc|=2}, \ldots,(W_{\bc})_{|\bc|=k-1},(\Wch_\bc)_{|\bc|=k}\bigr)^{\tr}, \label{Wv-def}\\
    \Bv^0 &= \diag\bigl(\id,\id,\id,\ldots,\id,\Bh^0\bigr), \label{Bv0-def}\\
    \Bv^\Lambda(\Wv) &= \diag\bigl(0,0,0,\ldots,0,e_C^\Lambda\Bh^C\bigr), \label{Bv-Lambda-def}\\
    \Bcv &= \diag\bigl(\Bc,\nu\id+\Bc\Pbb,2\nu\id+\Bc\Pbb,...,(k-1)\nu\id+\Bc\Pbb,\Bsc\bigr),
    \label{Bcv-def} \\
    \Pv &= \diag\bigl(\Pbb,\id,\id,\ldots,\id,\id), \label{Pv-def}
\end{align}
and $\Gv(\Wv)$ and $\Fv(\Wv)$ are quadratic in $\Wv$.

\subsection{Coefficient properties\label{coeff-props}} In order to be able to apply the Fuchsian global existence theory from \cite{BOOS:2021}, we need to verify the coefficients of the Fuchsian equation \eqref{FuchFinal-A} satisfies the conditions set out in \cite[\S 3.4]{BOOS:2021}.  
To start, we observe from \eqref{Pbb-def} that $\Pv$ is a constant, symmetric projection matrix, ie.
\begin{equation}\label{Pv-props}
\Pv^2 = \Pv, \quad \Pv^{\tr}=\Pv \AND \del{t}\Pv = \del{\Lambda}\Pv = 0. 
\end{equation}
Letting
\begin{equation}\label{Pv-perp-def}
\Pv^\perp := \id -\Pv = \diag\bigl(\Pbb^\perp,0,0,\ldots,0,0),
\end{equation}
we see from \eqref{Bv0-def} that the matrix  $\Bv^0$ satisfies 
\begin{equation}\label{Bv0-prop}
\Pv \Bv^0 \Pv^\perp = \Pv^\perp \Bv^0 \Pv = 0.
\end{equation}
Moreover, it is clear from \eqref{Bh0-def} and \eqref{Bv0-def} that $\Bv^0$ is symmetric and satisfies
\begin{equation}\label{Bv0-lbnd}
\frac{1}{3}\id \leq \Bv^0\leq 6\id.
\end{equation}
It is also clear from \eqref{W-def}-\eqref{Pbb-def} and \eqref{Pv-def} that 
the matrices $\Bv^\Lambda(\Wv)$ satisfy
\begin{equation}\label{BvLambda-prop}
\Bv^\Lambda(\Wv) = \Bv^\Lambda(\Pv \Wv) = \Ord(|\Pv \Wv|).
\end{equation}

Next, by \eqref{Sc-fix} and \eqref{Pbb-def}, we observe that $[\Pbb,\Sc]=0$.
Using this, it is straightforward to verify from  \eqref{FuchB}-\eqref{F-structure-C}, \eqref{F-structure-D}-\eqref{FuchE} and \eqref{Pv-def}-\eqref{Pv-perp-def} that the quadratic nonlinear term $\Gv(\Wv)$ and $\Fv(\Wv)$ satisfies
\begin{gather}
\Gv(\Wv) = \Ord\bigl(|\Wv||\Pv\Wv|\bigr) \label{Gv-prop},\\
\Pv \Fv(\Wv)=\Ord\bigl(|\Wv||\Pv\Wv|\bigr) \label{Fv-prop-A}
\intertext{and}
\Pv^\perp \Fv(\Wv) = \Ord\bigl(|\Pv\Wv|^2\bigr). \label{Fv-prop-B}
\end{gather}
Using \eqref{Fv-prop-A}, we deduce that 
\begin{equation*}
\Pv\Fv(\Wv) = \Pv \Lv(\Wv)\Pv\Wv
\end{equation*}
for some matrix-valued map $\Lv(\Wv)$ that depends linearly on $\Wv$, ie.
\begin{equation} \label{L-bnd}
\Lv(\Wv)= \Ord(|\Wv|).
\end{equation}
Using this allows us to express \eqref{FuchFinal-A} as
\begin{equation} \label{FuchFinal-B}
    \Bv^0 \del{t}\Wv + \frac{1}{t^{\ep_2}}\Bv^\Lambda \del{\Lambda}\Wv = \frac{1}{t}\bigl(\Bcv+\Pv \Lv(\Wv)\Pv\bigr)\Pv\Wv + \frac{1}{t^{\ep_2+\nu}}\Gv(\Wv) + \frac{1}{t}\Pv^\perp\Fv(\Wv).
\end{equation}

From \eqref{Pbb-def} and \eqref{Bc-def}, we have $[\Bc,\Pbb]=0$. Using this, it follows immediately from the definition \eqref{Bcv-def} and \eqref{Pv-def} that
\begin{equation} \label{Bcv-Pv-com-A}
[\Bcv,\Pv]=0,
\end{equation}
which, in turn, implies that
\begin{equation} \label{Bcv-Pv-com-B}
[\Pv,\Bcv+\Pv \Lv(\Wv)\Pv]=0.
\end{equation}
Moreover, for any choice of subcritical Kasner exponents $q_I$, see \eqref{SCR}, $\ep_1$ and $\ep_2$ satisfying \eqref{eigenval-posA} and \eqref{eigenval-posB}, respectively, $\nu>0$ satisfying \eqref{eigenval-posC}, and $k\in \Nbb$ satisfying \eqref{k-nu-fix}, we know from the arguments above in Section \ref{sec:high-order} that there exist $\tilde{\kappa},\tilde{\gamma} > 0$ such that $\tilde{\kappa}^{-1}\id \leq \Bc \leq \tilde{\gamma}$, which, in turn, implies by \eqref{Bcv-def} and \eqref{Bv0-lbnd} that
\begin{equation} \label{Bv0-Bcv-bnds}
\frac{1}{3}\id \leq \Bv^0\leq  \frac{1}{\kappa} \Bcv \leq \gamma
\end{equation}
for some $\gamma,\kappa>0$.

Finally, defining the \textit{matrix divergence} via
\begin{equation*}
\Div \!\Bv = \frac{1}{t^{\ep_2}}\partial_\Lambda (\Bv^\Lambda(\Wv)),
\end{equation*}
it follows directly from \eqref{Bv-Lambda-def} that 
\begin{equation}\label{DivBv-A}
\Div\! \Bv = \Ord(t^{-\ep_2}|\del{}\Wv|). 
\end{equation}
It is worth noting the relation
\begin{equation*}
\Div\! \Bv= \partial_{t}\Bv^0 + \partial_\Lambda \biggl(\frac{1}{t^{\ep_2}}\Bv^\Lambda(\Wv)\biggr)
\end{equation*}
holds
for solutions of \eqref{FuchFinal-B}, which justifies the term \textit{matrix divergence}.

\subsection{Past Fuchsian stability}
We now turn to establishing the past, ie. on time intervals of the form $(0,t_0]$, nonlinear stability of the trivial solution $\Wv\equiv 0$ to the Fuchsian equation \eqref{FuchFinal-B}. This stability result is accompanied by energy and decay estimates that provide the main technical tools needed to prove the past nonlinear stability of the Kasner-scalar field solutions with subcritical exponents. 

\subsubsection{Past global-in-space Fuchsian stability}
The global-in-space past stability of the trivial solution $\Wv\equiv 0$ to \eqref{FuchFinal-B} means establishing the existence and uniqueness of solutions to the Fuchsian global initial value problem (GIVP)  
\begin{align}
   \Bv^0 \del{t}\Wv + \frac{1}{t^{\ep_2}}\Bv^\Lambda \del{\Lambda}\Wv &= \frac{1}{t}\bigl(\Bcv+\Pv \Lv(\Wv)\Pv\bigr)\Pv\Wv + \frac{1}{t^{\ep_2+\nu}}\Gv(\Wv) + \frac{1}{t}\Pv^\perp\Fv(\Wv) \hspace{0.5cm} \text{in $M_{0,t_0}=(0,t_0]\times \Tbb^3$,} \label{Fuch-past-A.1}\\
  \Wv &= \Wv_0 \hspace{7.85cm} \text{in $\Sigma_{t_0}=\{t_0\}\times \Tbb^3$,}
  \label{Fuch-past-A.2}
\end{align}
under a suitable smallness assumption on the initial data $\Wv_0$, which we do in the following proposition. 

\begin{prop}
  \label{prop:Fuch-global}
Suppose $T_0>0$, $k_0\in \Zbb_{>\frac{3}{2}+1}$, the Kasner exponents $q_I\in \Rbb$, $I=1,2,3$, and  $P\in (0,\sqrt{1/3})$ satisfy the Kasner relations \eqref{Kasner-rels-A} and the subcritical condition \eqref{SCR},  $\ep_1\in (0,1)$ and $\ep_2\in \Rbb$ satisfy \eqref{eigenval-posA}  and \eqref{eigenval-posB}, respectively, where $r_0$ and $r_I$, $I=1,2,3$, are defined by \eqref{Kasner-rels-B}, $\nu>0$ satisfies \eqref{eigenval-posC}, and $k\in \Nbb_0$ is chosen large enough so that \eqref{k-nu-fix} holds.\footnote{If the right hand side of \eqref{k-nu-fix} is $0$, then $k=0$ is allowed.}     
Then there exists
a $\delta_0 > 0$ such that for every $t_0\in (0,T_0]$ and $\Wv_0\in H^{k_0}(\Tbb^3)$ satisfying
$\norm{\Wv_0}_{H^{k_0}(\mathbb T^{3})}< \delta_0$
the GIVP \eqref{Fuch-past-A.1}-\eqref{Fuch-past-A.2} admits a unique solution 
\begin{equation*}
\Wv \in C^0_b\bigl((0,t_0],H^{k_0}(\mathbb T^{3})\bigr)\cap C^1\bigl((0,t_0],H^{k_0-1}(\mathbb T^{3})\bigr)
\end{equation*}
such that the limit $\lim_{t\searrow 0} \Pv^\perp \Wv(t)$, denoted $\Pv^\perp \Wv(0)$, exists in $H^{k_0-1}(\mathbb{T}^3)$.

\noindent Moreover, the following hold:
\begin{enumerate}[(a)]
    \item For any $t_1\in (0,t_0)$, $\Wv$ is the unique $C^1$-solution of \eqref{Fuch-past-A.1} on $M_{t_1,t_0}$ satisfying the initial condition $\Wv|_{t=t_0}=\Wv_0$.  
    \item There exists a $\zeta>0$ such that $\Wv$ satisfies the energy estimate
\begin{equation*}
\norm{\Wv(t)}_{H^{k_0}(\mathbb T^{3})}^2 + \int^{t_0}_t \frac{1}{s} \norm{\Pv \Wv(s)}_{H^{k_0}(\mathbb T^{3})}^2\, ds  \lesssim \norm{\Wv_0}^2_{H^{k_0}(\mathbb T^{3})}
\end{equation*}
and decay estimate
\begin{equation*}
  \norm{\Pv \Wv(t)}_{H^{k_0-1}(\mathbb T^{3})}+\norm{\Pv^\perp \Wv(t) - \Pv^\perp \Wv(0)}_{H^{k_0-1}(\mathbb T^{3})} \lesssim t^\zeta
\end{equation*}
for all $t\in(0,t_0]$.
\end{enumerate}
\medskip

\noindent The implicit constants and $\zeta$ in the energy and decay estimates are independent of the choice of $t_0\in (0,T_0]$ and $\Wv_0$ satisfying $\norm{\Wv_0}_{H^{k_0}(\Tbb^3)}<\delta_0$.
\end{prop}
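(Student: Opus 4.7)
The plan is to obtain this proposition as a direct application of the Fuchsian global existence theorem \cite[Thm.~3.8]{BOOS:2021} to system \eqref{FuchFinal-B}. All the structural hypotheses of that theorem have, in fact, already been verified in Section~\ref{coeff-props}: the projection identities \eqref{Pv-props}, the block-splitting \eqref{Bv0-prop} and uniform bounds \eqref{Bv0-lbnd}-\eqref{Bv0-Bcv-bnds}, the structure \eqref{BvLambda-prop} of the flux matrices (which ensures they vanish on the non-decaying subspace), the commutator property \eqref{Bcv-Pv-com-A}-\eqref{Bcv-Pv-com-B}, the matrix divergence bound \eqref{DivBv-A}, and the crucial nonlinear structure \eqref{Gv-prop}-\eqref{Fv-prop-B}. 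The auxiliary exponent inequality $\ep_2+\nu < 1$ from \eqref{eigenval-posC} guarantees that the remainder $t^{-\ep_2-\nu}\Gv(\Wv)$ is of admissible Fuchsian type, while $\Pv^\perp \Fv(\Wv) = \Ord(|\Pv\Wv|^2)$ ensures the singular $t^{-1}$ term acts only through the decaying variables.

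With these ingredients in hand, \cite[Thm.~3.8]{BOOS:2021} directly produces a $\delta_0>0$, independent of $t_0\in(0,T_0]$, such that every initial datum $\Wv_0\in H^{k_0}(\Tbb^3)$ with $\norm{\Wv_0}_{H^{k_0}}<\delta_0$ yields a unique solution
\begin{equation*}
\Wv\in C^0_b\bigl((0,t_0],H^{k_0}(\Tbb^3)\bigr)\cap C^1\bigl((0,t_0],H^{k_0-1}(\Tbb^3)\bigr)
\end{equation*}
of \eqref{Fuch-past-A.1}-\eqref{Fuch-past-A.2} satisfying the claimed energy estimate in statement~(b). For statement~(a), uniqueness as a classical $C^1$-solution on any truncated cylinder $M_{t_1,t_0}$ with $t_1>0$ is standard: on $[t_1,t_0]$ the coefficients $t^{-1}\Bcv$ and $t^{-\ep_2-\nu}\Gv$ are uniformly bounded, $\Bv^0$ is uniformly positive definite by \eqref{Bv0-lbnd}, and an $L^2$-energy estimate on the difference of two such solutions combined with Grönwall's inequality (running backward from $t_0$) forces the difference to vanish.

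To extract the decay rates in statement~(b), I would first use the integrated energy bound $\int_0^{t_0}s^{-1}\norm{\Pv\Wv(s)}^2_{H^{k_0}}\,ds<\infty$ together with the positive definiteness of $\Pv\Bcv\Pv$ (which is the content of the Section~\ref{sec:high-order} computation, refined by \eqref{k-nu-fix}) to derive a differential inequality of the shape $\del{t}\norm{\Pv\Wv}^2_{H^{k_0-1}}\geq 2\kappa' t^{-1}\norm{\Pv\Wv}^2_{H^{k_0-1}} - C(t^{-\ep_2-\nu}+1)\norm{\Wv}_{H^{k_0}}\norm{\Pv\Wv}^2_{H^{k_0-1}}$ for some $\kappa'>0$ independent of the solution. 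Using the smallness of $\norm{\Wv}_{H^{k_0}}$ (inherited from $\delta_0$) and the integrability of $t^{-\ep_2-\nu}$ near $t=0$, a Grönwall argument gives $\norm{\Pv\Wv(t)}_{H^{k_0-1}}\lesssim t^\zeta$ for some $\zeta\in(0,\kappa')$. Plugging this back into \eqref{Fuch-past-A.1} and applying $\Pv^\perp$ yields $\norm{\del{t}(\Pv^\perp\Wv)(t)}_{H^{k_0-1}} = \Ord(t^{\zeta-1}+t^{-\ep_2-\nu}\norm{\Wv}_{H^{k_0-1}}\norm{\Pv\Wv}_{H^{k_0-1}})$, which is integrable down to $t=0$. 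Lemma~\ref{lem:asymptotic} then supplies the limit $\Pv^\perp\Wv(0)\in H^{k_0-1}(\Tbb^3)$ along with the bound $\norm{\Pv^\perp\Wv(t)-\Pv^\perp\Wv(0)}_{H^{k_0-1}}\lesssim t^\zeta$, completing the proof. The principal obstacle — ensuring positive definiteness of the singular matrix coefficient for the full subcritical range of Kasner exponents — has already been surmounted in Section~\ref{sec:Fuch-global-existence-theory} through the weighted derivative renormalisation \eqref{low-order-vars}, \eqref{high-order-vars} and the derivative-order condition \eqref{k-nu-fix}; once those are in place, the present proposition is essentially a citation of \cite[Thm.~3.8]{BOOS:2021}.
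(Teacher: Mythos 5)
Your proposal follows essentially the same route as the paper: verify the structural coefficient conditions from Section~\ref{coeff-props} and invoke \cite[Thm.~3.8]{BOOS:2021}. Two remarks, one minor and one a genuine gap.

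The minor point: the decay estimate $\norm{\Pv\Wv(t)}_{H^{k_0-1}}\lesssim t^\zeta$, the existence of the limit $\Pv^\perp\Wv(0)$, and the convergence rate $\norm{\Pv^\perp\Wv(t)-\Pv^\perp\Wv(0)}_{H^{k_0-1}}\lesssim t^\zeta$ are all part of the conclusion of \cite[Thm.~3.8]{BOOS:2021} itself (together with the discussion in \cite[\S 3.4]{BOOS:2021}); you do not need to re-derive them by hand via a differential inequality and Lemma~\ref{lem:asymptotic}. Your sketch of the Grönwall argument is not wrong — it is essentially a compressed version of what happens inside the cited theorem — but the paper simply cites the theorem for all of statement~(b), which is cleaner. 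Likewise, the assertion that $\delta_0$ is independent of $t_0\in(0,T_0]$ does deserve a sentence of justification: \cite[Thm.~3.8]{BOOS:2021} is stated on a fixed interval, so one must observe that the coefficient hypotheses hold on every subinterval $(0,t_0]$ with the \emph{same} constants as on $(0,T_0]$, making all constants in the proof uniform in $t_0$.

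The genuine gap concerns regularity. The proposition allows $k_0\in\Zbb_{>\frac{3}{2}+1}$, i.e.\ $k_0\geq 3$, whereas \cite[Thm.~3.8]{BOOS:2021} requires $k_0>\frac{3}{2}+3$, i.e.\ $k_0\geq 5$. As written, a direct citation of that theorem only covers the proposition in the higher regularity regime. The paper closes this gap in a footnote: because $\ep_2<1$, the spatial-derivative matrix $t^{-\ep_2}\Bv^\Lambda$ carries no $t^{-1}$ singularity, which makes \cite[Lemma~3.5]{BOOS:2021} unnecessary in the proof of the global existence theorem and lowers the required regularity from $\frac{3}{2}+3$ to $\frac{3}{2}+1$. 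Without that observation your argument does not establish the proposition at the stated regularity level, so you should include it.
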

\begin{proof}
Due to \eqref{Pv-props}, \eqref{Bv0-prop}, \eqref{BvLambda-prop}, \eqref{Gv-prop}, \eqref{Fv-prop-B}, \eqref{L-bnd}, \eqref{Bcv-Pv-com-A}, \eqref{Bcv-Pv-com-B}, \eqref{Bv0-Bcv-bnds} and \eqref{DivBv-A}, there exists, for any $T_0>0$, $R>0$, constants $\lambda_1,\lambda_2,\lambda_3>0$, $\theta>0$ such that the coefficients of the Fuchsian equation \eqref{Fuch-past-A.1} satisfy all the conditions from \cite[\S 3.4]{BOOS:2021} where the remaining constants from \cite[\S 3.4]{BOOS:2021} are chosen as follows: $\kappa$ and $\gamma$ are as given above in \eqref{Bv0-Bcv-bnds}, $\gamma_1=\frac{1}{3}$, $\gamma_2=\gamma$, $\alpha=0$, $\beta_\ell=0$, $\ell=0,1,\ldots,7$, and $p=\min\bigl\{1-\ep_2,1-(\ep_2+\nu)/2\bigr\}>0$. Consequently, for any $t_0\in (0,T_0]$, we obtain
from\footnote{Note that the regularity condition $k_0>\frac{3}{2}+3$ required to apply \cite[Thm.~3.8]{BOOS:2021} can be lowered to $k_0>\frac{3}{2}+1$ in our application because no singular terms of the form $t^{-1}$ appear in the spatial derivative matrix $t^{-\ep_2}\Bv^\Lambda$ since $\ep_2<1$ by assumption. This allows us to avoid the use of \cite[Lemma~3.5]{BOOS:2021} in the proof of \cite[Thm.~3.8]{BOOS:2021}, which yields the reduction in the required regularity from $k_0>\frac{3}{2}+3$ to $k_0>\frac{3}{2}+1$.} \cite[Thm.~3.8]{BOOS:2021} and the discussion in \cite[\S 3.4]{BOOS:2021} the existence of a constant
$\delta_0 > 0$ such that if $\norm{\Wv_0}_{H^{k_0}(\mathbb \Tbb^{3})}< \delta_0$,
then there exists a unique solution 
\begin{equation*}
\Wv \in C^0_b\bigl((0,t_0],H^{k_0}(\mathbb \Tbb^{3})\bigr)\cap C^1\bigl((0,t_0],H^{k_0-1}(\mathbb \Tbb^{3})\bigr)
\end{equation*}
of the GIVP \eqref{Fuch-past-A.1}-\eqref{Fuch-past-A.2}. Moreover, the limit $\lim_{\searrow 0} \Pv^\perp \Wv(t)$, denoted $\Pv^\perp \Wv(0)$, exists in $H^{k_0-1}(\mathbb \Tbb^{3})$, and there exists a $\zeta>0$ and such that the energy estimate \begin{equation*}
\norm{\Wv(t)}_{H^{k_0}(\mathbb T^{3})}^2 + \int^{t_0}_t \frac{1}{s} \norm{\Pv \Wv(s)}_{H^{k_0}(\mathbb T^{3})}^2\, ds  \lesssim \norm{\Wv_0}^2_{H^{k_0}(\mathbb T^{3})}
\end{equation*}
and decay estimate
\begin{equation*}
  \norm{\Pv \Wv(t)}_{H^{k_0-1}(\mathbb T^{3})}+\norm{\Pv^\perp \Wv(t) - \Pv^\perp \Wv(0)}_{H^{k_0-1}(\mathbb T^{3})} \lesssim t^\zeta
\end{equation*}
hold for all $t\in(0,t_0]$.

We further observe that the constant \(\delta_0 > 0\) does not depend on the choice of \(t_0 \in (0, T_0]\). This follows because, for any \(t_0 \in (0, T_0]\), the coefficient assumptions remain satisfied on \((0, t_0]\) with the \emph{same constants} as on the larger interval \((0, T_0]\). Consequently, it is straightforward to verify that all explicit and implicit constants in the estimates used in the proof of \cite[Thm.~3.8]{BOOS:2021} are independent of \(t_0\in (0, T_0] \). Since \(\delta_0\) is determined by these constants, it follows that the same \(\delta_0 > 0\) ensures existence on any subinterval \((0, t_0]\) as it does on \((0, T_0]\). 

To complete the proof, we recall, see \cite[Thm.~2.1]{Majda:1984}, that $C^1$-solutions of symmetric hyperbolic equations are unique. Thus, restricting the solution $\Wv$ of the GIVP \eqref{Fuch-past-A.1}-\eqref{Fuch-past-A.2} to $M_{t_1,t_0}$, where $t_1\in (0,t_1)$, yields the unique $C^1$-solution of \eqref{Fuch-past-A.1} on $M_{t_1,t_0}$ satisfying the initial condition $\Wv|_{t=t_0}=\Wv_0$. 
\end{proof}

\bigskip
\cnote{red}
\bigskip

\begin{rem} \label{rem:lower-regularity}
In Proposition \ref{prop:Fuch-global}, the regularity requirement on the initial data is certainly not optimal. Indeed, we have no reason to believe that the condition \eqref{k-nu-fix} is optimal in the sense of being necessary to establish the existence of solutions globally to the past. 
Furthermore, in the proof of Proposition \ref{prop:Fuch-global}, we treat $W$ and its derivatives as independent variables. However, for our application, we are only interested in solutions to \eqref{Fuch-past-A.1} that take the form \eqref{Wv-def}. This implies that we do not need to estimate all the components of \eqref{Wv-def} in the same function spaces as we do in the proof of Proposition \ref{prop:Fuch-global}. For instance, only the components corresponding to the top order would need to be estimated in $L^2$. While this approach would lower the differentiability requirements on the initial data, we do not adopt it here, as it would significantly complicate the proof of Proposition \ref{prop:Fuch-global} and would still not yield an optimal result. 
In any case, we do not pursue the question of optimal differentiability further and instead leave it for future work.
\end{rem}

\subsubsection{Past local-in-space Fuchsian stability}
With the past global-in-space stability problem resolved, we now turn to proving the past local-in-space stability of the trivial solution $\Wv \equiv 0$ to the Fuchsian system \eqref{FuchFinal-B}. By this, we mean establishing the existence and uniqueness of solutions, under a suitable smallness assumption on the initial data $\Wv_0$, to the following IVP on a truncated cone domain that terminates at $t = 0$:
\begin{align}
   \Bv^0 \del{t}\Wv + \frac{1}{t^{\ep_2}}\Bv^\Lambda \del{\Lambda}\Wv &= \frac{1}{t}\bigl(\Bcv+\Pv \Lv(\Wv)\Pv\bigr)\Pv\Wv + \frac{1}{t^{\ep_2+\nu}}\Gv(\Wv) + \frac{1}{t}\Pv^\perp\Fv(\Wv) \hspace{0.5cm} \text{in $\Omega_{\qv}$,} \label{Fuch-past-B.1}\\
  \Wv &= \Wv_0 \hspace{7.85cm} \text{in $\{t_0\}\times \mathbb{B}_{\rho_0}$,}
  \label{Fuch-past-B.2}
\end{align}
where $\qv=(0,t_0,\rho_0,\rho_1,\ep_2)$, $\ep_2$ satisfies \eqref{eigenval-posB}, 
$t_0>0$, $\rho_0\in (0,L)$, and
$\rho_1 = t_0^{\ep_2-1}(1-\ep_2)(1-\vartheta)\rho_0$ with $\vartheta\in (0,1)$. We note the truncated cone $\Omega_{\qv}$, see \eqref{Omega-def}, in this IVP is capped above by $\{t_0\}\times \mathbb{B}_{\rho_0}$ and below by $\{0\}\times \mathbb{B}_{\vartheta \rho_0}$.

\begin{prop}
  \label{prop:Fuch-global-loc}
Suppose $T_0>0$, $k_0\in \Zbb_{>\frac{3}{2}+1}$ the Kasner exponents $q_I\in \Rbb$, $I=1,2,3$, and  $P\in (0,\sqrt{1/3})$ satisfy the Kasner relations \eqref{Kasner-rels-A} and the subcritical condition \eqref{SCR},  $\ep_1\in (0,1)$ and $\ep_2\in \Rbb$ satisfy \eqref{eigenval-posA}  and \eqref{eigenval-posB}, respectively, where $r_0$ and $r_I$, $I=1,2,3$, are defined by \eqref{Kasner-rels-B}, $\nu>0$ satisfies \eqref{eigenval-posC}, $k\in \Nbb_0$ is chosen large enough so that \eqref{k-nu-fix} holds\footnote{If the right hand side of \eqref{k-nu-fix} is $0$, then $k=0$ is allowed.},  
$0<\rho_0<L$ and $\vartheta \in (0,1)$. Then there exists
a $\delta_0 > 0$ such that for every $t_0\in (0,T_0]$ and $\Wv_0\in H^{k_0}\bigl(\mathbb{B}_{\rho_0}\bigr)$ satisfying
$\norm{\Wv_0}_{H^{k_0}(\mathbb{B}_{\rho_0})}< \delta_0$
the GIVP \eqref{Fuch-past-B.1}-\eqref{Fuch-past-B.2} admits a unique classical solution $\Wv\in C^1(\grave{\Omega}_{\qv})$ where $\qv=(0,t_0,\rho_0,\rho_1,\ep_2)$ and
$\rho_1 = t_0^{\ep_2-1}(1-\ep_2)(1-\vartheta)\rho_0$. Moreover, 
the following hold:
\begin{enumerate}[(a)]
\item Letting $\rho(t) = \rho_0 + (1-\vartheta) \rho_0 \bigl( \bigl(\frac{t}{t_0}\bigr)^{1-\ep_2}-1\bigr)$, then
$\Wv(t) \in H^{k_0}(\mathbb{B}_{\rho(t)})$ and $\del{t}\Wv(t)\in H^{k_0-1}(\mathbb{B}_{\rho(t)})$ for each $t\in (0,t_0]$.
\item The limit  $\lim_{t\searrow 0} \Pv^\perp \Wv|_{\{t\}\times \mathbb{B}_{\vartheta\rho_0}}$, denoted $\Pv^\perp \Wv(0)$, exists in $H^{k_0-1}(\mathbb{B}_{\vartheta \rho_0})$.
\item There exists a $\zeta>0$ such that $\Wv$ satisfies the energy estimate
\begin{equation*}
\norm{\Wv(t)}_{H^{k_0}(\mathbb{B}_{\rho(t)})}^2 + \int^{t_0}_t \frac{1}{s} \norm{\Pv \Wv(s)}_{H^{k_0}(\mathbb{B}_{\rho(s)})}^2\, ds  \lesssim \norm{\Wv_0}^2_{H^{k_0}(\mathbb{B}_{\rho_0})}
\end{equation*}
and decay estimate
\begin{equation*}
  \norm{\Pv \Wv(t)}_{H^{k_0-1}(\mathbb{B}_{\vartheta\rho_0})}+\norm{\Pv^\perp \Wv(t) - \Pv^\perp \Wv(0)}_{H^{k_0-1}(\mathbb{B}_{\vartheta\rho_0})} \lesssim t^\zeta
\end{equation*}
for all $t\in(0,t_0]$.
\item Letting $e=(e^\Sigma_P)$
be variable from the vector component
\begin{equation*}
(W_{\bc})_{|\bc|= 0} = W \overset{\eqref{W-def}}{=}  (e_P^\Sigma, \alpha, A_P, U_P, \Hc, \Sigma_{PQ}, N_{PQ})^{\tr}
\end{equation*}
of $\Wv$, see \eqref{Wv-def}, then
\begin{equation*}
\sup_{(t,x)\in M_{0,t_0}}|e(t,x)| \leq \frac{(1-\ep_2)(1-\theta)}{10 T_0^{1-\ep_2}}< \frac{\rho_1}{9},
\end{equation*}
and for any $t_1\in (0,t_0)$, $\Wv$ is the unique $C^1(\grave{\Omega}_{\qv_1})$-solution of \eqref{Fuch-past-B.1} on $\Omega_{\qv_1}$, $\qv_1=(t_1,t_0,\rho_0,\rho_1,\ep_2)$, satisfying the initial condition $\Wv|_{\{t_0\}\times\mathbb{B}_{\rho_0}}=\Wv_0$. 
\end{enumerate}

\noindent The implicit constants and $\zeta$ in the energy and decay estimates are independent of the choice of $t_0\in (0,T_0]$ and $\Wv_0$ satisfying $\norm{\Wv_0}_{H^{k_0}(\Tbb^3)}<\delta_0$.
\end{prop}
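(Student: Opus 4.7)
The strategy is to reduce this local-in-space statement to the already established global-in-space Proposition \ref{prop:Fuch-global} by combining an initial data extension with a finite propagation speed argument modeled on the proof of Proposition \ref{prop:locC}. First I would extend the initial data using the operator $\Ebb_{\rho_0}$ of Section \ref{sec:Sobolev}, setting $\tilde{\Wv}_0 := \Ebb_{\rho_0}(\Wv_0) \in H^{k_0}(\Tbb^3)$; by \eqref{Ebb-prop} there is a constant $C = C(k_0,\rho_0) > 0$ such that $\norm{\tilde{\Wv}_0}_{H^{k_0}(\Tbb^3)} \leq C \norm{\Wv_0}_{H^{k_0}(\mathbb{B}_{\rho_0})}$. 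After shrinking $\delta_0$ so that $C\delta_0$ is smaller than the threshold of Proposition \ref{prop:Fuch-global}, that proposition produces a unique global solution $\tilde{\Wv} \in C^0_b\bigl((0,t_0],H^{k_0}(\Tbb^3)\bigr)\cap C^1\bigl((0,t_0],H^{k_0-1}(\Tbb^3)\bigr)$ satisfying the torus energy and decay estimates and admitting a limit $\Pv^\perp \tilde{\Wv}(0) \in H^{k_0-1}(\Tbb^3)$, with all constants uniform in $t_0 \in (0,T_0]$.

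Second, I would verify the finite propagation speed condition required to localise $\tilde{\Wv}$ to the cone. Let $\tilde{e}$ denote the $e$-component of the undifferentiated block of $\tilde{\Wv}$. The torus energy estimate gives $\sup_{t \in (0,t_0]} \norm{\tilde{\Wv}(t)}_{H^{k_0}(\Tbb^3)} \lesssim \delta_0$, so Sobolev embedding (since $k_0 > 3/2 + 1$) yields a uniform pointwise bound $\sup_{M_{0,t_0}} |\tilde{e}| \leq K\delta_0$ for some constant $K$ independent of $t_0 \in (0,T_0]$. The quantity $\rho_1 = t_0^{\ep_2 - 1}(1-\ep_2)(1-\vartheta)\rho_0$ is monotone decreasing in $t_0$, so its infimum over $t_0 \in (0,T_0]$ is $\rho_1^{\min} = T_0^{\ep_2 - 1}(1-\ep_2)(1-\vartheta)\rho_0$. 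Choosing $\delta_0$ further small so that $K\delta_0 < \rho_1^{\min}/9$ guarantees $|\tilde{e}(t,x)| < \rho_1/9$ throughout $\grave{\Omega}_{\qv}$ for every admissible $t_0$. Now I apply the finite propagation speed analysis of Proposition \ref{prop:locC}: the high-order block \eqref{FuchE} is symmetric hyperbolic with the same principal matrices $\Bh^0, \Bh^C$ as the tetrad system, and the computation leading to \eqref{Bn-ubnd-A} shows that the boundary of $\Omega_{\qv}$ is spacelike whenever $|e| < \rho_1/9$; the low-order block \eqref{FuchB} is a system of ODEs in $t$ that trivially respects the cone. Consequently $\Wv := \tilde{\Wv}|_{\Omega_{\qv}}$ is a $C^1$ solution of \eqref{Fuch-past-B.1}-\eqref{Fuch-past-B.2}, and by the uniqueness statement of Proposition \ref{prop:locC} (applied to the Fuchsian system on $\Omega_{\qv_1}$ for any $t_1 \in (0,t_0)$), $\Wv$ is independent of the choice of extension and is the unique $C^1(\grave{\Omega}_{\qv_1})$ solution on each such cone, establishing (d).

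Finally, conclusions (a), (b), and (c) follow by restriction of the corresponding torus properties. The regularity $\Wv(t) \in H^{k_0}(\mathbb{B}_{\rho(t)})$ and $\del{t}\Wv(t) \in H^{k_0-1}(\mathbb{B}_{\rho(t)})$ is immediate from $\tilde{\Wv}(t) \in H^{k_0}(\Tbb^3)$ and $\del{t}\tilde{\Wv}(t)\in H^{k_0-1}(\Tbb^3)$; the existence of $\Pv^\perp \Wv(0) \in H^{k_0-1}(\mathbb{B}_{\vartheta \rho_0})$ follows from restricting $\Pv^\perp \tilde{\Wv}(0)$ to $\mathbb{B}_{\vartheta \rho_0} = \mathbb{B}_{\rho(0)}$; and the energy and decay estimates follow from the torus estimates by monotonicity of the norm under domain restriction together with the bound $\norm{\tilde{\Wv}_0}_{H^{k_0}(\Tbb^3)} \lesssim \norm{\Wv_0}_{H^{k_0}(\mathbb{B}_{\rho_0})}$ from \eqref{Ebb-prop}. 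The explicit sup bound on $|e|$ stated in (d) is then obtained by taking $\delta_0$ small enough that the Sobolev constant absorbs the numerical factor $\tfrac{(1-\ep_2)(1-\vartheta)}{10 T_0^{1-\ep_2}}$, which remains consistent with the finite propagation speed threshold $\rho_1/9$ as established above.

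The main obstacle is the uniform-in-$t_0$ smallness condition: $\delta_0$ must be chosen so that simultaneously (i) the extended data stays within the threshold of Proposition \ref{prop:Fuch-global} for every $t_0 \in (0,T_0]$, and (ii) the resulting pointwise bound on $\tilde{e}$ lies below the worst-case finite propagation speed threshold $\rho_1^{\min}/9$ attained at $t_0 = T_0$. Both constraints can be met by a single choice of $\delta_0$ only because the implicit constants in Proposition \ref{prop:Fuch-global} and the Sobolev embedding are themselves $t_0$-independent; verifying this uniformity is the crux of the argument, and it is precisely what was highlighted in the last paragraph of the proof of Proposition \ref{prop:Fuch-global}. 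Everything else is then a routine transfer of the torus results to the cone through the extension-and-restriction procedure.
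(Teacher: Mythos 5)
Your overall strategy---extend the data with $\Ebb_{\rho_0}$, invoke Proposition~\ref{prop:Fuch-global} on the torus, restrict to the cone, and secure (d) via a finite propagation speed argument built on the boundary matrix being negative semi-definite once $|e|$ is small---coincides with the paper's proof. The transfer of (a), (b), (c) by restriction and the uniform-in-$t_0$ bookkeeping you highlight are handled exactly as the paper does.

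The one genuine gap concerns the uniqueness clause of (d). You assert that uniqueness follows from ``the uniqueness statement of Proposition~\ref{prop:locC} (applied to the Fuchsian system on $\Omega_{\qv_1}$)''. But Proposition~\ref{prop:locC} is a statement about the tetrad system \eqref{EE-symform-B} in the variable $\Wh$, not about the Fuchsian system \eqref{Fuch-past-B.1} in the expanded variable $\Wv$, which includes all the renormalised spatial derivatives $W_\bc$, $|\bc|\le k_1$, as independent unknowns and has a different source structure with the singular $\frac{1}{t}$ and $\frac{1}{t^{\ep_2+\nu}}$ terms. These are distinct systems on distinct state spaces, so you cannot quote Proposition~\ref{prop:locC} as a black box. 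What the paper actually does is rerun the finite propagation speed energy argument directly for the Fuchsian system: form $\Zv=\Wv-\tilde{\Wv}$, verify $\Bv_n|_{\Gamma_{\qv_1}}\le 0$ (which does reduce to the computation of \eqref{Bn-ubnd-A} because $\Bv^0$ and $\Bv^\Lambda$ in \eqref{Bv0-def}--\eqref{Bv-Lambda-def} carry the same $\Bh^0,\Bh^C$ in the high-order block and vanish on the low-order blocks), integrate $\del{\mu}(\delta^\mu_0\Zv^{\tr}\Bv^0\Zv + \delta^\mu_\Lambda t^{-\ep_2}\Zv^{\tr}\Bv^\Lambda\Zv)$ over $\Omega_{\qv(t)}$ using Stokes' theorem, drop the non-positive side boundary integral, and close with Gr\"{o}nwall. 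This is routine given what you have already set up, but it must be written out rather than cited. A secondary imprecision: you present the spacelike boundary condition as the reason $\Wv:=\tilde{\Wv}|_{\Omega_{\qv}}$ solves \eqref{Fuch-past-B.1}-\eqref{Fuch-past-B.2}; in fact restriction of a solution to a subdomain is always a solution---the spacelike condition is needed only for uniqueness, i.e.\ for the independence of $\Wv$ from the chosen extension.
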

\begin{proof}
$\;$
\subsubsection*{Existence}
Given initial data $\Wv_0\in H^{k_0}(\mathbb{B}_{\rho_0})$, we can use the extension operator $\Ebb_{\rho_0}$, see \eqref{Ebb-def}-\eqref{Ebb-prop}, to define 
\begin{equation}\label{hatWv0-def}
    \hat{\Wv}_0=\Ebb_{\rho_0}\Wv_0 \in H^{k_0}(\Tbb^3),
\end{equation} 
which satisfies
\begin{equation}\label{hatWv0-bnd}
\norm{\hat{\Wv}_0}_{H^{k_0}(\Tbb^3)}\leq C_0 \norm{\Wv_0}_{H^{k_0}(\mathbb{B}_{\rho_0})}
\end{equation}
for some constant $C_0>0$ independent of the choice of initial data $\Wv_0$. Then by Proposition \ref{prop:Fuch-global}, we know that there exists a $\hat{\delta}_0>0$ such that for any $t_0\in (0,T_0]$, if 
\begin{equation} \label{Wv0-smallness}
\norm{\Wv_0}_{H^{k_0}(\mathbb{B}_{\rho_0})} <  \frac{\hat{\delta}_0}{C_0},
\end{equation}
then there exists a unique solution
\begin{equation}\label{hatWv-reg}
\hat{\Wv} \in C^0_b\bigl((0,t_0],H^{k_0}(\mathbb T^{3})\bigr)\cap C^1\bigl((0,t_0],H^{k_0-1}(\mathbb T^{3})\bigr) \subset C^1(M_{0,t_0})
\end{equation}
of the the Fuchsian equation \eqref{Fuch-past-A.1} on $M_{0,t_0}$ satisfying 
$\hat{\Wv}|_{t=t_0}=\hat{\Wv}_0$. Moreover, we know from Proposition \ref{prop:Fuch-global} that the limit 
\begin{equation}\label{hatWv-limit}
   \lim_{t\searrow 0} \Pv^\perp \hat{\Wv}(t)= \Pv^\perp \hat{\Wv}(0)
\end{equation} exists in $H^{k_0-1}(\mathbb{T}^3)$, and there exists a $\zeta>0$ such that the energy estimate
\begin{equation} \label{hatWv-energy}
\norm{\hat{\Wv}(t)}_{H^{k_0}(\mathbb T^{3})}^2 + \int^{t_0}_t \frac{1}{s} \norm{\Pv \hat{\Wv}(s)}_{H^{k_0}(\mathbb T^{3})}^2\, ds  \lesssim \norm{\hat{\Wv}_0}^2_{H^{k_0}(\mathbb T^{3})} 
\end{equation}
and decay estimate
\begin{equation} \label{hatWv-decay}
  \norm{\Pv \hat{\Wv}(t)}_{H^{k_0-1}(\mathbb T^{3})}+\norm{\Pv^\perp \hat{\Wv}(t) - \Pv^\perp \hat{\Wv}(0)}_{H^{k_0-1}(\mathbb T^{3})} \lesssim t^\zeta
\end{equation}
hold for all $t\in(0,t_0]$. Noting that $\hat{\Wv}|_{\mathbb{B}_{\rho_0}}=\Wv_0$ by \eqref{Ebb-prop} and \eqref{hatWv0-def}, the  restriction 
\begin{equation}\label{hatWv-restrict}
\Wv:= \hat{\Wv}|_{\Omega_{\qv}}
\end{equation}
yields a $C^1(\grave{\Omega}_{\qv})$-solution of the GIVP \eqref{Fuch-past-B.1}-\eqref{Fuch-past-B.2}. This establishes the existence of a classical solution. Uniqueness will follow from the proof of statement (d) below.

\subsubsection*{Proof of statements (a) and (b)} These statements follow directly from the definition \eqref{hatWv-restrict}, the regularity \eqref{hatWv-reg} of the solution $\hat{\Wv}$, the limit \eqref{hatWv-limit}, and the definitions \eqref{Omega-def} and \eqref{Omega-grave-def} of the truncated cone domains $\Omega_{\qv}$ and $\grave{\Omega}_{\qv}$.

\subsubsection*{Proof of statement (c)} From \eqref{hatWv0-bnd}, \eqref{hatWv-energy}, and \eqref{hatWv-restrict}, the fact that the matrix $\Pv$ is constant, see \eqref{Pbb-def} and \eqref{Pv-def}, and the definition \eqref{Omega-def} of the truncated cone domains $\Omega_{\qv}$, we immediately deduce that the inequalities
\begin{align}
\norm{\Wv(t)}_{H^{k_0}(\mathbb{B}_{\rho(t)})}^2 + \int^{t_0}_t \frac{1}{s} \norm{\Pv \Wv(s)}_{H^{k_0}(\mathbb{B}_{\rho(s)})}^2\, ds &\leq 
\norm{\hat{\Wv}(t)}_{H^{k_0}(\mathbb T^{3})}^2 + \int^{t_0}_t \frac{1}{s} \norm{\Pv \hat{\Wv}(s)}_{H^{k_0}(\mathbb T^{3})}^2\, ds \notag \\ 
&\lesssim \norm{\hat{\Wv}_0}^2_{H^{k_0}(\mathbb T^{3})} 
\lesssim \norm{\Wv_0}^2_{H^{k_0}(\mathbb{B}_{\rho_0})
}\label{Wv-energy}
\end{align}
and
\begin{align*}
 \norm{\Pv \Wv(t)}_{H^{k_0-1}(\mathbb{B}_{\vartheta\rho_0})}+&\norm{\Pv^\perp \Wv(t) - \Pv^\perp \Wv(0)}_{H^{k_0-1}(\mathbb{B}_{\vartheta\rho_0})} \\
 &\leq  
 \norm{\Pv \hat{\Wv}(t)}_{H^{k_0-1}(\mathbb T^{3})}+\norm{\Pv^\perp \hat{\Wv}(t) - \Pv^\perp \hat{\Wv}(0)}_{H^{k_0-1}(\mathbb T^{3})}
 \lesssim t^\zeta
\end{align*}
hold for all $t\in (0,t_0]$ where we have set
$\Pv^\perp\Wv(0) = \Pv^\perp\hat{\Wv}(0)|_{\mathbb{B}_{\vartheta\rho_0}}$ and the implicit constants in above estimates are independent of the choice of $t_0\in (0,T_0]$ and $\Wv_0$ satisfying \eqref{Wv0-smallness}.

\subsubsection*{Proof of statement (d)}
Suppose $t_1\in (0,t_0)$, and $\tilde{\Wv}\in C^1(\grave{\Omega}_{\qv_1})$ is a solution of
\eqref{Fuch-past-B.1} on  $\Omega_{\qv_1}$, $\qv_1=(t_1,t_0,\rho_0,\rho_1,\ep_2)$, satisfying the initial condition $\tilde{\Wv}|_{\{t_0\}\times\mathbb{B}_{\rho_0}}=\Wv_0$.
Then the difference $\Zv=\Wv-\tilde{\Wv}$ satisfies the IVP 
\begin{align}
\Bv^0\del{t}\Zv + \frac{1}{t^{\ep_2}}\Bv^\Lambda\del{\Lambda}\Zv &= \Hv(t,\Wv,\tilde{\Wv},\del{}\tilde{\Wv})\Zv\hspace{0.5cm} \text{in $\Omega_{\qv_1}$,} \label{Fuch-past-unique.1}\\
\Zv &= 0 \hspace{3.40cm}  \text{in $\{t_0\}\times \mathbb{B}_{\rho_0}$,}
\label{Fuch-past-unique.2}
\end{align}
where $\Hv$ is a matrix valued map that is smooth in its arguments.

Next, we set
\begin{equation*}
\Bh_n = n_0 \Bh^0 +\frac{1}{t^{\ep_2}} n_\Lambda e^\Lambda_C\Bh^C \overset{\eqref{n-def}}{=} \frac{1}{t^{\ep_2}} \biggl(\rho_1 \Bh^0 + \frac{1}{|x|}x_\Lambda e^\Lambda_C \Bh^ C\biggr) 
\end{equation*}
where $n = n_\nu dx^\nu$ is the outward pointing normal to the ``side'' boundary component $\Gamma_{\qv_1}$ of $\Omega_{\qv_1}$ (see \eqref{Omega-bndry}), $\Bh^0$ and $\Bh^C$ are as defined previously (see \eqref{Bh0-def}-\eqref{BhC-def}), and $e=(e^\Sigma_P)$
is the variable that appears in the first vector component
\begin{equation*}
(W_{\bc})_{|\bc|= 0} = W \overset{\eqref{W-def}}{=}  (e_P^\Sigma, \alpha, A_P, U_P, \Hc, \Sigma_{PQ}, N_{PQ})^{\tr}
\end{equation*}
from $\Wv$ (see \eqref{Wv-def}). Then the same calculation, cf. \eqref{Bn-ubnd-B} and \eqref{Bn-ubnd-A}, from the proof of Proposition \ref{prop:locA} that led to the inequality \eqref{spacelike} shows that \begin{equation} \label{Bhn-past-Fuch}
\Bh_n|_{\Gamma_{\qv_1}} \leq 0
\end{equation}
provided $e^\Lambda_C$ satisfies
\begin{equation}\label{e-past-Fuch-bnd}
|e| \leq \frac{(1-\ep_2)(1-\vartheta)}{10T_0^{1-\ep_2}},
\end{equation}
where we note that
$\frac{(1-\ep_2)(1-\vartheta)}{10T_0^{1-\ep_2}} < \frac{(1-\ep_2)(1-\vartheta)}{9t_0^{1-\ep_2}}=\frac{\rho_1}{9}$
holds since $\vartheta,\ep_2\in (0,1)$ and $t_0\in (0,T_0]$ by assumption. Next, setting
\begin{equation*}
\Bv_n =  n_0 \Bv^0 +\frac{1}{t^{\ep_2}} n_\Lambda \Bv^\Lambda,
\end{equation*}
it is clear from \eqref{Bv0-def}- \eqref{Bv-Lambda-def} and \eqref{Bhn-past-Fuch} that
\begin{equation} \label{Bvn-past-Fuch}
\Bv_n|_{\Gamma_{\qv_1}} \leq 0
\end{equation}
provided \eqref{e-past-Fuch-bnd} is satisfied. 

Since $e=(e^\Lambda_C)$ is one of the components of $\Wv$, there exists, by Sobolev's inequality (Theorem \ref{Sobolev}) a constant $C_{\text{Sob}}>0$ such that $\sup_{(t,x)\in\Omega_{\qv}}{|e(t,x)|}\leq C_{\text{Sob}} \sup_{0<t<t_0} \norm{\Wv(t)}_{H^{k_0}(\mathbb{B}_{\rho(t)})}$. This inequality together with \eqref{Wv-energy} implies that
$\sup_{(t,x)\in\Omega_{\qv}}{|e(t,x)|}\leq C_1 \norm{\Wv_0}_{H^{k_0}(\mathbb{B}_{\rho_0})}$
for some constant $C_1>0$ that is independent of $t\in (0,T_0]$ and $\Wh_0$ satisfying \eqref{Wv0-smallness}.
Thus assuming that the initial data satisfies
\begin{equation*}
\norm{\Wv_0}_{H^{k_0}(\mathbb{B}_{\rho_0})} < \delta_0:=\min\left\{\frac{(1-\ep_2)(1-\vartheta)}{10C_1 T_0^{1-\ep_2}},\frac{\hat{\delta}_0}{C_0}\right\},
\end{equation*}
then \eqref{e-past-Fuch-bnd} as well as \eqref{Bvn-past-Fuch} hold.

Using the evolution equation \eqref{Fuch-past-A.1} and the symmetry of the matrices $\Bv^0$ and $\Bv^\Lambda$, we find after a short calculation that 
\begin{equation*}
\del{\mu}\Bigl(\delta^\mu_0 \Zv^T \Bv^0\Zv
+ \delta^\mu_\Lambda\frac{1}{t^{\ep_2}} \Zv^T \Bv^\Lambda \Zv\Bigr) = 2 \Zv^T \Hv + \frac{1}{t^{\ep_2}}\Zv^T\del{\Lambda}\Bv^\Lambda\Zv. \end{equation*}
Fixing $t_*\in (t_1,t_0]$ and integrating the above expression over 
over $\Omega_{\qv(t)}$, where $\qv(t)=(t,t_0,\rho_0,\rho_1,\ep_2)$ and $t\in [t_*,t_0]$ , we find with the help of Stokes' theorem and the initial condition \eqref{Fuch-past-unique.2} that
\begin{equation*}
-\int_{\mathbb{B}_{\rho(t)}} \Zv^T \Bv^0 \Zv \, d^3 x + \int_{\Gamma_{\qv(t)}}\Zv^T \Bv_n \Zv \, \sigma_{\Gamma_{q(t)}} = \int^{t_0}_t \int_{\mathbb{B}_{\rho(s)}} \Zv^T\bigl(\Hv +  \del{\Lambda}\Bv^\Lambda\bigr) \Zv \, d^3 x\, ds .
\end{equation*}
Multiplying the above expression by $-1$, we deduce, from \eqref{Bvn-past-Fuch}, the invertibility of the symmetric matrix $\Bv^0$, and the boundedness of the matrix $\Hv + \del{\Lambda}\Bv^\Lambda$, the existence of a constant $C_2>0$ such that
\begin{equation*}
\int_{\mathbb{B}_{\rho(t)}} \Zv^T \Bv^0 \Zv \, d^3 x \leq C_2 \int^{t_0}_t \int_{\mathbb{B}_{\rho(s)}} \Zv^T \Bv^0 \Zv \, d^3 x\, ds 
\end{equation*}
holds for all $t\in [t_*,t_0]$. An application of Gr\"{o}nwall's inequality implies $\Zv|_{\Omega_{\qv(t)}}=0$ for $t\in [t_*,t_0]$, and  since $t_*\in (t_1,t_0]$ was chosen arbitrarily and $\Zv=\Wv-\tilde{\Wv}$, we conclude that
$\Wv=\tilde{\Wv}$ on $\Omega_{\qv_1}$. This shows that the solution $\Wv$ is unique and completes the proof. 
\end{proof}

\section{Stable big bang formation\label{sec:stable-big-bang}}
By combining the local and global existence theories developed in Sections \ref{sec:local-existence-theory} and \ref{sec:Fuch-global-existence-theory}, we can now prove the past stability of the subcritical Kasner-scalar field spacetimes. Specifically, we establish two distinct stability results. The first, stated in Theorem \ref{glob-stab-thm}, demonstrates global-in-space past stability, while the second, stated in Theorem \ref{loc-stab-thm}, demonstrates local-in-space past stability.

\subsection{Physical curvatures in terms of the renormalised conformal variables}
Before proving stability, we first derive formulas that express the second fundamental forms of the $t = \text{constant}$ hypersurfaces associated with the physical and conformal metrics in terms of the renormalised variables, as well as the curvature invariants $\Rb$, $\Rb_{ab}\Rb^{ab}$, and $\Cb_{abcd}\Cb^{abcd}$ of the physical metric in terms of the renormalised variables. These formulas are essential for the proofs of Theorems \ref{glob-stab-thm} and \ref{loc-stab-thm}.

To start, we observe by \eqref{ESF.3}, \eqref{conf-metricA} and \eqref{f-metric} that the frame components of the physical Ricci curvature tensor can be expressed as
\begin{equation} \label{phy-Ricci}
    \Rb_{ab} = \Tb_{ab} - \frac{1}{2}\gb^{cd}\Tb_{cd} \gb_{ab} = \Tb_{ab} - \frac{1}{2}\eta^{cd}\Tb_{cd} \eta_{ab} = \Tb_{ab} - \frac{1}{2}\Tb_c{}^c \eta_{ab},
\end{equation}
where 
\begin{equation*}
\Tb_c{}^c=\eta^{cd}\Tb_{cd}=-\Tb_{00}+\Tb_A{}^A.
\end{equation*}
Using \eqref{phy-Ricci} along with \eqref{conf-metricA}, \eqref{Phi-fix}, \eqref{t-fix} and \eqref{Tb-Upsilon-cmpts}, we find following a straightforward calculation that the curvature invariants $\Rb = \gb^{ab}\Rb_{ab}$ and $\Rb_{ab}\Rb^{ab} = \gb^{ac}\gb^{bd}\Rb_{ab}\Rb_{cd}$
are given by
\begin{align} \label{phy-scalar}
    \Rb = \gb^{ab}\Rb_{ab} 
&= -t^{-1}\Tb_c{}^c = -\frac{3}{2 \alphat^2 t^{3}}
\end{align}
and 
\begin{align} \label{phy-RabRab}
    \Rb_{ab}\Rb^{ab} 
        &= t^{-2}(\Tb_{00}\Tb_{00} - 2\delta^{AB}\Tb_{0A}\Tb_{0B} + \delta^{AC}\delta^{BD}\Tb_{AB}\Tb_{CD}) = \frac{9}{4\alphat^4 t^{6}},
\end{align}
respectively.

Letting $\thetat^a$ denote the dual basis to $\et_a$, ie. $\thetat^a(\et_b)=\delta_b^a$, then
\begin{equation*}
\thetat^0 = \alphat dt \AND \theta^A = \thetat^A_\Lambda dx^\Lambda, \quad (\thetat^A_\Lambda)=(\et^\Lambda_A)^{-1},
\end{equation*}
and the physical metric, by \eqref{conf-metricA}, \eqref{f-metric}, \eqref{Phi-fix} and \eqref{t-fix}, can be written as
\begin{equation*} \label{phys-metric}
    \gb = -t\thetat^0\otimes\thetat^0 + \gttb_{AB} \thetat^A\otimes\thetat^B,
\end{equation*}
where 
\begin{equation} \label{phys-spatial-metric}
\gttb_{AB}=t\delta_{AB}. 
\end{equation}
From this, we observe that the lapse and normal vector determined by the $t=\text{const}$ foliation and the physical metric are $t^{\frac{1}{2}}\alphat$ and $(t^{\frac{1}{2}}\alphat)^{-1}\del{t}$, respectively. Furthermore, using \eqref{comm-decomp-0}, \eqref{Hc-def} and \eqref{Sigma-def}, we see that the frame components of the second fundamental form of the $t=\text{const}$ foliation are determined by
\begin{align}
    \Kttb_{AB} &= \frac{1}{2t^{\frac{1}{2}}\alphat}(\Lsc_{\del{t}}\gttb)(\et_A,\et_B) = \frac{1}{2t^{\frac{1}{2}}\alphat}(\Lsc_{(\alphat \et_0)}\gttb)(\et_A,\et_B) \notag \\
        &= \frac{1}{2t^{\frac{1}{2}}\alphat}\Bigl(\Lsc_{(\alphat\et_0)}\bigl(\gttb(\et_A,\et_B)\bigr) - \gttb\bigl([\alphat\et_0,\et_A],\et_B\bigr) - \gttb\bigl(\et_A,[\alphat\et_0,\et_B]\bigr)\Bigr) \notag \\
        &= \frac{1}{2t^{\frac{1}{2}}\alphat}(\delta_{AB} + r_{AB} + 2\Hc\delta_{AB} + 2\Sigma_{AB}). \label{phys-2nd-FF}
\end{align}

Writing the conformal metric as
\begin{equation*}
    \gt = -\thetat^0\otimes\thetat^0 + \gttt_{AB} \thetat^A\otimes\thetat^B,
\end{equation*}
where $\gttt_{AB}=\delta_{AB}$, we note that the lapse and normal vector determined by the $t=\text{const}$ foliation and the conformal metric  are $\alpha$ and $\et_0$, respectively. Then, using a similar calculation as above, we find the following formula for the second fundamental form of the $t=\text{const}$ foliation:
\begin{align}
    \Kttt_{AB} &= \frac{1}{2\alphat}(\Lsc_{(\alphat \et_0)}\gttt)(\et_A,\et_B) \notag \\
        &= \frac{1}{2\alphat} \Bigl(\Lsc_{(\alphat\et_0)}\bigl(\gttt(\et_A,\et_B)\bigr) - \gttt\bigl([\alphat\et_0,\et_A],\et_B\bigr) - \gttt\bigl(\et_A,[\alphat\et_0,\et_B]\bigr)\Bigr) \notag \\
        &= \frac{1}{2t\alphat}\bigl(r_{AB}+2\Hc\delta_{AB}+2\Sigma_{AB}\Bigr). \label{conf-2nd-FF}
\end{align}

Next, letting $\Ct_{abcd}$ denote the Weyl curvature of the conformal metric $\gt$, we observe from \cite[Eq.~(1.34) and (1.35)]{Wainwright-Ellis:1997} that the Weyl curvature invariant $\Ct_{abcd}\Ct^{abcd}$ can be expressed as
\begin{equation}
    \Ct_{abcd}\Ct^{abcd} = 8(\Et_{AB}\Et^{AB}-\Ht_{AB}\Ht^{AB}) 
    \label{Weyl-invariant}
\end{equation}
where $\Et_{AB}$ and $\Ht_{AB}$ are, respectively, the electric and magnetic parts of the Weyl curvature.
According to \cite[Eq.~(1.80) and (1.81)]{Wainwright-Ellis:1997}, the electric and magnetic parts are given by 
\begin{align*}
    \Et_{AB} &= \Hct\Sigmat_{AB} - \Sigmat_{AC}\Sigmat_B{}^C + \frac{1}{3}\delta_{AB}\Sigmat_{CD}\Sigmat^{CD} + {}^{*}S_{AB}, 
    \\
    \Ht_{AB} &= \frac{1}{2}\Nt_C{}^C\Sigmat_{AB} - 3\Nt_{(A}^C\Sigmat_{B)C} + \Nt_{CD}\Sigmat^{CD}\delta_{AB} + (\et_C-\At_C)\big(\Sigmat_{D(A}\ep_{B)}{}^{CD}\big), 
\end{align*}
where
\begin{equation*}
    {}^{*}S_{AB} = \et_{(A}(\At_{B)}) + \Bt_{AB} - \frac{1}{3}\delta_{AB}\bigl(\et_C(\At^C)+\Bt_C{}^C\bigr) - (\et_C-2\At_C)\big(\Nt_{D(A}\ep_{B)}{}^{CD}\big) \label{*S-def}
\end{equation*}
with
\begin{equation*}
    \Bt_{AB} = 2\Nt_{AC}\Nt_B{}^C - \Nt_C{}^C\Nt_{AB}.
\end{equation*}
A short calculation then shows that $\Et_{AB}$ and $\Ht_{AB}$ can be expressed in terms of the renormalised variables \eqref{Hc-def}-\eqref{N-def} as follows: 
\begin{align}
    t^2\alphat^2 \Et_{AB} &= \Hc\Sigma_{AB} + \Bigl(\frac{1}{2}r_{AB}-\frac{r_0}{6}\delta_{AB}\Bigr)\Hc + \frac{r_0}{3}\Sigma_{AB} + \frac{r_0}{4}r_{AB} \notag \\
        &\quad - \Sigma_{AC}\Sigma_B{}^C - \frac{1}{2}r_B^C\Sigma_{AC} - \frac{1}{2}r_{AC}\Sigma_B{}^C - \frac{1}{4}r_{AC}r_B^C + \frac{r_0}{6}\Sigma_{AB} \notag \\
        &\quad + \frac{1}{3}\delta_{AB}\Sigma_{CD}\Sigma^{CD} + \frac{1}{3}\delta_{AB}r_{CD}\Sigma^{CD} + \frac{r_0}{3}\delta_{AB} \notag \\
        &\quad + t^{1-\ep_2}e_{(A}(A_{B)}) - U_{(A}A_{B)} + 2N_{AC}N_B{}^C - N_C{}^CN_{AB} - \frac{1}{3}t^{1-\ep_2}\delta_{AB} + \frac{1}{3}\delta_{AB}U_C A^C \notag \\
        &\quad - \frac{2}{3}\delta_{AB}N_{AC}N^{AC} + \frac{1}{3}\delta_{AB}(N_C{}^C)^2 - t^{1-\ep_2}e_C\bigl(N_{D(A}\ep_{B)}{}^{CD}\bigr) + (U_C+2A_C)\bigl(N_{D(A}\ep_{B)}{}^{CD}\bigr), \label{Et-form} \\
    t^2\alphat^2 \Ht_{AB} &= \frac{1}{2}N_C{}^C\Sigma_{AB} + \frac{1}{4}r_{AB}N_C{}^C - \frac{1}{4}\delta_{AB}N_C{}^C - 3N_{(A}{}^C\Sigma_{B)C} - \frac{3}{2}N_{(A}{}^C r_{B)C} \notag \\
        &\quad + \frac{r_0}{2}N_{(A}{}^C\delta_{B)C} + \delta_{AB}N_{CD}\Sigma^{CD} + \frac{1}{2}\delta_{AB}r_{CD}N^{CD} \notag \\
        &\quad + t^{1-\ep_2}e_C(\Sigma_{D(A})\ep_{B)}{}^{CD} - (A_C+U_C)\Bigl(\Sigma_{D(A}+\frac{1}{2}r_{D(A}-\frac{r_0}{6}\delta_{D(A}\Bigr)\ep_{B)}{}^{CD}. \label{Ht-form}
\end{align}
From \cite[Thm.~7.30]{JohnMLee-Rie:2018}, we know that  under the conformal transformation \eqref{conf-metricA} the Weyl curvature $ \Cb_{abcd}$ of the physical metric $\gb_{ab}$ is related to that of the conformal metric $\gt_{ab}$ via
\begin{equation*}
    \Cb_{abcd} = e^{2\Phi}\Ct_{abcd}.
\end{equation*}
Letting $\Cb^{abcd}=\gb^{ap}\gb^{bq}\gb^{cr}\gb^{ds}\Cb_{pqrs}$, it follows that the physical Weyl curvature invariant $\Cb^{abcd}\Cb_{abcd}$ can be expressed as
\begin{equation}
\Cb^{abcd}\Cb_{abcd} = e^{-6\Phi}\Ct^{abcd}\Ct_{abcd}=\frac{8}{t^{7}\alphat^4}\Bigl(t^2\alphat^2 \Et_{AB} t^2\alphat^2 \Et^{AB}-t^2\alphat^2 \Ht_{AB}t^2\alphat^2\Ht^{AB}\Bigr), \label{phys-Weyl-invariant}
\end{equation}
where in deriving the second equality we used \eqref{Phi-fix}, \eqref{t-fix}, and \eqref{Weyl-invariant}. 

\subsection{Global in space past stability of the subcritical Kasner-scalar field spacetimes}

\begin{thm}\label{glob-stab-thm}
Suppose $T_0>0$, $k_0\in \Zbb_{>\frac{3}{2}+1}$, $\delta_1\in \bigl(0,\frac{1}{2}+\frac{r_0}{6}\bigr)$, the Kasner exponents $q_I\in \Rbb$, $I=1,2,3$, and $P\in (0,\sqrt{1/3}]$ satisfy the Kasner relations \eqref{Kasner-rels-A} and the subcritical condition \eqref{SCR},  $\ep_1\in (0,1)$ and $\ep_2\in \Rbb$ satisfy \eqref{eigenval-posA}  and \eqref{eigenval-posB}, respectively, where $r_0$ and $r_I$, $I=1,2,3$, are defined by \eqref{Kasner-rels-B}, $\nu>0$ satisfies \eqref{eigenval-posC}, $k_1\in \Nbb_0$ is chosen large enough so that \eqref{k-nu-fix} holds\footnote{If the right hand side of \eqref{k-nu-fix} is $0$, then $k_1=0$ is allowed.}. Then there exists a $\delta_0 >0$ such that for every $t_0\in (0,T_0]$, $\delta \in (0,\delta_0]$, and\footnote{$\Wbb$ is the vector space defined above by \eqref{Wbb-def}.} 
\begin{equation*}
W_0 = ( e_{0P}^\Sigma, \alpha_0, A_{0P}, U_{0P}, \Hc_0, \Sigma_{0PQ}, N_{0PQ})^{\tr} \in H^{k}(\Tbb^3,\Wbb), \quad k=k_0+k_1,
\end{equation*}
satisfying $\inf_{x\in\Tbb^3}\alpha_0(x)>0$, $\inf_{x\in\Tbb^3}\det\bigl(e_{0P}^\Sigma(x)\bigr)>0$, the constraint equations \eqref{A-cnstr-3}-\eqref{H-cnstr-3} at  $t=t_0$, and
$\norm{W_0}_{H^{k}(\Tbb^3)} < \delta$,
there exists a 
\begin{equation*} 
W= \bigl( e_{P}^\Sigma, \alpha, A_{P}, U_{P}, \Hc, \Sigma_{PQ}, N_{PQ}\bigr)^{\tr} \in \bigcap_{j=0}^{k} C^j\bigl((0,t_0],H^{k-j}(\Tbb^3,\Wbb)\bigr) \subset C^1(M_{0,t_0},\Wbb)
\end{equation*}
that solves, for all choices of the parameters $\gamma,\rho \in\Rbb$, the evolution equations \eqref{EEc.1}-\eqref{EEc.7} on $M_{0,t_0}=(0,t_0]\times \Tbb^3$, and satisfies the initial condition $W|_{t=t_0}=W_0$, the constraint equations \eqref{A-cnstr-3}-\eqref{H-cnstr-3} on $M_{0,t_0}$, and  $\alpha>0$ and $\det(e_{P}^\Sigma)>0$ on $M_{0,t_0}$. Moreover the following hold:
\begin{enumerate}[(a)]
\item The solution $W$ is uniformly bounded by
\begin{equation*}
\norm{W(t)}_{H^{k}(\Tbb^3)} \lesssim \delta
\end{equation*}
for all $t\in (0,t_0]$, 
and there exist a constant $\zeta>0$ and functions 
$\alpha_*, \Hc_*, \Sigma_{AB}^*\in H^{k-1}(\Tbb^3)\subset C^{0}(\Tbb^3)$
satisfying
\begin{gather*}
0<\inf_{x\in\Tbb^3}\alpha_*(x) \leq \sup_{x\in\Tbb^3}\alpha_*(x)\lesssim 1, \quad \norm{\Hc_*}_{L^\infty(\Tbb^3)}\leq \delta_1,\\
\Sigma^*_{AB}=\Sigma^*_{BA},  \quad \delta^{AB}\Sigma^*_{AB}=0 \AND \max\big\{\norm{\Hc_*}_{H^{k-1}(\Tbb^3)}, \norm{\Sigma_{AB}^*}_{H^{k-1}(\Tbb^3)} \bigr\}\lesssim \delta
\end{gather*}
such that the components of $W$ decay according to
\begin{gather*}
e_P^\Omega = \Ord_{H^{k-1}(\Tbb^3)}(t^\zeta),\quad 
\alpha = t^{\ep_1+\frac{r_0}{2}+3\Hc_*}\alpha_*\Bigl(1+\Ord_{H^{k-1}(\Tbb^3)}(t^\zeta)\Bigr), \\
A_P = \Ord_{H^{k-1}(\Tbb^3)}(t^\zeta), \quad
U_P = \Ord_{H^{k-1}(\Tbb^3)}(t^\zeta), \quad
N_{PQ} = \Ord_{H^{k-1}(\Tbb^3)}(t^\zeta), \\
\Hc = \Hc_* + \Ord_{H^{k-1}(\Tbb^3)}(t^\zeta) \AND
\Sigma_{PQ} = \Sigma_{PQ}^* + \Ord_{H^{k-1}(\Tbb^3)}(t^\zeta)
\end{gather*}
for $t\in (0,t_0]$. Furthermore, the explicit and implicit constants in these estimates are independent to the choice of $\delta\in (0,\delta_0]$ and $t\in (0,t_0]$.
\item The pair
\begin{equation*}
\biggl\{\gb = t\bigl(-\alphat^2 dt\otimes dt + \gt_{\Sigma\Omega}dx^\Sigma \otimes dx^\Omega\bigr),\; \phi = \frac{\sqrt{3}}{2}\ln(t)\Biggr\},
\end{equation*}
where  
\begin{equation*}
\alphat = t^{-\ep_1}\alpha,\quad 
(\gt_{\Sigma\Omega}) =(\delta^{AB}\et_A^{\Sigma}\et_B^{\Omega})^{-1} \AND \et_A^\Omega = t^{\ep_1-\ep_2}\alpha^{-1}e_A^\Omega,
\end{equation*}
determines a classical solution of the  
Einstein-scalar-field equations \eqref{ESF.1}-\eqref{ESF.2} on $M_{0,t_0}$. Furthermore, the trace of the physical second fundamental form $\Kttb_{AB}$ determined by the physical metric $\gb$ and the $t=\textit{const}$ hypersurfaces satisfies
\begin{equation*}
\Kttb_A{}^A=\gttb^{AB}\Kttb_{AB} = \frac{1}{t^{\frac{3}{2}+\frac{r_0}{2}+3\Hc_*}} \frac{1}{\alpha_*}\biggl(\frac{3}{2}+\frac{r_0}{2}+3\Hc_* + \Ord_{H^{k-1}(\Tbb^3)}(t^\zeta)\biggr)
\end{equation*}
for $t\in (0,t_0]$, and is bounded below by
\begin{equation*}
\inf_{x\in\Tbb^3}\Kttb_A{}^A(t,x) \gtrsim \frac{1}{t^{\frac{3}{2}+\frac{r_0}{2}-3\delta_1}}, \quad 0<t\leq t_0.
\end{equation*}
Since $\frac{3}{2}+\frac{r_0}{2}-3\delta_1>0$, $\Kttb_A{}^A$ blows up uniformly as $t\searrow 0$ and the hypersurface $t=0$ is a \textit{crushing singularity}, cf.~\cite{Eardley:1979}. 
\item The pair 
\begin{equation*}
\Bigl\{\gt = -\alphat^2 dt\otimes dt + \gt_{\Sigma\Omega}dx^\Sigma \otimes dx^\Omega,\; \tau = t\Bigr\}
\end{equation*}
determines a classical solution of the  
conformal Einstein-scalar-field equations \eqref{cESF.1}-\eqref{cESF.2} on $M_{0,t_0}$. Furthermore, the second fundamental form $\Kttt_{AB}$ determined by the conformal metric $\gt$ and the $t=\textit{const}$ hypersurfaces, when expressed relative to the orthonormal frame $\{\et_0= \alphat^{-1}\del{t},\et_A=\et_A^\Lambda\del{\Lambda}\}$,
decays according to
\begin{equation*}
2t\alphat \Kttt_{AB}  = \Kf_{AB}+\Ord_{H^{k-1}(\Tbb^3)}(t^\zeta)
\end{equation*}
where 
\begin{equation*} 
\Kf_{AB}= r_{AB}+2\Hc_*\delta_{AB}+2\Sigma^*_{AB}\in H^{k-1}(\Tbb^3,\Sbb{3})\subset C^{0}(\Tbb^3,\Sbb{3})
\end{equation*}
satisfies 
\begin{equation*} 
 (\Kf_A{}^A)^2 - \Kf_A{}^B\Kf_B{}^A + 4\Kf_A{}^A = 0
\AND
\Kf_A{}^A=\sqrt{4+\Kf_{A}{}^B \Kf_{B}{}^A}-2\geq 0.
\end{equation*}
In particular, this implies that the solution $\{\gt,\tau\}$ is \textit{asymptotically pointwise Kasner} on $\Tbb^3$ relative to the frame $\{\et_0= \alphat^{-1}\del{t},\et_A=\et_A^\Lambda\del{\Lambda}\}$, cf.~\cite[Def.~1.1]{BeyerOliynyk:2024b}.
\item The physical solution $\{\gb,\phi\}$ of the Einstein-scalar field equations on $M_{0,t_1}$ is past $C^2$ inextendible at $t=0$ and past timelike geodesically incomplete. The curvature invariants   $\Rb=\gb^{ab}\Rb_{ab}$ and $\Rb_{ab}\Rb^{ab}$ of the physical metric $\gb$ 
satisfy
\begin{equation*}
\Rb = \frac{3}{2\alpha_*^2 t^{2(\frac{3}{2}+\frac{r_0}{2}+3\Hc_*)}}\bigl(-1+\Ord_{H^{k-1}(\Tbb^3)}(t^\zeta)\bigr) 
\AND
\Rb_{ab}\Rb^{ab} = \frac{9}{4\alpha_*^4t^{4(\frac{3}{2}+\frac{r_0}{2}+3\Hc_*)}}\bigl(1+\Ord_{H^{k-1}(\Tbb^3)}(t^\zeta)\bigr)
\end{equation*}
for $t\in (0,t_0]$, respectively, and are bounded above and below by
\begin{equation*}
\sup_{x\in\Tbb^3}\Rb(t,x) \lesssim -\frac{1}{t^{2(\frac{3}{2}+\frac{r_0}{2}-3\delta_1)}} \AND
\inf_{x\in\Tbb^3}\Rb_{ab}(t,x)\Rb^{ab}(t,x) \gtrsim \frac{1}{t^{4(\frac{3}{2}+\frac{r_0}{2}-3\delta_1)}}
\end{equation*}
for $t\in (0,t_0]$, respectively. Since $\frac{3}{2}+\frac{r_0}{2}-3\delta_1>0$, $\Rb$ and $\Rb_{ab}\Rb^{ab}$ blow up uniformly as $t\searrow 0$.
\item 
The Weyl curvature invariant $\Cb^{abcd}\Cb_{abcd}$ of the physical metric $\gb$ decays according to
\begin{equation} \label{Wey-invariant-exp}
\Cb^{abcd}\Cb_{abcd} =\frac{8}{t^{7+2r_0+12\Hc_*}\alpha_*^4}\Bigl(\Ec_{AB}\Ec^{AB}+\Ord_{H^{k-1}(\Tbb^3)}(t^\zeta)\Bigr)
\end{equation}
for $t\in (0,t_0]$ where
\begin{equation}
    \label{eq:ecABdef}
\ec_{AB} = \frac{r_0}{3}\delta_{AB} +  \frac{r_0}{4}r_{AB}- \frac{1}{4}r_{AC}r_B^C
\end{equation}
and 
\begin{align}
\Ec_{AB} &=\ec_{AB}+ \Hc_*\Sigma^*_{AB} + \Bigl(\frac{1}{2}r_{AB}-\frac{r_0}{6}\delta_{AB}\Bigr)\Hc_* + \frac{r_0}{2}\Sigma^*_{AB}  - \Sigma^*_{AC}\Sigma^*_B{}^C - \frac{1}{2}r_B^C\Sigma^*_{AC} \label{eq:EcABdef}\\
&\quad - \frac{1}{2}r_{AC}\Sigma^*_B{}^C + \frac{1}{3}\delta_{AB}\Sigma^*_{CD}\Sigma^{*CD} + \frac{1}{3}\delta_{AB}r^{CD}\Sigma^*_{CD}\in H^{k-1}(\Tbb^3,\Sbb{3}) \subset C^0(\Tbb^3,\Sbb{3}).\notag
\end{align}
Moreover, if the Kasner exponents $q_I$ are not all equal, or equivalently, $(r_1,r_2,r_3)\neq (0,0,0)$, then 
\begin{equation*}
    \inf_{x\in\Tbb^3}\Ec_{AB}(x) \Ec^{AB}(x) >0
\end{equation*} and there exists a $t_1\in (0,t_0]$ such that  
$\Cb^{abcd}\Cb_{abcd}$ is bounded below by
\begin{equation}
    \label{eq:LowerWeylbound}
\inf_{x\in\Tbb^3}\Cb^{abcd}(t,x)\Cb_{abcd}(t,x) \gtrsim \frac{1}{t^{1+4(\frac{3}{2}+\frac{r_0}{2}-3\delta_1)}}, \quad 0<t\leq t_1.
\end{equation}
Since  $\frac{3}{2}+\frac{r_0}{2}-3\delta_1>0$, $\Cb^{abcd}\Cb_{abcd}$ blows up uniformly as $t\searrow 0$ provided $(r_1,r_2,r_3)\neq (0,0,0)$.
\end{enumerate}
\end{thm}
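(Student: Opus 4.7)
The plan is to combine the local-in-time theory of Proposition \ref{prop:locA} with the Fuchsian global existence theory of Proposition \ref{prop:Fuch-global}, and then read off all geometric consequences from the asymptotic estimates produced by the Fuchsian framework.

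\textbf{Step 1 (Local existence and constraint propagation).} Given $W_0\in H^{k}(\Tbb^3,\Wbb)$ satisfying the hypotheses, I first recover the unrenormalised tetrad data $\Wt_0$ via \eqref{Hc-def}--\eqref{N-def} at $t=t_0$ and feed it into Proposition \ref{prop:locA}. With the specific choice $\rho=0$, $\gamma=2$ (which yields a symmetric hyperbolic evolution, see Section \ref{Sec:loc-exist}), this produces a unique classical solution on some $M_{t_*,t_0}$, which satisfies $\alphat>0$ and $\det(\et)>0$ and, by Proposition \ref{prop:locA}(c), all the constraints \eqref{A-cnstr-2}--\eqref{H-cnstr-2}. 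Dividing out the weights of $t^{\ep_i}\alphat$ recovers a classical solution of the renormalised system \eqref{EEc.1}--\eqref{EEc.7} satisfying \eqref{A-cnstr-3}--\eqref{H-cnstr-3} on $M_{t_*,t_0}$.

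\textbf{Step 2 (Fuchsian extension to $t=0$).} Next, I form the extended Fuchsian vector $\Wv_0$ from $W_0$ by pairing $W_0$ with its rescaled spatial derivatives $t_0^{|\bc|\nu}\del{}^\bc W_0$, $|\bc|\leq k_1$, as dictated by \eqref{low-order-vars} and \eqref{Wch-bc-def}. Smallness of $W_0$ in $H^k(\Tbb^3)$, with $k=k_0+k_1$, gives smallness of $\Wv_0$ in $H^{k_0}(\Tbb^3)$, and an application of Proposition \ref{prop:Fuch-global} (whose coefficient hypotheses were verified in Section \ref{coeff-props}) yields a unique solution $\Wv$ to the Fuchsian GIVP on $M_{0,t_0}$ together with the energy estimate and the $t^{\zeta}$-decay/convergence estimate in statement (b) of that proposition. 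Since $\rho,\gamma$ enter the tetrad equations \eqref{EEc.1}--\eqref{EEc.7} only through multiples of constraints that are preserved by the evolution, the restriction of $\Wv$ to its undifferentiated block solves \eqref{EEc.1}--\eqref{EEc.7} for \emph{any} choice of $\rho,\gamma$. Uniqueness of the $C^1$ solution from Proposition \ref{prop:locA}, applied on $M_{t_*,t_0}$, identifies this block with the local solution of Step 1; the uniform Fuchsian bound then rules out the blow-up criterion in Proposition \ref{prop:locA}(a), so the local solution extends to $M_{0,t_0}$. Sobolev embedding and the energy estimate give the uniform $H^k$ bound of statement (a).

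\textbf{Step 3 (Asymptotic profiles).} Applying Proposition \ref{prop:Fuch-global}(b) to the decomposition $\Pv\Wv$ versus $\Pv^\perp\Wv$ yields, via \eqref{Pbb-W}--\eqref{Pbb-perp-W}, the $\Ord_{H^{k_0-1}}(t^\zeta)$ decay of $e_P^\Omega$, $A_P$, $U_P$, $N_{PQ}$ and the existence of $H^{k_0-1}$ limits $\Hc_*,\Sigma^*_{AB}$ of $\Hc,\Sigma_{AB}$. Symmetry, tracelessness and the smallness bound $\|\Hc_*\|_{L^\infty}\leq \delta_1$ follow by passing to the limit. For the lapse asymptotic, I divide \eqref{EEc.6} by $\alpha$ and integrate: writing $\Hc=\Hc_*+\Ord(t^\zeta)$ and using Lemma \ref{lem:asymptotic} on $\del{t}\ln\alpha-\tfrac{1}{t}(\ep_1+\tfrac{r_0}{2}+3\Hc_*)$ gives $\alpha = t^{\ep_1+\frac{r_0}{2}+3\Hc_*}\alpha_*(1+\Ord(t^\zeta))$ with $\alpha_*>0$ bounded away from $0$ because $\alpha_0>0$ is preserved (Proposition \ref{prop:locA}(b)) and the integral of the remainder converges. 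This establishes statement (a).

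\textbf{Step 4 (Geometric consequences).} Statement (b) is a direct rewrite: \eqref{conf-metricB}, \eqref{t-fix} and the renormalisations \eqref{Hc-def}--\eqref{N-def} recast the Fuchsian solution as a classical solution of \eqref{ESF.1}--\eqref{ESF.2}, while formula \eqref{phys-2nd-FF} and the expansion of $\alpha$ yield the stated form of $\Kttb_A{}^A$; the bound $\|\Hc_*\|_{L^\infty}\leq \delta_1<\tfrac12+\tfrac{r_0}{6}$ forces the exponent to be strictly positive, giving the crushing-singularity conclusion. Statement (c) follows from \eqref{conf-2nd-FF} after multiplying by $2t\alphat$: the asymptotic matrix $\Kf_{AB}=r_{AB}+2\Hc_*\delta_{AB}+2\Sigma^*_{AB}$ is the $H^{k-1}$ limit, and the quadratic relation $(\Kf_A{}^A)^2-\Kf_A{}^B\Kf_B{}^A+4\Kf_A{}^A=0$ is obtained by passing to the limit in the renormalised Hamiltonian constraint \eqref{H-cnstr-3} (the decaying variables contribute $\Ord(t^\zeta)$ and disappear). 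Statement (d) combines \eqref{phy-scalar}, \eqref{phy-RabRab} with the expansion of $\alpha$ to get the blow-up of $\Rb$ and $\Rb_{ab}\Rb^{ab}$, which in turn gives $C^2$-inextendibility and past timelike geodesic incompleteness by standard arguments. For statement (e), I substitute the decaying components into \eqref{Et-form}--\eqref{Ht-form}: the magnetic part $\Ht_{AB}$ involves only $N$, $A$, $U$ and spatial derivatives of $\Sigma$, all of which are $\Ord(t^\zeta)$, so its contribution to \eqref{phys-Weyl-invariant} is lower order; the electric part converges to $\Ec_{AB}$ as given in \eqref{eq:EcABdef}. The uniform positivity $\inf_x\Ec_{AB}\Ec^{AB}>0$ when $(r_1,r_2,r_3)\neq(0,0,0)$ reduces to a purely algebraic statement about the Kasner data: at the exact Kasner solution $\Hc_*=\Sigma^*=0$, the electric part is $\ec_{AB}=\tfrac{r_0}{3}\delta_{AB}+\tfrac{r_0}{4}r_{AB}-\tfrac14 r_{AC}r_B^C$, and a direct computation using the conformal Kasner relations \eqref{Kasner-rels-B} shows $\ec_{AB}\ec^{AB}>0$ unless all $r_A$ coincide (equivalently unless all $q_A=1/3$); smallness of $\delta$ then guarantees positivity for the perturbed limits as well, yielding \eqref{eq:LowerWeylbound}.

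\textbf{Main obstacle.} The main technical issue is the logical splicing of Steps 1--2: ensuring that the block $W$ extracted from the Fuchsian solution $\Wv$ truly solves \eqref{EEc.1}--\eqref{EEc.7} with the correct identifications $W_{\bc}=t^{|\bc|\nu}\del{}^\bc W$ and with constraints preserved. This is handled by choosing $\Wv_0$ compatibly at $t=t_0$, noting that the Fuchsian system is derived by differentiating and rescaling \eqref{FuchA}, and then invoking local uniqueness of the $C^1$ solution on $M_{t_*,t_0}$ (Proposition \ref{prop:locA}) together with the continuation principle to propagate this identification down to $t=0$. The remaining delicate point is the lower bound on the Weyl invariant, which requires the non-vanishing lemma for $\ec_{AB}\ec^{AB}$ at anisotropic Kasner exponents; this is the only place where the anisotropy assumption $(r_1,r_2,r_3)\neq 0$ is genuinely used.
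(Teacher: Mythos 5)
Your proposal follows the paper's own proof essentially step by step: recover the unrenormalised tetrad data and apply Proposition~\ref{prop:locA} for local existence and constraint propagation, form the rescaled-derivative vector $\Wv_0$ and invoke Proposition~\ref{prop:Fuch-global} for the Fuchsian bounds, identify the two solutions on a common time strip via $C^1$-uniqueness, use the Fuchsian energy bound to defeat the continuation criterion and push $t_1\searrow 0$, and then translate the decay of $\Pv\Wv$ and the convergence of $\Pv^\perp\Wv$ into the geometric statements (a)--(e) via Lemma~\ref{lem:asymptotic} and the curvature formulas \eqref{phy-scalar}, \eqref{phy-RabRab}, \eqref{phys-2nd-FF}, \eqref{conf-2nd-FF}, \eqref{Et-form}, \eqref{Ht-form}, \eqref{phys-Weyl-invariant}. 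Two small points are stated more loosely than in the paper but are not genuine gaps: the bound $\|\Hc_*\|_{L^\infty}\leq\delta_1$ does not come for free from passing to the limit; it requires first establishing $\|\Hc_*\|_{L^\infty}\lesssim\delta$ via Sobolev embedding and then restricting $\delta$ so that the constant times $\delta$ is below $\delta_1$, which is exactly what makes the exponent $\frac{3}{2}+\frac{r_0}{2}-3\delta_1$ strictly positive; and your lapse asymptotic is phrased as an integration of $\del{t}\ln\alpha$ directly rather than via the auxiliary $\alphabr=\alpha\, t^{-(\ep_1+\frac{r_0}{2}+3\Hc_*)}$, but the two computations are identical since $\del{t}\ln\alphabr=\del{t}\ln\alpha-(\ep_1+\frac{r_0}{2}+3\Hc_*)/t=3(\Hc-\Hc_*)/t=\Ord(t^{-1+\zeta})$, which is exactly the integrable remainder needed by Lemma~\ref{lem:asymptotic}. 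Otherwise your route and the paper's coincide, including the key observation that $\rho,\gamma$ enter \eqref{EEc.1}--\eqref{EEc.7} only through constraint multiples, so a single constrained solution solves the system for every admissible pair, and the algebraic lemma that $\ec_{AB}\ec^{AB}>0$ precisely when the conformal Kasner exponents are not all zero.
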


\begin{rem}\label{rem:Kasner-idata-stability}
The Kasner-scalar field solutions, when evaluated at $t = t_0$, determine initial data $W_0^{\textrm{Kas}}$ that, by \eqref{Kasner-solns-C}-\eqref{Kasner-solns-limit}, satisfies  
\begin{equation}\label{Kasner:t->0}  
\lim_{t_0 \searrow 0} \|W_0^{\textrm{Kas}}\|_{H^k(\Tbb^3)} = 0,  
\end{equation}  
and where $\alpha_0 > 0$ and $\det(\varepsilon^\Omega_{0P}) > 0$ on $\Tbb^3$ for any $k \in \Nbb_0$.  
Consequently, by choosing $t_0 > 0$ sufficiently small, we can always ensure that the Kasner-scalar field initial data satisfy the conditions of Theorem \ref{glob-stab-thm}. This guarantees the existence of an open neighbourhood of initial data around the Kasner-scalar field initial data that also satisfy the conditions of Theorem \ref{glob-stab-thm}, thereby establishing the past nonlinear stability of the Kasner-scalar field solutions and their big bang singularities. 
\end{rem}

\begin{rem}
    \label{rem:AVTD}
The solutions determined by this theorem also exhibit \emph{asymptotically velocity term dominated (AVTD) behaviour} \cite{Eardley:1972,isenberg1990}, meaning that near \( t = 0 \), spatial derivative terms in the evolution equations decay more rapidly than other terms. As a result, these solutions can be viewed as asymptotic solutions of the \emph{velocity term dominated} equations, which are obtained by neglecting the spatial derivatives in the full evolution equations. Given that the evolution equations are in Fuchsian form, see \eqref{Fuch-past-A.1}, this AVTD behaviour follows from the coefficient properties established in the proof of Proposition~\ref{prop:Fuch-global} and Section~\ref{coeff-props}, particularly from the restriction \eqref{eigenval-posC}.
\end{rem}

\begin{rem}
    \label{rem:Weyl}
It is expected that the Weyl curvature generically blows up at big bang singularities except in the special case of exact FLRW solutions where the Weyl tensor is zero and only the Ricci part of the curvature tensor diverges. Theorem~\ref{glob-stab-thm}.(e) confirms this expectation for 
all sufficiently small perturbations subcritical Kasner-scalar field solutions where $(r_1,r_2,r_3) \neq (0,0,0)$. However, our results do not fully determine the asymptotic behaviour of the Weyl curvature for perturbations of the isotropic Kasner solution, ie. FLRW, where $(r_1,r_2,r_3) = (0,0,0)$. This is because $\ec_{AB}$ vanishes in this case as a consequence of \eqref{eq:ecABdef}, and therefore, we cannot guarantee that $\Ec_{AB}$, defined by \eqref{eq:EcABdef}, is everywhere non-vanishing. Consequently, the expansion for the Weyl curvature in \eqref{Wey-invariant-exp} does not provide a divergent lower bound in case. It is possible that this issue could be resolved by deriving additional terms in the expansion of the Weyl tensor beyond those given in \eqref{Wey-invariant-exp}, but we leave this for future work.

We also emphasise that the blow-up of the Weyl curvature established in Theorem~\ref{glob-stab-thm} is \emph{purely electric}, in the sense that the magnetic part of the Weyl tensor is of higher order near $t=0$ than the electric part; see \eqref{Et-Ht-decay}. Future work could explore interpreting the perturbed solutions of Theorem~\ref{glob-stab-thm} within the framework of \emph{silent universes}; see, for example, \cite{bruni1995} (despite the absence of irrotational dust matter in our treatment here).

\end{rem}

\subsection{Proof of Theorem \ref{glob-stab-thm}}

\subsubsection*{Choice of constants}
Fix $T_0\in (0,1]$, $k_0\in \Zbb_{>\frac{3}{2}+1}$, and fix Kasner exponents $q_I\in \Rbb$, $I=1,2,3$, and a constant $P\in (0,\sqrt{1/3}]$ that together satisfy the Kasner relations \eqref{Kasner-rels-A}. Further, let $r_0$ and $r_I$, $I=1,2,3$, be as defined by \eqref{Kasner-rels-B}, and assume that the Kasner exponents satisfy the subcritical condition \eqref{SCR}. Finally, fix  $\ep_1\in (0,1)$ and $\ep_2\in \Rbb$ satisfying \eqref{eigenval-posA}  and \eqref{eigenval-posB}, respectively, fix $\nu>0$ satisfying \eqref{eigenval-posC}, choose $k_1\in \Nbb_0$ large enough so that \eqref{k-nu-fix} holds, and set $k=k_0+k_1$. 

\subsubsection*{Initial data}
Suppose $t_0\in (0,T_0]$ and let $\Wbb$ be the vector space defined above by \eqref{Wbb-def}. Assume that the initial data 
\begin{equation*}
W_0 = ( e_{0P}^\Sigma, \alpha_0, A_{0P}, U_{0P}, \Hc_0, \Sigma_{0PQ}, N_{0PQ})^{\tr} \in H^{k}(\Tbb^3,\Wbb),
\end{equation*}
for the Einstein-scalar field equations satisfies $\alpha_0(x)>0$ and $\det\bigl(e_{0P}^\Sigma(x)\bigr)>0$ for all $x\in \Tbb^3$, and the constraint equations \eqref{A-cnstr-3}-\eqref{H-cnstr-3} at  $t=t_0$. Further, assume that
\begin{equation} \label{stability-idata-A}
\norm{W_0}_{H^{k}(\Tbb^3)} < \delta. 
\end{equation}
For now, $\delta>0$ is arbitrary, but later, it will be required to be chosen sufficiently small.  Importantly, this smallness condition on $\delta$ is \textit{independent} of the choice of $t_0\in (0,T_0]$.

Spatially differentiating $W_0$ yields 
\begin{equation}\label{stability-Wv0-def}
\Wv_0 = \bigl( (W_{0\bc})_{|\bc|= 0},(W_{0\bc})_{|\bc|=1},(W_{0\bc})_{|\bc|=2}, \ldots,(W_{0\bc})_{|\bc|=k_1-1},(\Wch_{0\bc})_{|\bc|=k_1}\bigr)^{\tr},
\end{equation}
where $W_{0\bc} = t_0^{|\bc|\nu}\del{}^\bc W_0$, $1\leq|\bc|<k_1$, and $\Wch_{0\bc}=V^{-1} t_0^{|\bc|\nu}\del{}^\bc W_0$, $|\bc|=k_1$, with $V$ defined by \eqref{V-fix}. In the following, we will use this as initial data for the Fuchsian equation \eqref{Fuch-past-A.1}. Noting that $0<t_0\leq T_0\leq 1$, it is clear from \eqref{stability-idata-A} that
\begin{equation} \label{stability-idata-B}
\norm{\Wv_0}_{H^{k_0}(\Tbb^3)} < C_0\delta
\end{equation}
for some constant $C_0>0$ that is independent of the choice of $\delta>0$ and $t_0\in (0,T_0]$.

\subsubsection*{Global existence to the past} 
By Proposition \ref{prop:Fuch-global} and \eqref{stability-idata-B}, there exists a constant $\delta_0>0$, independent of $t_0\in (0,T_0]$, such that if $\delta \in (0,\delta_0]$, then there exists a solution 
\begin{equation} \label{stability-Fuch-sol}
\Wv \in C^0_b\bigl((0,t_0],H^{k_0}(\mathbb T^{3})\bigr)\cap C^1\bigl((0,t_0],H^{k_0-1}(\mathbb T^{3})\bigr)\subset C^1(M_{0,t_0})
\end{equation}
of the Fuchsian GIVP \eqref{Fuch-past-A.1}-\eqref{Fuch-past-A.2} such that the limit $\lim_{t\searrow 0} \Pv^\perp \Wv(t)$, denoted $\Pv^\perp \Wv(0)$, exists in $H^{k_0-1}(\mathbb{T}^3)$. Moreover, there exist constants $C_1,\zeta>0$, both independent of $t_0\in (0,T_0]$ and $\delta \in (0,\delta_0]$, such that the  solution $\Wv(t)$ satisfies the energy estimate
\begin{equation}\label{stability-energy} 
\norm{\Wv(t)}_{H^{k_0}(\mathbb T^{3})}^2 + \int^{t_0}_t \frac{1}{s} \norm{\Pv \Wv(s)}_{H^{k_0}(\mathbb T^{3})}^2\, ds \leq C_1 \delta^2, \quad \forall\, t \in (0,t_0],
\end{equation}
and the decay estimates
\begin{equation}\label{stability-decay}
  \norm{\Pv \Wv(t)}_{H^{k_0-1}(\mathbb T^{3})}+\norm{\Pv^\perp \Wv(t) - \Pv^\perp \Wv(0)}_{H^{k_0-1}(\mathbb T^{3})} \leq C_1 t^\zeta, \quad \forall t\in (0,t_0].
\end{equation}
It is worth noting here that while $\delta_0$ depends on the constants, eg. $T_0$, $q_I$, $\ep_1$, $\ep_2$, $\nu$, $k_0$, $k_1$, that were fixed at the beginning of the proof, the parameter $\delta$, other than satisfying $0<\delta \leq \delta_0$, does not depend on these constants. 

Since the initial data $W_0$ satisfies the constraint equations, and $\alpha_0(x)>0$ and $\det(e_{0P}^\Sigma(x))>0$ for all $x\in \Tbb^3$, we also know from Proposition \ref{prop:locA} that there exist a $t_1\in (0,t_0]$ and a
\begin{equation} \label{stability-W-sol}
W= \bigl( e_{P}^\Sigma, \alpha, A_{P}, U_{P}, \Hc, \Sigma_{PQ}, N_{PQ}\bigr)^{\tr} \in \bigcap_{j=0}^{k} C^j\bigl((t_1,t_0],H^{k-j}(\Tbb^3,\Wbb)\bigr)
\end{equation}
that solves the evolution equations \eqref{EEc.1}-\eqref{EEc.7}, for all choices of the parameters $\gamma,\rho \in\Rbb$, on $M_{t_1,t_0}=(t_1,t_0]\times \Tbb^3$, and that satisfies the initial condition $W|_{t=t_0}=W_0$, the constraint equations \eqref{A-cnstr-3}-\eqref{H-cnstr-3} on $M_{t_1,t_0}$, and  $\alpha>0$ and $\det(e_{P}^\Sigma)>0$ on $M_{t_1,t_0}$. Moreover, from the continuation principle from Proposition \ref{prop:locA}, this solution can be continued past $t_1$ provided
$\sup_{t_1<t<t_0}\norm{W(t)}_{W^{1,\infty}(\Tbb^3)} < \infty$.

Next, we define 
\begin{equation} \label{hatWv-def}
\hat{\Wv} = \bigl( W,(W_{\bc})_{|\bc|=1},(W_{\bc})_{|\bc|=2}, \ldots,(W_{\bc})_{|\bc|=k_1-1},(W_{\bc})_{|\bc|=k_1}\bigr)^{\tr},
\end{equation}
where $W_{\bc} = t^{|\bc|\nu}\del{}^\bc W$, $1\leq |\bc|<k_1$, and $W_{\bc}=V^{-1} t^{|\bc|\nu}\del{}^\bc W$, $|\bc|=k_1$, with $V$ defined by \eqref{V-fix}. Then
\begin{equation*}
\hat{\Wv}\in \bigcap_{j=0}^{k_0} C^j\bigl((t_1,t_0],H^{k_0-j}(\Tbb^3)\bigr) \subset C^1(M_{t_1,t_0}),
\end{equation*}
and it follows from the calculation carried out in Sections \ref{sec:low-order} and \ref{sec:high-order} that $\hat{\Wv}$ satisfies the Fuchsian equation \eqref{Fuch-past-A.1} on $M_{t_1,t_0}$. Moreover, it is clear that $\hat{\Wv}|_{t=t_0}=\Wv_0$ due to \eqref{stability-Wv0-def}, \eqref{hatWv-def} and the fact that $W|_{t=t_0}=W_0$. By Proposition \ref{prop:Fuch-global}.(a), we conclude that $\hat{\Wv}=\Wv$ on $M_{t_1,t_0}$. In particular, this implies, by \eqref{stability-energy} and Sobolev's inequality (Theorem \ref{Sobolev}), that $\sup_{t_1<t<t_0}\norm{W(t)}_{W^{1,\infty}(\Tbb^3)} \leq C_2\delta$
for some constant $C_2>0$ independent of $t_0\in (0,T_0]$, $\delta\in (0,\delta_0]$ and $t_1 \in (0,t_0]$. Thus, the solution \eqref{stability-W-sol} can be continued past $t_1$ and in fact must exist the whole interval $(0,t_0]$, ie.~$t_1=0$. 

Since $\alpha>0$ and $\det(e_A^\Omega)>0$ on $M_{0,t_0}$, the solution \eqref{stability-W-sol}
determines, cf.~\eqref{conformal-metric}, \eqref{spatial-conformal-metric}, \eqref{tau-def},  \eqref{conf-metricB}, \eqref{t-fix}, and  \eqref{alpha-def}-\eqref{e-def}, a classical solution of the Einstein-scalar-field equations \eqref{ESF.1}-\eqref{ESF.2} on $M_{0,t_0}$ via  
\begin{equation*}
\biggl\{\gb = t\bigl(-\alphat^2 dt\otimes dt + \gt_{\Sigma\Omega}dx^\Sigma \otimes dx^\Omega\bigr),\; \phi = \frac{\sqrt{3}}{2}\ln(t)\Biggr\}
\end{equation*}
where  
\begin{equation} \label{alphat-gt-et}
    \alphat = t^{-\ep_1}\alpha, \quad
(\gt_{\Sigma\Omega}) =(\delta^{AB}\et_A^{\Sigma}\et_B^{\Omega})^{-1} \AND \et_A^\Omega = t^{\ep_1-\ep_2}\alpha^{-1}e_A^\Omega.
\end{equation}
By uniqueness, we also have that
\begin{equation*}
\Wv = \hat{\Wv} =  \bigl( W,(W_{\bc})_{|\bc|=1},(W_{\bc})_{|\bc|=2}, \ldots,(W_{\bc})_{|\bc|=k_1-1},(W_{\bc})_{|\bc|=k_1}\bigr)^{\tr} \quad \text{in $M_{0,t_0}$.}
\end{equation*}
This, in turn, implies by \eqref{Pbb-W}, \eqref{Pbb-perp-W}, \eqref{Pv-def}, \eqref{Pv-perp-def}, and the energy estimate \eqref{stability-energy} that
\begin{equation} \label{stability-W-bnd}
\norm{W(t)}_{H^{k}(\Tbb^3)} \leq C_1\delta.
\end{equation}
From the above estimate and the decay estimates \eqref{stability-decay}, we get
\begin{gather}
e_P^\Omega = \Ord_{H^{k-1}(\Tbb^3)}(t^\zeta),\quad 
\alpha =\Ord_{H^{k-1}(\Tbb^3)}(t^\zeta), \quad
A_P = \Ord_{H^{k-1}(\Tbb^3)}(t^\zeta), \label{stability-decay-A.1} \\
U_P = \Ord_{H^{k-1}(\Tbb^3)}(t^\zeta), \quad
N_{PQ} = \Ord_{H^{k-1}(\Tbb^3)}(t^\zeta), \label{stability-decay-A.2}  \\
\Hc = \Hc_* + \Ord_{H^{k-1}(\Tbb^3)}(t^\zeta) \AND
\Sigma_{PQ} = \Sigma_{PQ}^* + \Ord_{H^{k-1}(\Tbb^3)}(t^\zeta), \label{stability-decay-A.3} 
\end{gather}
where
$\Hc_*,\Sigma_{AB}^*\in H^{k-1}(\Tbb^3)$ satisfy
\begin{equation} \label{asymp-data-bnd-A}
\max\Bigl\{\norm{\Hc_*}_{H^{k-1}(\Tbb^3)}, \norm{\Sigma_{PQ}^*}_{H^{k-1}(\Tbb^3)}\Bigr\} \leq C_1 \delta, \quad 0<\delta \leq \delta_0,
\end{equation}
and the implicit constants in \eqref{stability-decay-A.1}-\eqref{stability-decay-A.3} are independent of the choice of $\delta\in (0,\delta_1]$ and
$t_0\in (0,T_0]$.
The bounds \eqref{stability-W-bnd} and \eqref{asymp-data-bnd-A}, in turn, imply by Sobolev's inequality (Theorem \ref{Sobolev}) that
\begin{equation*}
\max\biggl\{\norm{\Hc_*}_{L^{\infty}(\Tbb^3)}, \norm{\Sigma_{PQ}^*}_{L^{\infty}(\Tbb^3)}, \sup_{0<t\leq t_0}\norm{\Hc(t)}_{L^{\infty}(\Tbb^3)},\sup_{0<t\leq t_0}\norm{\Sigma_{PQ}(t)}_{L^{\infty}(\Tbb^3)}\biggr\} \leq C_{3} \delta, \quad 0<\delta \leq \delta_0,
\end{equation*}
for some constant $C_{3}>0$. 

For the remainder of the proof, we will further restrict $\delta$ to lie in the interval 
$0<\delta \leq \deltat_0:=\frac{\delta_1}{C_3}$
where, for now,
  $\delta_1 \in (0, C_3\delta_0]$
can be freely chosen.
This will ensure that 
\begin{equation} \label{asympt-data-bnd}
\max\biggl\{\norm{\Hc_*}_{L^{\infty}(\Tbb^3)}, \norm{\Sigma_{PQ}^*}_{L^{\infty}(\Tbb^3)}, \sup_{0<t\leq t_0}\norm{\Hc(t)}_{L^{\infty}(\Tbb^3)},\sup_{0<t\leq t_0}\norm{\Sigma_{PQ}(t)}_{L^{\infty}(\Tbb^3)}\biggr\} \leq \delta_1.
\end{equation}

\subsubsection*{Improved asymptotics for the lapse}
Since $\alpha$ satisfies \eqref{EEc.6} and $\alpha>0$, we see, after a short calculation, that 
\begin{equation*}
\del{t}\ln\bigl(t^{-\ep_1-\frac{r_0}{2}}\alpha\bigr) = \frac{3}{t}\Hc.
\end{equation*}
By introducing
\begin{equation*}
\alphabr = \frac{\alpha}{t^{\ep_1+\frac{r_0}{2}+3\Hc_*}},
\end{equation*}
we can use the decay estimate from \eqref{stability-decay-A.3} for $\Hc$ to express the above differential equation as
\begin{equation*}
\del{t}\ln(\alphabr) = \Ord_{H^{k-1}(\Tbb^3)}(t^{-1+\zeta}).
\end{equation*}
Then, by Lemma \ref{lem:asymptotic}, there exists a $\alphabr_*\in H^{k-1}(\Tbb^3)$ such that 
\begin{equation*}
\ln(\alphabr) = \alphabr_* +  \Ord_{H^{k-1}(\Tbb^3)}(t^\zeta). 
\end{equation*}
Setting $\alpha_*=\exp(\alphabr_*)$ and exponentiating the above expression, it follows from the Sobolev and Moser inequalities (Theorems \ref{Sobolev} and \ref{Moser}) that
\begin{equation} \label{alpha-asymp-data}
\alpha_* \in H^{k-1}(\Tbb^3) \AND  0<\inf_{x\in\Tbb^3}\alpha_*(x) \leq \sup_{x\in\Tbb^3}\alpha_*(x)\lesssim 1,
\end{equation}
and that 
\begin{equation} \label{alpha-decay}
\alphat \overset{\eqref{alpha-def}}{=} t^{-\ep_1}\alpha = t^{\frac{r_0}{2}+3\Hc_*}\alpha_*\att
\end{equation}
where 
\begin{gather}
\att =1+ \Ord_{H^{k-1}(\Tbb^3)}(t^\zeta), \label{att-bnds.1}\\
1 \lesssim \inf_{(t,x)\in M_{0,t_0}}\att(t,x) \leq \sup_{(t,x)\in M_{0,t_0}}\att(t,x) \lesssim 1,\label{att-bnds.2}
\end{gather}
and the implicit constants in the above expressions are independent of the choice of $\delta \in (0,\delta_1]$ and $t_0\in (0,T_0]$. 

We now further restrict $\delta_1$ by demanding that
\begin{equation} \label{delta-restrict}
0<\delta_1 < \min\biggl\{ C_3\delta_0,
\frac{1}{2}+\frac{r_0}{6}\biggr\} 
\end{equation}
in order to ensure that
\begin{equation} \label{Hc-lbnds}
\frac{3}{2}+\frac{r_0}{2}+3\Hc_*(x)\geq \frac{3}{2}+\frac{r_0}{2}-3\delta_1 >0 \AND
\frac{3}{2}+\frac{r_0}{2}+3\Hc(t,x) \geq \frac{3}{2}+\frac{r_0}{2}-3\delta_1>0
\end{equation}
for all $(t,x)\in (0,t_0]\times \Tbb^3$.

\subsubsection*{Crushing singularity}
Noting from \eqref{Wbb-def} and \eqref{stability-W-sol} that $\Sigma_{AB}$ is symmetric and trace-free, ie. \begin{equation} \label{Sigma-TF}
\delta^{AB}\Sigma_{AB} =0\AND \Sigma_{AB}=\Sigma_{BA},
\end{equation}
we see from \eqref{phys-spatial-metric}, \eqref{phys-2nd-FF}, \eqref{stability-decay-A.3}, \eqref{alpha-decay} and \eqref{att-bnds.1} that the trace of the physical second fundamental form $\Kttb_{AB}$ determined by that physical metric $\gb$ and the $t=\textit{const}$ hypersurfaces is given by
\begin{equation*}
\Kttb_A{}^A=\gttb^{AB}\Kttb_{AB} = \frac{1}{t^{\frac{3}{2}+\frac{r_0}{2}+3\Hc_*}} \frac{1}{\alpha_* \att}\biggl(\frac{3}{2}+\frac{r_0}{2}+3\Hc\biggr)=\frac{1}{t^{\frac{3}{2}+\frac{r_0}{2}+3\Hc_*}} \frac{1}{\alpha_*}\biggl(\frac{3}{2}+\frac{r_0}{2}+3\Hc_* + \Ord_{H^{k-1}(\Tbb^3)}(t^\zeta)\biggr),
\end{equation*}
where in deriving this we have used the fact that $\delta^{AB}r_{AB} = r_0$ due to \eqref{Kasner-rels-B} and \eqref{rAB-def}. Because of \eqref{alpha-asymp-data}, \eqref{att-bnds.2} and \eqref{Hc-lbnds}, we can bound $\Kttb_A{}^A$ below by
\begin{equation} \label{phs-2nd-FF-lbnd}
\inf_{x\in\Tbb^3}\Kttb_A{}^A(t,x) \gtrsim \frac{1}{t^{\frac{3}{2}+\frac{r_0}{2}-3\delta_1}}, \quad 0<t\leq t_0.
\end{equation}
Since $\frac{3}{2}+\frac{r_0}{2}-3\delta_1>0$, this implies that $\Kttb_A{}^A$ blows up uniformly as $t\searrow 0$, and hence, by definition \cite{Eardley:1979}, the $t=0$  hypersurface is a \textit{crushing singularity}. 

\subsubsection*{Past geodesic incompleteness}
Due to \eqref{phs-2nd-FF-lbnd}, $\Kttb_A{}^A\geq \Ct$ on $\Sigma_{t_0}$ for some $\Ct>0$. Furthermore, because the spacetime $(\gb,M_{0,t_0})$ is obtained from solving an initial value problem for the Einstein-scalar field equations and the scalar field stress energy tensor satisfies the strong energy condition, we know that $\Sigma_{t_0}$ is a  spacelike Cauchy hypersurface and that $\Rb_{ab}v^a v^b\geq 0$ for all timelike vector fields $v^a$.  Past timelike geodesic incompleteness is then a direct consequence of Hawking's singularity theorem \cite[Ch.~14, Thm.~55A]{ONeill:1983}.

\subsubsection*{Ricci curvature blow up and $C^2$-inextendibility of the physical metric}
Using \eqref{phy-scalar}, \eqref{phy-RabRab} and \eqref{alpha-decay}, we can express the physical curvature invariants $\Rb$ and $\Rb_{ab}\Rb^{ab}$ 
as
\begin{align*}
\Rb &= -\frac{3}{2\alpha_*^2\att^2t^{2(\frac{3}{2}+\frac{r_0}{2}+3\Hc_*)}} = \frac{3}{2\alpha_*^2 t^{2(\frac{3}{2}+\frac{r_0}{2}+3\Hc_*)}}\bigl(-1+\Ord_{H^{k-1}(\Tbb^3)}(t^\zeta)\bigr) 
\intertext{and}
\Rb_{ab}\Rb^{ab} &= \frac{9}{4\alpha_*^4\att^4t^{4(\frac{3}{2}+\frac{r_0}{2}+3\Hc_*)}}
=\frac{9}{4\alpha_*^4t^{4(\frac{3}{2}+\frac{r_0}{2}+3\Hc_*)}}\bigl(1+\Ord_{H^{k-1}(\Tbb^3)}(t^\zeta)\bigr) ,
\end{align*}
respectively. By \eqref{alpha-asymp-data} and \eqref{att-bnds.2}, we conclude that
\begin{equation*}
\sup_{x\in\Tbb^3}\Rb(t,x) \lesssim -\frac{1}{t^{2(\frac{3}{2}+\frac{r_0}{2}-3\delta_1)}} \AND
\inf_{x\in\Tbb^3}\Rb_{ab}(t,x)\Rb^{ab}(t,x) \gtrsim \frac{1}{t^{4(\frac{3}{2}+\frac{r_0}{2}-3\delta_1)}}
\end{equation*}
for all $t\in (0,t_0]$. Since $\frac{3}{2}+\frac{r_0}{2}-3\delta_1>0$, these formulas show that the physical curvature invariants $\Rb$ and $\Rb_{ab}\Rb^{ab}$ blow up uniformly as $t\searrow 0$, which establishes the $C^2$-inextendibility  of the physical metric $\gb$ across the $t=0$ hypersurface. 

\subsubsection*{Asymptotically pointwise Kasner behaviour}
From \eqref{alpha-def}, \eqref{conf-2nd-FF}, and \eqref{stability-decay-A.3}, we see that the second fundamental form $\Kttt_{AB}$ determined by the conformal metric $\gt$ and the $t=\textit{const}$ hypersurfaces, when expressed relative to the orthonormal frame $\{\et_0= \alphat^{-1}\del{t},\et_A=\et_A^\Lambda\del{\Lambda}\}$, is given by
\begin{equation} \label{Kt-decay}
2t\alphat \Kttt_{AB} = r_{AB}+2\Hc\delta_{AB}+2\Sigma_{AB} = \Kf_{AB}+\Ord_{H^{k-1}(\Tbb^3)}(t^\zeta)
\end{equation}
where 
\begin{equation} \label{Kf-def}
\Kf_{AB}= r_{AB}+2\Hc_*\delta_{AB}+2\Sigma^*_{AB}\in H^{k-1}(\Tbb^3,\Sbb{3})\subset C^{0}(\Tbb^3,\Sbb{3}).
\end{equation}
From Sobolev's inequality, we then deduce the pointwise limit
\begin{equation} \label{Asymp-Kasner-D}
\lim_{t\searrow 0}\bigl| 2t\alphat(t,x)\Kt_A{}^A(t,x) - \Kf_A{}^A(x)\bigr|, \quad \forall \, x\in \Tbb^3.
\end{equation}

Noting that
\begin{equation} \label{Sigma*-TF}
\delta^{AB}\Sigma^*_{AB} =0\AND \Sigma^*_{AB}=\Sigma^*_{BA}
\end{equation}
due to \eqref{stability-decay-A.3} and \eqref{Sigma-TF}, and that
\begin{equation}\label{rAB-trace}
\delta^{AB}r_{AB}=r_0 \AND r_A^B r_B^A = \sum_{A=1}^3r_A^2 = r_0^2+4 r_0
\end{equation}
by \eqref{Kasner-rels-B} and \eqref{rAB-def},
we find, following a short calculation, that
\begin{equation}\label{Asymp-Kasner-A}
 (\Kf_A{}^A)^2 - \Kf_A{}^B\Kf_B{}^A + 4\Kf_A{}^A = 4\bigl(6\Hc_*^2 + (6+2r_0)\Hc_*-\Sigma^*_A{}^B \Sigma^*_B{}^A - r^{AB}\Sigma^*_{AB}\bigr).
\end{equation}
Since the solution \eqref{stability-W-sol} satisfies the constraint equations, we observe from \eqref{H-cnstr-3} that the renormalised Hamiltionian constraint 
\begin{equation*}
6\Hc^2 + (6+2r_0)\Hc - \Sigma_{AB}\Sigma^{AB} - r_{AB}\Sigma^{AB} 
=-4t^{1-\ep_2}e_A(A^A)
             +4U_A A^A+6A^A A_A+N^{AB}N_{AB}-\frac{1}{2}(N_A{}^A)^2
\end{equation*}
is satisfied.
With the help of the calculus inequalities from Appendix \ref{calc}, the bound \eqref{stability-W-bnd}, the decay estimates \eqref{stability-decay-A.1}-\eqref{stability-decay-A.3}, and the fact that $1-\ep_2>0$,  we further observe that the renormalised Hamiltonian constraint can be expressed as 
\begin{equation*}
 (\Kf_A{}^A)^2 - \Kf_A{}^B\Kf_B{}^A + 4\Kf_A{}^A = \Ord_{H^{k-1}(\Tbb^3)}(t^{\zetat})
\end{equation*}
where $\zetat=\min\{1-\ep_2,\zeta\}>0$. Then letting $t\searrow 0$ shows that
\begin{equation} \label{Asymp-Kasner-B}
 (\Kf_A{}^A)^2 - \Kf_A{}^B\Kf_B{}^A + 4\Kf_A{}^A = 0.
\end{equation}
Furthermore, from \eqref{Kf-def}, \eqref{Sigma*-TF} and \eqref{rAB-trace}, we have that $\Kf_A{}^A = r_0 + 6\Hc_*$, which, in turn, implies by \eqref{Hc-lbnds} that $\Kf_A{}^A\geq -3$.
On the other hand, solving \eqref{Asymp-Kasner-B} for $\Kf_A{}^A$, we find that 
$\Kf_A{}^A=\pm\sqrt{4+\Kf_{A}{}^B \Kf_{B}{}^A}-2$. Only the positive solution is compatible with the lower bound $\Kf_A{}^A\geq -3$, and hence, we have
\begin{equation}\label{Asymp-Kasner-C}
\Kf_A{}^A=\sqrt{4+\Kf_{A}{}^B \Kf_{B}{}^A}-2\geq 0.
\end{equation}
Since 
\begin{equation} \label{cESF-sol}
\Bigl\{\gt = -\alphat^2 dt\otimes dt + \gt_{\Sigma\Omega}dx^\Sigma \otimes dx^\Omega,\; \tau = t\Bigr\},
\end{equation}
where  $\alphat$ and $\gt$ are determined by  \eqref{alphat-gt-et}, defines a solution of the conformal Einstein-scalar-field equations \eqref{cESF.1}-\eqref{cESF.2}, we conclude from \eqref{Asymp-Kasner-D}, \eqref{Asymp-Kasner-B} and \eqref{Asymp-Kasner-C} that the solution \eqref{cESF-sol} is \textit{asymptotically pointwise Kasner on $\Tbb^3$}, cf.~\cite[Def.~1.1]{BeyerOliynyk:2024b}. 

\subsubsection*{Weyl curvature blow up}
From \eqref{Et-form} and \eqref{Ht-form}, it is straightforward, using the decay estimates \eqref{stability-decay-A.1}-\eqref{stability-decay-A.3}, the bounds \eqref{stability-W-bnd} and \eqref{asymp-data-bnd-A}, and the calculus inequalities from Appendix \ref{calc}, to verify that 
\begin{equation} \label{Et-Ht-decay}
t^2 \alphat^2 \Et_{AB}= \Ec_{AB} + \Ord_{H^{k-1}(\Tbb^3)}(t^\zeta) \AND
t^2 \alphat^2 \Ht_{AB}  = \Ord_{H^{k-1}(\Tbb^3)}(t^\zeta),
\end{equation}
where 
\begin{align}
\Ec_{AB} &=\ec_{AB}+ \Hc_*\Sigma^*_{AB} + \Bigl(\frac{1}{2}r_{AB}-\frac{r_0}{6}\delta_{AB}\Bigr)\Hc_* + \frac{r_0}{3}\Sigma^*_{AB}  - \Sigma^*_{AC}\Sigma^*_B{}^C - \frac{1}{2}r_B^C\Sigma^*_{AC} \notag \\ 
&\quad- \frac{1}{2}r_{AC}\Sigma^*_B{}^C  + \frac{r_0}{6}\Sigma^*_{AB} + \frac{1}{3}\delta_{AB}\Sigma^*_{CD}\Sigma^{*CD} + \frac{1}{3}\delta_{AB}r^{CD}\Sigma^*_{CD}\label{Fc-def}, \\ 
\ec_{AB} &= \frac{r_0}{3}\delta_{AB} +  \frac{r_0}{4}r_{AB}- \frac{1}{4}r_{AC}r_B^C,   \label{ec-def} 
\end{align}
and again the implicit constants in the above decay estimates are independent of the choice of $\delta \in (0,\delta_1]$ and $t_0\in (0,T_0]$. By \eqref{phys-Weyl-invariant}, \eqref{alpha-decay} and \eqref{Et-Ht-decay}, it then follows that the physical Weyl curvature invariant $\Cb^{abcd}\Cb_{abcd}$ satisfies
\begin{equation} \label{Wey-invariant-exp}
\Cb^{abcd}\Cb_{abcd} =\frac{8}{t^{7+2r_0+12\Hc_*}\alpha_*^4}\Bigl(\Ec_{ab}\Ec^{ab}+\Ord_{H^{k-1}(\Tbb^3)}(t^\zeta)\Bigr).
\end{equation}

Now, by \eqref{rAB-trace} and \eqref{ec-def}, we have
\begin{equation*}
    \ec_{AB}\ec^{AB} = \sum_{A=1}^3 \Bigl(\frac{r_0}{3}+\frac{r_0}{4}r_A-\frac{1}{4}r_A^2\Bigr)^2.
\end{equation*}
Clearly 
    $\ec_{AB}\ec^{AB}\geq 0$
and we claim that
$\ec_{AB}\ec^{AB}=0$ 
if and only if $r_1=r_2=r_3=0$.
To see why this holds, we observe from \eqref{Kasner-rels-B} that there can be 3 zeros, 1 zero or no zeros in the set $\{r_1,r_2,r_3\}$. If there is only one zero, say $r_1=0$ and $r_2,r_3\neq 0$, then $r_0>0$ and $\frac{r_0}{3}+\frac{r_0}{4}r_1-\frac{1}{4}r_1^2 = \frac{r_0}{3} > 0$, from which it follows that $\ec_{AB}\ec^{AB}>0$. On  the other hand, if $r_1,r_2,r_3\neq 0$, ie. there are no zeros, then $\ec_{AB}\ec^{AB}=0$ implies that $\frac{r_0}{3}+\frac{r_0}{4}r_A-\frac{1}{4}r_A^2 = 0$,
for $A=1,2,3$. Solving for $r_A$ gives
$r_A = \frac{1}{6}\Bigl(3r_0\pm\sqrt{57}r_0\Bigr)$. But this would imply that $r_1+r_2+r_3\neq r_0$, a contradiction, and so $\ec_{AB}\ec^{AB}>0$ must hold in this case. From these arguments, we conclude that $\ec_{AB}\ec^{AB}>0$ if and only if there is at least one nonzero conformal Kasner exponent $r_A$.

Now from  \eqref{stability-decay-A.3}, \eqref{asymp-data-bnd-A} and  \eqref{Fc-def}, we see, with the help of the calculus inequalities from Appendix \ref{calc} that 
\begin{equation*}
    \norm{\Ec_{AB} \Ec^{AB}-\ec_{AB}\ec^{AB}}_{H^{k-1}(\Tbb^3)} \leq C_4(1+\delta)\delta, \quad 0<\delta\leq \delta_1.
\end{equation*}
Assuming now that $(r_1,r_2,r_3)\neq 0$, then $\ec_{AB}\ec^{AB}>0$, it then follows from 
the above bound and Sobolev's inequality (Theorem \ref{Sobolev}) that there exists a $\delta_2\in (0,\delta_1]$ such that   
\begin{equation} \label{Ec-lbnd}
\inf_{x\in\Tbb^3}\Ec_{AB}(x) \Ec^{AB}(x) >0
\end{equation}
for any $\delta \in (0,\delta_2]$. From \eqref{Hc-lbnds}, \eqref{Wey-invariant-exp}, \eqref{Ec-lbnd} and Sobolev's inequality, we then deduce the existence of a $t_1\in (0,t_0]$ such that 
\begin{equation*}
\inf_{x\in\Tbb^3}\Cb^{abcd}(t,x)\Cb_{abcd}(t,x) \gtrsim \frac{1}{t^{1+4(\frac{3}{2}+\frac{r_0}{2}-3\delta_1)}}
\end{equation*}
for all $t\in (0,t_1]$. But $\frac{3}{2}+\frac{r_0}{2}-3\delta_1>0$, and consequently, the physical Weyl curvature invariant $\Cb^{abcd}\Cb_{abcd}$ blows up uniformly as $t\searrow 0$ provided that the Kasner exponents $q_I$ are not all equal.

\subsection{Localised past stability of the subcritical Kasner-scalar field spacetimes}
  A localised version of Theorem \ref{glob-stab-thm} is readily obtained by modifying its proof as follows: replace Proposition \ref{prop:locA} with Proposition \ref{prop:locC} to establish the local-in-time existence of solutions, and use Proposition \ref{prop:Fuch-global-loc} in place of Proposition \ref{prop:Fuch-global} to derive global bounds. Otherwise, the proof remains essentially the same. As in Proposition \ref{prop:Fuch-global-loc}, existence is established on truncated cone domains $\Omega_{\qv}$ where $\qv = (0, t_0, \rho_0, \rho_1, \epsilon_2)$, $\epsilon_2$ satisfies \eqref{eigenval-posB}, $t_0 > 0$, $\rho_0 \in (0, L)$, and $\rho_1 = t_0^{\epsilon_2 - 1}(1 - \epsilon_2)(1 - \vartheta)\rho_0$ with $\vartheta \in (0, 1)$. Note that these domains are capped above by $\{t_0\} \times \mathbb{B}_{\rho_0}$ and below by $\{0\} \times \mathbb{B}_{\vartheta \rho_0}$. 
Because the modifications needed to prove a localised version of Theorem \ref{glob-stab-thm} are straightforward, we omit the details and only state the result in the following theorem.

\begin{thm}\label{loc-stab-thm}
Suppose $T_0>0$, $k_0\in \Zbb_{>\frac{3}{2}+1}$, the Kasner exponents $q_I\in \Rbb$, $I=1,2,3$, and  $P\in (0,\sqrt{1/3}]$ satisfy the Kasner relations \eqref{Kasner-rels-A} and the subcritical condition \eqref{SCR},  $\ep_1\in (0,1)$ and $\ep_2\in \Rbb$ satisfy \eqref{eigenval-posA}  and \eqref{eigenval-posB}, respectively, where $r_0$ and $r_I$, $I=1,2,3$, are defined by \eqref{Kasner-rels-B}, $\nu>0$ satisfies \eqref{eigenval-posC}, $k_1\in \Nbb_0$ is chosen large enough so that \eqref{k-nu-fix} holds\footnote{If the right hand side of \eqref{k-nu-fix} is $0$, then $k_1=0$ is allowed.},  
$0<\rho_0<L$ and $\vartheta \in (0,1)$. 
Then there exists a $\delta_0 >0$ such that for every $t_0\in (0,T_0]$, $\delta \in (0,\delta_0]$, and\footnote{$\Wbb$ is the vector space defined above by \eqref{Wbb-def}.} 
\begin{equation*}
W_0 = ( e_{0P}^\Sigma, \alpha_0, A_{0P}, U_{0P}, \Hc_0, \Sigma_{0PQ}, N_{0PQ})^{\tr} \in H^{k}(\mathbb{B}_{\rho_0},\Wbb), \quad k=k_0+k_1,
\end{equation*}
satisfying $\inf_{x\in\mathbb{B}_{\rho_0}}\alpha_0(x)>0$, $\inf_{x\in\mathbb{B}_{\rho_0}}\det\bigl(e_{0P}^\Sigma(x)\bigr)>0$, the constraint equations \eqref{A-cnstr-3}-\eqref{H-cnstr-3} at  $t=t_0$, and
$\norm{W_0}_{H^{k}(\mathbb{B}_{\rho_0})} < \delta$,
there exists a 
\begin{equation*} 
W= \bigl( e_{P}^\Sigma, \alpha, A_{P}, U_{P}, \Hc, \Sigma_{PQ}, N_{PQ}\bigr)^{\tr} \in C^1(\grave{\Omega}_{\qv}), 
\end{equation*}
where $\qv=(0,t_0,\rho_0,\rho_1,\ep_2)$ and
$\rho_1 = t_0^{\ep_2-1}(1-\ep_2)(1-\vartheta)\rho_0$,
that solves, for all choices of the parameters $\gamma,\rho \in\Rbb$, the evolution equations \eqref{EEc.1}-\eqref{EEc.7} on $\Omega_{\qv}$, and satisfies the initial condition $W|_{t=t_0}=W_0$, the constraint equations \eqref{A-cnstr-3}-\eqref{H-cnstr-3} on $\Omega_{\qv}$, and  $\alpha>0$ and $\det(e_{P}^\Sigma)>0$ on $\grave{\Omega}_{\qv}$. Moreover the following hold:
\begin{enumerate}[(a)]
\item Letting $\rho(t) = \rho_0 + (1-\vartheta) \rho_0 \bigl( \bigl(\frac{t}{t_0}\bigr)^{1-\ep_2}-1\bigr)$,
$W(t) \in H^{k}(\mathbb{B}_{\rho(t)})$ and $\del{t}W(t)\in H^{k-1}(\mathbb{B}_{\rho(t)})$ for each $t\in (0,t_0]$.
\item The solution $W$ is uniformly bounded by
\begin{equation*}
\norm{W(t)}_{H^{k}(\mathbb{B}_{\rho(t)})} \lesssim \delta
\end{equation*}
for all $t\in (0,t_0]$, 
and there exist a constant $\zeta>0$ and functions 
$\alpha_*, \Hc_*, \Sigma_{AB}^*\in H^{k-1}(\mathbb{B}_{\vartheta\rho_0})\subset C^{0}_b(\mathbb{B}_{\vartheta\rho_0})$
satisfying
\begin{gather*}
0<\inf_{x\in\mathbb{B}_{\vartheta\rho_0}}\alpha_*(x) \leq \sup_{x\in\mathbb{B}_{\vartheta\rho_0}}\alpha_*(x)\lesssim 1, \quad \norm{\Hc_*}_{L^\infty(\mathbb{B}_{\vartheta\rho_0})}\leq \delta_1,\\
\Sigma^*_{AB}=\Sigma^*_{BA},  \quad \delta^{AB}\Sigma^*_{AB}=0 \AND \max\big\{\norm{\Hc_*}_{H^{k-1}(\mathbb{B}_{\vartheta\rho_0})}, \norm{\Sigma_{AB}^*}_{H^{k-1}(\mathbb{B}_{\vartheta\rho_0})} \bigr\}\lesssim \delta
\end{gather*}
such the components of $W$ decay according to
\begin{gather*}
e_P^\Omega = \Ord_{H^{k-1}(\mathbb{B}_{\vartheta\rho_0})}(t^\zeta),\quad 
\alpha = t^{\ep_1+\frac{r_0}{2}+3\Hc_*}\alpha_*\Bigl(1+\Ord_{H^{k-1}(\mathbb{B}_{\vartheta\rho_0})}(t^\zeta)\Bigr), \\
A_P = \Ord_{H^{k-1}(\mathbb{B}_{\vartheta\rho_0})}(t^\zeta), \quad
U_P = \Ord_{H^{k-1}(\mathbb{B}_{\vartheta\rho_0})}(t^\zeta), \quad
N_{PQ} = \Ord_{H^{k-1}(\mathbb{B}_{\vartheta\rho_0})}(t^\zeta), \\
\Hc = \Hc_* + \Ord_{H^{k-1}(\mathbb{B}_{\vartheta\rho_0})}(t^\zeta) \AND
\Sigma_{PQ} = \Sigma_{PQ}^* + \Ord_{H^{k-1}(\mathbb{B}_{\vartheta\rho_0})}(t^\zeta)
\end{gather*}
for $t\in (0,t_0]$. Furthermore, the explicit and implicit constants in these estimates are independent to the choice of $\delta\in (0,\delta_0]$ and $t\in (0,t_0]$.
\item The pair
\begin{equation*}
\biggl\{\gb = t\bigl(-\alphat^2 dt\otimes dt + \gt_{\Sigma\Omega}dx^\Sigma \otimes dx^\Omega\bigr),\; \phi = \frac{\sqrt{3}}{2}\ln(t)\Biggr\}
\end{equation*}
where  
\begin{equation*}
\alphat = t^{-\ep_1}\alpha,\quad 
(\gt_{\Sigma\Omega}) =(\delta^{AB}\et_A^{\Sigma}\et_B^{\Omega})^{-1} \AND \et_A^\Omega = t^{\ep_1-\ep_2}\alpha^{-1}e_A^\Omega
\end{equation*}
determines a classical solution of the  
Einstein-scalar-field equations \eqref{ESF.1}-\eqref{ESF.2} on $\Omega_{\qv}$. Furthermore, the trace of the physical second fundamental form $\Kttb_{AB}$ determined by that physical metric $\gb$ and the $t=\textit{const}$ hypersurfaces satisfies
\begin{equation*}
\Kttb_A{}^A=\gttb^{AB}\Kttb_{AB} = \frac{1}{t^{\frac{3}{2}+\frac{r_0}{2}+3\Hc_*}} \frac{1}{\alpha_*}\biggl(\frac{3}{2}+\frac{r_0}{2}+3\Hc_* + \Ord_{H^{k-1}(\mathbb{B}_{\vartheta\rho_0})}(t^\zeta)\biggr)
\end{equation*}
for $t\in (0,t_0]$, and is bounded below by
\begin{equation*}
\inf_{x\in\mathbb{B}_{\vartheta\rho_0}}\Kttb_A{}^A(t,x) \gtrsim \frac{1}{t^{\frac{3}{2}+\frac{r_0}{2}-3\delta_1}}, \quad 0<t\leq t_0.
\end{equation*}
Since $\frac{3}{2}+\frac{r_0}{2}-3\delta_1>0$, $\Kttb_A{}^A$ blows up uniformly as $t\searrow 0$ and the hypersurface $t=0$ is a \textit{crushing singularity}, cf.~\cite{Eardley:1979}. 
\item The pair 
\begin{equation*}
\Bigl\{\gt = -\alphat^2 dt\otimes dt + \gt_{\Sigma\Omega}dx^\Sigma \otimes dx^\Omega,\; \tau = t\Bigr\}
\end{equation*}
determines a classical solution of the  
conformal Einstein-scalar-field equations \eqref{cESF.1}-\eqref{cESF.2} on $\Omega_{\qv}$. Furthermore, the second fundamental form $\Kttt_{AB}$ determined by the conformal metric $\gt$ and the $t=\textit{const}$ hypersurfaces, when expressed relative to the orthonormal frame $\{\et_0= \alphat^{-1}\del{t},\et_A=\et_A^\Lambda\del{\Lambda}\}$,
decays according to
\begin{equation*}
2t\alphat \Kttt_{AB}  = \Kf_{AB}+\Ord_{H^{k-1}(\mathbb{B}_{\vartheta\rho_0})}(t^\zeta)
\end{equation*}
where 
\begin{equation*} 
\Kf_{AB}= r_{AB}+2\Hc_*\delta_{AB}+2\Sigma^*_{AB}\in H^{k-1}(\mathbb{B}_{\vartheta\rho_0},\Sbb{3})\subset C^{0}_b(\mathbb{B}_{\vartheta\rho_0},\Sbb{3})
\end{equation*}
satisfies 
\begin{equation*} 
 (\Kf_A{}^A)^2 - \Kf_A{}^B\Kf_B{}^A + 4\Kf_A{}^A = 0
\AND
\Kf_A{}^A=\sqrt{4+\Kf_{A}{}^B \Kf_{B}{}^A}-2\geq 0.
\end{equation*}
In particular, this implies that the conformal solution  $\{\gt,\tau\}$ is \textit{asymptotically pointwise Kasner} on $\mathbb{B}_{\vartheta\rho_0}$ relative to the frame $\{\et_0= \alphat^{-1}\del{t},\et_A=\et_A^\Lambda\del{\Lambda}\}$, cf.~\cite[Def.~1.1]{BeyerOliynyk:2024b}.
\item The physical solution $\{\gb,\phi\}$ of the Einstein-scalar field equations on $\Omega_{\qv}$ is past $C^2$ inextendible at $t=0$ and past timelike geodesically incomplete. The curvature invariants   $\Rb=\gb^{ab}\Rb_{ab}$ and $\Rb_{ab}\Rb^{ab}$ of the physical metric $\gb$ 
satisfy
\begin{equation*}
\Rb = \frac{3}{2\alpha_*^2 t^{2(\frac{3}{2}+\frac{r_0}{2}+3\Hc_*)}}\bigl(-1+\Ord_{H^{k-1}(\mathbb{B}_{\vartheta\rho_0})}(t^\zeta)\bigr) 
\AND
\Rb_{ab}\Rb^{ab} = \frac{9}{4\alpha_*^4t^{4(\frac{3}{2}+\frac{r_0}{2}+3\Hc_*)}}\bigl(1+\Ord_{H^{k-1}(\mathbb{B}_{\vartheta\rho_0})}(t^\zeta)\bigr)
\end{equation*}
for $t\in (0,t_0]$, respectively, and are bounded above and below by
\begin{equation*}
\sup_{x\in\mathbb{B}_{\vartheta\rho_0}}\Rb(t,x) \lesssim -\frac{1}{t^{2(\frac{3}{2}+\frac{r_0}{2}-3\delta_1)}} \AND
\inf_{x\in\mathbb{B}_{\vartheta\rho_0}}\Rb_{ab}(t,x)\Rb^{ab}(t,x) \gtrsim \frac{1}{t^{4(\frac{3}{2}+\frac{r_0}{2}-3\delta_1)}}
\end{equation*}
for $t\in (0,t_0]$, respectively. Since $\frac{3}{2}+\frac{r_0}{2}-3\delta_1>0$, $\Rb$ and $\Rb_{ab}\Rb^{ab}$ blow up uniformly as $t\searrow 0$.
\item 
The Weyl curvature invariant $\Cb^{abcd}\Cb_{abcd}$ of the physical metric $\gb$ decays according to
\begin{equation*}
\Cb^{abcd}\Cb_{abcd} =\frac{8}{t^{7+2r_0+12\Hc_*}\alpha_*^4}\Bigl(\Ec_{AB}\Ec^{AB}+\Ord_{H^{k-1}(\mathbb{B}_{\vartheta\rho_0})}(t^\zeta)\Bigr)
\end{equation*}
for $t\in (0,t_0]$ where
\begin{equation*}
\ec_{AB} = \frac{r_0}{3}\delta_{AB} +  \frac{r_0}{4}r_{AB}- \frac{1}{4}r_{AC}r_B^C
\end{equation*}
and 
\begin{align*}
\Ec_{AB} &=\ec_{AB}+ \Hc_*\Sigma^*_{AB} + \Bigl(\frac{1}{2}r_{AB}-\frac{r_0}{6}\delta_{AB}\Bigr)\Hc_* + \frac{r_0}{2}\Sigma^*_{AB}  - \Sigma^*_{AC}\Sigma^*_B{}^C - \frac{1}{2}r_B^C\Sigma^*_{AC} \notag\\
&\quad - \frac{1}{2}r_{AC}\Sigma^*_B{}^C + \frac{1}{3}\delta_{AB}\Sigma^*_{CD}\Sigma^{*CD} + \frac{1}{3}\delta_{AB}r^{CD}\Sigma^*_{CD}\in H^{k-1}(\Tbb^3,\Sbb{3}) \subset C^0(\Tbb^3,\Sbb{3}).
\end{align*}
Moreover if the Kasner exponents $q_I$ are not all equal, or equivalently, $(r_1,r_2,r_3)\neq (0,0,0)$, then 
\begin{equation*}
    \inf_{x\in\mathbb{B}_{\vartheta\rho_0}}\Ec_{AB}(x) \Ec^{AB}(x) >0
\end{equation*} and there exists a $t_1\in (0,t_0]$ such that  
$\Cb^{abcd}\Cb_{abcd}$ is bounded below by
\begin{equation*}
\inf_{x\in\mathbb{B}_{\vartheta\rho_0}}\Cb^{abcd}(t,x)\Cb_{abcd}(t,x) \gtrsim \frac{1}{t^{1+4(\frac{3}{2}+\frac{r_0}{2}-3\delta_1)}}, \quad 0<t\leq t_1.
\end{equation*}
Since  $\frac{3}{2}+\frac{r_0}{2}-3\delta_1>0$, $\Cb^{abcd}\Cb_{abcd}$ blows up uniformly as $t\searrow 0$ provided $(r_1,r_2,r_3)\neq (0,0,0)$.
\end{enumerate}
\end{thm}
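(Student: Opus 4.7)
The plan is to mirror the proof of Theorem \ref{glob-stab-thm} verbatim, substituting each global-in-space tool by its local-in-space counterpart on the truncated cone $\Omega_{\qv}$, where $\qv=(0,t_0,\rho_0,\rho_1,\ep_2)$ and $\rho_1 = t_0^{\ep_2-1}(1-\ep_2)(1-\vartheta)\rho_0$. Given initial data $W_0 \in H^{k}(\mathbb{B}_{\rho_0},\Wbb)$ satisfying the hypotheses, I form the rescaled Fuchsian datum
\begin{equation*}
\Wv_0 = \bigl(W_0,(t_0^{|\bc|\nu}\del{}^\bc W_0)_{1\leq|\bc|<k_1},(V^{-1}t_0^{k_1\nu}\del{}^\bc W_0)_{|\bc|=k_1}\bigr)^{\tr}.
\end{equation*}
Since $t_0\leq T_0\leq 1$, the norm $\norm{\Wv_0}_{H^{k_0}(\mathbb{B}_{\rho_0})}$ is bounded by a $t_0$-independent multiple of $\delta$. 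Applying Proposition \ref{prop:Fuch-global-loc}, once $\delta$ is below its Fuchsian threshold, supplies a unique classical solution $\Wv\in C^1(\grave{\Omega}_{\qv})$ of the Fuchsian system with all the asymptotic and frame-bound estimates of parts (a)--(d) of that proposition.

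Next I apply Proposition \ref{prop:locC} to the tetrad system \eqref{EEc.1}--\eqref{EEc.7} (equivalently \eqref{EE-symform-B}) on the same cone. The smallness hypothesis $\sup_{x\in\mathbb{B}_{\rho_0}} t_0^{\ep_2}|\alphat_0\et_0|<\rho_1/9$ required to trigger this proposition follows from the Fuchsian bound of Proposition \ref{prop:Fuch-global-loc}(d) via the change of variables \eqref{alpha-def}, \eqref{e-def}, after shrinking $\delta$ if necessary. This yields a unique classical $W\in C^1(\grave{\Omega}_{\qv_1})$ on some $\Omega_{\qv_1}$, $\qv_1=(t_1,t_0,\rho_0,\rho_1,\ep_2)$, with $\alpha>0$, $\det(e_A^\Omega)>0$, and propagating constraints by Proposition \ref{prop:locC}(b),(c). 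The bridge between the two constructions is uniqueness: the object assembled from $W$ and its rescaled spatial derivatives via \eqref{low-order-vars}, \eqref{Wch-bc-def} solves the Fuchsian system on $\Omega_{\qv_1}$ by the derivations in Sections \ref{sec:low-order} and \ref{sec:high-order}, matches $\Wv_0$ at $t=t_0$, and thus coincides with $\Wv|_{\Omega_{\qv_1}}$ by Proposition \ref{prop:Fuch-global-loc}(d). The uniform Fuchsian $H^{k_0}$-bound then transfers to a uniform $W^{1,\infty}$-bound on $W$ via Sobolev's inequality (Theorem \ref{Sobolev}), and the frame smallness condition persists along the flow, so the continuation principle in Proposition \ref{prop:locC}(a) allows me to extend $W$ down to $t_1=0$, giving $W\in C^1(\grave{\Omega}_{\qv})$.

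With global-to-the-past existence and uniform estimates in place, items (a)--(f) follow by repeating the corresponding computations from the proof of Theorem \ref{glob-stab-thm} verbatim, with $\Tbb^3$ replaced by $\mathbb{B}_{\rho(t)}$, or by $\mathbb{B}_{\vartheta\rho_0}\subset \mathbb{B}_{\rho(t)}$ when extracting asymptotic data at the singularity. The lapse asymptotic $\alpha = t^{\ep_1+r_0/2+3\Hc_*}\alpha_*(1+\Ord_{H^{k-1}(\mathbb{B}_{\vartheta\rho_0})}(t^\zeta))$ comes from integrating \eqref{EEc.6} and invoking Lemma \ref{lem:asymptotic}; the crushing singularity and Ricci blow-up from the algebraic identities \eqref{phys-2nd-FF}, \eqref{phy-scalar}, \eqref{phy-RabRab}; the asymptotically pointwise Kasner identity from the renormalised Hamiltonian constraint \eqref{H-cnstr-3} passed to the $t\searrow 0$ limit pointwise on $\mathbb{B}_{\vartheta\rho_0}$; and the Weyl invariant asymptotics from \eqref{Et-form}, \eqref{Ht-form}, \eqref{phys-Weyl-invariant}. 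Past timelike geodesic incompleteness follows from Hawking's theorem \cite[Ch.~14, Thm.~55A]{ONeill:1983} applied inside the past domain of dependence of $\{t_0\}\times\mathbb{B}_{\rho_0}$, which coincides with $\Omega_{\qv}$ precisely because the finite propagation speed built into the choice of $\rho_1$ matches the cone aperture.

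The main obstacle is the consistent threading of three smallness thresholds: the Fuchsian threshold of Proposition \ref{prop:Fuch-global-loc}, the frame-smallness requirement $|\alphat\et|<\rho_1/9$ of Proposition \ref{prop:locC}, and the requirement that the resulting $\Hc_*$ be small enough, $\norm{\Hc_*}_{L^\infty(\mathbb{B}_{\vartheta\rho_0})}\leq \delta_1<\frac{1}{2}+\frac{r_0}{6}$, that the exponent $\frac{3}{2}+\frac{r_0}{2}+3\Hc_*$ stay strictly positive and drive the crushing, curvature and Weyl blow-ups. All three must be triggered simultaneously and uniformly in $t_0\in(0,T_0]$. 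Such uniformity relies on the fact that the constants appearing in the Fuchsian coefficient assumptions of Section \ref{coeff-props}, in the symmetriser construction of Section \ref{sec:sym-hyp}, and in the extension-operator bound \eqref{Ebb-prop} depend only on $T_0,\rho_0,\vartheta$ and the fixed subcritical Kasner data, and not on the particular initial time $t_0$. Once this uniformity is established, the localised argument is a faithful translation of the global-in-space proof, and the result follows.
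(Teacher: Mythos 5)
Your proposal matches the paper's own approach exactly: just before Theorem~\ref{loc-stab-thm}, the authors state that the localised result is obtained from the proof of Theorem~\ref{glob-stab-thm} by replacing Proposition~\ref{prop:locA} with Proposition~\ref{prop:locC} and Proposition~\ref{prop:Fuch-global} with Proposition~\ref{prop:Fuch-global-loc}, with the remaining computations unchanged except that $\Tbb^3$ is replaced by $\mathbb{B}_{\rho(t)}$ and asymptotics are read off on $\mathbb{B}_{\vartheta\rho_0}$. Your elaboration (the translation of the frame-smallness hypothesis $t_0^{\ep_2}|\alphat_0\et_0|<\rho_1/9$ into $|e_0|<\rho_1/9$ via \eqref{alpha-def}--\eqref{e-def} and Proposition~\ref{prop:Fuch-global-loc}(d), the uniqueness bridge between the two solution constructions, and the uniform-in-$t_0$ threading of the smallness thresholds) fills in precisely the details that the paper declares routine and omits.
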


We emphasise that Remarks~\ref{rem:AVTD} and ~\ref{rem:Weyl} both apply to Theorem~\ref{loc-stab-thm}.

\begin{rem}\label{rem:Kasner-idata-stability-B}
Due to \eqref{Kasner:t->0}, it is clear that the Kasner-scalar field initial data satisfies 
\begin{equation*}
\lim_{t_0\searrow 0}\norm{W^{\textrm{Kas}}_0}_{H^{k}(\mathbb{B}_{\rho_0})} =0,
\end{equation*}
and $\alpha>0$ and $\det(\ep^\Omega_P)>0$ on $\overline{\mathbb{B}}_{\rho_0}$ for any $k\in \Nbb_0$ and any $\rho_0\in (0,L)$.
As in Remark \ref{rem:Kasner-idata-stability}, we can, by taking $t_0>0$ sufficiently small, guarantee that the Kasner-scalar field initial data satisfy the conditions of Theorem \ref{loc-stab-thm}. This implies the existence of a open neighbourhood of initial data around the Kasner-scalar field initial data that also satisfy the conditions of Theorem \ref{glob-stab-thm}, which, in turn, implies the localised past nonlinear stability of the Kasner-scalar field solutions and their big bang singularities. 
\end{rem}

\appendix

\section{\label{mat-eqn}Matrix inequalities}
In this appendix, we collect together a few matrix inequalities that will be employed in Section \ref{loc-exist-Omega}. Suppose $U=(U_A), V=(V_B)\in\Rbb^3$ and $X=(X_{AB})\in \Mbb{3}$. 
Then using the Euclidean metric $\delta_{AB}$ to raise and lower indices, we have
\begin{equation*}
U_A V_B X^{AB}=\ipe{U V^T}{X}
\end{equation*}
where the inner-product $\ipe{\cdot}{\cdot}$ is the Frobenius inner-product \eqref{Frobenius}. With the help of the Cauchy-Schwartz inequality, we see that $U_A V_B X^{AB}$ can be bounded above by
\begin{equation*}
U_A V_B X^{AB}\leq |U V^T| |X|.
\end{equation*}
But 
\begin{equation*}
|U V^T|^2 = \ipe{U V^T}{U V^T}= \text{Tr}(V^T U^T U V^T) = |U|^2 |V|^2,  
\end{equation*}
and so we have
\begin{equation}\label{mat-inq-A}
U_A V_B X^{AB}\leq |U| |V| |X|.
\end{equation}
From this inequality and \eqref{Frobenius-inq}, we deduce that
\begin{equation*}
|XU|^2 = U_B U_C X^{AB}X_{A}{}^{C}\leq |U|^2 |X^T X| \leq  |U|^2|X^T||X|= |U|^2|X|^2,
\end{equation*}
or after taking the square root,
\begin{equation}  \label{mat-inq-B}
|XU| \leq |U||X|.
\end{equation}

\section{\label{calc}Calculus inequalities}
In this appendix, we collect, for the convenience of the reader, a number of calculus inequalities that are employed throughout this article. The proofs of the following inequalities are well known and may be found, for example, in 
the books \cite[Ch.~4]{AdamsFournier:2003}, \cite[Ch.~13, \S 1-3]{TaylorIII:1996} and \cite[Ch.~VI, \S 3]{Choquet_et_al:2000}. 
In this section, $U$ is an open subset of $\Tbb^n$.
\begin{thm}{\emph{[Sobolev's inequality]}} \label{Sobolev} Suppose 
 $k\in \Zbb_{\geq 0}$ and $0<\alpha \leq k-n/2\leq 1$. Then
$H^{k}(U)\subset C^{0,\alpha}(U)$ and 
\begin{equation*}
\norm{u}_{L^\infty}\lesssim \norm{u}_{C^{0,\alpha}}  \lesssim  \norm{u}_{H^{k}}
\end{equation*}
for all $u\in H^{k}(U)$.
\end{thm}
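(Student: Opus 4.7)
\textbf{Proof proposal for Theorem \ref{Sobolev}.} The first inequality $\norm{u}_{L^\infty}\lesssim \norm{u}_{C^{0,\alpha}}$ is immediate from the standard definition
$\norm{u}_{C^{0,\alpha}(U)} = \norm{u}_{L^\infty(U)} + \sup_{x\neq y}|u(x)-u(y)|/|x-y|^\alpha$, so all the substance lies in the second inequality $\norm{u}_{C^{0,\alpha}}\lesssim \norm{u}_{H^k}$. The plan is to first reduce to the case $U=\Tbb^n$: using an extension operator of the type constructed in \cite[Ch.~5]{AdamsFournier:2003} (the torus version of \eqref{Ebb-def}--\eqref{Ebb-prop}), any $u\in H^k(U)$ extends to $\tilde u\in H^k(\Tbb^n)$ with $\norm{\tilde u}_{H^k(\Tbb^n)}\lesssim \norm{u}_{H^k(U)}$, and the $C^{0,\alpha}(U)$-norm of $u$ is bounded by the $C^{0,\alpha}(\Tbb^n)$-norm of $\tilde u$. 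It therefore suffices to establish $\norm{v}_{C^{0,\alpha}(\Tbb^n)}\lesssim \norm{v}_{H^k(\Tbb^n)}$ for $v\in H^k(\Tbb^n)$.

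On the torus I would expand in Fourier series $v(x)=\sum_{\xi\in \Zbb^n} \hat v(\xi) e^{i\pi \xi\cdot x/L}$, recalling the Parseval-type characterisation $\norm{v}_{H^k}^2\sim \sum_\xi (1+|\xi|^2)^k|\hat v(\xi)|^2$. The $L^\infty$ bound then follows from
\[
  |v(x)|\le \sum_\xi |\hat v(\xi)| \le \Bigl(\sum_\xi (1+|\xi|^2)^{-k}\Bigr)^{1/2}\norm{v}_{H^k},
\]
where the first sum converges because the hypothesis $k-n/2>0$ gives $2k>n$.  For the H\"older seminorm I would combine the elementary bound $|e^{i\theta}-1|\le 2^{1-\alpha}|\theta|^\alpha$ (a consequence of $\min(2,s)\le 2^{1-\alpha}s^\alpha$ for $s\ge 0$ and $\alpha\in[0,1]$) with Cauchy--Schwarz to obtain
\[
  |v(x)-v(y)|\lesssim |x-y|^\alpha \sum_\xi |\xi|^\alpha |\hat v(\xi)| \lesssim |x-y|^\alpha \Bigl(\sum_\xi \frac{|\xi|^{2\alpha}}{(1+|\xi|^2)^{k}}\Bigr)^{1/2}\norm{v}_{H^k}.
\]
The remaining $\xi$-sum converges provided $2(k-\alpha)>n$, which covers the strict range $\alpha<k-n/2$.

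The main obstacle, as always in this kind of embedding, is the endpoint $\alpha=k-n/2$ that the hypothesis $0<\alpha\le k-n/2\le 1$ explicitly allows, where the final $\xi$-sum diverges logarithmically. To handle it I would fall back on the classical Morrey inequality: by iterating the sub-critical Sobolev embedding on $\Tbb^n$ one obtains $H^k\hookrightarrow W^{1,q}$ for $q=\frac{2n}{n-2(k-1)}$, which is $>n$ precisely because $k>n/2$, and Morrey then gives $W^{1,q}\hookrightarrow C^{0,1-n/q}$ with exponent $1-n/q=k-n/2$. An equivalent alternative is a Littlewood--Paley dyadic decomposition $v=\sum_j v_j$ on the Fourier side, estimating each block by $\min(\norm{v_j}_{L^\infty},\norm{\nabla v_j}_{L^\infty}|x-y|)$ and summing in $j$ via Bernstein's inequality. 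Either route closes the endpoint case and reduces the theorem to the standard references \cite[Ch.~4]{AdamsFournier:2003}, \cite[Ch.~13]{TaylorIII:1996} and \cite[Ch.~VI,\S3]{Choquet_et_al:2000} cited at the head of the appendix.
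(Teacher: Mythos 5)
The paper does not actually prove Theorem \ref{Sobolev}: it is stated in Appendix \ref{calc} as a standard fact and simply referred to \cite[Ch.~4]{AdamsFournier:2003}, \cite[Ch.~13, \S 1--3]{TaylorIII:1996} and \cite[Ch.~VI, \S 3]{Choquet_et_al:2000}, so your proposal supplies an argument where the paper offers only citations. Your route --- extend from $U$ to $\Tbb^n$, then run the Fourier-series/Cauchy--Schwarz estimate for $\norm{v}_{L^\infty}$ and for the H\"older seminorm in the subcritical range $\alpha<k-n/2$, and close the endpoint $\alpha=k-n/2$ by iterating the $L^p$ Sobolev embedding and applying Morrey --- is a correct and complete proof of the inequality in every case the paper actually uses, and it is more self-contained than the textbook results being invoked (which are proved there via integral representations/cone conditions rather than Fourier analysis).

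Two caveats, both inherited from the looseness of the statement rather than errors of yours. First, the reduction to $\Tbb^n$ presupposes that $U$ admits a bounded extension operator; for a completely arbitrary open $U\subset\Tbb^n$ this can fail (and so can the embedding itself), so strictly one needs $U$ to satisfy a cone/Lipschitz-type condition. In this paper $U$ is always $\Tbb^3$ or a centred ball $\mathbb{B}_\rho$, for which the operator \eqref{Ebb-def}--\eqref{Ebb-prop} exists, so your step is legitimate in context, but it is worth saying so explicitly. Second, your endpoint argument needs $k-1<n/2$: if $k-n/2=1$ with $n$ even, the exponent $q=2n/(n-2(k-1))$ degenerates and indeed $H^{n/2+1}\not\subset C^{0,1}$, so the theorem as literally stated has a false corner case there; this never arises in the paper, since $n=3$ forces $k=2$ and $\alpha\leq 1/2$, where your Morrey step with $q=6>3$ closes the endpoint cleanly.
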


\begin{thm}{\emph{[Product and commutator inequalities]}} \label{Product} $\;$

\begin{enumerate}[(i)]
\item
Suppose $k\in \Zbb_{\geq 1}$ and $|\alpha|=k$.
Then
\begin{align*}
\norm{D^\alpha (uv)}_{L^2} \lesssim \norm{u}_{H^{k}}\norm{v}_{L^{\infty}} + \norm{u}_{L^{\infty}}\norm{v}_{H^{k}} \label{clacpropB.2.1}
\intertext{and}
\norm{[D^\alpha,u]v}_{L^2} \lesssim \norm{D u}_{L^{\infty}}\norm{v}_{H^{k-1}} + \norm{D u}_{
H^{k-1}}\norm{v}_{L^{\infty}}
\end{align*}
for all $u,v \in C^\infty(U)$.
\item[(ii)]  Suppose $k_1,k_2,k_3\in \Zbb_{\geq 0}$, $\;k_1,k_2\geq k_3$, and $k_1+k_2-k_3 > n/2$. Then
\begin{equation*}
\norm{uv}_{H^{k_3}} \lesssim \norm{u}_{H^{k_1}}\norm{v}_{H^{k_2}}
\end{equation*}
for all $u\in H^{k_1}(U)$ and $v\in H^{k_2}(U)$.
\end{enumerate}
\end{thm}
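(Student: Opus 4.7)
The plan is to reduce everything to the Leibniz rule together with the Gagliardo–Nirenberg interpolation inequality
\begin{equation*}
\norm{D^j w}_{L^{2k/j}(U)} \lesssim \norm{w}_{L^{\infty}(U)}^{1-j/k}\,\norm{w}_{H^{k}(U)}^{j/k},\qquad 0\leq j\leq k,
\end{equation*}
which holds on $U\subset\Tbb^{n}$ (and by density extends from $C^\infty$ to the relevant Sobolev class). Since $U$ is an open subset of $\Tbb^n$, I will first extend $u,v$ to all of $\Tbb^n$ using the extension operator $\Ebb$ from Section~\ref{sec:Sobolev}, so that Gagliardo–Nirenberg may be invoked on a domain where it is standard (Fourier characterisation on the torus); the extension adds only a harmless constant factor in all the estimates below.

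For part (i), I would apply Leibniz's rule to write
\begin{equation*}
D^\alpha(uv) = \sum_{\beta\leq\alpha}\binom{\alpha}{\beta} D^\beta u\, D^{\alpha-\beta}v,
\end{equation*}
and estimate each term by Hölder with dual exponents $2k/j$ and $2k/(k-j)$, where $j=|\beta|$, followed by Gagliardo–Nirenberg applied separately to $u$ and $v$. The exponents combine exactly so that
\begin{equation*}
\norm{D^\beta u\,D^{\alpha-\beta}v}_{L^2}
\lesssim \norm{u}_{L^\infty}^{1-j/k}\norm{u}_{H^k}^{j/k}\,\norm{v}_{L^\infty}^{j/k}\norm{v}_{H^k}^{1-j/k},
\end{equation*}
and Young's inequality $ab\leq a^{p}/p+b^{q}/q$ with the conjugate exponents $(k/j,k/(k-j))$ turns the right-hand side into $\norm{u}_{H^k}\norm{v}_{L^\infty}+\norm{u}_{L^\infty}\norm{v}_{H^k}$, establishing the first inequality. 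The commutator inequality follows from the identity
\begin{equation*}
[D^\alpha,u]v = D^\alpha(uv) - u\,D^\alpha v = \sum_{0<\beta\leq\alpha}\binom{\alpha}{\beta} D^\beta u\, D^{\alpha-\beta} v,
\end{equation*}
so that every surviving term contains at least one derivative of $u$; rewriting $D^\beta u = D^{\beta-e_i}(Du)$ and running the same Hölder/Gagliardo–Nirenberg argument on the pair $(Du,v)$ at order $k-1$ yields the claimed bound.

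For part (ii), the reduction is the same: by Leibniz it suffices to estimate $\norm{D^\beta u\,D^\gamma v}_{L^2}$ for $|\beta|+|\gamma|\leq k_3$ with $|\beta|\leq k_1$, $|\gamma|\leq k_2$. I would split into two cases according to whether $|\beta|\leq k_1-n/2$ or not (and similarly for $\gamma$). In the subcritical case one factor is placed in $L^\infty$ via Sobolev embedding, Theorem~\ref{Sobolev}, and the other in $L^2$; in the borderline case one uses Hölder with a finite pair $(p,q)$ and the general Sobolev embedding $H^{s}\hookrightarrow L^{p}$ for $s>n/2-n/p$. The strict inequality $k_1+k_2-k_3>n/2$ guarantees that at every order $|\beta|\leq k_3$ at least one of the two factors admits an $L^\infty$ (or high-$L^p$) Sobolev bound within the allowed regularity budget, and the exponents balance so that the product lies in $L^2$ with the stated norm.

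The only real work is the bookkeeping of exponents and the verification that the Gagliardo–Nirenberg/Sobolev embeddings are available on $U$; both are handled uniformly by the extension operator $\Ebb$, so no genuine obstruction appears. Consequently, I would present the proof in the order: (a) reduce to $\Tbb^n$ via $\Ebb$; (b) establish the Gagliardo–Nirenberg inequality on $\Tbb^n$ (cited from \cite{AdamsFournier:2003} if a reference is preferred to a Fourier argument); (c) derive the product estimate in (i) via Leibniz+Hölder+GN+Young; (d) deduce the commutator estimate by removing the $\beta=0$ term; (e) obtain (ii) by a Hölder split using Sobolev embedding, the main subtlety being the borderline case $k_1+k_2-k_3=\lceil n/2\rceil$, which the strict inequality $k_1+k_2-k_3>n/2$ avoids.
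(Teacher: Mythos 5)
Your argument is correct and is essentially the standard proof: the paper does not prove Theorem \ref{Product} at all but simply defers to \cite{AdamsFournier:2003}, \cite{TaylorIII:1996} and \cite{Choquet_et_al:2000}, and the Leibniz--H\"older--Gagliardo--Nirenberg--Young scheme you outline (with the commutator obtained by dropping the $\beta=0$ term and rerunning the estimate for the pair $(Du,v)$ at order $k-1$, and part (ii) by a H\"older/Sobolev-embedding split exploiting $k_1+k_2-k_3>n/2$) is precisely the route taken in those references, notably \cite[Ch.~13, \S 3]{TaylorIII:1996}. The only point to flag is that your reduction step via the extension operator tacitly requires $U$ to be an extension domain (as are $\Tbb^n$ and the centred balls $\mathbb{B}_\rho$ for which the paper constructs $\Ebb_\rho$), which is exactly the setting in which the paper applies these inequalities, so no genuine gap arises.
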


\begin{thm}{\emph{[Moser's inequality]}}  \label{Moser}
Suppose   $k\in \Zbb_{\geq 1}$, $0\leq s\leq k$, $|\alpha|=s$, $f\in C^k(V)$, where
$V$ is open and bounded in $\Rbb^N$ and contains $0$, and $f(0)=0$. Then
\begin{equation*}
\norm{D^\alpha f(u)}_{L^{2}} \leq C\bigl(\norm{f}_{C^k(\overline{V})}\bigr)(1+\norm{u}^{k-1}_{L^\infty})\norm{u}_{H^{k}}
\end{equation*}
for all $u \in C^0(U)\cap L^\infty(\Tbb^{n})\cap H^{k,p}(U)$ with
$u(x) \in V$ for all $x\in U$.
\end{thm}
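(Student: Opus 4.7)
The plan is to prove this Moser-type estimate by combining the Faà di Bruno formula for derivatives of compositions with Gagliardo--Nirenberg interpolation, following the standard template for such inequalities. First I fix a multi-index $\alpha$ with $|\alpha|=s$, $1\le s\le k$ (the case $s=0$ being immediate from $f(0)=0$ and the mean value theorem applied to $f$, which gives the pointwise bound $|f(u(x))|\leq \|f'\|_{L^\infty(\overline V)}|u(x)|$). Applying Faà di Bruno's formula to $D^\alpha (f\circ u)$ yields a finite sum
\begin{equation*}
D^\alpha(f\circ u) = \sum_{j=1}^{s}\sum_{(\beta_1,\ldots,\beta_j)} c_{\alpha,j,\beta}\, f^{(j)}(u)\, D^{\beta_1}u\cdots D^{\beta_j}u,
\end{equation*}
where the inner sum is over $j$-tuples of multi-indices with $|\beta_i|\geq 1$ and $\beta_1+\cdots+\beta_j=\alpha$, $f^{(j)}$ denotes the corresponding $j$-th derivative of $f$, and $c_{\alpha,j,\beta}$ are combinatorial constants. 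Since $u(x)\in V$ for all $x\in U$ and $f\in C^k(\overline V)$, the prefactor satisfies the pointwise bound $|f^{(j)}(u(x))|\leq \|f\|_{C^k(\overline V)}$ for all $1\leq j\leq s\leq k$.

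Next, for each admissible tuple $(\beta_1,\ldots,\beta_j)$ with $\sum_i|\beta_i|=s$ and $|\beta_i|\geq 1$, I would estimate the $L^2$-norm of the product $D^{\beta_1}u\cdots D^{\beta_j}u$ using H\"older's inequality with exponents $p_i=2s/|\beta_i|\in [2,\infty]$, which satisfy $\sum_i 1/p_i = (2s)^{-1}\sum_i|\beta_i| = 1/2$. This gives
\begin{equation*}
\norm{D^{\beta_1}u\cdots D^{\beta_j}u}_{L^2(U)} \leq \prod_{i=1}^{j}\norm{D^{\beta_i}u}_{L^{p_i}(U)}.
\end{equation*}
The Gagliardo--Nirenberg interpolation inequality, applied after extending $u$ from $U$ to $\Tbb^n$ using a standard extension operator (or directly if $U=\Tbb^n$), bounds each factor by
\begin{equation*}
\norm{D^{\beta_i}u}_{L^{p_i}} \lesssim \norm{u}_{L^\infty}^{1-|\beta_i|/s}\norm{D^s u}_{L^2}^{|\beta_i|/s}.
\end{equation*}
Multiplying over $i=1,\ldots,j$ and using $\sum_i|\beta_i|=s$, the exponents of $\|D^s u\|_{L^2}$ sum to $1$ and those of $\|u\|_{L^\infty}$ sum to $j-1$, yielding
\begin{equation*}
\norm{D^{\beta_1}u\cdots D^{\beta_j}u}_{L^2} \lesssim \norm{u}_{L^\infty}^{j-1}\norm{D^s u}_{L^2}.
\end{equation*}

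Finally, since $1\leq j\leq s\leq k$, the elementary bound $\norm{u}_{L^\infty}^{j-1}\leq 1+\norm{u}_{L^\infty}^{k-1}$ (obtained by splitting into the cases $\|u\|_{L^\infty}\leq 1$ and $\|u\|_{L^\infty}>1$) and $\norm{D^s u}_{L^2}\leq \norm{u}_{H^k}$ combine with the pointwise bound on $f^{(j)}(u)$ to give the desired estimate after summing over the finitely many admissible tuples. The main technical point is the Gagliardo--Nirenberg step: on $\Tbb^n$ it is classical, while on an open $U\subset \Tbb^n$ it requires extending $u$ to all of $\Tbb^n$ so that the global interpolation inequality can be applied; the extension operator from Section~\ref{sec:Sobolev} provides the necessary control, with norms on $\Tbb^n$ dominated by norms on $U$ up to a constant. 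No other step is delicate, so this routine combination of Faà di Bruno and Gagliardo--Nirenberg will yield the stated inequality.
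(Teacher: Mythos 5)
Your argument is correct and is essentially the standard proof that the paper implicitly relies on: the paper gives no proof of this inequality, deferring to Adams--Fournier, Taylor, and Choquet-Bruhat et al., and those references establish the Moser estimate exactly as you do, via the Fa\`a di Bruno expansion, H\"older's inequality with exponents $2s/|\beta_i|$, and Gagliardo--Nirenberg interpolation between $L^\infty$ and the $s$-th order derivatives in $L^2$. The only (standard, easily patched) caveats are that your $s=0$ step via the mean value theorem tacitly needs the segment $[0,u(x)]$ to lie in $\overline{V}$ (otherwise split $U$ into the sets where $|u|$ is small and where it is bounded below), and that the extension step needs an operator bounded on $L^\infty$ as well as on $H^{k}$ (a Stein-type total extension), which implicitly restricts $U$ to domains with the extension property --- the same restriction present in the cited textbook statements.
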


\bibliographystyle{amsplain}
\bibliography{Tetrad_Formalism_v2}

\end{document}